\newtheorem{thm}{Theorem}[section]
\newtheorem{claim}[thm]{Claim}
\newtheorem{lemma}[thm]{Lemma}
\newtheorem{cor}[thm]{Corollary}
\newtheorem{conjecture}[thm]{Conjecture}
\newtheorem{definition}[thm]{Definition}
\newtheorem{obs}[thm]{Observation}
\newtheorem{proc}[thm]{Procedure}
\newtheorem{assumption}[thm]{Assumption}
\newcommand{\E}[0]{\ensuremath \mathbb{E}}
\newcommand{\R}[0]{\ensuremath \mathbb{R}}
\newcommand{\e}{\varepsilon}
\newcommand{\ip}[2]{\langle #1, #2 \rangle}
\newcommand{\F}{\mathcal{F}}
\newcommand{\G}{\mathcal{G}}
\newcommand{\ones}{\bm{1}}
\newcommand{\blue}[1]{{\color{blue} #1}}
\newcommand{\OPT}{\mathrm{OPT}}
\newcommand{\ALG}{\mathrm{ALG}}
\newcommand{\poly}{\mathrm{poly}}
\newcommand{\cA}{\mathcal{A}}
\newcommand{\numberofsteps}{{\ensuremath{{T}}}}
\newcommand{\numberofdimensions}{\ensuremath{{n}}}
\newcommand{\monotone}{\ensuremath{\mathsf{Supermodular}}\xspace}
\newcommand{\monotonicity}{\ensuremath{\mathsf{Supermodularity}}\xspace}
\newcommand{\final}{\ensuremath{\mathrm{final}}}
\renewcommand{\d}{\textrm{d}}
\newcommand{\cover}{\textsc{OnlineCover}\xspace}
\newcommand{\pack}{\textsc{OnlinePacking}\xspace}
\newcommand{\cost}{\textrm{cost}}
\newcommand{\supp}{supp}
\newcommand{\OLO}{\textsc{OLO}\xspace}
\newcommand{\stoch}{\textsc{StochProbing}\xspace}
\newcommand{\ballopt}{\textsf{Ball-Optimization}\xspace}
\newcommand{\adapt}{\textsc{Adapt}\xspace}
\newcommand{\NA}{\textsc{NonAdapt}\xspace}
\newcommand{\topk}{\ensuremath{\textrm{Top-k}}\xspace}
\newcommand{\topx}[1]{\ensuremath{\textrm{Top-}#1}}
\DeclareMathOperator{\argmax}{argmax}
\DeclareMathOperator{\argmin}{argmin}
\newcommand{\vertm}[1]{{\left\vert\kern-0.25ex\left\vert\kern-0.25ex\left\vert #1     \right\vert\kern-0.25ex\right\vert\kern-0.25ex\right\vert}}
\newcommand{\red}[1]{{\color{red} #1}}
\newcommand{\tkc}[1]{\textcolor{Plum}{Thomas: #1}}
\newcommand{\snote}[1]{\textcolor{red}{Sahil: #1}}
\newcommand{\mnote}[1]{\textcolor{blue}{Marco: #1}}
\renewcommand{\red}[1]{#1}
\renewcommand{\tkc}[1]{}
\renewcommand{\snote}[1]{}
\renewcommand{\mnote}[1]{}
\title{Supermodular Approximation of Norms and Applications}
	\author{Thomas Kesselheim\thanks{
	(thomas.kesselheim@uni-bonn.de)
 Institute of Computer Science,	    University of Bonn. 
    }
	\and Marco Molinaro\thanks{
         (mmolinaro@microsoft.com)
         Microsoft Research and PUC-Rio.   
Supported in part by the Coordenação de Aperfeiçoamento de Pessoal de Nível Superior - Brasil (CAPES) - Finance Code 001, and by Bolsa de Produtividade em Pesquisa $\#3$12751/2021-4 from CNPq.
    }
	\and Sahil Singla\thanks{
        (ssingla@gatech.edu)
        School of Computer Science,
        Georgia Tech.
        Supported in part by NSF award CCF-2327010.
        }
}
\begin{document}
\maketitle


\begin{abstract}
 \bigskip

Many classical problems in theoretical computer science involve norm, even if implicitly; for example, both XOS functions and downward-closed sets are equivalent to some norms. The last decade has seen a lot of interest in designing algorithms beyond the standard $\ell_p$ norms $\|\cdot \|_p$. Despite notable advancements, many existing methods remain tailored to specific problems, leaving a broader applicability to general norms less understood. This paper investigates the intrinsic properties of $\ell_p$ norms that facilitate their widespread use and seeks to abstract these qualities to a more general setting. 

We identify \emph{supermodularity}—often reserved for combinatorial set functions and characterized by monotone gradients—as a defining feature beneficial for $ \|\cdot\|_p^p$. We introduce the notion of $p$-supermodularity for norms, asserting that a norm is $p$-supermodular if its $p^{th}$ power function exhibits supermodularity. The association of supermodularity with norms offers a new lens through which to view and construct algorithms.
   
    Our work demonstrates  that for a large class of problems $p$-supermodularity is a sufficient criterion for developing good algorithms.     
    This is either by reframing existing algorithms for problems like Online Load-Balancing and Bandits with Knapsacks through a supermodular lens, or by introducing novel analyses for problems such as Online Covering, Online Packing, and Stochastic Probing.  Moreover, we prove that every symmetric norm can be approximated by a $p$-supermodular norm. Together, these recover and extend several results from the literature, and support $p$-supermodularity as a unified theoretical framework for optimization challenges centered around norm-related problems.
    

\end{abstract}

\newpage

\setcounter{tocdepth}{2}
 {\small
    \tableofcontents
 }

\clearpage

\section{Introduction}
Many classical problems in theoretical computer science are framed in terms of optimizing norm objectives. For instance, Load-Balancing involves minimizing the maximum machine load, which is an $\ell_\infty$ objective, while Set Cover aims at minimizing the $\ell_1$ objective, or the number of selected sets.   
However,  contemporary applications, such as 
energy-efficient scheduling \cite{Albers-CACM10}, network routing \cite{GKP-WAOA12}, paging \cite{MS-SPAA15}, and budget allocation \cite{AD-SODA15},  demand algorithms that are capable of handling more complex objectives. Norms also underline other seemingly unrelated concepts in computer science, such as XOS functions from algorithmic game theory (both are max of linear functions) and downward-closed constraints from combinatorial optimization (the downward-closed set corresponds to the unit ball of the norm); these connections are further discussed in \Cref{sec:newApplications}. 

 Hence,  ongoing efforts have focused on designing    good algorithms for general norm objectives.  
 Notably,  the last decade has seen a lot of progress in this direction for the class of \emph{symmetric norms}---those invariant to coordinate permutations.  Examples include   $\ell_p$ norms,  \topk norm,  and Orlicz norms.   
  They offer rich possibilities,  e.g., enabling the simultaneous capture of multiple symmetric norm objectives, as their maximum is also a symmetric norm. 
  We have seen the fruit of this in algorithms for a range of applications like 
  Load-Balancing \cite{CS-STOC19, ChakrabartyS19b},  
   Stochastic Probing \cite{PRS-APPROX23}, Bandits with Knapsacks \cite{KMS-SODA23},    clustering \cite{CS-STOC19,ChakrabartyS19b},  
nearest-neighbor search \cite{ANNNorms,ANNHolder},  and linear regression \cite{orliczRegression1,orliczRegression2}.

 Despite the above progress, our understanding of applying algorithms beyond $\ell_p$ norms remains incomplete.  For instance,  while  \cite{AzarBCCCG0KNNP16}  (where 3 independent papers were merged)  provide an algorithm for Online Cover with $\ell_p$ norms,  which was extended to  sum of $\ell_p$ norms  in \cite{NS-ICALP17},   the extension to general symmetric norms is unresolved.   Indeed,  \cite{NS-ICALP17} posed as an open question whether good Online Cover algorithms exist for more general norms.   
 Other less understood applications with  norms include Online Packing \cite{BN-MOR09} and Stochastic Probing \cite{GNS-SODA17}. 

A notable limitation  of current techniques  extending beyond $\ell_p$ norms is that they are often ad-hoc. Our aim is to create a unified framework that provides a better understanding of norms in this context, simplifies proofs, and enhances generalizability. 
\begin{quote}
\emph{What properties of $\ell_p$ norms make them  amenable to various applications? Can we reduce the problem of designing good algorithms for general norms to $\ell_p$ norms?}
\end{quote}
A common approach taken when working with $\ell_p$ norms  is to instead work with the function $\|x\|_p^p = \sum_i x_i^p$.  This function has several nice properties,  e.g.,  it  is separable and convex.    We want to understand its  fundamental properties that suffice for many applications,  hoping that this would allow us to define similar nice functions beyond $\ell_p$ norms.  

We identify \monotonicity,  characterized by   monotone gradients,  as a particularly valuable property  of $ \|x\|_p^p$.   This may sound intriguing because \monotonicity is typically associated with combinatorial set functions and not a priori norms.   
This is perhaps because all norms, except for scalings of $\ell_1$, are \emph{not}  \monotone.  
We therefore propose that a norm $\|\cdot\|$ is $p$-\monotone if $\|\cdot\|^p$  exhibits \monotonicity.

We show that for a large class of problems involving norms or equivalent objects,  $p$-\monotonicity suffices to design good algorithms. This is either by 
reframing existing algorithms for problems like Online Load-Balancing \cite{KMS-SODA23} and Bandits with Knapsacks \cite{ISSS-JACM22,KS-COLT20}  through a \monotone lens or by introducing novel analyses for problems such as Online Covering~\cite{AzarBCCCG0KNNP16}, Online Packing \cite{BN-MOR09}, and Stochastic Probing \cite{GNS-SODA17,PRS-APPROX23}.  

Moreover, we demonstrate that $p$-\monotone approximations of norms are  possible for large classes of norms, especially for all symmetric norms. Our approach paves the path for a unified approach to algorithm design involving norms and for 
obtaining guarantees that only depend polylogarithmically on the number of dimensions $n$. In particular, it can bypass the limitations of ubiquitous approaches like the use of ``concentration + union bound'' or Multiplicative Weights Update, that typically cannot give bounds depending only on the ambient dimension (they usually depend on the number of linear inequalities/constraints that define the norm/set); we expand on this a bit later. 




\subsection{$p$-\monotonicity and a Quick Application} \label{sec:intoSuper}

    Throughout the paper, we only deal with non-negative vectors, i.e.,  $x \in \R^n_+$, and monotone norms, namely those where $\|x\| \ge \|y\|$ if $x \ge y$. 
    
    We now reach the central definition of the paper, \underline{$p$-\monotonicity}: a monotone norm $\|\cdot\|$ is $p$-\monotone if its $p$-th power $\|\cdot\|^p$ has increasing marginal gains (a.k.a. supermodularity).
        
        \begin{definition}[$p$-\monotonicity]
		A monotone norm $\|\cdot\|$ is $p$-\monotone for  $p \geq 1$ if   for all $u,v,w \in \R^n_+$, 
        \[\|u+v +w\|^p - \|u+v\|^p \ge \|u + w \|^p - \|u\|^p .
        \]
	\end{definition}
As an example, $\ell_p$ norms are $p$-\monotone (follows from convexity of $x^p$). It may not be immediately clear, but the larger the $p$, the weaker this condition is and easier to satisfy (but the guarantees of the algorithm also become weaker as $p$ grows). In \Cref{sec:pSub} we present an in-depth discussion of $p$-\monotonicity, including this and other properties, equivalent characterizations, how to create new $p$-\monotone norms from old ones, etc.


    But to give a quick illustration of why $p$-\monotonicity is useful, we consider the classic \emph{Online Load-Balancing} problem \cite{AzarNR95,AspnesAFPW-JACM97}. In this problem,  there are  $T$ jobs arriving one-by-one that are to be scheduled on $n$ machines.  On arrival,  job $t \in [T]$ reveals how much size $p_{ti} \in \R_+$ it takes if executed on machine $i \in [n]$.  Given an $n$-dimensional norm $\|\cdot\|$,  the goal is to find an online assignment to minimize the norm of the load vector,  i.e., $\|\Lambda_T\|$ where the $i$-th coordinate of $\Lambda_T$ is the sum of  sizes of the jobs assigned to the $i$-th machine.  The following simple argument shows why $p$-\monotonicity implies a good algorithm for Online Load-Balancing.

	\begin{thm} \label{thm:loadBalancing}
    For Online Load-Balancing problem with a $p$-\monotone norm objective,  there is an  $O(p)$-competitive algorithm. 
	\end{thm}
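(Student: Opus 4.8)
I would analyze the \emph{greedy} algorithm: when job $t$ arrives, assign it to a machine $i_t \in \argmin_{i}\, \|\Lambda_{t-1} + p_{ti}\,e_i\|$, i.e., the machine minimizing the resulting norm (equivalently, minimizing the increase of the potential $\Phi_t := \|\Lambda_t\|^p$, since $x \mapsto x^p$ is increasing). Fix an optimal offline assignment with final load vector $\Lambda^*_T$, and for each $t$ let $r_t := p_{t i^*_t}\,e_{i^*_t} \in \R^n_+$ be the load that job $t$ contributes under the optimal assignment, so that $\sum_{t=1}^T r_t = \Lambda^*_T$. The goal is to bound $\Phi_T = \|\Lambda_T\|^p = \ALG^p$ in terms of $\OPT^p = \|\Lambda^*_T\|^p$.

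I would establish a short chain of inequalities, the last two being exactly where $p$-\monotonicity is used. (i) \emph{Greedy:} since assigning job $t$ to $i^*_t$ is one option available to the algorithm, $\Phi_t - \Phi_{t-1} \le \|\Lambda_{t-1} + r_t\|^p - \|\Lambda_{t-1}\|^p$. (ii) \emph{Inflate the base:} since loads only grow we have $\Lambda_{t-1} \le \Lambda_T$, and applying the definition of $p$-\monotonicity with $u=\Lambda_{t-1}$, $v=\Lambda_T-\Lambda_{t-1}\ge 0$, $w=r_t$ shows that a marginal gain can only increase when the base increases, so $\|\Lambda_{t-1} + r_t\|^p - \|\Lambda_{t-1}\|^p \le \|\Lambda_T + r_t\|^p - \|\Lambda_T\|^p$. (iii) \emph{Telescope:} applying $p$-\monotonicity again, now with $u=\Lambda_T$, $v=r_1+\dots+r_{t-1}$, $w=r_t$, each term is at most the marginal of adding $r_t$ on top of $\Lambda_T + r_1 + \dots + r_{t-1}$, and summing over $t$ telescopes to
\[
\sum_{t=1}^{T}\Bigl(\|\Lambda_T + r_t\|^p - \|\Lambda_T\|^p\Bigr) \;\le\; \Bigl\|\Lambda_T + \sum_{t=1}^T r_t\Bigr\|^p - \|\Lambda_T\|^p \;=\; \|\Lambda_T + \Lambda^*_T\|^p - \|\Lambda_T\|^p .
\]

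Combining (i)--(iii) and using $\sum_t(\Phi_t-\Phi_{t-1}) = \Phi_T$ yields $\|\Lambda_T\|^p \le \|\Lambda_T + \Lambda^*_T\|^p - \|\Lambda_T\|^p$, i.e.\ $2\,\ALG^p \le \|\Lambda_T + \Lambda^*_T\|^p$. Then the triangle inequality gives $2\,\ALG^p \le (\ALG+\OPT)^p$, hence $(2^{1/p}-1)\,\ALG \le \OPT$, and the elementary estimate $2^{1/p}-1 = e^{(\ln 2)/p}-1 \ge (\ln 2)/p$ gives $\ALG \le \tfrac{p}{\ln 2}\,\OPT = O(p)\cdot\OPT$, as claimed.

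I do not expect a genuine obstacle here: the entire argument is essentially these few lines. The one thing to get right is recognizing that $p$-\monotonicity should be invoked in both of its equivalent readings --- ``a marginal gain can only increase when the base increases'' (Step (ii)) and ``the sum of the marginals of $r_1,\dots,r_T$ over a common base is at most the marginal of their sum'' (Step (iii)) --- together with the trivial but necessary observation that the algorithm's load vectors are coordinatewise non-decreasing, so $\Lambda_T$ serves as a single base dominating every $\Lambda_{t-1}$.
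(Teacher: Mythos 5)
Your proposal is correct and is essentially the paper's own proof: same greedy algorithm, same telescoping of $\|\Lambda_t\|^p$, same comparison to the optimal increments, and the same final step $2\,\ALG^p \le (\ALG+\OPT)^p \Rightarrow \ALG \le \frac{1}{2^{1/p}-1}\OPT = O(p)\,\OPT$. The only cosmetic difference is that you invoke $p$-supermodularity twice (steps (ii) and (iii)) where the paper folds both into a single application with $u=\Lambda_{t-1}$, $v=\Lambda_T+\Lambda^*_{t-1}-\Lambda_{t-1}$, $w=v^*_t$.
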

	
	\begin{proof}
		The algorithm is simple: be greedy with respect to $\|\cdot\|$,  i.e.,  allocate job $t$ to a machine such that the increase in the norm of load vector is the smallest, breaking ties arbitrarily.
		
		For the analysis, 
		let $v_t \in \R^n_+$ be the load vector that the algorithm incurs at time $t$ and $\Lambda_t := v_1 + \ldots + v_t$, and let $v^*_t$ and $\Lambda^*_t$ be defined analogously for the hindsight optimal solution. Then the cost of the algorithm to the power of $p$ is 
		\begin{align*}
		\|\Lambda_T\|^p = \sum_t \bigg( \|\Lambda_t\|^p - \|\Lambda_{t-1}\|^p\bigg) 
			& \le \sum_t \bigg( \|\Lambda_{t-1} + v^*_t\|^p - \|\Lambda_{t-1}\|^p\bigg)\\
			& \le \sum_t \bigg( \|\Lambda_T + \Lambda^*_{t-1} + v^*_t \|^p - \|\Lambda_T + \Lambda^*_{t-1} \|^p\bigg) \\
			& = \|\Lambda_T + \Lambda^*_T\|^p - \|\Lambda_T\|^p,
		\end{align*} 
		where the first inequality follows from the greedyness of the algorithm and the second inequality from $p$-\monotonicity.  Rearranging and taking $p$-th root gives
		\begin{align*}
			2^{1/p} \|\Lambda_T\| \le \|\Lambda_T + \Lambda^*_T\| \le \|\Lambda_T\| + \|\Lambda_T^*\|.
		\end{align*}
		Thus,  $\|\Lambda_T\| \le \frac{1}{2^{1/p}-1}\|\Lambda_T^*\| = O(p) \cdot \|\Lambda_T^*\|$ as desired. 
	\end{proof}

 Since $\ell_p$ norms are $p$-\monotone, we obtain $O(p)$-competitive algorithms for Online Load-Balancing with these norms, implying the results of \cite{AzarNR95,AspnesAFPW-JACM97}.
%

\subsection{$p$-\monotone Approximation and our Technique via Orlicz Norms} 


One difficulty is that many norms  (e.g.,   $\ell_\infty$) are not $p$-\monotone for a reasonable $p$ (e.g., polylogarithmic in the number of dimensions $n$).  Indeed,  the  greedy algorithm for online load balancing is known to be $\Omega(n)$-competitive for $\ell_\infty$ \cite{AspnesAFPW-JACM97}.  However,  in such cases one would like to  \emph{approximate} the original norm by a    $p$-\monotone norm before running the algorithm; e.g., approximate $\ell_\infty$ by $\ell_{\log n}$. 

One of our main contributions is showing that such an approximation exists for  large classes of norms. Formally, we say that a norm $\vertm{\cdot}$ $\alpha$-approximates another norm $\|\cdot \|$ if 
\[ \|x\| \leq \vertm{x} \leq \alpha\cdot \|x\| \qquad \text{for all $x \in \R_+^n$} .
\]
As our first main result (in \Cref{sec:approximation}), we show that all symmetric norms can be approximated by an  $O(\log n)$-\monotone norm.

\begin{restatable}{thm}{thmSymmetric} \label{thm:sym}
        Every monotone symmetric norm $\|\cdot\|$ in $n$ dimensions can be $O(\log n)$-approximated by an $O(\log n)$-\monotone norm.  
    \end{restatable}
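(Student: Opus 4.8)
The plan is to use the standard structural theory of symmetric norms together with the known "Orlicz-norm / top-$k$ decomposition" of symmetric norms, and then verify $p$-supermodularity for the resulting building blocks. The first step is to reduce from an arbitrary monotone symmetric norm $\|\cdot\|$ to a more tractable surrogate. Recall (e.g. from the work of Chakrabarty--Swamy and Bhaskara--Ganesan or the "majorization" literature) that any monotone symmetric norm on $\R^n_+$ can be $O(\log n)$-approximated by a maximum of a carefully chosen family of top-$k$ norms, or equivalently by a nonnegative combination of top-$k$ norms, where $\mathrm{Top}\text{-}k(x)$ is the sum of the $k$ largest coordinates of $x$. Since the paper has already said (in the discussion around Online Load-Balancing, and presumably in \Cref{sec:pSub}) that the maximum of $p$-supermodular norms is $p$-supermodular and that $\ell_p$ norms are $p$-supermodular, the whole result will follow if I can show that each individual $\mathrm{Top}\text{-}k$ norm is $O(\log n)$-approximated by an $O(\log n)$-supermodular norm, uniformly in $k$.

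The second step is therefore the core of the argument: approximate $\mathrm{Top}\text{-}k$ by an explicit $O(\log n)$-supermodular norm. The natural candidate is an Orlicz-type norm, or more concretely a scaled $\ell_q$-"plus-$\ell_\infty$" combination. Observe $\mathrm{Top}\text{-}k(x) \le \|x\|_1$ and $\mathrm{Top}\text{-}k(x) \le k\,\|x\|_\infty$, and in fact $\mathrm{Top}\text{-}k(x) = \min_{\theta \ge 0}\big(k\theta + \sum_i (x_i - \theta)_+\big)$. A clean way to get a supermodular surrogate is to replace $\ell_\infty$ by $\ell_{\log n}$: since $\|x\|_\infty \le \|x\|_{\log n} \le e\,\|x\|_\infty$ on $\R^n_+$, one can try the norm $\vertm{x} := \max\{\tfrac1k \cdot (\text{something}), \dots\}$; more robustly, I would define, for each $k$, a norm like $\vertm{x}_k := \|x\|_1$ if $k$ is large, and for the general case use the norm whose dual ball is the convex hull of $\{\pm e_i\}$ scaled appropriately — but the cleanest route is: $\mathrm{Top}\text{-}k$ is within a factor $2$ of $\max\{\|x\|_1 / \text{(stuff)}\}$... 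The honest candidate is $\vertm{x}_k := \big(\sum_i \min(x_i, \|x\|_{\log n})^{?}\big)$ — i.e. an Orlicz norm with Young function $\phi_k$ that grows linearly up to a threshold and then like $t^{\log n}$; such $\phi_k$-Orlicz norms $O(1)$-approximate $\mathrm{Top}\text{-}k$ (a known fact), and then I must check that a well-chosen Orlicz norm is $p$-supermodular for $p = O(\log n)$.

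The third step is to isolate a clean sufficient condition on an Orlicz function $\phi$ guaranteeing that the associated Orlicz norm $\|\cdot\|_\phi$ is $p$-supermodular, and to verify it for the $\phi_k$ above. I expect the right condition to be something like: $\phi^p$ is convex (equivalently $t \mapsto \phi(t^{1/p})$ is convex), plus a "$\Delta_2$-type" regularity so that the Luxemburg normalization constant behaves well under addition. Concretely I would prove a lemma: if $\psi := \phi^p$ is convex and satisfies $\psi(a+b) \ge \psi(a) + \psi(b)$ coordinatewise-style superadditivity, then $\|\cdot\|_\phi^p$ has monotone gradients. The main obstacle is exactly this last verification — controlling how the Luxemburg scaling constant of an Orlicz norm changes when we pass from $u+v$ to $u+v+w$, since unlike $\ell_p$ the quantity $\|\cdot\|_\phi^p$ is not separable. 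I would handle it by working with the variational/gauge definition $\|x\|_\phi = \inf\{\lambda > 0 : \sum_i \phi(x_i/\lambda) \le 1\}$, differentiating in a direction $w \ge 0$, and showing the derivative of $\|u + tw\|_\phi^p$ is nondecreasing in $u$ using convexity of $\phi^p$ and monotonicity; if a fully general Orlicz statement is too delicate, I will instead use only the two-piece $\phi_k$ (linear then power-$\log n$), for which $\|\cdot\|_{\phi_k}$ is comparable to an explicit norm of the form $\inf\{ s + t : s \ge 0,\ t \ge 0,\ \|x - (\text{split})\| \le \dots\}$ and check supermodularity directly on that $\min$-convolution, since min-convolution of $\ell_1$ with $\ell_{\log n}$ inherits $p$-supermodularity from its pieces.
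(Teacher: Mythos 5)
Your high-level architecture coincides with the paper's: reduce a symmetric norm to a small family of \topk norms via the known $O(\log n)$-approximate decomposition, approximate each \topk norm by an Orlicz norm, prove a one-dimensional sufficient condition on the Orlicz function for $p$-\monotonicity, and recombine (where, as the paper does, one must replace the outer maximum by an $\ell_p$-sum with $p=\Theta(\log n)$ — the maximum of $p$-\monotone norms is \emph{not} itself $p$-\monotone in general, only $2$-approximable by a $\max\{p,\log w\}$-\monotone norm, as $\ell_\infty$ already shows). However, both technical ingredients you leave open are stated in a way that would fail.

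First, the Orlicz function you propose for \topk — linear up to a threshold and then growing like $t^{\log n}$ — does not approximate the \topk norm: on the vector $x=\tfrac1n\ones$ the linear regime gives $\sum_i\phi(x_i/\alpha)=1/\alpha$, hence $\|x\|_\phi\approx\|x\|_1=1$, whereas $\|x\|_{\topk}=k/n$; the error is $n/k$, not $O(1)$. The correct generator has the opposite shape: (essentially) zero below $1/k$ and linear above, i.e.\ the hinge $G(t)=\max\{0,\,t-\tfrac1k\}$, which is a $2$-approximation (\Cref{thm:topk}); your own variational identity $\|x\|_{\topk}=\min_{\theta}\big(k\theta+\sum_i(x_i-\theta)_+\big)$ is pointing at exactly this. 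Second, your guessed sufficient condition for $p$-\monotonicity of an Orlicz norm (``$t\mapsto\phi(t^{1/p})$ convex,'' i.e.\ $\phi$ grows \emph{at least} like $t^p$; note that convexity of $\phi^p$ itself is automatic for convex increasing $\phi$ and carries no information) points in the wrong direction. The correct condition (\Cref{lemma:orliczpMono}) is that $\phi$ grows \emph{at most} like a power $p$, namely $\phi''(t)\,t\le(p-1)\,\phi'(t)$, which yields $(2p-1)$-\monotonicity via an explicit gradient/Hessian computation for the Luxemburg functional. This is also why the hinge cannot be used as is — its kink violates the growth condition — and why the smoothing must be multiplicative rather than additive: the paper replaces each hinge by $(b^p+(at)^p)^{1/p}-b$ (or by $\sim t^p$ when $b\ge1$). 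Since you explicitly flag exactly this verification as the main obstacle and do not carry it out, the proposal as written has a genuine gap at the heart of the argument, even though the road map is the right one.
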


{Moreover, this approximation can be done efficiently given \ballopt oracle\footnote{We use the definition in \cite{CS-STOC19}, whereby \ballopt oracle allows us to compute  $\max_{v: \|v\|\leq 1} \langle x, v \rangle$ for any vector $x \in \R^n$ with a single oracle call.} access to the norm $\|\cdot\|$.}
This result plays a crucial role not only in allowing us to rederive many existing results for symmetric norms in a unified way, but also to obtain new results where previously general symmetric norms could not be handled. 

    We now give a high-level idea of the different steps in the proof of \Cref{thm:sym}.
    

\paragraph{Reduction to \topk norms.}
The reason why general norms are often difficult to work with is that they cannot be easily described. 
An approach that has been widely successful when dealing with symmetric norms is to instead work with \topk norms--- sum of the largest $k$ coordinates of a non-negative vector.   Besides giving a natural way to interpolate between $\ell_1$ and $\ell_\infty$,  they actually form a ``basis'' for all symmetric norms. In particular, it is known that any symmetric norm can be $O(\log n)$-approximated by the max of  polynomially many (weighted) \topk norms (see \Cref{lem:strucSymm}). Leveraging this property, we reduce our problem in that of finding $p$-\monotone approximations of \topk norms. 


\paragraph{Our Approach via Orlicz Norms. }    
Even though \topk norms have a very simple structure, it is still not clear how to design $p$-\monotone approximations for them. Not only thinking about $p$-th power of functions in high dimensional setting is not easy, but there is no constant or ``wiggle room'' in the definition of $p$-\monotonicity to absorb errors.
Our main idea to overcome this is to instead work with \emph{Orlicz norms} (defined in \Cref{sec:Orlicz}). These norms are fundamental objects in functional analysis (e.g., see book \cite{harjulehto2019generalized}) and have also found use in statistics and computer science; see for example~\cite{orliczRegression1,orliczRegression2} for their application in regression. Orlicz functions are much easier to work with because they are defined via a $1$-dimensional function $\R_+ \rightarrow \R_+$.  

    So our next step is showing that any \topk norm can be $O(1)$-approximated by an Orlicz norm.
This effectively reduce our task of designing a $p$-\monotone approximation from an $n$-dimensional situation to a $1$-dimensional situation.

\paragraph{Approximating  Orlicz Norms.} The last step is showing that every Orlicz norm can be approximated by a $p$-\monotone one. 

    \begin{restatable}{thm}{thmorlicz} \label{thm:orlicz}
        Every Orlicz norm $\|\cdot\|_G$ in $n$-dimensions can be  $O(1)$-approximated pointwise by a (twice differentiable) $O(\log n)$-\monotone norm.  
    \end{restatable}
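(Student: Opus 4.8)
The plan is to first isolate a clean one-dimensional condition on the generating function $G$ that forces $p$-\monotonicity of the Orlicz norm $\|\cdot\|_G$, and then to show that every Orlicz function can be $O(1)$-approximated (in the sense of the theorem's $\alpha$-approximation) by a twice-differentiable $G$ meeting this condition with $p=O(\log n)$.

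\medskip

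\textbf{Step 1: a sufficient condition on $G$.} I would work with the Luxemburg form $\|x\|_G=\inf\{\lambda>0:\sum_i G(x_i/\lambda)\le 1\}$, normalizing $G(1)=1$ and (after a harmless mollification costing only $O(1)$) assuming $G$ is twice differentiable and strictly increasing on its support. The claim is that if $G$ additionally obeys the growth bound $t\,G'(t)\le p\,G(t)$ for all $t$ --- equivalently, $t\mapsto G(t)/t^{p}$ is non-increasing, equivalently $t\mapsto G(t^{1/p})$ is concave --- then $\|\cdot\|_G$ is $p$-\monotone. This condition is met with equality by $G(t)=t^{p}$ (giving $\ell_p$), comfortably by $G(t)=t$ (giving $\ell_1$), and it excludes $\ell_q$ for $q>p$, exactly as one wants. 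To prove it I would verify that $F(x):=\|x\|_G^{p}$ has non-negative off-diagonal Hessian entries (the twice-differentiable characterization of supermodularity). Writing $\lambda=\lambda(x)=\|x\|_G$ and differentiating $\sum_i G(x_i/\lambda)=1$ implicitly yields $\partial_j\lambda = G'(x_j/\lambda)\big/\sum_i G'(x_i/\lambda)(x_i/\lambda)$; then $\partial_i\partial_j F = p(p-1)\lambda^{p-2}(\partial_i\lambda)(\partial_j\lambda) + p\lambda^{p-1}\partial_i\partial_j\lambda$ for $i\neq j$. The first term is manifestly non-negative, and the point is that it dominates $p\lambda^{p-1}\,|\partial_i\partial_j\lambda|$ precisely when $t\,G'(t)\le p\,G(t)$ (one can sanity-check this on $\ell_2$: there $\partial_i\partial_j\lambda=-x_ix_j/\lambda^{3}$ and the inequality reduces to $p\ge 2$). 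This Hessian sign computation is the computational heart of the proof and the step I expect to be the main obstacle; if the direct route gets unwieldy, it may be cleaner to exploit the $p$-homogeneity of $F$ to reduce to a one-parameter inequality, or to dualize (the dual of $\|\cdot\|_G$ is comparable to $\|\cdot\|_{G^*}$ for the conjugate function, under which the growth bound becomes a matching lower bound on the elasticity of $G^*$).

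\medskip

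\textbf{Step 2: approximating an arbitrary $G$ by one of bounded elasticity.} Fix $p=\Theta(\log n)$. After normalizing $G(1)=1$, note that in evaluating $\|x\|_G$ one only ever plugs arguments in $[0,1]$ into $G$ (since $\|x\|_G\ge\|x\|_\infty$), and only the behavior of $G$ on the \emph{window} $[\,G^{-1}(1/n),1\,]$ --- over which $G$ increases by a factor of exactly $n$ --- matters up to a constant factor. I would pass to log--log coordinates, setting $\gamma(u)=\ln G(e^{u})$ and $\psi=\gamma^{-1}$; convexity of $G$ gives $\gamma'\ge 1$, so $\psi'=1/\gamma'\in(0,1]$. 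Define $\widetilde\psi$ by $\widetilde\psi\,'=\max(\psi',1/p)$, integrated so that $\widetilde\psi$ and $\psi$ agree at the top of the window; then $\widetilde\gamma=\widetilde\psi^{-1}$ has slope in $[1,p]$, i.e.\ the function $\widetilde G(t)=\exp(\widetilde\gamma(\ln t))$ satisfies exactly the condition of Step 1 (with the domain outside $[0,1]$ extended by a pure $p$-th power, which is irrelevant to the norm). The crucial quantitative point is that $\psi-\widetilde\psi\le \tfrac1p\cdot(\text{value-length of the window})=\tfrac{\ln n}{p}=O(1)$ on the window, because $\psi$ is ``stretched'' only where its slope drops below $1/p$ and the total value-range traversed is only $\ln n$. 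An additive $O(1)$ discrepancy in log--log coordinates is a multiplicative $O(1)$ discrepancy between $G^{-1}$ and $\widetilde G^{-1}$, and unwinding the Luxemburg definition this yields $\|x\|\le\|x\|_{\widetilde G}\le O(1)\cdot\|x\|$ for all $x\in\R^n_+$.

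\medskip

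Finally I would mollify $\widetilde\gamma$ for twice differentiability and, where the slope-capping has destroyed convexity of $\widetilde G$, restore it by a local correction (e.g.\ replacing $\widetilde G$ by its largest convex minorant, which preserves both ``$\widetilde G(t)/t$ non-decreasing'' and ``$\widetilde G(t)/t^{p}$ non-increasing'' by a short direct argument); one must check this last clean-up loses only a constant factor, which is the second place where care is needed. Setting $\vertm{\cdot}:=\|\cdot\|_{\widetilde G}$ (suitably rescaled) then gives a twice-differentiable $O(\log n)$-\monotone norm with $\|x\|\le\vertm{x}\le O(1)\|x\|$, as claimed.
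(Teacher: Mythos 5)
There is a genuine gap, and it sits exactly at the point you flag as the ``computational heart'' of Step~1. Your claimed sufficient condition $t\,G'(t)\le p\,G(t)$ (bounded elasticity of $G$) does \emph{not} imply $p$-\monotonicity of $\|\cdot\|_G$; the correct sufficient condition is second-order, namely $t\,G''(t)\le (p-1)\,G'(t)$ (bounded elasticity of $G'$), which is what the paper's \Cref{lemma:orliczpMono} proves. Your three ``equivalent'' reformulations are in fact not equivalent: $t\,G'(t)\le p\,G(t)$ is equivalent to $G(t)/t^p$ being non-increasing, but ``$t\mapsto G(t^{1/p})$ is concave'' is equivalent to the strictly stronger second-order condition (differentiate twice to check). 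To see that the first-order condition fails, carry out the Hessian computation you propose: for $i\neq j$ the requirement $\nabla^2_{ij}\|x\|_G^p\ge 0$ reduces (after dividing by $\nabla_i\|x\|_G\,\nabla_j\|x\|_G/\|x\|_G$) to
\[
-\frac{\tilde{x}_i G''(\tilde{x}_i)}{G'(\tilde{x}_i)}-\frac{\tilde{x}_j G''(\tilde{x}_j)}{G'(\tilde{x}_j)}+\frac{\sum_\ell \tilde{x}_\ell^2 G''(\tilde{x}_\ell)}{\sum_\ell \tilde{x}_\ell G'(\tilde{x}_\ell)}\;\ge\;-(p-1),
\]
where $\tilde{x}_\ell=x_\ell/\|x\|_G$. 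Now take $G$ to be a smoothing, at scale $\e$, of $t\mapsto t+\max\{0,t-t_0\}$: its elasticity is at most $2$ everywhere, yet $G''\approx 1/\e$ near the kink. Choosing $x$ with $\tilde{x}_i=\tilde{x}_j=t_0$ and a third coordinate on a linear piece, the two negative terms are each of order $t_0/(\e G'(t_0))$ while the positive term is strictly smaller than one of them, so the left-hand side tends to $-\infty$ as $\e\to 0$. Hence no bounded $p$ works, even though $\|\cdot\|_G$ is within a factor $2$ of $\ell_1$ --- the definition of $p$-\monotonicity has no slack to absorb a constant-factor perturbation, which is precisely why the paper cannot simply smooth and must control $G''$ pointwise.

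This error propagates into Step~2 and cannot be patched locally. Your log--log construction caps $\gamma'(u)$ where $\gamma(u)=\ln G(e^u)$, i.e.\ it controls only the elasticity of $\widetilde G$; the condition actually needed translates to $\gamma''\le \gamma'(p-\gamma')$, a bound on $\gamma''$ that your slope-capping does not provide (and in fact the capping introduces new kinks in $\gamma'$, as does taking a largest convex minorant). An unspecified mollification does not help: to bound $t\,G''/G'$ you must smooth at a multiplicative scale tied to the location and slope of each kink and then re-verify both the growth condition and the $O(1)$ norm approximation. This is exactly the part the paper has to work for: it decomposes $\tilde G$ into hinge functions $\max\{0,a_it-b_i\}$ and replaces each by a bespoke smooth $f_i$ (a pure power $t^p$ when $b_i\ge 1$, and $(b_i^p+(a_it)^p)^{1/p}-b_i$ when $b_i<1$) precisely because standard smoothings fail the multiplicative approximation requirement of \Cref{lemma:approximationofG}. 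Your window-reduction observation (only $t$ with $G(t)\in[1/n,1]$ matters, costing $O(1)$) is sound and matches the paper's \Cref{claim:piecewise}, but the core of the theorem is the second-order smoothing, which your proposal does not supply.
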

    
As an example, an immediate corollary of this result along with \Cref{thm:loadBalancing} is an $O(\log n)$-competitive algorithm for Online Load-Balancing with an Orlicz norm objective.


    Our key handle for approaching \Cref{thm:orlicz} is the proof of a sufficient guarantee for an Orlicz norm to be $p$-\monotone: the 1-dimensional function $G$ generating it should grow ``at most like a polynomial of power $p$'' (\Cref{lemma:orliczpMono}).
    Then the construction of the approximation in the theorem proceeds in three steps. First, we simplify the structure of the Orlicz function $G$ by approximating it with a sum of (increasing) ``hinge'' functions $\tilde{G}(t) := \sum_i \tilde{g}_i(t)$. These hinge function, by definition, have a sharp ``kink'', hence do not satisfy the requisite growth condition. Thus, the next step is to approximate them by smoother functions $f_i(t)$ that grow at most like power $p$.   The standard smooth approximations of hinge functions (e.g.,  Hubber loss) do not give the desired approximation properties, so we design an approximation that depends on the relation between the slope and the location of the kink of the hinge function.   Finally, we  show  that the Orlicz norm $\|\cdot\|_F$, generated by the the function $F(t)= \sum_i f_i(t)$, both approximates $\|\cdot\|_G$ and is $O(\log n)$-\monotone.   

    \medskip
    Putting these ideas together,  gives the desired approximation of every symmetric norm by an $O(\log n)$-\monotone one.  

\subsection{Direct Applications of $p$-\monotonicity}

    Next, we detail a variety of applications for $p$-\monotone functions. Our discussion includes both reinterpretations of existing algorithms through the lens of \monotonicity and the introduction of novel  techniques that leverage \monotonicity to address previously intractable problems.
    In this section, we discuss applications that immediately follow from prior works due to $p$-\monotonicity.

    \subsubsection{Online Covering with a Norm Objective} \label{sec:introCoverSimple}
    
    The \cover problem is defined as follows: a norm $f : \R^n \rightarrow \R$ is given upfront, and at each round $r$ a new constraint $\ip{A^r}{x} \ge 1$ arrives (for some non-negative vector $A^r \in\R^n$). The algorithm needs to maintain a non-negative solution $x \in \R^n_+$ that satisfies the constraints $\ip{A_1}{y} \ge 1$, \ldots, $\ip{A_r}{y} \ge 1$ seen thus far, and is only allowed to increase the values of the variables $x$ over the rounds. The goal is to minimize the cost $f(x)$ of the final solution $x$.

    When the cost function $f$ is linear (i.e., the $\ell_1$ norm), this corresponds to the classical problem of Online Covering LPs \cite{AlonAA-STOC03,BuchbinderNaor-Book09}, where $O(\log s)$-competitive algorithms are known ($s$ is the maximum row sparsity)~\cite{BN-MOR09,GN-MOR14}. 
    This was first extended to $O(p \log s)$-competitive algorithms when $f$ is the $\ell_p$ norm \cite{AzarBCCCG0KNNP16}, and was later extended to sums of $\ell_p$ norms \cite{NS-ICALP17}. \cite{NS-ICALP17} posed as an open question whether good online coverage algorithms exist outside of $\ell_p$-based norms. 
    The following result, which  follows directly by applying the algorithm of~\cite{AzarBCCCG0KNNP16} to the $p$-\monotone approximations of Orlicz and symmetric norms provided by \Cref{thm:orlicz} and \Cref{thm:sym}, shows that this is indeed the case. 
    
    \begin{cor} \label{cor:cover}
        In the \cover problem, if the objective  can be $\alpha$-approximated by a $p$-\monotone norm then there exists an $O(\alpha p \log s)$-competitive algorithm, where $s$ is the maximum row sparsity. Hence, if the objective  is an Orlicz norm then this yields  $O(\log n \log s)$ competitive ratio, and if the objective is a symmetric norm then this yields $O(\log^2 n \log s)$ competitive ratio.  
    \end{cor}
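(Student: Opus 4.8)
\textbf{Proof plan for Corollary~\ref{cor:cover}.}
The plan is to use \Cref{thm:orlicz} and \Cref{thm:sym} as black boxes and reduce to the known result for $\ell_p$ norms from \cite{AzarBCCCG0KNNP16}. First I would recall precisely what that result gives: for the \cover problem where the cost function is the $\ell_p$ norm (or, more accurately, whatever class of $p$-\monotone objectives the algorithm of \cite{AzarBCCCG0KNNP16} actually handles), there is an $O(p \log s)$-competitive online algorithm, where $s$ is the maximum number of nonzero entries in any arriving constraint vector $A^r$. The key conceptual point I would verify and state is that the analysis of \cite{AzarBCCCG0KNNP16} only uses the fact that $f^p = \|\cdot\|^p$ is supermodular (equivalently, $p$-\monotonicity of $f$), together with convexity/monotonicity of the norm --- i.e., it never uses the separable form $\sum_i x_i^p$ per se. Granting this, the algorithm of \cite{AzarBCCCG0KNNP16} run with any $p$-\monotone norm objective is $O(p\log s)$-competitive; this is the substance behind the first sentence of the corollary.

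Next I would handle the approximation overhead. Suppose the true objective $f$ can be $\alpha$-approximated by a $p$-\monotone norm $\vertm{\cdot}$, meaning $f(x) \le \vertm{x} \le \alpha f(x)$ for all $x \in \R_+^n$. The strategy is simply to run the $\ell_p$-style covering algorithm of \cite{AzarBCCCG0KNNP16} \emph{pretending the objective is $\vertm{\cdot}$}. Its output $x$ satisfies all the covering constraints (feasibility is unaffected by which objective we optimize) and is monotone nondecreasing across rounds, so it is a valid solution for the original problem. For the cost comparison: let $x^*$ be the offline optimum for the true objective $f$. Then
\begin{align*}
f(x) \;\le\; \vertm{x} \;\le\; O(p\log s)\cdot \vertm{x^*} \;\le\; O(p\log s)\cdot \alpha \cdot f(x^*),
\end{align*}
where the first inequality is the lower sandwich bound, the second is the competitiveness guarantee of \cite{AzarBCCCG0KNNP16} applied to $\vertm{\cdot}$ (comparing against the offline optimum \emph{for $\vertm{\cdot}$}, which is at most $\vertm{x^*}$ since $x^*$ is feasible), and the last is the upper sandwich bound. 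This gives the claimed $O(\alpha p \log s)$ competitive ratio.

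Finally I would instantiate the two special cases. For an Orlicz norm objective, \Cref{thm:orlicz} provides an $O(1)$-approximating $O(\log n)$-\monotone norm, i.e.\ $\alpha = O(1)$ and $p = O(\log n)$, yielding $O(\alpha p \log s) = O(\log n \log s)$. For a general monotone symmetric norm, \Cref{thm:sym} provides an $O(\log n)$-approximating $O(\log n)$-\monotone norm, i.e.\ $\alpha = O(\log n)$ and $p = O(\log n)$, yielding $O(\log^2 n \log s)$. The only genuine obstacle in writing this cleanly is the first step: making sure the algorithm and analysis of \cite{AzarBCCCG0KNNP16} are stated (or re-derived) at the level of generality of ``any $p$-\monotone norm'' rather than literally $\ell_p$; once that is granted, the approximation composition is routine. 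I would therefore either cite a statement of \cite{AzarBCCCG0KNNP16} already phrased via supermodularity, or include a short remark tracing through their potential-function argument to confirm that $p$-\monotonicity is the only structural property used.
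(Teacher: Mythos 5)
Your proposal is correct and matches the paper's (very brief) justification: the paper likewise obtains the corollary by running the algorithm of \cite{AzarBCCCG0KNNP16} on the $p$-\monotone approximant and composing the $\alpha$-approximation with the $O(p\log s)$ guarantee, instantiated via \Cref{thm:orlicz} and \Cref{thm:sym}. Your caveat about verifying that the analysis of \cite{AzarBCCCG0KNNP16} only needs monotonicity of $\nabla(\|\cdot\|^p)$ rather than separability is the right point to check, and the paper implicitly confirms it (its later discussion notes that \cite{AzarBCCCG0KNNP16} is exactly the case where the gradient of the potential is monotone).
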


    Since  $\ell_p$-norms are $p$-\monotone, this  extends  the result of \cite{AzarBCCCG0KNNP16}.

    \subsubsection{Applications via  Gradient Stability: Bandits with Knapsacks or Vector Costs }  
     
    Recently, \cite{KMS-SODA23} introduced the notion of gradient stability of norms and showed that  it implies good  algorithms for  online problems such as  Online Load-Balancing, Bandits with Vector Costs, and Bandits with Knapsacks.     (Gradient stability, however, does not suffice for other applications in this paper, like for Online Covering, Online Packing, Stochastic Probing, and robust algorithms.)
    In \Cref{sec:gradStability}, we show     that gradient stability is (strictly) weaker than  $p$-\monotonicity, and hence we can recover all of the results in \cite{KMS-SODA23}. Due to \Cref{thm:orlicz} for Orlicz norms, this   also 
    improves the approximation factors in all these applications from $O(\log^2 n)$  to $O(\log n)$ for Orlicz norms. See \Cref{sec:gradStability} for more details.

\subsubsection{Robust Algorithms}

    \monotonicity also has implications for online problem in stochastic, and even better, \emph{robust} input models. Concretely, consider the Online Load-Balancing problem from \Cref{sec:intoSuper}, but in the \textsc{Mixed} model where the time steps are partitioned (unbeknownst to the algorithm) into an \emph{adversarial} part and a \emph{stochastic} part, where in the latter jobs are generated i.i.d. from an unknown distribution. Such models that interpolate between the pessimism and optimism of the pure worst-case and stochastic models, respectively, have received significant attention in both online algorithms~\cite{meyerson,adSim,welfareSim,KKN15,molinaro,mixedModelSpikes,kesselheimMolinaro,guptaSahilRobust,ArgueGMS22} and online learning (see \cite{guptaCOLT19} and references within).

    Consider the (Generalized)\footnote{This is the generalization where there are $k$ ``options'' for processing each job, and each option incurs possible different loads on multiple machines.}  Online Load-Balancing in this model, with processing times normalized to be in $[0,1]$. For the $\ell_p$-norm objective,~\cite{Molinaro21} designed an algorithm with cost most $O(1)\cdot\OPT_{Stoch} + O(\min\{p, \log n\})\cdot \OPT_{Adv} + O(\min\{p, \log m\}\,n^{1/p})$, where $\OPT_{Adv}$ and $\OPT_{Stoch}$ are the hindsight optimal solutions for the items on each part of the input. That is, the algorithm has strong performance on the ``easy'' part of the instance, while being robust to ``unpredictable'' jobs. We extend this result beyond $\ell_p$-norm objectives,  by applying Theorem 1 of~\cite{Molinaro21} and our $p$-\monotone approximation for Orlicz norms from \Cref{thm:orlicz}.

    \begin{cor} \label{thm:robust}
        Consider the (Generalized) Online Load-Balancing problem in the \textsc{Mixed} model with processing times in $[0,1]$. If the objective function can be $\alpha$-approximated by a $p$-\monotone norm $\|\cdot\|$, then there is an algorithm with cost at most $O(\alpha)\,\OPT_{Stoch} + O(\alpha p^2) \, \OPT_{Adv} + O(\alpha p \|\ones\|).$ For Orlicz norm objective, this becomes $O(1)\,\OPT_{Stoch} + O(\log^2 n) \, \OPT_{Adv} + O(\log n \cdot \|\ones\|).$
    \end{cor}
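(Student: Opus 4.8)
The plan is to reduce the statement directly to Theorem 1 of \cite{Molinaro21} together with the $p$-\monotone approximation guaranteed by \Cref{thm:orlicz} (or the hypothesis of the corollary in general). Suppose the objective norm $\|\cdot\|$ is $\alpha$-approximated by a $p$-\monotone norm $\vertm{\cdot}$, i.e. $\|x\| \le \vertm{x} \le \alpha \|x\|$ for all $x \in \R_+^n$. First I would record the black-box statement we need: for a norm that is ``nice enough'' in the sense used by \cite{Molinaro21} (which for $\ell_p$ is exactly the $p$-dependence appearing in their bound, and which in our language is captured by $p$-\monotonicity — this identification is the one real thing to check), there is an online algorithm for (Generalized) Online Load-Balancing in the \textsc{Mixed} model achieving cost at most $O(1)\cdot\widehat{\OPT}_{Stoch} + O(p)\cdot\widehat{\OPT}_{Adv} + O(p\,\vertm{\ones})$ when the objective is $\vertm{\cdot}$ itself, where $\widehat{\OPT}$ denotes the hindsight optima measured in the norm $\vertm{\cdot}$. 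Actually, inspecting the target bound $O(\alpha)\OPT_{Stoch} + O(\alpha p^2)\OPT_{Adv} + O(\alpha p\|\ones\|)$, the extra factor of $p$ on the adversarial term suggests \cite{Molinaro21}'s guarantee for a $p$-\monotone norm is of the form $O(1)\cdot\widehat{\OPT}_{Stoch} + O(p^2)\cdot\widehat{\OPT}_{Adv} + O(p\,\vertm{\ones})$; I would quote whichever exact constants their Theorem~1 gives, since the point of the corollary is just to push the $\alpha$ through.

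Given that black-box bound for $\vertm{\cdot}$, the rest is the standard sandwiching argument. Run the \cite{Molinaro21} algorithm using the surrogate norm $\vertm{\cdot}$ as the objective. Its output has $\vertm{\cdot}$-cost at most $O(1)\,\widehat{\OPT}_{Stoch} + O(p^2)\,\widehat{\OPT}_{Adv} + O(p\,\vertm{\ones})$. Now translate each term: on the left, the actual cost we care about is $\|\Lambda_T\| \le \vertm{\Lambda_T}$, so the $\vertm{\cdot}$-cost upper-bounds the true cost for free. On the right, $\widehat{\OPT}_{Stoch} \le \alpha\,\OPT_{Stoch}$ and $\widehat{\OPT}_{Adv}\le\alpha\,\OPT_{Adv}$, because any feasible (indeed the true hindsight-optimal) assignment for each part has $\vertm{\cdot}$-value at most $\alpha$ times its $\|\cdot\|$-value; and $\vertm{\ones}\le\alpha\|\ones\|$ directly from the approximation. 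Substituting gives cost at most $O(\alpha)\,\OPT_{Stoch} + O(\alpha p^2)\,\OPT_{Adv} + O(\alpha p\,\|\ones\|)$, which is the claimed bound. One should double check that ``processing times in $[0,1]$'' is preserved — it is, since we do not rescale the instance, only change the objective used internally by the algorithm — and that the hindsight optima in the statement are the ones for $\|\cdot\|$, which is exactly how I have set it up.

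For the Orlicz specialization, apply \Cref{thm:orlicz}: an Orlicz norm $\|\cdot\|_G$ in $n$ dimensions is $O(1)$-approximated by an $O(\log n)$-\monotone norm, so we may take $\alpha = O(1)$ and $p = O(\log n)$. Plugging these into the general bound yields $O(1)\,\OPT_{Stoch} + O(\log^2 n)\,\OPT_{Adv} + O(\log n\cdot\|\ones\|)$, as stated. (Here I am using that $\alpha p^2 = O(1)\cdot O(\log^2 n) = O(\log^2 n)$ and $\alpha p = O(\log n)$.)

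The main obstacle — the only non-mechanical point — is verifying that $p$-\monotonicity of the surrogate norm is exactly the hypothesis under which \cite{Molinaro21}'s Theorem~1 delivers a bound with this shape of $p$-dependence. Their paper is phrased for $\ell_p$, and their analysis uses convexity/smoothness-type properties of $\|\cdot\|_p^p$; the task is to check that every place where they invoke a property of $\ell_p$ (e.g. a supermodular-style inequality on increments of $\|\cdot\|^p$, or a bound on how $\|\ones\|^p$ compares to $n$) goes through verbatim from the definition of $p$-\monotonicity and the normalization. I expect this to be a short check rather than a new argument, paralleling how \Cref{thm:loadBalancing} reproves the $\ell_p$ load-balancing bound from $p$-\monotonicity alone; if \cite{Molinaro21}'s Theorem~1 is already stated for an abstract norm class, this step is immediate and the corollary is essentially a one-line substitution.
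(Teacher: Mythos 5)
Your proposal is correct and is exactly the paper's (implicit) argument: the corollary is stated as a direct application of Theorem~1 of~\cite{Molinaro21} to the $p$-\monotone surrogate norm $\vertm{\cdot}$, followed by the sandwich $\|x\|\le\vertm{x}\le\alpha\|x\|$ to translate the cost, the two hindsight optima, and the $\vertm{\ones}$ term, and then \Cref{thm:orlicz} with $\alpha=O(1)$, $p=O(\log n)$ for the Orlicz case. The one point you flag — that Molinaro's Theorem~1 must apply to norms whose $p$-th power has monotone gradients and the growth condition $\ip{\nabla\|x\|^p}{x}=p\|x\|^p$, rather than to $\ell_p$ specifically — is indeed the only thing to check, and it is how the paper uses that theorem; the $O(p^2)$ (rather than $O(p)$) coefficient on $\OPT_{Adv}$ in the general statement is consistent with your reading.
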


    
 \subsection{New Applications using $p$-\monotonicity} \label{sec:newApplications}

We discuss applications that require additional work but crucially rely on $p$-\monotonicity.
    
 \subsubsection{Online Covering with Composition of Norms}   \label{sec:introCovering}

 To illustrate the general applicability of our ideas, in particular going beyond symmetric norms, let us reconsider the \cover problem but now with ``composition of norms'' objective.  
 This version of the problem is surprisingly general: {its offline version captures the fractional setting of other fundamental problems such as Generalized Load-Balancing \cite{DLR-SODA23} and Facility Location.} 
 
 Formally, in \cover with composition of norms, 
 the objective function is defined by a monotone outer norm $\|\cdot\|$ in $\R^k$, monotone inner norms $f_1,\ldots,f_k$ in $\R^n$, and subsets of coordinates $S_1,\ldots,S_\ell \subseteq [n]$ to allow the inner norms to only depend on a subset of the coordinates, i.e.,
 \[
 \|f_1(y|_{S_1}), \ldots, f_k(y|_{S_k})\|, 
 \]
 where $y|_{S_\ell} \in \R^{S_\ell}$ is the sub-vector of $y$ with the coordinates indexed by $S_\ell$. 
 The objective function is given upfront, but the constraints $\ip{A_1}{y} \ge 1, \ip{A_2}{y} \ge 1, \ldots, \ip{A_m}{y} \ge 1$ arrive in rounds, one-by-one, where $A_r \in [0,1]^n$ is the $r$th row of $A$. For each round $r$, the algorithm needs to maintain a non-negative solution $y \in \R^n_+$ that satisfies the constraints $\ip{A_1}{y} \ge 1$, \ldots, $\ip{A_r}{y} \ge 1$ seen thus far, and is only allowed to increase the values of the variables $y$ over the rounds. The goal is to minimize the composed norm objective.

 
 Our next theorem shows that good algorithms exist for \cover even with composition of $p$-\monotone norms objectives. (Since this composed norm may not be $p$-\monotone,  \Cref{cor:cover} does not apply.)

     \begin{restatable}{thm}{thmCover} \label{thm:cover}
        If the outer norm $\|\cdot\|$ is $p'$-\monotone and the inner norms $f_\ell$'s are $p$-\monotone, then there is an $O(p' \, p \log^2 d\rho {\gamma})$-competitive algorithm for \cover, where $d$ is the maximum between the sparsity of the constraints and the size of the coordinate restrictions, namely $d = \max\{\max_r \supp(A_r)\,,\, \max_\ell |S_\ell|\}$, {$\rho = \max_{r,i : (A_r)_i \neq 0} \frac{1}{(A_r)_i}$, and $\gamma = \max_\ell  \frac{\max_{i \in S_{\ell}} f_\ell(e_i)}{\min_{i \in S_{\ell}} f_\ell(e_i)}$.} 
    \end{restatable}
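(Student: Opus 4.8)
The plan is to reduce the composition-of-norms problem to the ordinary \cover problem (\Cref{cor:cover}) by constructing, from the $p'$-\monotone outer norm and the $p$-\monotone inner norms, a single auxiliary norm that (i) is itself $p'p$-\monotone, (ii) approximates the composed norm objective up to a $\poly(d,\rho,\gamma)$ factor when restricted to the feasible region, and (iii) can be encoded by a covering system whose row sparsity is polynomially related to the original. Concretely, I would introduce auxiliary variables $z_\ell$, one per inner norm, constrain $z_\ell \ge f_\ell(y|_{S_\ell})$, and let the new objective be $\|z_1,\ldots,z_k\|$. The key observation is that since each $f_\ell$ is $p$-\monotone, \Cref{thm:loadBalancing}-style greedy arguments (or rather a direct use of $p$-\monotonicity) let us \emph{linearize} the constraint $z_\ell \ge f_\ell(y|_{S_\ell})$: the epigraph of a $p$-\monotone norm can be approximated by polynomially many linear inequalities with a loss of only $O(p)$ in the competitive ratio, where the number and sparsity of these inequalities is controlled by $d = \max_\ell |S_\ell|$ and the aspect ratio $\gamma$ of $f_\ell$ on the coordinate axes. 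This is where the parameters $d$, $\gamma$ enter; the parameter $\rho$ enters through the original covering constraints $\langle A_r, y\rangle \ge 1$, exactly as in \Cref{cor:cover}.

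First I would set up the lifted instance: variables $(y,z) \in \R^{n}_+ \times \R^k_+$, the original constraints $\langle A_r,y\rangle \ge 1$, and for each $\ell$ a family of linear ``epigraph'' constraints certifying $z_\ell \gtrsim f_\ell(y|_{S_\ell})$. The right linearization is the standard one: for any monotone norm $f$, $f(x) = \max_{v \in B^\circ} \langle v, x\rangle$ where $B^\circ$ is the dual ball; discretizing the directions $v$ on a grid of granularity depending on $\gamma$ over the $|S_\ell| \le d$ relevant coordinates yields $\poly(d)\cdot(\text{something})^{d}$—so instead I would use a coarser, multiplicative net over the coordinates that gives $O(\log(d\gamma))$-many constraints per ``scale'', each of sparsity $\le d$, at the cost of an $O(1)$ (or $O(\log)$) approximation of $f_\ell$. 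Then the outer objective $\|z_1,\ldots,z_k\|$ over this lifted polytope is a genuine \cover instance with a $p'$-\monotone norm objective and row sparsity $d' = O(d)$ and width parameter $\rho' = \poly(\rho, \gamma)$, so \Cref{cor:cover} gives an $O(p' \log d')= O(p'\log(d\rho\gamma))$-competitive fractional solution; combining with the $O(p)$ loss from linearizing the inner norms and the $O(\log(d\gamma))$ loss from the net yields the claimed $O(p'\,p\,\log^2(d\rho\gamma))$ bound after collecting the two logarithmic factors.

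The two steps I expect to be delicate are: (a) making the epigraph linearization of a $p$-\monotone norm work \emph{online}—the constraints must be monotone in $y$ and revealed in a way compatible with the online covering framework, and one must argue the competitive guarantee of \Cref{cor:cover} is not destroyed by the auxiliary variables $z_\ell$ (this needs that increasing $z_\ell$ is ``free'' relative to the objective only up to the $f_\ell$-value, which is where $p$-\monotonicity of $f_\ell$ is used rather than mere monotonicity); and (b) bounding the row-sparsity/width of the linearized system by $\poly(d,\rho,\gamma)$ rather than something depending on $n$ or $k$—this forces the net to be over the $\le d$ coordinates in $S_\ell$ only and to be multiplicative, which is exactly why $\gamma$ (the ratio $\max_i f_\ell(e_i)/\min_i f_\ell(e_i)$) shows up. The main obstacle is therefore reconciling the online monotonicity requirement with the linearization: I would handle it by noting that the epigraph constraints for $f_\ell$ are \emph{static} (the inner norms are given upfront), so all of them can be instantiated at the start, and only the $\langle A_r,y\rangle\ge1$ constraints arrive online; this keeps us squarely within the scope of \Cref{cor:cover} applied to the outer $p'$-\monotone norm.
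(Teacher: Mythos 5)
Your reduction has a structural flaw that I don't see a way around: the epigraph constraints $z_\ell \ge f_\ell(y|_{S_\ell})$, once linearized as $z_\ell \ge \ip{v}{y|_{S_\ell}}$ for dual vectors $v \ge 0$, are \emph{not} covering constraints. Written in the form required by \Cref{cor:cover} they read $z_\ell - \ip{v}{y|_{S_\ell}} \ge 0$, i.e.\ they have coefficients of mixed sign (and a zero right-hand side), whereas the online covering framework of \cite{AzarBCCCG0KNNP16} requires every constraint to be a non-negative combination of the variables with a positive right-hand side, with variables only increasing. Your lifted instance is therefore a mixed packing/covering program, not an instance of \cover, and \Cref{cor:cover} does not apply to it. This is not a technicality one can patch by a cleverer encoding: the composed objective itself is a monotone norm of $y$, and the paper's \Cref{lem:counterEgGeneralNorm} (sum of $\ell_\infty$'s on disjoint blocks, which is exactly an outer-$\ell_1$/inner-$\ell_\infty$ composition) shows such norms cannot be $\alpha$-approximated by any $p$-supermodular subadditive function unless $\alpha p = \Omega(\sqrt{n})$, so any route that ends with ``apply \Cref{cor:cover} to a single $p''$-\monotone objective'' is blocked for reasonable parameters. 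A second, independent gap is your claimed linearization of the epigraph of a general ($p$-\monotone) norm on $d$ coordinates by $\poly(d,\gamma)$ inequalities with only $O(p)$ or $O(\log(d\gamma))$ multiplicative loss; no such statement appears in the paper and it is false in general (already $\ell_2$ on $d$ coordinates needs $2^{\Omega(d)}$ halfspaces for a constant-factor outer approximation; a coordinate-wise multiplicative net only buys you the $\ell_1$-versus-$\ell_\infty$ sandwich, i.e.\ a $\poly(d)$ loss).

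For comparison, the paper does not reduce to \Cref{cor:cover} at all. It runs the continuous mirror-descent update directly on the potential $\Psi(y) = \frac{1}{p'}\|F(y)\|^{p'}$ of the composed objective and analyzes it by Fenchel duality, taking as dual certificate $\bar v = \beta\int_0^{\tau_\final} A(\tau)\,\d\tau$. The entire difficulty is that $\nabla\Psi$ is not monotone (precisely because the composition is not $p$-\monotone), and the fix is the phase decomposition of \Cref{lemma:monoPsi} and \Cref{lemma:numberPhases}: within a phase where $f_\ell(x^\ell(\tau))^{p-1}$ at most doubles, the $p$-\monotonicity of $f_\ell$ forces $\nabla f_\ell$ to be $2$-approximately monotone, and the number of phases is $O(p\log d\rho\gamma)$, which is where one of the two logarithms and the factor $p$ come from. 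If you want to pursue your own route, the piece you would need to invent is an online algorithm for the mixed system $\{\ip{A_r}{y}\ge 1,\ z_\ell \ge f_\ell(y|_{S_\ell})\}$, which is essentially as hard as the original problem.
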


     Unlike  \Cref{cor:cover} which immediately followed from $p$-\monotonicity, this result needs new ideas to analyze the algorithm. We combine ideas from Fenchel duality used in~\cite{AzarBCCCG0KNNP16} with breaking up the evolution of the algorithm into phases where the gradients the norm behaves almost $p$-\monotone, inspired by~\cite{NS-ICALP17} in the $\ell_p$-case.  

    \subsubsection{Online Packing} 
    The \pack problem has the form: 
    \begin{align}
        \max\,& \ip{c}{x} \qquad \textrm{s.t.}\quad  Ax \le b \text{ and } x \ge 0,  \label{eq:packClassic}
    \end{align} 
    where $c \in \R^\numberofsteps, A \in \R^{\textrm{\#\,constraints}\times \numberofsteps}$, and $b \in \R^{m}$ have all non-negative entries. At the $t$-th step,  we see the value $c_t$ of the item  and its vector size $(a_{1,t},\ldots,a_{m,t})$), and have to immediately set $x_t$ (which cannot be changed later).  The classic online primal-dual algorithms were designed to address such problems  \cite{BN-MOR09,BuchbinderNaor-Book09}, and we know $O(\log (\rho \cdot \textrm{\#\,constraints}))$-competitive algorithms, where $\rho = \max_i \frac{\max_t a_{i,t} / c_t}{\min_{t : a_{i,t} > 0} a_{i,t} / c_i}$ is the ``width'' of the instance. 

    For many packing problems, however, the \textrm{\#\,constraints} is exponential in number of items $\numberofsteps$, e.g.,  matroids are given by $\{ \sum_{t \in S} x_t \le r(S), ~\forall S \subseteq [\numberofsteps]\}$ where $r$ is the rank function. In such situations, a competitive ratio that depends logarithmically on the number of constraints is not interesting, and we are interested in obtaining competitive ratios that only depend on the intrinsic dimension of the problem. 

More formally, we consider the general \pack problem of the form: 
    \begin{align}
        \max\,& \ip{c}{x} \qquad \textrm{s.t.}\quad  Ax \in P \text{ and } x \ge 0,  \label{eq:packPre}
    \end{align} 
    where $P$ is an $\numberofdimensions$-dimensional downward closed set. Again, $\numberofsteps$ items come one-by-one  (along with $c_t$ and $(a_{1,t},\ldots,a_{m,t})$) and we need to immediately set $x_t$. Can we obtain $\poly\!\log(\numberofdimensions,\numberofsteps,\rho)$-competitive online algorithms?     
    In the stochastic setting of this problem, where items come in a random order (secretary model) or from known distributions (prophet model), Rubinstein \cite{Rubinstein-STOC16} obtained $O(\log^2 \numberofsteps)$-competitive algorithms (see also \cite{AD-SODA15}).     
     But in the adversarial online model, despite being a very basic problem, we do not know of good online algorithms beyond very simple $P$.
    
    We propose the use of $p$-\monotonicity as a way of tackling this problem. The connection with norms is because there is a 1-1 equivalence between downward closed sets $P$ and  monotone norms, given by the gauge function $\|x\|_P := \inf\{\alpha > 0 : \frac{x}{\alpha} \in P\}$, where $x \in P \Leftrightarrow \|x\|_P \le 1$. Thus,  the packing constraint $Ax \in P$ in \eqref{eq:packPre} is equivalent to $\|Ax\|_P \le 1$.
    Our next result illustrates the potential of this approach. 

    \begin{restatable}{thm}{thmPacking} \label{thm:packing}
    Consider an instance of the problem \pack where the norm associated with the feasible set $P$ admits an $\alpha$-approximation by a differentiable $p$-\monotone norm. 
    \begin{itemize}
    \item    If a $\beta$-approximation $\OPT \le \widetilde{\OPT} \le \beta \OPT$ of $\OPT$ is known, then there is an algorithm whose expected value is $O(\alpha) \cdot \max\{p, \log \alpha \beta \}$-competitive.  
    \item If no approximation of $\OPT$ is known, then there is an algorithm whose expected value is $O(\alpha) \cdot \max\{p, \log \numberofdimensions \rho \}$-competitive, where $\rho$ is an upper bound on the width $\frac{\max_{i, t} (a_{i, t} \cdot \alpha\, \|e_i\|_P / c_t)}{\min_{i, t: a_{i,t} > 0} (a_{i, t} \cdot \|e_i\|_P / c_t)}$.
    \end{itemize}
    \end{restatable}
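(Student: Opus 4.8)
I would follow the same primal--dual template as the proof of \Cref{thm:loadBalancing}: pass to the $p$-\monotone norm, run a greedy/Lagrangian algorithm whose ``dual prices'' are the gradients of its $p$-th power, and observe that $p$-\monotonicity is exactly what makes these prices \emph{monotone} along the run, which is the only ingredient the charging argument needs. Concretely, let $\|\cdot\|$ be the differentiable $p$-\monotone norm with $\|\cdot\|_P \le \|\cdot\| \le \alpha\|\cdot\|_P$. I would first \emph{reduce} to the problem $\max\langle c,x\rangle$ s.t.\ $\|Ax\| \le 1,\ x\ge 0$: replacing $Ax\in P$ (i.e.\ $\|Ax\|_P\le 1$) by the stronger $\|Ax\|\le 1$ keeps every solution $P$-feasible and shrinks the optimum by at most $\alpha$ (any $P$-feasible $x$ has $\|Ax\|\le\alpha$, so $x/\alpha$ is feasible for the reduced problem), so an $O(\max\{p,\log(\cdot)\})$-competitive algorithm for the reduced problem yields the theorem after losing an extra $O(\alpha)$. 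I may assume $p\ge 2$ (a $p$-\monotone norm is $p'$-\monotone for $p'\ge p$), which avoids the degenerate linear case. Write $a_t\in\R^n_+$ for the $t$-th column of $A$ and $\Lambda_t := A x^{(t)}$ for the load after step $t$.

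The engine is a one-parameter subroutine: given a threshold $\lambda>0$, process item $t$ by continuously raising $x_t$ from $0$ as long as $c_t > \lambda\,\langle \nabla(\|\Lambda\|^p),\, a_t\rangle$ with $\Lambda = \Lambda_{t-1}+a_t x_t$ the current load (this terminates since $\|\Lambda\|^p$ eventually dominates $c_t x_t$), and at the end output $x^{(T)}/\max\{1,\|\Lambda_T\|\}$ --- the only place feasibility is enforced. Let $V=\langle c, x^{(T)}\rangle$. Along the run $\tfrac{d}{dx_t}\|\Lambda\|^p = \langle\nabla(\|\Lambda\|^p),a_t\rangle \le c_t/\lambda$, so integrating gives $\|\Lambda_T\|^p \le V/\lambda$. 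On the dual side, for \emph{every} $t$ the stopping rule gives $c_t \le \lambda\langle\nabla(\|\Lambda_t\|^p),a_t\rangle$; since $\Lambda_t\le\Lambda_T$ coordinate-wise and $\|\cdot\|^p$ is \monotone --- hence has a coordinate-wise monotone gradient --- and $a_t\ge 0$, this upgrades to $c_t \le \lambda\langle\nabla(\|\Lambda_T\|^p),a_t\rangle$. (This monotonicity of the prices is the one crucial use of $p$-\monotonicity, mirroring the second inequality in the proof of \Cref{thm:loadBalancing}.) Summing against an optimal $x^*$, and using $\nabla(\|\Lambda\|^p) = p\|\Lambda\|^{p-1}\nabla\|\Lambda\|$, $\|\nabla\|\Lambda_T\|\,\|_* = 1$, and $\|Ax^*\|\le 1$,
\[
\OPT \;=\; \sum_t c_t x^*_t \;\le\; \lambda\,\langle \nabla(\|\Lambda_T\|^p),\, Ax^*\rangle \;=\; \lambda p\,\|\Lambda_T\|^{p-1}\,\langle\nabla\|\Lambda_T\|,\, Ax^*\rangle \;\le\; \lambda p\,\|\Lambda_T\|^{p-1}.
\]
Combining with $\|\Lambda_T\|^p\le V/\lambda$ gives $\OPT \le p\,\lambda^{1/p}\,V^{1-1/p}$, so the choice $\lambda = \Theta(\OPT/p)$ forces $V\ge\Omega(\OPT/p)$, and a short computation (splitting on whether $\|\Lambda_T\|\le 1$) shows the rescaled output is also $\ge\Omega(\OPT/p)$. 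Thus with the ``right'' $\lambda$ the subroutine is $O(p)$-competitive for the reduced problem.

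It remains to find the right $\lambda$ online. When a $\beta$-approximation $\widetilde\OPT$ is given, the optimum of the reduced problem lies in a known multiplicative window of width $O(\alpha\beta)$ (the $\alpha$ because passing from $\|\cdot\|_P$ to $\|\cdot\|$ already perturbs the optimum by $\alpha$), so the right $\lambda$ is one of $O(\log(\alpha\beta))$ geometrically spaced guesses; running these guesses in an interleaved/doubling fashion turns the $O(p)$-competitive subroutine into an $O(\max\{p,\log(\alpha\beta)\})$-competitive algorithm, which after the reduction gives $O(\alpha)\cdot\max\{p,\log(\alpha\beta)\}$. When no estimate is given, I would instead track the online lower bound $M_t := \max_{s\le t} c_s/\|a_s\|$ (a single item $s$ certifies $\OPT\ge c_s/\|a_s\|$ via $x = e_s/\|a_s\|$) together with the upper bound $\OPT = O(\numberofdimensions\rho)\cdot M_T$, obtained by bounding the optimum by its ``$n$ one-dimensional relaxations'' of the norm constraint and using the width $\rho$; this again confines the right $\lambda$ to an online-detectable window, now of width $O(\numberofdimensions\rho)$, and the doubling framework yields $O(\alpha)\cdot\max\{p,\log(\numberofdimensions\rho)\}$.

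I expect the main obstacle to be this last step rather than the per-phase analysis: interleaving the $O(\log(\cdot))$ scale guesses so that (i) the combined rescaled solution stays feasible and (ii) one loses only a $\max$ rather than a product of $p$ and $\log(\cdot)$. This is the standard doubling technique for online packing, but checking it goes through with a \emph{norm} constraint in place of linear constraints, and pinning down the ``$\OPT = O(\numberofdimensions\rho)\cdot M_T$'' bound for a general monotone norm, are the parts requiring genuine care; the conceptual heart --- using $p$-\monotonicity to make the dual prices monotone --- is short.
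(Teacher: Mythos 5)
Your core engine is sound and is in fact a legitimate alternative to what the paper does: the paper black-boxes the convex-cost welfare-maximization algorithm of~\cite{AzarBCCCG0KNNP16} applied to the Lagrangian $\sum_t x_t - \lambda\|Ax\|^p$, whereas you re-derive the same $O(p)$ guarantee directly, and your derivation makes the one essential use of $p$-\monotonicity (monotone dual prices $\nabla(\|\Lambda_t\|^p)\le\nabla(\|\Lambda_T\|^p)$) more transparent than the citation does. The reduction to the approximating norm at cost $O(\alpha)$ is also exactly the paper's. Two issues remain, one minor and one substantive.

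The minor one: outputting $x^{(T)}/\max\{1,\|\Lambda_T\|\}$ is not an online operation, since the decisions are irrevocable and $\|\Lambda_T\|$ is unknown until the end. This is repairable --- your own bounds give $\|\Lambda_T\|^{p-1}\le \OPT/\lambda = O(p)$ a priori when $\lambda$ is the right scale, so you can downscale by that constant as you go, or truncate online as the paper's \Cref{proc:pack} does (play $x_t=0$ once the constraint would be violated, with \Cref{lemma:packFeas} showing truncation never triggers under a good guess).

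The substantive gap is exactly the step you flag and defer: converting $O(\log\alpha\beta)$ candidate scales for $\lambda$ into a loss of $\max\{p,\log\alpha\beta\}$ rather than $p\cdot\log\alpha\beta$. ``Standard doubling/interleaving'' does not obviously do this: running $K=O(\log\beta)$ guesses simultaneously against a single shared packing budget forces each to be scaled down by $1/K$, and sequential restarts are problematic because an underestimated $\lambda$ irrevocably consumes budget. The paper's resolution is not an interleaving at all but a \emph{randomized} guess with carefully chosen spacing: pick one scale uniformly from a geometric grid of ratio $\delta$, and exploit the fact (visible in your own bound $V\ge(\OPT/p)^{p/(p-1)}\lambda^{-1/(p-1)}$) that overshooting $\lambda$ by a factor $\delta$ costs only $\delta^{1/(p-1)}$ in value. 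Setting $\delta=e^{p-1}$ makes that cost a constant while shrinking the grid to $\lceil\log\beta/(p-1)\rceil$ points, so the $\tfrac{1}{\#\text{guesses}}$ success probability times the $O(p)$ per-run loss is $O(\max\{p,\log\beta\})$ in expectation. This tradeoff between grid spacing and per-run degradation is the one genuinely new idea in the proof and is absent from your proposal. (For the no-estimate case the paper also sidesteps your online tracking of $M_t$ entirely: the global width bound $\rho$ lets the \emph{first} item alone certify $\OPT$ within a factor $(\numberofdimensions\rho)^2$, after which one simply reruns the known-estimate algorithm.)
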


    When $P = \{x \in \R^{\numberofdimensions} : 0 \le x \le b\}$ in \eqref{eq:packPre}, the norm $\|\cdot\|_P$ is just $\ell_\infty$ with rescaled coordinates. 
    Hence,  \Cref{thm:packing} together with  $O(\log \numberofdimensions)$-\monotone approximation of $\ell_\infty$ gives an $O(\log (\numberofdimensions \rho))$-competitive algorithm for the setting of \eqref{eq:packClassic}, which  essentially is the same classical guarantee of \cite{BN-MOR09}, albeit with a slightly different notion of width $\rho$. 
    As a side comment, this result/technique highlights a fact that we were unaware of: even for the classical problem \eqref{eq:packClassic},  if an estimate of $\OPT$ within $\poly(\numberofdimensions)$ factors is available, then one can avoid the dependence on any width parameter $\rho$.

    \subsubsection{Adaptivity Gaps and Decoupling Inequalities} \label{sec:introProbe}

    We show that $p$-\monotonicity is related to another fundamental concept, namely the power of adaptivity when making decisions under stochastic uncertainty. 
    To illustrate that, we consider the  problem of Stochastic Probing (\stoch), which was introduced as a generalization of stochastic matching \cite{CIKMR-ICALP09,BGLMNR-Algorithmica12} and has been greatly studied in the last decade \cite{GN-ICALP13,GNS-SODA16,GNS-SODA17,BSZ-RANDOM19,PRS-APPROX23}.    
    
    In this problem, there are $n$ items with unknown values $X_1,\ldots,X_n \ge 0$ that were drawn independently from known distributions. Items need to be \emph{probed} for their values to be revealed. There is a downward-closed family $\mathcal{F} \subseteq [n]$ indicating the feasible sets of probes (e.g., if the items correspond to edges in a graph, $\mathcal{F}$ can say that at most $k$ edges incident on a node can be queried). Finally, there is a monotone function $f : \R^n_+ \rightarrow \R_+$, and the goal is to probe a set $S \in \mathcal{F}$ of elements so as to maximize $\E f(X_S)$, where $X_S$ has coordinate $i$ equal to $X_i$ if $i \in S$ and 0 otherwise (continuing the graph example, $f(x)$ can be the maximum matching with edge values given by $x$).

    The optimal probing strategy is generally \emph{adaptive}, i.e., it probes elements one at a time and may change its decisions based on the observed values. Since adaptive strategies are complicated (can be an exponential-sized decision tree, and probes cannot be performed in parallel), one often resorts to \emph{non-adaptive} strategies that select the probe set $S$ upfront only based on the value distributions. The fundamental question
    is how much do we lose by making decisions non-adaptively, i.e., if $\adapt(X, \mathcal{F}, f)$ denotes the value of the optimal adaptive strategy and $\NA(X, \mathcal{F}, f)$ denotes the value of the optimal non-adaptive one, then what is the maximum possible \emph{adaptivity gap} $\frac{\adapt(X,\mathcal{F},f)}{\NA(X,\mathcal{F},f)}$ for a class of instances.
    
    For submodular set functions, the adaptivity gap is known to be 2~\cite{GNS-SODA17,BSZ-RANDOM19}. For XOS set functions of width $w$,~\cite{GNS-SODA17} showed the adaptivity gap is at most $O(\log w)$, where a width-$w$ XOS set function $f : 2^{[n]} \rightarrow R_+$ is a max over $w$ linear set functions. The authors conjectured that the adaptivity gap for all XOS set functions should be poly-logarithmic in $n$, independent of their width. Since a monotone norm is nothing but a max over linear functions (given by the dual-norm unit ball), they form an extension of XOS set functions from the hypercube to all non-negative real vectors. Thus, the generalized  conjecture of~\cite{GNS-SODA17} is the following:

    \begin{conjecture} \label{conj:stochProbe}
    The adaptivity gap for stochastic probing with monotone norms is $\poly\!\log n$.
    \end{conjecture}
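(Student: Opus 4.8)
The plan is to prove \Cref{conj:stochProbe} by reducing to the case of a $p$-\monotone norm via \Cref{thm:sym}, and then establishing the adaptivity gap bound for $p$-\monotone norms directly. More precisely, by \Cref{thm:sym} any monotone symmetric norm $\|\cdot\|$ is $O(\log n)$-approximated by an $O(\log n)$-\monotone norm $\vertm{\cdot}$; since adaptivity gaps are preserved up to the approximation factor (both $\adapt$ and $\NA$ change by at most $O(\log n)$ when we swap objectives, as the objective enters the optimization monotonically and multiplicatively-boundedly), it suffices to bound $\adapt/\NA$ for the norm $\vertm{\cdot}$, which is $p$-\monotone with $p = O(\log n)$. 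So the heart of the matter is: \emph{for a $p$-\monotone norm the adaptivity gap is $O(p)$} (hence $\poly\log n$). Note this would actually prove something stronger than the conjecture as stated for symmetric norms, and cover more: any norm admitting a $\poly\log n$-\monotone approximation.

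The key step is thus a decoupling-style inequality for $p$-\monotone norms, in the spirit of the "random walk / bridging" arguments used for submodular and XOS adaptivity gaps. The standard non-adaptive strategy to analyze is: run the optimal adaptive tree, but on a \emph{fresh independent copy} of the randomness, to extract a (random) feasible set $S \in \F$, then probe $S$. I would couple the adaptive process and this non-adaptive process step by step. Writing $Y_t$ for the value vector revealed by the adaptive strategy after $t$ probes and $Z_t$ for the independent copy used to define $S$, the goal is to compare $\E\,\vertm{Y_\text{final}}$ with $\E\,\vertm{\text{(true values on }S)}$. The supermodularity of $\vertm{\cdot}^p$ is exactly the tool that lets us telescope: the increment $\vertm{u+v+w}^p - \vertm{u+v}^p \ge \vertm{u+w}^p - \vertm{u}^p$ says that "shifting" the base point $u$ only increases marginal gains, which is what permits us to lower-bound the non-adaptive value by summing per-step contributions along the adaptive tree while carrying an extra additive term equal to the adaptive value. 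This mirrors the telescoping computation in the proof of \Cref{thm:loadBalancing}: there the greedy/optimal swap combined with $p$-\monotonicity gave $\|\Lambda_T + \Lambda_T^*\|^p \ge 2\|\Lambda_T\|^p$, and here the analogue should be an inequality of the form $\E\,\vertm{\NA\text{-vector} + \text{something}}^p \ge c \cdot \E\,\vertm{\adapt\text{-vector}}^p$, after which taking $p$-th roots and using the triangle inequality converts the factor $2^{1/p}$-type gap into an $O(p)$ multiplicative loss.

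Concretely, the steps I would carry out, in order: (1) state and prove the reduction lemma that an $\alpha$-approximation of the objective norm changes the adaptivity gap by at most a factor $\alpha$ (routine, using monotonicity of $f$); (2) fix a $p$-\monotone norm $\vertm{\cdot}$ and the optimal adaptive strategy $\mathcal{T}$; (3) define the non-adaptive strategy by replaying $\mathcal{T}$ on an independent copy of the values to obtain a random set $S$; (4) set up the coupling/martingale over probe steps and, using $p$-supermodularity of $\vertm{\cdot}^p$ together with the fact that conditioned on the history the next revealed value is independent in the two coupled processes, derive a telescoped inequality bounding $\E\,\vertm{Y_\text{final}}^p$ by a constant multiple of $\E$ of the $p$-th power of the non-adaptive value plus a correction controllable by the same quantity; (5) rearrange and take $p$-th roots, concluding adaptivity gap $O(p) = O(\log n)$, hence $\poly\log n$.

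The main obstacle I expect is step (4): making the coupling argument actually telescope. For the load-balancing theorem the summands came from a fixed hindsight optimum $\Lambda^*$, but here the "target" is itself the random adaptive process, whose increments are correlated with the history in a complicated way, and the non-adaptive set $S$ is drawn from the \emph{marginal} distribution of the adaptive leaves but then evaluated on independent values. Bridging these — i.e. showing $\E[\text{per-step adaptive gain} \mid \text{history}]$ is dominated, via supermodularity, by the corresponding expected marginal gain of adding the non-adaptively-probed element on top of the current non-adaptive vector — is precisely the technical crux, and is where the "width $w$" dependence arose in \cite{GNS-SODA17}; the hope (and the point of using $p$-\monotonicity) is that supermodularity of the $p$-th power replaces the union bound over the $w$ linear pieces with a single clean inequality, removing the width and leaving only the $p = O(\log n)$ factor.
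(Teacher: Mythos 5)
There is a genuine gap, and it is in your very first step. The statement you are asked to prove is a conjecture about \emph{all} monotone norms (equivalently, arbitrary XOS-type objectives: maxima of arbitrarily many linear functions), but your reduction invokes \Cref{thm:sym}, which only provides an $O(\log n)$-\monotone approximation for \emph{symmetric} norms. For general monotone norms no such approximation exists: the paper's own \Cref{lem:counterEgGeneralNorm} exhibits a monotone norm (a sum of $\ell_\infty$ norms over $\sqrt{n}$ disjoint blocks of $\sqrt{n}$ coordinates) for which any $\alpha$-approximation by a $p$-supermodular subadditive function forces $\alpha p = \Omega(\sqrt{n})$. So the route ``approximate by a $p$-\monotone norm, then apply the $O(p)$ adaptivity gap'' is provably incapable of yielding a $\poly\log n$ bound for all monotone norms. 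This is exactly why the statement remains a conjecture in the paper: what is actually proved is \Cref{thm:stochProbing} (adaptivity gap $O(p)$ for $p$-\monotone objectives), which via \Cref{thm:sym} and \Cref{lem:maxOfpmonotone} recovers the symmetric-norm and width-$w$ XOS cases but not the general case.

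On the part you can prove (the $O(p)$ gap for $p$-\monotone norms), your steps (2)--(4) do match the paper's argument: the hallucination-based non-adaptive strategy, the tangency of the two value sequences conditioned on the history, and the telescoping via supermodularity of $\vertm{\cdot}^p$ exactly as in \Cref{thm:loadBalancing}. But step (5), ``rearrange and take $p$-th roots,'' hides a real difficulty: the telescoping yields $\E\|U\|^p \le (cp)^p\,\E\|\bar U\|^p$, and you cannot pass from this to $\E\|U\| \le O(p)\,\E\|\bar U\|$ by taking roots, since Jensen's inequality goes the wrong way on the right-hand side. The paper handles this with a stopping time that truncates the hallucinated sum, boundedness assumptions on the increments (discharged by a separate high/low decomposition in \Cref{app:stochProbing}), and a Burkholder-style interpolation converting the $L^p$ comparison into a weak-$(1,1)$ inequality (\Cref{lemma:weakIneq}). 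Without that machinery, step (5) does not go through.
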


    We prove this conjecture for $p$-\monotone norms. 

\begin{restatable}{thm}{thmStochProbing} \label{thm:stochProbing}
   For every $p$-\monotone objective function $f$, \stoch has adaptivity gap at most $O(p)$.
\end{restatable}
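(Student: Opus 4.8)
The plan is to bound the adaptive optimum by a non-adaptive strategy in two steps, mirroring the structure of the Online Load-Balancing argument: first a ``one-step'' comparison using the supermodular inequality, then summing over the nodes of the adaptive decision tree. Let $\mathcal{T}$ be the optimal adaptive decision tree for the instance $(X,\mathcal{F},f)$. Think of a random root-to-leaf path in $\mathcal{T}$ (induced by the random realizations $X_1,\dots,X_n$) as probing elements $e_1, e_2, \ldots$ in some order; let $Y_t = X_{e_t} \cdot \mathbf{e}_{e_t} \in \R^n_+$ be the (random) vector contribution of the $t$-th probe, so that the value collected by the adaptive strategy along this path is $f\!\left(\sum_t Y_t\right)$ and $\adapt = \E f\!\left(\sum_t Y_t\right)$.

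The key step is a decoupling-type inequality: I claim that for any fixed prefix $v = Y_1 + \cdots + Y_{t-1}$ of already-collected value and any node of $\mathcal{T}$ reached after $t-1$ probes, the expected marginal gain of the next probe, $\E\!\left[ f(v + Y_t)^p - f(v)^p \mid \text{node} \right]$, is at most the expected marginal gain of adding an \emph{independent fresh copy} $Y_t'$ of that same next-probe random vector on top of \emph{everything} — i.e., we want to telescope $f\!\left(\sum_t Y_t\right)^p = \sum_t \big( f(\text{prefix}_t)^p - f(\text{prefix}_{t-1})^p \big)$ and bound each term, using $p$-supermodularity, by the gain of the corresponding probe relative to $f\big(\sum_t Y_t + (\text{stuff})\big)$. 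Concretely, supermodularity of $f^p$ gives $f(v + Y_t)^p - f(v)^p \le f(w + Y_t)^p - f(w)^p$ for any $w \ge v$; taking $w$ to be the full adaptive load vector plus the partial non-adaptive accumulation lets the right-hand side telescope, exactly as in the proof of \Cref{thm:loadBalancing}. The non-adaptive strategy we compare against is: probe the (random) set $S^* $ equal to the support of a random root-to-leaf path of $\mathcal{T}$ — this is a valid non-adaptive strategy since $\mathcal{F}$ is downward-closed and every such path is feasible, and one can derandomize by picking the best realization. After telescoping we should arrive at something like $\adapt^p = \E f\!\left(\sum_t Y_t\right)^p \le \E\big[ f(\Lambda + \Lambda')^p - f(\Lambda)^p\big]$ where $\Lambda$ is the adaptive load and $\Lambda'$ is an independent copy of the non-adaptive load, then bound $f(\Lambda + \Lambda') \le f(\Lambda) + f(\Lambda')$ by triangle inequality, take $p$-th roots and rearrange to get $\adapt \le \frac{1}{2^{1/p}-1}\, \E f(\Lambda') \le O(p) \cdot \NA$, using convexity/Jensen ($\E f(\Lambda')^p \ge (\E f(\Lambda'))^p$ is the wrong direction, so care is needed here — more likely one works with $f^p$ throughout and uses that $\NA \ge (\E_{S^*} f(X_{S^*})^p)^{1/p}$ via picking the best leaf, or uses concavity of $t\mapsto t^{1/p}$ appropriately).

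The main obstacle I anticipate is making the ``swap a dependent probe for an independent fresh copy'' step rigorous while keeping the telescoping intact: along the adaptive path the identity of $e_t$ and the value $X_{e_t}$ are correlated with the prefix, so one has to condition on the node of $\mathcal{T}$, use that conditioned on the node the next element's value is a fresh independent draw from its marginal, and argue that summing the per-node bounds reconstructs $\E f(\Lambda')^p$ for the non-adaptive strategy that probes the same (random) path-set. This is precisely the kind of decoupling inequality the paper's introduction flags as the conceptual content of the theorem, and getting the conditioning/measurability bookkeeping right — rather than any hard inequality — is where the work lies; the analytic core is just the one-line supermodular telescoping plus the triangle inequality, as in \Cref{thm:loadBalancing}.
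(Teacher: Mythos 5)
Your skeleton matches the paper's: the non-adaptive competitor is indeed the ``hallucination'' strategy (run the optimal adaptive tree on fresh independent samples $\bar X_i$, probe the resulting path-set, collect true values), the sequences $V_t = e_{I_t}X_{I_t}$ and $\bar V_t = e_{I_t}\bar X_{I_t}$ are tangent (conditioned on the history the next probe's identity is fixed and its true and hallucinated values are i.i.d.), and the conditional swap plus the supermodular telescoping against the shifted base point $U + \bar U_{t-1}$ yields exactly the moment comparison $\E\|U\|^p \le (cp)^p\,\E\|\bar U\|^p$. That part of your proposal is correct and is the same route the paper takes.

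The genuine gap is the one you flag in passing and do not resolve: passing from $\E\|U\|^p \lesssim (cp)^p\,\E\|\bar U\|^p$ to $\E\|U\| \lesssim p\,\E\|\bar U\|$. Jensen gives $(\E\|U\|)^p \le \E\|U\|^p \lesssim (cp)^p \E\|\bar U\|^p$, hence $(\E\|\bar U\|^p)^{1/p} \gtrsim \adapt/p$, but this does \emph{not} lower-bound $\E\|\bar U\|$: a random variable can have large $p$-th moment driven by a rare huge realization while its mean is tiny. Neither of your suggested fixes closes this --- ``$\NA \ge (\E f(X_{S})^p)^{1/p}$'' is false in the needed direction (again by Jensen, $(\E f^p)^{1/p} \ge \E f$), and concavity of $t\mapsto t^{1/p}$ is just Jensen restated. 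The paper needs three additional ingredients here: (i) a preprocessing truncation splitting items into ``high'' and ``low'' parts so that each low increment satisfies $\|V_t\| \le \adapt/\Theta(p)$ and the total is $O(\adapt)$ in every scenario (high items are handled by a separate one-item/max argument); (ii) a stopping time $\tau$ that halts the hallucinated sum once it exceeds $\adapt/\Theta(p)$, which together with the increment bound keeps $\|\bar U_\tau\| = O(\adapt/p)$ pointwise; and (iii) a Burkholder-style weak-$(1,1)$ interpolation (the paper's Lemma on $\Pr(\|U_m/cp\| \ge \adapt/2cp)$) that combines the bounded $p$-th moment inequality with Markov to bound a tail probability of $\|U_m\|$ by $\E\|\bar U_m\|/\lambda$, finished off with an anti-concentration bound $\Pr(\|U_m\|\ge \adapt/2) = \Omega(1)$ coming from $\|U_m\| \le O(\adapt)$. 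These are real additional ideas, not measurability bookkeeping, so as written the proposal proves the $p$-th power statement but not the theorem.
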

        
    This simultaneously recovers the $O(\log w)$ adaptivity gap result of~\cite{GNS-SODA17} (via \Cref{lem:maxOfpmonotone}) and the result of~\cite{PRS-APPROX23} for all  monotone symmetric norms (within $\poly\!\log(n)$). 

    The proof of \Cref{thm:stochProbing} is similar to the Load-Balancing application of \Cref{sec:intoSuper}: we replace one-by-one the actions of the optimal adaptive strategy \adapt by those of the  ``hallucination-based'' non-adaptive strategy that runs \adapt on ``hallucinated samples'' $\bar{X}_i$'s (but receives value according to the true item values $X_i$'s). However, additional probabilistic arguments are  required; in particular, we need to prove a result of the type ``$\E \|V_1+\ldots+V_n\|^p \lesssim \E \|\bar{V}_1+\ldots+ \bar{V}_n\|^p$ implies $\E \|V_1+\ldots+V_n\| \lesssim p \cdot \E \|\bar{V}_1+\ldots+\bar{V}_n\|$'', where $V_i$'s and $\bar{V}_i$'s will correspond to $\adapt$ and the hallucinating strategy, respectively. We do this via an interpolation idea inspired by Burkholder~\cite{burkholder}.

    In fact, we prove a more general result than \Cref{thm:stochProbing} that shows the connections with probability and geometry of Banach spaces: a decoupling inequality for \emph{tangent sequences} of random variables (\Cref{thm:decoupling}); these have applications from concentration inequalities~\cite{deLaPena} to Online Learning~\cite{sridharanThesis,FosterRS17}. Two sequences of random variables  $V_1,V_2,\ldots,V_n$ and $\bar{V}_1,\bar{V}_2,\ldots,\bar{V}_n$ are called \emph{tangent} if conditioned up to time $t-1$, $V_t$ and $\bar{V}_t$ have the same distribution. We show that for such tangent sequences in $\R^d_+$ and a $p$-\monotone norm $\|\cdot\|$, we have $\E \|V_1 + \ldots + V_n\| \le O(p) \cdot \E \|\bar{V}_1 + \ldots + \bar{V}_n\|$, independent of the number of dimensions. This complements the (stronger) results known for the so-called UMD Banach spaces~\cite{veraarBook}.\footnote{We remark that $\R^n$ equipped with the $\ell_1$ norm is not a UMD space, while it is a $1$-\monotone norm, making our assumptions, and conclusions, distinct from this literature.}

\subsection{Future Directions}

In this work we demonstrate that $p$-\monotonicity is widely applicable to many problems involving norm objectives (from online to stochastic and from maximization to minimization problems). 
Our \Cref{thm:sym} shows that all symmetric norms have an $O(\log n)$-\monotone approximation.  
In an earlier version of this paper we conjectured that such an approximation should exist for all monotone norms but later we found a counter example.

\begin{lemma} \label{lem:counterEgGeneralNorm}
    There exist monotone norms such that if we $\alpha$-approximate it by any  $p$-supermodular subadditive function then $\alpha p = \Omega(\sqrt{n})$.
\end{lemma}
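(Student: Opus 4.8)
The plan is to exhibit a single explicit monotone norm in $\R^n_+$ whose $p$-supermodular approximations must be bad, and the natural candidate is a norm built from a "spread out" structure that is incompatible with increasing marginal gains — for instance the norm whose unit ball is the convex hull of $\ell_1$-type and $\ell_\infty$-type constraints on complementary coordinate blocks, or more concretely $\|x\| := \max_{i<j}\ (x_i + x_j)/\sqrt 2$ style constructions, i.e. a norm that "looks like $\ell_\infty$" on pairs. Let me instead use the cleanest candidate: take $n = \binom{m}{2}$, index coordinates by pairs $\{i,j\}$ from a ground set $[m]$, and let $\|x\| := \max_{i \in [m]} \sum_{j \neq i} x_{\{i,j\}}$ (the "max degree" norm of a weighted graph). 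This is clearly a monotone norm. First I would record that $\|\ones\| = m-1 = \Theta(\sqrt n)$ while $\|e_{\{i,j\}}\| = 1$ for each coordinate, so the "aspect ratio" along the all-ones direction is already $\Theta(\sqrt n)$.

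Next I would set up the obstruction. Suppose $\phi$ is a $p$-supermodular subadditive function with $\|x\| \le \phi(x) \le \alpha \|x\|$ for all $x \in \R^n_+$. Supermodularity gives, for disjoint "increments," a superadditivity-type statement: summing the defining inequality of $p$-supermodularity (applied to $\phi$, which here plays the role of $\|\cdot\|^p$ — wait, I must be careful: the statement says "$p$-supermodular subadditive function", meaning $\phi$ itself is subadditive and $\phi^{1/p}$... no — rereading, $\alpha$-approximating by a $p$-supermodular subadditive function means there is a function $g = \phi^{1/p}$ that is a norm-like object; cleanest is: there is a subadditive $g$ with $g^p$ supermodular and $\|x\| \le g(x) \le \alpha\|x\|$). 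So set $g$ with $h := g^p$ supermodular, $h(0) = 0$. Supermodularity of $h$ with $h(0)=0$ implies superadditivity: $h(\sum_k u_k) \ge \sum_k h(u_k)$ for vectors $u_k$ with disjoint supports (telescoping the definition). Apply this with the $u_k = e_{\{i,j\}}$ over all $n$ coordinates: $h(\ones) \ge \sum_{\{i,j\}} h(e_{\{i,j\}}) \ge n \cdot \min_{\{i,j\}} g(e_{\{i,j\}})^p \ge n \cdot 1$. On the other hand $h(\ones) = g(\ones)^p \le \alpha^p \|\ones\|^p = \alpha^p (m-1)^p$. Combining, $\alpha^p (m-1)^p \ge n = \binom m2 \ge (m-1)^2/2$, hence $\alpha^p (m-1)^p \ge (m-1)^2/2$.

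Now the key quantitative step: I need this to force $\alpha p = \Omega(\sqrt n) = \Omega(m)$. From $\alpha^p \ge (m-1)^{2-p}/2$ this is only useful when $p < 2$, which is not the regime we care about, so a bare superadditivity bound on $e_{\{i,j\}}$'s is too weak. The fix is to apply superadditivity at a coarser scale matched to $p$: partition the $\binom m2$ edges into roughly $m/k$ groups each forming a "star-free" union of about $k$ disjoint perfect-matching-like chunks, use that on each chunk of size $\Theta(k \cdot m)$ coordinates the norm is only $\Theta(k)$ (a union of $k$ matchings has max degree $k$) so $g \le \alpha k$ there, while superadditivity across the $\approx m/k$ chunks gives $g(\ones)^p \ge (m/k)\cdot g(\text{chunk})^p \ge (m/k)\cdot k^p$ — wait, that needs a lower bound on $g(\text{chunk})$, which is just $\ge k$. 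So $g(\ones)^p \ge (m/k) k^p$, i.e. $\alpha^p m^p \gtrsim g(\ones)^p \gtrsim (m/k)k^p$, giving $\alpha^p \gtrsim k^{p-1}/m$; choosing $k = m$ recovers $\alpha^p \gtrsim m^{p-2}$, still the same. \textbf{The real obstacle}, which I would focus the write-up on, is that crude superadditivity cannot beat aspect-ratio-type bounds; to get $\alpha p = \Omega(\sqrt n)$ one must instead use $p$-supermodularity in the contrapositive direction via a Load-Balancing-style argument — run the greedy algorithm of \Cref{thm:loadBalancing} for the $p$-supermodular norm $g$ against the known $\Omega(\sqrt n)$ integrality/competitive lower bound for online load balancing with the max-degree norm (equivalently, online edge orientation / minimizing max in-degree, which has a classical $\Omega(\log n)$... no, $\Omega(\sqrt n)$ bound for the *fractional-to-integral* or for the specific greedy). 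Concretely: there is an adversarial Load-Balancing instance with this objective on which every online algorithm is $\Omega(\sqrt n / \alpha)$ away from optimal when it only "sees" $g$; but \Cref{thm:loadBalancing} says greedy w.r.t. $g$ is $O(p)$-competitive for the $g$-objective, hence $O(\alpha p)$-competitive for the original objective; matching the two forces $\alpha p = \Omega(\sqrt n)$. I would therefore (i) state the combinatorial lower-bound instance (a caterpillar/recursive arrival pattern forcing max-degree $\Omega(\sqrt n)$ against optimal $O(1)$), (ii) invoke \Cref{thm:loadBalancing} applied to $g$, and (iii) chain the inequalities. The delicate point to get right is choosing the hard instance so that its optimum (in hindsight) is $\Theta(1)$ while any online assignment has $g$-cost — hence original cost — growing like $\sqrt n$, and confirming that the $O(p)$ guarantee is genuinely for $g$ so that the $\alpha$ only enters once on each side; I expect the arithmetic to collapse to $\tfrac{1}{2^{1/p}-1}\cdot\alpha \ge \Omega(\sqrt n)$, i.e. $\alpha p = \Omega(\sqrt n)$, as claimed.
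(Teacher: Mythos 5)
Your proposal does not go through, and the central problem is the choice of norm. The max-degree norm $\|x\|=\max_{i}\sum_{j\neq i}x_{\{i,j\}}$ is a maximum of $m=\Theta(\sqrt n)$ non-negative linear functionals, i.e.\ a max of $m$ reweighted $\ell_1$ norms, each of which is $1$-\monotone. By the paper's own \Cref{lem:maxOfpmonotone}, such a norm is $2$-approximated by an $O(\log m)=O(\log n)$-\monotone norm, so for your candidate one actually has $\alpha p=O(\log n)$, and no argument can establish $\alpha p=\Omega(\sqrt n)$ for it. This also explains why your third, load-balancing-based attempt cannot be completed: the $\Omega(\sqrt n)$ online lower bound it would need does not exist for this objective (online graph-balancing/max-degree lower bounds are logarithmic), and you yourself waver between $\Omega(\log n)$ and $\Omega(\sqrt n)$. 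Your first two attempts you already concede are dead ends. The hard instances must be \emph{sums} of $\ell_\infty$'s rather than maxes of sums; compare \Cref{lemma:sumNotP}, which already shows that summing \monotone norms destroys \monotonicity.

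The paper's proof takes $\|x\|=\sum_{k=1}^{m}\|x_{B_k}\|_\infty$, where the $B_k$ are $m=\sqrt n$ disjoint blocks of size $m$ (an $m\times m$ grid of coordinates), and then runs a purely static version of the greedy telescoping from \Cref{thm:loadBalancing}, with no external online lower bound. One reorders the blocks so that the diagonal set $D_i=\{(1,1),\dots,(i,i)\}$ is built by cheapest increments, i.e.\ $f(D_i)\le f(D_{i-1}\cup\{(i,k)\})$ for all $k\ge i$; telescoping plus supermodularity of $f^p$ against the single-block set $S_m=\{(1,m),\dots,(m,m)\}$ gives $f^p(D_m)\le f^p(D_m\cup S_m)-f^p(D_m)$, and subadditivity then yields $f(D_m)\le\frac{1}{2^{1/p}-1}f(S_m)\le\frac{p}{\ln 2}f(S_m)$. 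Since $\|D_m\|=m$ (one coordinate per block) while $\|S_m\|=1$, the approximation guarantee forces $\alpha p\ge(\ln 2)\sqrt n$. Your instinct that the argument should mirror the load-balancing analysis is right; what is missing is the correct norm (a sum of maxes, not a max of sums) and the observation that the greedy reordering lets you carry out that analysis combinatorially, without constructing a separate adversarial online instance.
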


We defer the proof of this lemma to \Cref{sec:missingcounterEgGeneralNorm}. 
Given this counter example, an interesting future direction is to propose an alternate way for attacking the XOS functions adaptivity gap conjecture of \cite{GNS-SODA17}   and for designing online packing/covering algorithms that do not depend on the number of constraints but only  on the ambient dimension.

Another interesting future direction is to obtain integral solutions for the \cover problem. Similar to the work of \cite{NS-ICALP17}, our \Cref{cor:cover}  and \Cref{thm:cover} can only handle the fractional  \cover problem. 
Unlike the classic online set cover ($\ell_1$ objective), where randomized rounding suffices to obtain integral solutions, 
it is easy to show that we cannot round w.r.t. the natural fractional relaxation of the problem since there is a large integrality gap. Hence, a new idea will be required to capture integrality in the objective.

    $p$-\monotonicity is also related to the classic \emph{Online Linear Optimization}  (e.g., see book \cite{Hazan-Book16}). For the maximization version of the problem, in \Cref{app:regret} we show how to obtain total value at least $(1-\e) \OPT - \frac{p \cdot D}{\e}$ when a norm associated to the problem is $p$-\monotone, where $D$ is ``diameter'' parameter. In the case of prediction with experts, this recovers the standard $(1-\e) \OPT - O(\frac{\log d}{\e})$ bound ($d$ being the number of experts), and generalizes the result of~\cite{molinaro} when the player chooses actions on the $\ell_p$ ball. This gives an intriguing alternative to the standard methods like Online Mirror Descent and Follow the Perturbed Leader. It would be interesting to find further implications of this result, and more broadly $p$-\monotonicity, in the future.

\section{Supermodular Approximation of  Norms}
\label{sec:approximation}

In this section we discuss  $p$-\monotonicity and how many general norms  can be approximated by $p$-\monotone norms.  

    \subsection{$p$-\monotonicity and its Basic Properties} \label{sec:pSub}

    $p$-\monotonicity can be understood in a natural and more workable manner through the first and second derivatives of the norms; this is the approach we use in most of our results. While norms may not be differentiable, using standard smoothing techniques, every $p$-\monotone norm can be $(1+\e)$-approximated by another $p$-\monotone norm that is infinitely differentiable everywhere except at the origin; see \Cref{lemma:diff}.
     

    We have the following equivalent characterizations of $p$-\monotone norms via their gradients and Hessians.


	\begin{lemma}[Equivalent characterizations] \label{lemma:equivalentPmono}
		For a  differentiable norm $\|\cdot\|$,  the following are equivalent:
				
\begin{itemize}
    \item  \emph{($p$-\monotonicity):} $\|\cdot\|$ is $p$-\monotone.
    
    \item \emph{(Gradient property):} $\|\cdot\|^p$ has monotone gradients over the non-negative orthant, i.e.,  for all $u,v \in \R^n_+$ and $\forall i\in [n]$, 
    $$\nabla_i \Big(\|u+v\|^p \Big) \ge \nabla_i \Big(\|u\|^p\Big)  \qquad \Longleftrightarrow \qquad 
    \frac{\nabla_i \|u+v\|}{ \nabla_i \|u\|} \geq   \Big( \frac{\|u\|}{\|u+v\| } \Big)^{p-1} .$$ 
			
   \item  \emph{(Hessian property):} If $\|\cdot\|$ is twice differentiable, then these are equivalent to: For all $u \in \R^n_+$ and $ \forall i,j\in [n]$,
   $$\nabla^2_{i,j} \Big(\|u\|^p \Big) \ge 0   \qquad \Longleftrightarrow \qquad 
 \nabla^2_{i,j}\|u\| \ge - (p-1) \frac{1}{\|u\|} \nabla_i\|u\| \cdot \nabla_j\|u\| .$$ 

		\end{itemize}
	\end{lemma}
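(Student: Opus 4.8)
The plan is to establish the chain of equivalences by going through the "supermodularity $\Leftrightarrow$ monotone gradients $\Leftrightarrow$ non-negative mixed second derivatives" story, which is the classical characterization of supermodularity for $C^2$ functions, and then to translate each statement about $\|\cdot\|^p$ into the stated statement about $\|\cdot\|$ itself by an explicit chain-rule computation. I would first record the elementary fact that for a differentiable function $g: \R^n_+ \to \R_+$, supermodularity (increasing marginal gains, i.e. $g(u+v+w) - g(u+v) \ge g(u+w) - g(u)$ for all non-negative $u,v,w$) is equivalent to the gradient $\nabla g$ being coordinatewise monotone on the non-negative orthant: $\nabla_i g(u+v) \ge \nabla_i g(u)$ for all $u,v \in \R^n_+$ and all $i$. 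The forward direction follows by fixing $v = e_i \cdot \delta$, dividing by $\delta$, and letting $\delta \to 0$ in the supermodularity inequality applied with $u \mapsto u$, $w \mapsto w$; the reverse direction follows by integrating $\nabla_i g$ along axis-parallel paths, or more simply by writing $g(u+v+w) - g(u+v) = \int_0^1 \langle \nabla g(u+v+tw), w\rangle\, dt$ and comparing termwise with the analogous expression for $g(u+w)-g(u)$, using $w \ge 0$. Applying this with $g = \|\cdot\|^p$ gives the equivalence between $p$-supermodularity and the left-hand form of the Gradient property.

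Next I would do the chain-rule bookkeeping. For $\|u\| > 0$ we have $\nabla_i(\|u\|^p) = p\,\|u\|^{p-1}\,\nabla_i\|u\|$, so the inequality $\nabla_i(\|u+v\|^p) \ge \nabla_i(\|u\|^p)$ reads $p\,\|u+v\|^{p-1}\,\nabla_i\|u+v\| \ge p\,\|u\|^{p-1}\,\nabla_i\|u\|$; dividing through by $p\,\|u\|^{p-1}\,\nabla_i\|u\|$ (which is positive by monotonicity of the norm, at least where $\nabla_i\|u\| > 0$) yields exactly $\tfrac{\nabla_i\|u+v\|}{\nabla_i\|u\|} \ge \big(\tfrac{\|u\|}{\|u+v\|}\big)^{p-1}$. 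One should note the degenerate cases: if $\nabla_i\|u\| = 0$ the right-to-left form is automatic since gradients of a monotone norm are non-negative, and the $u = 0$ case is handled by continuity/homogeneity. For the Hessian property, differentiate once more: $\nabla^2_{i,j}(\|u\|^p) = p(p-1)\|u\|^{p-2}\nabla_i\|u\|\nabla_j\|u\| + p\|u\|^{p-1}\nabla^2_{i,j}\|u\|$, so $\nabla^2_{i,j}(\|u\|^p) \ge 0$ is equivalent, after dividing by the positive factor $p\|u\|^{p-1}$, to $\nabla^2_{i,j}\|u\| \ge -(p-1)\tfrac{1}{\|u\|}\nabla_i\|u\|\nabla_j\|u\|$, which is the claimed form.

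Finally I would close the loop between the Hessian property and the other two in the twice-differentiable case. The standard fact is that for $C^2$ functions $g$ on the (convex) non-negative orthant, coordinatewise monotonicity of $\nabla g$ is equivalent to $\nabla^2_{i,j} g \ge 0$ for all $i \ne j$ (and the $i = j$ entries are irrelevant to supermodularity) — this is just the statement that $\partial_j (\nabla_i g) \ge 0$, integrated/differentiated along the $j$-th axis. Care is needed because the norm is not $C^2$ at the origin, but the characterization is only asserted on $\R^n_+$ and all the relevant quantities extend continuously; also, since we only have non-negative $v$, one only needs the mixed partials in directions within the orthant, so positivity of off-diagonal entries of the Hessian suffices and no global PSD-type condition is required. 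I expect the only mildly delicate point — the "main obstacle," such as it is — to be the careful treatment of the boundary of the orthant and the points where some $\nabla_i\|u\|$ vanishes or $\|u\| = 0$: there one must argue via continuity, homogeneity of the norm, and the sign constraints ($\nabla\|u\| \ge 0$ from monotonicity) rather than by naive division. Everything else is routine calculus.
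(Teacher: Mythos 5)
Your proposal is correct and takes essentially the same route as the paper's (very terse) proof: limit $w\to 0$ along a coordinate direction for supermodularity $\Rightarrow$ monotone gradients, integration along the segment for the converse, and the chain-rule identities $\nabla_i(\|u\|^p)=p\|u\|^{p-1}\nabla_i\|u\|$ and $\tfrac1p\nabla^2_{ij}(\|u\|^p)=\|u\|^{p-2}\bigl((p-1)\nabla_i\|u\|\nabla_j\|u\|+\|u\|\nabla^2_{ij}\|u\|\bigr)$ to translate between the forms. Your handling of the boundary cases is more careful than the paper, which omits them entirely.

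One slip worth fixing: your parenthetical claim that the $i=j$ Hessian entries are irrelevant is wrong for the definition used here. Because the paper's $p$-supermodularity quantifies over \emph{all} non-negative $v,w$ (not just increments in distinct coordinates), taking $v=w=\delta e_i$ forces convexity along each axis, i.e.\ $\nabla^2_{ii}(\|u\|^p)\ge 0$; equivalently, coordinatewise monotonicity of $\nabla(\|\cdot\|^p)$ over the orthant requires $\partial_j\nabla_i(\|u\|^p)\ge 0$ for all pairs, including $j=i$ (consider $g(x_1,x_2)=x_1x_2-x_1^2$, whose off-diagonal mixed partial is non-negative but whose gradient is not monotone). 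Your main argument — monotonicity of $\nabla_i g$ in direction $e_j$ iff $\partial_j\nabla_i g\ge 0$ — applies verbatim to $i=j$, so the fix is simply to delete the parenthetical and state the equivalence for all $i,j$, as the lemma does.
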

	
	\begin{proof}
            The first part of the Gradient property follows when we take $\|w\| \rightarrow 0$. For the second part, use $\nabla \|u\|^p = p\cdot\|u\|^{p-1} \cdot \nabla \|u\|$.
            
		The first part of the Hessian property follows from monotonicity of gradients. For the second part, use
		\[
			\frac{1}{p}\, \nabla^2_{i,j} \Big(\|u\|^p \Big) =  \|u\|^{p-2} \cdot  \Big( (p-1)  \cdot  \nabla_i \|u\| \cdot  \nabla_j \|u\| + \|u\| \cdot  \nabla^2_{i,j} \|u\| \Big).    \qedhere 
		\]
	\end{proof}


    Two  implications of the  Hessian property  are the following: \Cref{obs:pmonot} directly follows due to non-negativity of gradients and \Cref{obs:normScaling} uses $\nabla^2 \big(\vertm{x}^p \big) = A^T\nabla^2 \big(\|y\|^p \big)A$ for $y=Ax$.


    \begin{obs} \label{obs:pmonot}
        A differentiable $p$-\monotone  norm $\|\cdot\|$ is  also $p'$-\monotone for  $p' \ge p$. 
    \end{obs}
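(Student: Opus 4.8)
The plan is to argue at the level of gradients rather than through the defining difference inequality, since the dependence on $p$ is most transparent there. By the Gradient property in \Cref{lemma:equivalentPmono}, a differentiable monotone norm $\|\cdot\|$ is $p$-\monotone exactly when $\nabla_i(\|u+v\|^p) \ge \nabla_i(\|u\|^p)$ for all $u,v \in \R^n_+$ and $i \in [n]$; using $\nabla\|x\|^p = p\,\|x\|^{p-1}\nabla\|x\|$ and cancelling the common factor $p$, this is the same as
\[
\|u+v\|^{p-1}\,\nabla_i\|u+v\| \;\ge\; \|u\|^{p-1}\,\nabla_i\|u\| \qquad \text{for all } u,v\in\R^n_+,\ i\in[n].
\]

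Assuming this holds and fixing $p' \ge p$, the idea is simply to ``raise the exponent'' by writing $p'-1 = (p'-p)+(p-1)$ and peeling off a factor $\|u+v\|^{p'-p}$:
\[
\|u+v\|^{p'-1}\nabla_i\|u+v\|
= \|u+v\|^{p'-p}\Big(\|u+v\|^{p-1}\nabla_i\|u+v\|\Big)
\ge \|u+v\|^{p'-p}\Big(\|u\|^{p-1}\nabla_i\|u\|\Big)
\ge \|u\|^{p'-1}\nabla_i\|u\|,
\]
where the first inequality uses $p$-\monotonicity together with $\|u+v\|^{p'-p}\ge 0$, and the second uses $\|u+v\|\ge\|u\|$ (monotonicity of the norm), $p'-p\ge 0$, and $\|u\|^{p-1}\nabla_i\|u\|\ge 0$ (non-negativity of the gradient of a monotone norm). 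This is precisely the Gradient property for exponent $p'$, so $\|\cdot\|$ is $p'$-\monotone.

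I do not expect a genuine obstacle here; the only points to watch are the two places where monotonicity of the norm is invoked ($\nabla_i\|u\|\ge 0$ and $\|u+v\|\ge\|u\|$) and the degenerate cases, which are immediate: if $\nabla_i\|u\|=0$ the target inequality is just $\|u+v\|^{p'-1}\nabla_i\|u+v\|\ge 0$, true by monotonicity, and at $u=0$ both sides are trivially non-negative. An alternative route, matching the phrasing in the text, is to use the Hessian characterization: for twice-differentiable norms $p$-\monotonicity reads $\nabla^2_{i,j}\|u\|\ge -(p-1)\frac{1}{\|u\|}\nabla_i\|u\|\,\nabla_j\|u\|$, whose right-hand side is non-positive and only decreases as $p$ grows, so the condition only weakens; this argument, however, needs twice-differentiability (or an appeal to \Cref{lemma:diff}), whereas the gradient argument above uses a single derivative and hence handles the stated ``differentiable'' hypothesis directly.
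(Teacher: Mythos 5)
Your proof is correct. It does, however, take a slightly different route from the paper: the paper derives \Cref{obs:pmonot} from the \emph{Hessian} characterization in \Cref{lemma:equivalentPmono} (the condition $\nabla^2_{i,j}\|u\| \ge -(p-1)\tfrac{1}{\|u\|}\nabla_i\|u\|\nabla_j\|u\|$ has a non-positive right-hand side, by non-negativity of the gradients of a monotone norm, so it only weakens as $p$ increases), whereas you argue through the \emph{Gradient} characterization, peeling off the factor $\|u+v\|^{p'-p} \ge \|u\|^{p'-p} \ge 0$. The two arguments are equally short, but yours has a small advantage that you correctly identify: the Hessian route needs the norm to be twice differentiable (or an appeal to the smoothing in \Cref{lemma:diff}), while the observation as stated only assumes differentiability, which your gradient-level argument handles directly. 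The steps themselves are sound — the only ingredients are $\nabla_i\|u\| \ge 0$ and $\|u+v\| \ge \|u\|$ for $u,v \in \R^n_+$, both of which follow from monotonicity of the norm, and you have treated the degenerate cases ($u=0$, $\nabla_i\|u\|=0$) appropriately.
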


    

    \begin{obs} \label{obs:normScaling}
    If $\|\cdot\| \colon \R^n \to \R$ is $p$-\monotone and $A \in \R^{n \times m}_{\geq 0}$ then the norm $\vertm{\cdot}$ in $\R^m$ given by $\vertm{x} := \|Ax\|$ is $p$-\monotone.
    \end{obs}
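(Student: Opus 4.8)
The plan is to prove this directly from the \emph{definition} of $p$-\monotonicity rather than through the Hessian, since that route makes the role of the two hypotheses—$\|\cdot\|$ being $p$-\monotone and $A$ being entrywise non-negative—completely transparent. First I would record the easy structural facts about $\vertm{\cdot}$: homogeneity and the triangle inequality are inherited immediately from $\|\cdot\|$, and $\vertm{\cdot}$ is \emph{monotone} because $x \ge x' \ge 0$ implies $Ax \ge Ax' \ge 0$ (using $A \ge 0$) and hence $\|Ax\| \ge \|Ax'\|$; if $A$ additionally has trivial kernel then $\vertm{\cdot}$ is a genuine norm, and otherwise it is a monotone seminorm, which is all the supermodularity inequality actually needs.

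Then the main step: fix arbitrary $x,y,z \in \R^m_+$ and set $u := Ax$, $v := Ay$, $w := Az$. Because every entry of $A$ is non-negative and $x,y,z \ge 0$, the images $u,v,w$ lie in $\R^n_+$. Unwinding $\vertm{t} = \|At\|$ and using linearity of $A$, the desired inequality
\[
\vertm{x+y+z}^p - \vertm{x+y}^p \;\ge\; \vertm{x+z}^p - \vertm{x}^p
\]
becomes, verbatim,
\[
\|u+v+w\|^p - \|u+v\|^p \;\ge\; \|u+w\|^p - \|u\|^p ,
\]
which is exactly the $p$-\monotonicity of $\|\cdot\|$ evaluated at $u,v,w \in \R^n_+$. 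That finishes the argument.

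As an alternative I would mention the Hessian route hinted at in the remark: after the smoothing reduction of \Cref{lemma:diff} one may take $\|\cdot\|$ twice differentiable off the origin, so $\vertm{\cdot}$ is too, and the chain rule gives $\nabla^2\big(\vertm{x}^p\big) = A^{T}\,\nabla^2\big(\|y\|^p\big)\big|_{y=Ax}\,A$. For $x \in \R^m_+$ the evaluation point $y=Ax$ lies in $\R^n_+$, so the Hessian characterization of \Cref{lemma:equivalentPmono} makes $\nabla^2\big(\|y\|^p\big)$ entrywise non-negative; since $A \ge 0$, the congruence $A^{T}(\cdot)A$ preserves entrywise non-negativity, and applying the Hessian characterization to $\vertm{\cdot}$ concludes. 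I do not expect a genuine obstacle here—the single point where the hypotheses are used is in keeping the relevant vectors ($Ax,Ay,Az$, or the evaluation point $Ax$) inside the non-negative orthant—so I would present the direct definitional argument as the actual proof and relegate the Hessian version to a remark.
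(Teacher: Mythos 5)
Your direct argument is correct and is genuinely different from (and more elementary than) the paper's. The paper proves this observation in one line via the Hessian characterization of \Cref{lemma:equivalentPmono}, noting that $\nabla^2\big(\vertm{x}^p\big) = A^{T}\,\nabla^2\big(\|y\|^p\big)\big|_{y=Ax}\,A$ with $y = Ax \in \R^n_+$, so entrywise non-negativity of the Hessian is preserved by the congruence with $A \ge 0$ — exactly the route you relegate to a remark. Your primary argument instead substitutes $u=Ax$, $v=Ay$, $w=Az$ into the definitional inequality and uses linearity of $A$ plus $A \ge 0$ to keep $u,v,w$ in the non-negative orthant; this is a verbatim reduction to the hypothesis. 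What your route buys is that it requires no differentiability whatsoever: the paper's Hessian property is only stated for twice-differentiable norms, so strictly speaking the paper's one-liner needs either that assumption or a detour through the smoothing lemma (\Cref{lemma:diff}), whereas your argument applies to any $p$-\monotone norm as defined. What the paper's route buys is brevity given machinery already in place, and it fits the paper's general pattern of verifying $p$-\monotonicity through gradients and Hessians. Your observation that $\vertm{\cdot}$ may degenerate to a monotone seminorm when $A$ has nontrivial kernel, and that the supermodularity inequality is unaffected, is a worthwhile precision the paper glosses over.
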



    As mentioned in the introduction, for every $p \ge 1$ the $\ell_p$ norm is $p$-\monotone. This follows, e.g., from the gradient property of $p$-\monotone norms.     
    For $p \ge \log n$, the $\ell_p$ norm is $O(1)$-approximated by $\ell_{\log n}$. So in particular, $\ell_{\infty}$ can be $O(1)$-approximated by $(\log n)$-\monotone norm. 
    We first generalize this fact ($\ell_{\infty}$ is  max of $n$ inequalities that are each $1$-\monotone).
    

    \begin{lemma} \label{lem:maxOfpmonotone}
        If $f_1, f_2, \ldots, f_w$ are differentiable  $p$-\monotone norms, then the norm $x \mapsto \max_i f_i(x)$ can be $2$-approximated by a $\max\{p, \log w\}$-\monotone norm. 
    \end{lemma}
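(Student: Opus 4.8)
The plan is to use the "smooth maximum" (log-sum-exp / $\ell_q$-type) trick to replace the hard max by a soft max that is still $p$-supermodular once $p$ is large enough. Concretely, set $q := \max\{p, \log w\}$ and define
\[
F(x) := \Big( \sum_{i=1}^w f_i(x)^q \Big)^{1/q}.
\]
I would first check that $F$ is a monotone norm: each $f_i$ is a monotone norm, and $(y_1,\ldots,y_w) \mapsto \|y\|_q$ is a monotone norm on $\R^w_+$, so $F$ is the composition of the monotone $\ell_q$ norm with the map $x \mapsto (f_1(x),\ldots,f_w(x))$; monotonicity and absolute homogeneity are immediate, and the triangle inequality follows because $\ell_q$ is monotone and subadditive while each $f_i$ is subadditive (this is the standard fact that a monotone norm of norms is a norm). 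Then I would verify the $2$-approximation: since $q \ge \log w$ we have $w^{1/q} \le e$, hence $\max_i f_i(x) \le F(x) \le w^{1/q} \max_i f_i(x) \le e \max_i f_i(x)$. (If one insists on the factor exactly $2$ rather than $e$, replace $q$ by $2\log w$, which only changes constants; or rescale.)

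The substantive step is showing that $F$ is $q$-supermodular, i.e. that $F^q = \sum_i f_i^q$ is supermodular on $\R^n_+$. Supermodularity is closed under non-negative sums, so it suffices to show each $f_i^q$ is supermodular. Here I would invoke \Cref{obs:pmonot}: $f_i$ is $p$-supermodular and $q \ge p$, so $f_i$ is $q$-supermodular, which by definition means exactly that $f_i^q$ is supermodular. Summing over $i$, $F^q$ is supermodular, so $F$ is $q = \max\{p,\log w\}$-supermodular, completing the proof. (One should note that \Cref{obs:pmonot} as stated assumes differentiability, which is available since the $f_i$ are assumed differentiable; and $F$ is differentiable away from the origin wherever all $f_i$ are, which is enough, or one appeals to \Cref{lemma:diff} to smooth at the end.)

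I expect the only real subtlety to be the triangle inequality for $F$ — one must be slightly careful that "$\ell_q$ of subadditive monotone functions is subadditive," which uses monotonicity of $\ell_q$ in an essential way (it fails without the monotone-norm hypothesis on the $f_i$). Everything else is bookkeeping: the approximation bound is the elementary $\max \le \ell_q \le w^{1/q}\max$ sandwich, and the supermodularity is a one-line consequence of \Cref{obs:pmonot} plus closure of supermodularity under addition. A minor point to state cleanly is that we only need supermodularity on the non-negative orthant $\R^n_+$, which is exactly the domain in the definition of $p$-supermodularity, so no issues arise from $f_i$ being defined (or not) on all of $\R^n$.
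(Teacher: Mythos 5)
Your proposal is correct and follows essentially the same route as the paper: both define $F(x) = (\sum_i f_i(x)^{q})^{1/q}$ with $q = \max\{p,\log w\}$, obtain the approximation from the sandwich $\max_i f_i \le F \le w^{1/q}\max_i f_i$, and get $q$-supermodularity of $F$ from \Cref{obs:pmonot} plus closure of supermodularity under non-negative sums (the paper phrases this via monotonicity of $\nabla F^{q} = \sum_i \nabla f_i^{q}$, which is the same fact). The only cosmetic difference is the base of the logarithm in bounding $w^{1/q}$, which you already flag.
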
 

    \begin{proof}
         Let $p' = \max\{p, \log w\}$ and consider $\vertm{x} := (\sum_i f_i(x)^{p'})^{1 / p'}$. As $\max_i f_i(x)^{p'} \leq \sum_i f_i(x)^{p'} \leq w \cdot \max_i f_i(x)^{p'}$, we have
         \[
         \max_i f_i(x) = (\max_i f_i(x)^{p'})^{1 / p'} \leq \vertm{x} \leq (w \cdot \max_i f_i(x)^{p'})^{1 / p'} = w^{1 / p'} \max_i f_i(x) \leq 2 \max_i f_i(x).
         \]
         Furthermore, for all $u, v \in \R^n_+$, we have
         \[
         \nabla \vertm{u + v}^{p'} = \sum_i f_i(u + v)^{p'} \geq \sum_i \nabla f_i(u)^{p'} = \nabla \vertm{u}^{p'},
         \]
         since each $f_i$ is $p'$-\monotone.
    \end{proof}

    An implication of this is that any norm in $n$ dimensions can be $O(1)$-approximated by an $n$-\monotone norm. This is because  we can find a $\frac{1}{4}$-net $\mathcal{N} \subseteq \cA$ of the unit ball of the dual norm  of size $2^{O(n)}$. Since,  $\vertm{x} := \max_{a \in \mathcal{N}} \ip{a}{|x|}$ is an $O(1)$ approximation of $\|x\|$ and $\ip{a}{|x|}$  is a re-weighted $\ell_1$ norm, \Cref{lem:maxOfpmonotone} implies that $\vertm{x}$  is  $n$-\monotone norm. 

    \begin{cor}
        Any monotone norm in $n$-dimensions can be $O(1)$-approximated by an $n$-\monotone norm.  
    \end{cor}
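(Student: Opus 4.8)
The plan is to combine a standard net argument on the dual ball with the max-of-norms lemma (\Cref{lem:maxOfpmonotone}). First I would recall that any monotone norm $\|\cdot\|$ in $\R^n$ satisfies $\|x\| = \max_{a \in \cA} \ip{a}{x}$ over the non-negative orthant, where $\cA$ is the unit ball of the dual norm (intersected with $\R^n_+$, since we restrict to monotone norms acting on non-negative vectors, it suffices to take $\cA \subseteq \R^n_+$). The idea is to replace the (infinite) index set $\cA$ by a finite net. By a volumetric argument, there is a $\frac14$-net $\mathcal{N} \subseteq \cA$ of size $2^{O(n)}$, meaning every $a \in \cA$ is within Euclidean distance $\frac14$ (or within the dual norm itself, up to rescaling constants) of some $a' \in \mathcal{N}$.

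Next I would verify that $\vertm{x} := \max_{a \in \mathcal{N}} \ip{a}{x}$ is an $O(1)$-approximation of $\|x\|$ on $\R^n_+$. One direction, $\vertm{x} \le \|x\|$, is immediate since $\mathcal{N} \subseteq \cA$. For the other direction, given $x$ with $\|x\| = \ip{a}{x}$, pick $a' \in \mathcal{N}$ close to $a$; then $\ip{a}{x} - \ip{a'}{x} = \ip{a - a'}{x}$ is bounded by $\frac14 \|x\|$ using Cauchy-Schwarz together with a comparison between the Euclidean norm on $\cA$ and the dual norm — here one has to be slightly careful and the cleanest route is to take the net directly in the dual norm's metric, so that $\ip{a-a'}{x} \le \|a - a'\|_* \cdot \|x\| \le \frac14 \|x\|$, giving $\|x\| \le \vertm{x} + \frac14\|x\|$, hence $\vertm{x} \ge \frac34 \|x\|$. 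Rescaling $\vertm{\cdot}$ by a constant yields a genuine $O(1)$-approximation of $\|\cdot\|$.

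Finally, each function $x \mapsto \ip{a}{x}$ for $a \in \R^n_+$ is a re-weighted $\ell_1$ norm, which is $1$-\monotone (its $1$st power is linear, hence trivially supermodular). Therefore $\vertm{\cdot}$ is a max of $w = |\mathcal{N}| = 2^{O(n)}$ many $1$-\monotone norms, and \Cref{lem:maxOfpmonotone} shows it can be $2$-approximated by a $\max\{1, \log w\} = O(n)$-\monotone norm. Composing the two approximations (the $O(1)$-approximation of $\|\cdot\|$ by $\vertm{\cdot}$ and the $2$-approximation of $\vertm{\cdot}$ by an $O(n)$-\monotone norm) gives an $O(1)$-approximation of $\|\cdot\|$ by an $O(n)$-\monotone norm, as claimed.

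The only real subtlety — the ``main obstacle'', though it is minor — is being careful about \emph{which} metric the net is taken in and ensuring the net has the claimed $2^{O(n)}$ size while still controlling $\ip{a-a'}{x}$; taking the net with respect to the dual norm (whose unit ball is a symmetric convex body in $\R^n$, so admits a $\frac14$-net of size $2^{O(n)}$ by the standard volume-covering bound) resolves this cleanly. Everything else is bookkeeping.
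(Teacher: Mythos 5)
Your proof is correct and follows essentially the same route as the paper: take a $\tfrac14$-net of the dual unit ball of size $2^{O(n)}$, observe that the max over the net of the resulting re-weighted $\ell_1$ norms is an $O(1)$-approximation, and apply \Cref{lem:maxOfpmonotone} with $w = 2^{O(n)}$ to get an $O(n)$-\monotone approximant. Your extra care about taking the net in the dual norm's own metric (rather than the Euclidean one) is a reasonable tightening of a detail the paper leaves implicit.
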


    Although $p$-\monotone norms have several nice properties, they also exhibit some strange properties. For instance, sum of two $p$-\monotone norms can be very far from being $p$-\monotone. 
    
    \begin{lemma} \label{lemma:sumNotP}
        The norm $\|x\| = \|x\|_1 + \|x\|_2$ is not $p$-\monotone for any $p = o(\sqrt{n})$.
    \end{lemma}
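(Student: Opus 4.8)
The plan is to argue via the Hessian characterization of \Cref{lemma:equivalentPmono} rather than the raw definition, since raising norms to the $p$-th power is awkward to handle directly. Although $\|x\| := \|x\|_1 + \|x\|_2$ is not twice differentiable on all of $\R^n_+$, it is smooth on the open positive orthant, and $p$-\monotonicity is exactly the statement that $\|\cdot\|^p$ is supermodular on $\R^n_+$; restricting this $C^2$ function to the interior forces all of its off‑diagonal second derivatives to be nonnegative there. (If one prefers not to invoke the lemma verbatim, the same conclusion follows by plugging $u=x$, $v = s e_i$, $w = t e_j$ into the definition, dividing by $st$, and letting $s,t\to 0^+$, which yields $\nabla^2_{ij}(\|x\|^p)\ge 0$.) By the computation in the proof of \Cref{lemma:equivalentPmono}, this is equivalent to
$$\nabla^2_{ij}\|x\| \;\ge\; -(p-1)\,\frac{\nabla_i\|x\|\,\nabla_j\|x\|}{\|x\|}\qquad\text{for all strictly positive }x\text{ and }i\neq j.$$
So it suffices to exhibit one strictly positive witness $x$ and a pair $i\neq j$ at which this inequality forces $p=\Omega(\sqrt n)$.

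Next I would compute the derivatives. On the open positive orthant $\|x\| = \langle\ones,x\rangle + \|x\|_2$, so $\nabla_i\|x\| = 1 + x_i/\|x\|_2$ and, for $i\neq j$, $\nabla^2_{ij}\|x\| = -x_i x_j/\|x\|_2^3$. Substituting and rearranging, the displayed condition for the pair $(i,j)$ becomes
$$p-1 \;\ge\; \frac{x_i x_j\,(\|x\|_1+\|x\|_2)}{\|x\|_2\,(\|x\|_2+x_i)(\|x\|_2+x_j)}.$$
Using $x_i,x_j\le\|x\|_2$ and $\|x\|_1\le\sqrt n\,\|x\|_2$, the right-hand side is always at most $\sqrt n+1$, so $\Omega(\sqrt n)$ is the best one can hope for, and it is what we target. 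Note also that the naive choice $x=\ones$ is useless: there the right-hand side equals $1/(\sqrt n+1)$, because the bad term $-x_i x_j/\|x\|_2^3$ is of order $n^{-3/2}$ while $\|x\|$ is of order $n$. The fix is to make $\|x\|_1$ of order $\sqrt n$ while keeping $\|x\|_2$ and the two chosen coordinates of order $1$.

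Concretely I would take $x$ with $x_1 = x_2 = 1$ and $x_3 = \cdots = x_n = 1/\sqrt n$, and choose $i=1$, $j=2$. Then $\|x\|_1 = 2 + (n-2)/\sqrt n \ge \sqrt n$, $\|x\|_2^2 = 2 + (n-2)/n \le 3$, $x_i x_j = 1$, and $\|x\|_2 + x_i = \|x\|_2 + x_j \le \sqrt 3 + 1$, so the right-hand side above is at least $\sqrt n/\big(\sqrt 3(\sqrt 3+1)^2\big) = \sqrt n/(6+4\sqrt 3)\ge \sqrt n/13$. Hence $p$-\monotonicity forces $p \ge 1 + \sqrt n/13$, i.e., $\|\cdot\|_1+\|\cdot\|_2$ is not $p$-\monotone for any $p = o(\sqrt n)$.

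The only real obstacle is the first step: passing from the definition of $p$-\monotonicity to the pointwise Hessian inequality on the interior of the orthant despite the norm's kinks on the boundary. This is handled either by the $s,t\to 0^+$ limiting argument above or by quoting that a $C^2$ function is supermodular iff its off-diagonal Hessian entries are nonnegative; after that, everything is a short gradient computation plus the explicit choice of witness vector. The fact that the construction matches the universal $\sqrt n+1$ upper bound on the right-hand side also shows the bound $p=\Omega(\sqrt n)$ is essentially the best attainable by this method.
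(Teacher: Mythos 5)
Your proof is correct and follows essentially the same route as the paper's: the Hessian characterization from \Cref{lemma:equivalentPmono}, the identical rearranged inequality $p-1 \ge \frac{x_i x_j(\|x\|_1+\|x\|_2)}{\|x\|_2(\|x\|_2+x_i)(\|x\|_2+x_j)}$, and a witness vector that is just a rescaling of the paper's $(\sqrt n,\sqrt n,1,\dots,1)$ (the condition is homogeneous of degree zero, so the two are interchangeable). Your explicit handling of the differentiability of $\|\cdot\|_1+\|\cdot\|_2$ on the open orthant and your explicit constants are slightly more careful than the paper's $\Theta(\cdot)$ presentation, but the argument is the same.
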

    \begin{proof}
        Consider some $i \neq j \in [n]$. 
        By Hessian property in \Cref{lemma:equivalentPmono}, for $\|x\|_1 + \|x\|_2$ to be $p$-\monotone, we must have
        \[
         -\frac{\nabla_i \|x\|_2 \cdot \nabla_j \|x\|_2}{ \|x\|_2} = \nabla^2_{i,j}\|x\| \ge - (p-1) \frac{\nabla_i\|x\| \cdot \nabla_j\|x\|}{\|x\|}  = - (p-1) \frac{\big(1 + \nabla_i \|x\|_2 \big) \cdot \big(1 + \nabla_j \|x\|_2 \big)}{\|x\|_1 + \|x\|_2} .
        \]
        Since $\nabla_i \|x\|_2 = \frac{x_i}{ \|x\|_2}$, we can simplify to get
        \[
        \frac{x_i \cdot x_j}{\|x\|_2^3} \leq (p-1) \cdot \frac{\big( \|x\|_2 + x_i\big) \cdot \big( \|x\|_2 + x_j\big)}{\big(\|x\|_1 + \|x\|_2 \big)\cdot \|x\|_2^2} .
        \]
        Now consider the vector $x = (\sqrt{n},\sqrt{n}, 1, 1, \ldots, 1)$, i.e., a vector having the first two coordinates $\sqrt{n}$ and every other coordinate $1$. Note that $\|x\|_1 = \Theta(n)$ and $\|x\|_2 = \Theta(\sqrt{n})$.
        For $i=1$ and $j=2$, the last inequality gives
        \[
            \frac{n}{\Theta(n^{3/2})} \leq (p-1) \cdot \frac{\Theta(\sqrt{n}) \cdot \Theta(\sqrt{n})}{\Theta(n)\cdot \Theta(\sqrt{n})^2} = \frac{p-1}{\Theta(n)},
        \]
        which is only possible for $p = \Omega(\sqrt{n})$.
    \end{proof}

\subsection{Orlicz Norms and a Sufficient Condition for $p$-\monotonicity} \label{sec:Orlicz}



    The following class of Orlicz functions and Orlicz norms will play a crucial role in all our norm approximations.
    
    \begin{definition}[Orlicz Function]
    A {continuous} function $G: \R_+ \rightarrow\R_+$ is called an \emph{Orlicz function} if  it is convex, increasing, and satisfies $G(0)=0$ and $\lim_{t \rightarrow\infty}G(t) = \infty$.
    \end{definition}

    \begin{definition}[Orlicz Norm]
        Given an Orlicz function $G$, the associated \emph{Orlicz norm} is defined by $$\|x\|_G := \inf\bigg\{\alpha > 0: \sum_i G\bigg(\frac{|x_i|}{\alpha}\bigg) \le 1\bigg\}.$$
        Since we only focus on non-negative vectors, we will  ignore throughout the absolute value $|x_i|$.
    \end{definition}

    For example, any $\ell_p$ is an Orlicz norm when we select $G(t) = t^p$. 
     Orlicz norms are fundamental in functional analysis \cite{KosmolBook11}, but have also found versatile applications in TCS. For instance, in regression the choice between $\ell_1$ and $\ell_2$ norms depends on outliers and stability, so an Orlicz norm based on the popular Huber convex loss function is better suited \cite{orliczRegression1,orliczRegression2}. Later we will show that Orlicz norms can be used to approximate any symmetric norm.

    The following lemma is our main tool for working with Orlicz norms. It states that for such a norm to be $p$-\monotone, it suffices that its generating function $G$ grows ``at most like power $p$''. The key is that this reduces the analysis of the $n$-dimensional norms $\|\cdot\|_G$ to the analysis of 1-dimensional functions, which is significantly easier.  
    
    
    \begin{lemma} \label{lemma:orliczpMono}
	Consider a twice differentiable   convex function $G : \R_+ \rightarrow \R_+$. If $G$ satisfies $$G''(t) \cdot t \le (p-1) \cdot G'(t) ~~~\forall t \ge 0,$$ then the Orlicz norm $\|x\|_G$ is $(2p-1)$-\monotone.
    \end{lemma}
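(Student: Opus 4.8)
The plan is to use the Hessian characterization from \Cref{lemma:equivalentPmono}, applied to the norm $\|\cdot\|_G$, with a suitable ``target power'' $q = 2p-1$. Concretely, I would first obtain explicit formulas for the first and second partial derivatives of $\|\cdot\|_G$ via implicit differentiation of the defining identity $\sum_i G(x_i / \|x\|_G) = 1$. Writing $\alpha := \|x\|_G$ and $s_i := x_i / \alpha$, differentiating $\sum_i G(s_i) = 1$ in the direction $e_i$ gives $\nabla_i \alpha$ as a ratio whose numerator is $G'(s_i)$ and whose denominator is $\sum_j G'(s_j) s_j$ (a positive quantity by convexity and $G(0)=0$, hence $G'\geq 0$). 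Differentiating once more yields $\nabla^2_{i,j}\alpha$ in terms of $G''$ and $G'$ evaluated at the $s_k$'s. The goal is then to verify the inequality
\[
\nabla^2_{i,j}\|x\|_G \;\ge\; -(q-1)\,\frac{\nabla_i\|x\|_G \cdot \nabla_j\|x\|_G}{\|x\|_G}
\]
for all $x \in \R^n_+$ and all $i,j$ (including $i=j$), with $q = 2p-1$, which by \Cref{lemma:equivalentPmono} is exactly $(2p-1)$-\monotonicity.

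The key algebraic step is to substitute the hypothesis $G''(t)\,t \le (p-1) G'(t)$ into the formula for $\nabla^2_{i,j}\alpha$ to bound the ``bad'' (negative) terms. There are two sources of curvature in $\nabla^2_{i,j}\alpha$: a direct term involving $G''(s_i)$ (present only when $i=j$), which is \emph{non-negative} and therefore harmless, and cross terms arising from differentiating the denominator $D := \sum_j G'(s_j)\, s_j$, which are negative. Using $\nabla_i \alpha = \alpha\, G'(s_i)/D$ and the bound $G''(s_k) s_k \le (p-1) G'(s_k)$, each occurrence of $G''$ coming from $\nabla D$ gets replaced by something proportional to $G'$, so that $|\nabla_k D|$-type quantities are controlled by $D$ and the $G'(s_k)$'s — i.e. by the gradients themselves. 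Collecting terms, the negative part of $\nabla^2_{i,j}\alpha$ should come out bounded below by $-c(p)\,\nabla_i\alpha\,\nabla_j\alpha/\alpha$ for an explicit constant $c(p)$, and the arithmetic should give $c(p) = 2(p-1) = (q-1)$ with $q = 2p-1$; the factor of $2$ is the price of handling both the $\nabla D$ cross terms and reconstituting the $1/\alpha$ normalization. I would also handle the smoothness caveat: $G$ is assumed twice differentiable, but $\|\cdot\|_G$ might fail to be differentiable where some $s_i = 0$; one can either argue $\nabla_i\|x\|_G$ and $\nabla^2\|x\|_G$ extend continuously (using $G'(0)$ finite, or $=0$, as the case may be) or invoke \Cref{lemma:diff} to reduce to the differentiable case by an arbitrarily small perturbation, and take limits.

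The main obstacle I anticipate is the bookkeeping in the second-differentiation step: $\nabla^2_{i,j}\alpha$ is a rational expression in $\alpha$, $\{G'(s_k)\}$, $\{G''(s_k)\}$, and $D$, and one must organize it so that the hypothesis can be applied term-by-term and the resulting bound reassembled into precisely the form $-(q-1)\nabla_i\alpha\,\nabla_j\alpha/\alpha$ rather than something weaker. In particular one must be careful that the scale-invariance of $\|\cdot\|_G$ (degree-$1$ homogeneity) is used correctly: $\nabla\alpha$ is degree $0$ and $\nabla^2\alpha$ is degree $-1$ in $x$, which forces the $1/\alpha$ on the right-hand side and also provides a useful sanity check on every intermediate identity. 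A secondary subtlety is that the hypothesis is only assumed pointwise in $t\ge 0$, so when several different arguments $s_i, s_j, s_k$ appear simultaneously, one must apply it separately at each argument and never mix them — which is exactly what makes the bound come out dimension-free, matching the statement.
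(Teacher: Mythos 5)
Your proposal is correct and follows essentially the same route as the paper's proof: implicit differentiation of $\sum_i G(x_i/\|x\|_G)=1$ for the gradient formula, a second differentiation for the Hessian, discarding the non-negative diagonal and positive-sum terms, and bounding the two remaining negative cross terms by applying $G''(t)\,t \le (p-1)\,G'(t)$ pointwise at each argument. You correctly anticipate that this yields the constant $2(p-1) = (2p-1)-1$, matching the paper's calculation exactly.
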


    Notice that the function $G(t) = t^p$ satisfies this condition, at equality. While in this special case the norm $\|\cdot\|_G = \ell_p$ is $p$-\monotone, in general we obtain the slightly weaker conclusion of $(2p-1)$-\monotonicity.

     The rest of the subsection proves this lemma. The proof will rely on  the Hessian property of $p$-\monotone norms. First, we observe the following formula for the gradient of the Orlicz norm $\|\cdot\|_G$; this can be found on page 24 of \cite{KosmolBook11}, but we repeat the proof for completeness.

	\begin{claim}  \label{lemma:gradientOrlicz}
		If $G$ is differentiable, then the gradient of the Orlicz norm $\|\cdot\|_G$ is given by 
		\begin{align*}
			\nabla_i \|x\|_G = \frac{G'(\tfrac{x_i}{\|x\|_G})}{\sum_\ell \tfrac{x_\ell}{\|x\|_G} \cdot G'(\tfrac{x_\ell}{\|x\|_G})} .
		\end{align*}
  \end{claim}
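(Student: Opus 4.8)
The plan is to prove the formula by implicit differentiation of the defining equation of the Orlicz norm. Fix a nonzero $x \in \R^n_+$ and write $\alpha := \|x\|_G$. The first step is to verify that the infimum in the definition is attained and that the constraint is active there, i.e.\ that
\[
\sum_i G\!\Big(\frac{x_i}{\alpha}\Big) = 1 .
\]
This holds because the map $t \mapsto \sum_i G(x_i/t)$ on $(0,\infty)$ is continuous (as $G$ is continuous), nonincreasing (as $G$ is increasing), tends to $0$ as $t \to \infty$ (since $G(0)=0$), and tends to $+\infty$ as $t \to 0^+$ (since $\lim_{s\to\infty}G(s)=\infty$ and $x \neq 0$); hence there is a value $t=\alpha$ at which the sum equals exactly $1$, and this $\alpha$ is the infimum.

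Next I would differentiate the identity $\Phi(x,\|x\|_G) \equiv 1$ on $\R^n_+ \setminus \{0\}$, where $\Phi(x,\alpha) := \sum_i G(x_i/\alpha)$. Since $G$ is differentiable, so is $\Phi$, and
\[
\frac{\partial \Phi}{\partial x_i} = \frac{1}{\alpha}\, G'\!\Big(\frac{x_i}{\alpha}\Big), \qquad \frac{\partial \Phi}{\partial \alpha} = -\frac{1}{\alpha^2}\sum_\ell x_\ell\, G'\!\Big(\frac{x_\ell}{\alpha}\Big) .
\]
The $\alpha$-derivative does not vanish: by convexity of $G$ with $G(0)=0$ we have $G'(s) \ge G(s)/s$ for $s>0$, and since $\sum_\ell G(x_\ell/\alpha)=1$ some coordinate has $G(x_\ell/\alpha)>0$, so $\sum_\ell x_\ell\, G'(x_\ell/\alpha) > 0$. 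Therefore the chain rule (equivalently, the implicit function theorem) applied to $\Phi(x,\|x\|_G)=1$ gives
\[
0 = \frac{\partial \Phi}{\partial x_i} + \frac{\partial \Phi}{\partial \alpha}\cdot \nabla_i \|x\|_G ,
\]
and solving for $\nabla_i \|x\|_G$ yields
\[
\nabla_i \|x\|_G = -\frac{\partial\Phi/\partial x_i}{\partial\Phi/\partial\alpha} = \frac{(1/\alpha)\,G'(x_i/\alpha)}{(1/\alpha^2)\sum_\ell x_\ell\, G'(x_\ell/\alpha)} = \frac{G'(x_i/\alpha)}{\sum_\ell (x_\ell/\alpha)\, G'(x_\ell/\alpha)} ,
\]
which, after substituting $\alpha = \|x\|_G$, is exactly the claimed expression.

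The only genuinely delicate point is the soft-analysis step at the start: confirming that the defining constraint is active at the optimum and that $\|\cdot\|_G$ is differentiable so that implicit differentiation is legitimate. Both are routine given the convexity and monotonicity hypotheses on $G$ (and differentiability of the norm is in any case part of the hypothesis of the claim). Once that is in place, the remainder is a single application of the chain rule, so I do not expect any real obstacle here; the computation is short and the nonvanishing of $\partial\Phi/\partial\alpha$ is the only thing that needs a brief justification.
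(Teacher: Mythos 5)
Your proof is correct and follows essentially the same route as the paper's: implicit differentiation of the identity $\sum_\ell G(x_\ell/\|x\|_G)=1$ (the paper's function $H(x,c)$ is your $\Phi(x,\alpha)$). The extra care you take in verifying that the constraint is active at the optimum and that $\partial\Phi/\partial\alpha \neq 0$ is a welcome bit of rigor the paper omits, but it does not change the argument.
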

    \begin{proof}
        Consider the function  $H(x,c) := \sum_\ell G(\tfrac{x_\ell}{c})$. Since $H(x,\|x\|_G)=1$ is constant, we get
        \[
        0 ~=~ \frac{\partial}{\partial x_i} H(x,\|x\|_G) ~=~ \frac{1}{\|x\|_G} G'(\tfrac{x_i}{\|x\|_G}) - \sum_\ell \Big(G'(\tfrac{x_\ell}{\|x\|_G}) \cdot \frac{x_\ell}{\|x\|_G^2} \Big) \cdot \nabla_i \|x\|_G . \qedhere
        \]
    \end{proof}
  
  To simplify notation, we define the following.
  
  \begin{definition} Let 
  \begin{gather*}
			\tilde{x}_\ell := \frac{x_\ell}{\|x\|} \quad \text{and} \quad \gamma(x) := \sum_\ell \tfrac{x_\ell}{\|x\|} \cdot G'(\tfrac{x_\ell}{\|x\|}). \qquad \text{Hence,} \quad 
   \nabla_i \|x\|_G =  \frac{G'(\tilde{x}_i)}{\gamma(x)} .
 \end{gather*}
 \end{definition}
	
    Differentiating the expression for the gradient $\nabla_i \|x\|_G$ gives a close-form formula for the Hessian of the Orlicz norm. (To be careful with the chain rules, we use brackets; for example $\nabla_j (g(h(x)))$ to denote the gradient of the composed function $g \circ h$, not of just $g$.)
	
	
  \begin{claim} \label{claim:HessianGrad}
  If $G$ is twice differentiable, then the Hessian of the norm 
      {\small
	\begin{align}
		\small \nabla^2_{ij}\|x\| 
		&= \frac{1}{\gamma(x)} \cdot \nabla_j (G'(\tilde{x}_i)) ~-~ \frac{\nabla_i \|x\|}{\gamma(x)} \cdot \sum_\ell \bigg(\tilde{x}_\ell \cdot \nabla_j (G'(\tilde{x}_\ell))\bigg). \label{eq:hess22}
		\end{align}	
	}
  \end{claim}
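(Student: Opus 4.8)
The plan is to differentiate the closed-form gradient from \Cref{lemma:gradientOrlicz} directly, treating $x_j$ as the variable and applying the quotient rule. Writing $\nabla_i\|x\|_G = G'(\tilde x_i)/\gamma(x)$, the quotient rule gives
\[
\nabla^2_{ij}\|x\| \;=\; \frac{\nabla_j\big(G'(\tilde x_i)\big)}{\gamma(x)} \;-\; \frac{G'(\tilde x_i)}{\gamma(x)^2}\,\nabla_j\gamma(x),
\]
and since $G'(\tilde x_i)/\gamma(x) = \nabla_i\|x\|$, this already matches the claimed formula \eqref{eq:hess22} once we establish the single identity
\[
\nabla_j\gamma(x) \;=\; \sum_\ell \tilde x_\ell\cdot \nabla_j\big(G'(\tilde x_\ell)\big).
\]
So the whole claim reduces to proving this one identity.

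To prove it, I would expand $\gamma(x) = \sum_\ell \tilde x_\ell\,G'(\tilde x_\ell)$ by the product rule, which produces the target sum plus the ``extra'' contribution $\sum_\ell G'(\tilde x_\ell)\,\nabla_j(\tilde x_\ell)$. Now $\nabla_j(\tilde x_\ell)$ is itself just the quotient rule applied to $x_\ell/\|x\|$, namely $\nabla_j(\tilde x_\ell) = \tfrac{\delta_{j\ell}}{\|x\|} - \tfrac{\tilde x_\ell}{\|x\|}\,\nabla_j\|x\|$. Substituting, the extra contribution collapses to $\tfrac{1}{\|x\|}\big(G'(\tilde x_j) - \nabla_j\|x\|\cdot\gamma(x)\big)$, which is exactly zero because $\nabla_j\|x\| = G'(\tilde x_j)/\gamma(x)$ by \Cref{lemma:gradientOrlicz}. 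Hence $\nabla_j\gamma(x) = \sum_\ell \tilde x_\ell\,\nabla_j(G'(\tilde x_\ell))$, and plugging back completes the derivation.

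The computation is entirely mechanical, so I do not expect any genuine obstacle beyond bookkeeping; the one point worth flagging is that the statement deliberately leaves $\nabla_j(G'(\tilde x_i))$ and $\nabla_j(G'(\tilde x_\ell))$ unexpanded (these involve a further chain rule through $\tilde x$), since expanding them only becomes useful later when the Hessian inequality of \Cref{lemma:orliczpMono} is actually checked. The only conceptual remark is that the cancellation of the $\sum_\ell G'(\tilde x_\ell)\nabla_j(\tilde x_\ell)$ term is precisely the ``self-consistency'' of the gradient formula from \Cref{lemma:gradientOrlicz} — under differentiation, $\gamma$ behaves as if the $\tilde x_\ell$ were frozen. (An alternative route would be to differentiate the defining identity $H(x,\|x\|_G)=1$ a second time, as in the proof of \Cref{lemma:gradientOrlicz}, but this is messier; twice-differentiability of $\|\cdot\|_G$ when $G$ is twice differentiable is in any case guaranteed by the implicit function theorem applied to $H$.)
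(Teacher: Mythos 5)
Your proposal is correct and follows essentially the same route as the paper's proof: apply the quotient rule to $\nabla_i\|x\|_G = G'(\tilde x_i)/\gamma(x)$, reduce to the identity $\nabla_j\gamma(x) = \sum_\ell \tilde x_\ell\,\nabla_j(G'(\tilde x_\ell))$, and verify it via the product rule together with the cancellation $\sum_\ell G'(\tilde x_\ell)\,\nabla_j\tilde x_\ell = 0$ coming from the gradient formula of \Cref{lemma:gradientOrlicz}. No issues.
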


Before proving the claim (which is mostly  algebra), we complete the proof of the  lemma.
  \begin{proof}[Proof of \Cref{lemma:orliczpMono}]
    When $\ell \neq j$ we have $\nabla_j \tilde{x}_\ell = \nabla_j (\frac{x_\ell}{\|x\|_G}) = - \frac{x_\ell \cdot \nabla_j \|x\|_G}{\|x\|_G^2} = - \tilde{x}_\ell \cdot \frac{\nabla_j \|x\|_G}{\|x\|_G}$, and when $\ell = j$ we get an extra $+\frac{1}{\|x\|_G}$ from the product rule. Letting $\ones(\ell = j)$ denote the indicator that $\ell=j$, this implies
        \begin{align} \label{eq:gradxtilde}
            \nabla_j \tilde{x}_\ell \,=\, -   \frac{x_\ell \cdot \nabla_j \|x\|}{\|x\|^2} \,+\, \ones(\ell = j)\cdot \frac{1}{\|x\|} .
        \end{align}
	Applying this to \eqref{eq:hess22} and using  $\nabla_j (G'(\tilde{x}_\ell)) = G''(\tilde{x}_\ell) \cdot \nabla_j \tilde{x}_\ell$, we get
	\begin{align}
		\small \nabla^2_{ij}\|x\| &=  - \frac{G''(\tilde{x}_i) \cdot x_i \cdot \nabla_j \|x\|}{\gamma(x) \cdot \|x\|^2} ~+~  \ones(i=j)\cdot \frac{G''(\tilde{x}_i)}{\gamma(x) \cdot \|x\|}  \notag\\
		& \qquad - \frac{\nabla_i \|x\|}{\gamma(x)} \cdot\bigg[ - \sum_\ell \bigg(\tilde{x}_\ell \cdot G''(\tilde{x}_\ell) \cdot \frac{x_\ell \cdot \nabla_j \|x\|}{\|x\|^2}\bigg) ~+~ \frac{\tilde{x}_j \cdot G''(\tilde{x}_j)}{\|x\|} \bigg] \notag \\
		&\ge - \frac{1}{\|x\|} \bigg[ \nabla_i\|x\| \cdot \frac{\tilde{x}_j \cdot G''(\tilde{x}_j)}{\gamma(x)} + \nabla_j\|x\| \cdot \frac{\tilde{x}_i \cdot G''(\tilde{x}_i)}{\gamma(x)} \bigg] \label{eq:pMonoOHave},
	\end{align}	
where the inequality uses that the missing terms   are non-negative for $x \ge 0$.


	Moreover, the assumption on $G$ implies that
  \[ \frac{\tilde{x}_j \cdot G''(\tilde{x}_j)}{\gamma(x)} \,\le\, (p-1) \frac{G'(\tilde{x}_j)}{\gamma(x)} \,=\, (p-1) \nabla_j \|x\|.\] 
  Similarly, we get  for $i$ that 
  $  \frac{\tilde{x}_i \cdot G''(\tilde{x}_i)}{\gamma(x)}  \le  (p-1) \nabla_i \|x\|$. 
   Plugging these bounds into \eqref{eq:pMonoOHave} gives 
   \begin{align*} 
	\nabla^2_{ij} \|x\| \ge - (2p-2) \frac{1}{\|x\|} \nabla_i \|x\| \cdot \nabla_j \|x\|,
	\end{align*}
 which proves \Cref{lemma:orliczpMono} by \Cref{lemma:equivalentPmono}.
	\end{proof}    

Finally, we prove the missing claim.
 \begin{proof}[Proof of \Cref{claim:HessianGrad}]
  Differentiating w.r.t. $x_j$ the gradient $\nabla_i \|x\|_G = \frac{G'(\tilde{x}_i)}{\gamma(x)}$ from Lemma \ref{lemma:gradientOrlicz} gives 
		\begin{align}
			\nabla^2_{ij}\|x\|_G &= \frac{1}{\gamma(x)} \cdot \nabla_j (G'(\tilde{x}_i)) \,-\, {G'(\tilde{x}_i) \cdot \frac{1}{\gamma(x)^2}} \cdot \nabla_j \gamma(x) \notag\\
			&= \frac{1}{\gamma(x)} \cdot \nabla_j (G'(\tilde{x}_i)) \,-\, {\frac{\nabla_i \|x\|_G}{\gamma(x)}} \cdot \nabla_j \gamma(x) .  \label{eq:hess1}
		\end{align}
		We expand the gradient $\nabla_j \gamma(x)$ of the second term: 
        \begin{align*}
		  \nabla_j \gamma(x) ~=~ \sum_\ell \nabla_j \bigg(\tilde{x}_\ell G'(\tilde{x}_\ell)\bigg) ~
		  = \sum_\ell \bigg({\nabla_j \tilde{x}_\ell} \cdot G'(\tilde{x}_\ell) + {\tilde{x}_\ell \cdot \nabla_j (G'(\tilde{x}_\ell))} \bigg) .
        \end{align*}
        By \eqref{eq:gradxtilde}, we have 
        \begin{align*}
            \sum_\ell {\nabla_j \tilde{x}_\ell} \cdot G'(\tilde{x}_\ell) &= 
          - \sum_\ell  \tilde{x}_\ell \cdot \frac{\nabla_j \|x\|_G}{\|x\|_G} \cdot G'(\tilde{x}_\ell)   +  \frac{1}{\|x\|_G}  \cdot G'(\tilde{x}_j) \\
          &= - \frac{\nabla_j \|x\|_G}{\|x\|_G} \cdot \gamma(x)+ \frac{G'(\tilde{x}_j)}{\|x\|_G}  \quad =\quad  0.
        \end{align*}
        This implies
         \[
		  \nabla_j \gamma(x) ~=~  \sum_\ell {\tilde{x}_\ell \cdot \nabla_j (G'(\tilde{x}_\ell))} ,
        \]
        which proves the claim by substitution in \eqref{eq:hess1}.
    \end{proof}

    \subsection{Approximation of Orlicz Norms} 

        This section shows that every Orlicz norm can be approximated by an $O(\log n)$-\monotone norm.

    \thmorlicz*
    
    Before giving an overview of the proof of the theorem, it will help the discussion to have the following lemma that shows that to approximate an Orlicz norm $\|\cdot\|_G$, it suffices to approximate the corresponding Orlicz function $G$.

    \begin{lemma}
    \label{lemma:approximationofG}
    Suppose $\tilde{G}$ is an Orlicz function satisfying for all $t$ with $G(t) \leq 1:$
    \begin{enumerate}
        \item $G(t) \leq \tilde{G}(t)$.
        \item    $\tilde{G}(t/\gamma) ~\leq~ \alpha \, G(t) + \tfrac{1}{n}$  for some universal constants $\alpha \geq 0$ and $\gamma \geq 1$.  
    \end{enumerate}
     Then,   $\|x\|_G  \le \|x\|_{\tilde{G}} \leq \gamma (\alpha + 1) \|x\|_G$.
    \end{lemma}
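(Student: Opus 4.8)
The plan is to unpack the definition of the Orlicz norm and reduce both inequalities to one‑dimensional comparisons of $G$ and $\tilde G$ on the sublevel set $\{t : G(t) \le 1\}$, since that is exactly where the two hypotheses are available. The workhorse observation is the standard normalization fact: for a nonzero $x \in \R^n_+$ and any Orlicz function $H$, writing $c = \|x\|_H$, the map $a \mapsto \sum_i H(x_i/a)$ is continuous and nonincreasing, equals $0$ at $+\infty$ and blows up at $0^+$, so $\sum_i H(x_i/c) \le 1$ (indeed $=1$); in particular $H(x_i/c) \le 1$ for every coordinate $i$. The case $x = 0$ is trivial, so I assume $x \ne 0$.

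\medskip
\noindent\textbf{Upper bound (the substantive direction).} I would set $c = \|x\|_G$ and $t_i := x_i/c$, so that $\sum_i G(t_i) \le 1$ and each $G(t_i) \le 1$. The goal is to certify that $\gamma(\alpha+1)c$ is a feasible scaling for $\|x\|_{\tilde G}$, i.e.\ that $y := \tfrac{x}{\gamma(\alpha+1)c}$ obeys $\sum_i \tilde G(y_i) \le 1$; note $y_i = \tfrac{1}{\alpha+1}\cdot\tfrac{t_i}{\gamma}$. First, since $\tilde G$ is convex with $\tilde G(0)=0$ and $\tfrac{1}{\alpha+1}\in(0,1]$, I have $\tilde G(y_i) \le \tfrac{1}{\alpha+1}\,\tilde G(t_i/\gamma)$. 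Second, because $G(t_i)\le 1$, the second hypothesis applies and gives $\tilde G(t_i/\gamma) \le \alpha\,G(t_i)+\tfrac1n$. Summing over $i\in[n]$ and using $\sum_i G(t_i)\le 1$,
\[
\sum_i \tilde G(y_i) \;\le\; \frac{1}{\alpha+1}\sum_i\Big(\alpha\,G(t_i)+\tfrac1n\Big) \;=\; \frac{1}{\alpha+1}\Big(\alpha\sum_i G(t_i)+1\Big) \;\le\; \frac{\alpha+1}{\alpha+1} \;=\; 1 ,
\]
which yields $\|x\|_{\tilde G}\le \gamma(\alpha+1)c = \gamma(\alpha+1)\|x\|_G$.

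\medskip
\noindent\textbf{Lower bound.} Here I would set $c = \|x\|_{\tilde G}$, so $\tilde G(x_i/c)\le 1$ for all $i$, and it is enough to show $G(x_i/c)\le \tilde G(x_i/c)$, because then $\sum_i G(x_i/c) \le \sum_i \tilde G(x_i/c)\le 1$ certifies $\|x\|_G\le c$. The one point requiring care is that the first hypothesis is only assumed on $\{G\le 1\}$, so I must verify each $x_i/c$ lies there. Let $t_0 := \sup\{t\ge 0 : G(t)\le 1\}$; continuity and monotonicity of $G$, together with $G(0)=0$ and $G(t)\to\infty$, give $0<t_0<\infty$ and $G(t_0)=1$. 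The first hypothesis gives $\tilde G \ge G$ on $[0,t_0]$, hence $\tilde G(t_0)\ge 1$. Since a convex increasing function with $\tilde G(0)=0$ is strictly increasing past the first point where it becomes positive, and $\tilde G(t_0)\ge 1>0$, I get $\tilde G(t)>\tilde G(t_0)\ge 1$ for every $t>t_0$. Therefore $\tilde G(x_i/c)\le 1$ forces $x_i/c\le t_0$, and then the first hypothesis delivers $G(x_i/c)\le \tilde G(x_i/c)$, completing the argument.

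\medskip
I do not anticipate a genuine obstacle: both directions are short bookkeeping once the definitions are unfolded. The only delicate spot is the last step of the lower bound---invoking hypotheses that are promised only on $\{G\le 1\}$ at the points $x_i/\|x\|_{\tilde G}$---which is dispatched using the convexity/monotonicity shape of Orlicz functions. The real content of the statement is the $\tfrac1n$-error accounting in the upper bound: with $n$ coordinates each leaking at most $\tfrac1n$, the total additive error is exactly $1$ unit of ``Orlicz budget'', precisely what the extra multiplicative factor $(\alpha+1)$ pays for, while the factor $\gamma$ separately absorbs the inner rescaling $t\mapsto t/\gamma$.
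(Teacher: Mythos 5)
Your proof is correct and follows essentially the same route as the paper's: the upper bound via convexity of $\tilde G$ to split off the factor $\tfrac{1}{\alpha+1}$, then hypothesis 2 and the budget $\sum_i G(x_i/\|x\|_G)\le 1$; the lower bound directly from $G\le\tilde G$. Your only addition is the careful verification that the points $x_i/\|x\|_{\tilde G}$ actually lie in the region $\{G\le 1\}$ where the hypotheses hold, a point the paper's one-line treatment of the lower bound leaves implicit.
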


    \begin{proof}
        The first inequality $G(t) \le \tilde{G}(t)$  implies that $\|x\|_G \le \|x\|_{\tilde{G}}$. Moreover, by convexity and $\alpha \geq 0$, we have $\tilde{G}(\tfrac{t}{\gamma(\alpha+1)}) \le (1 - \frac{1}{\alpha+1}) \tilde{G}(0) + \frac{1}{\alpha+1} \tilde{G}(t/\gamma) = \frac{1}{\alpha+1} \tilde{G}(t/\gamma)$ since $\tilde{G}$ is an Orlicz function. So, 
        \[
        \sum_i \tilde{G}\bigg(\frac{x_i}{\gamma(\alpha+1)\|x\|_G} \bigg) \le \frac{1}{\alpha+1} \sum_i \tilde{G}\bigg(\frac{x_i}{\gamma \|x\|_G} \bigg) \le \frac{1}{\alpha+1} \sum_i \bigg[ \alpha \cdot G\bigg(\frac{x_i}{\|x\|_G} \bigg) + \frac{1}{n} \bigg]  = 1,
        \]
        where the last inequality uses $\gamma \geq 1$. By definition of Orlicz norm, this implies $\|x\|_{\tilde{G}} \le \gamma(\alpha+1) \|x\|_G$.
    \end{proof}

    Observe that we do 
    not care how the Orlicz function $\tilde{G}$ behaves after $G(t)>1$, since these values do not matter for Orlicz norm $\|\cdot\|_G$.

    \paragraph{Proof Overview of \Cref{thm:orlicz}.} Given the sufficient condition for $p$-\monotonicity via the growth rate of the Orlicz function from \Cref{lemma:orliczpMono} and \Cref{lemma:approximationofG} above, the proof of \Cref{thm:orlicz} involves three steps. First, we simplify the structure of the Orlicz function $G$ by approximating it with a sum of (increasing) ``hinge'' functions $\tilde{G}(t) := \sum_i \tilde{g}_i(t)$ in the interval where $G(t) \leq 1$. These hinge function by definition have a sharp ``kink'', hence do not satisfy the requisite growth condition. Thus, the next step is to approximate them by smoother functions $f_i(t)$ that grow at most like power $p$. However, the standard smooth approximations of hinge functions (e.g. Hubber loss) do not give the desired properties, so we use a subtler approximation that depends on the relation between the slope and the location of the kink of the hinge function (this is because the approximation condition required by \Cref{lemma:approximationofG} is mostly multiplicative, while standard approximations focus on additive error). Finally, we  show  that the Orlicz norm $\|\cdot\|_F$, where $F(t)= \sum_i f_i(t)$, both approximates $\|\cdot\|_G$ and is $O(\log n)$-\monotone. 


    \begin{proof}[Proof of \Cref{thm:orlicz}]
    This first claim gives the desired approximation of $G$ by piecewise linear functions with $n$ slopes. 

    \begin{claim} \label{claim:piecewise}
    There are $a_1, \ldots, a_n, b_1, \ldots, b_n \ge 0$ such that $\tilde{G} : \R_+ \rightarrow \R_+$ defined by $\tilde{G}(t) = \sum_{i=1}^n \max\{0, a_i t - b_i\}$ fulfills
    $$\|x\|_{G} \le \|x\|_{\tilde{G}} \le 2 \|x\|_G, ~~\forall x \in \R^n_+.$$
    \end{claim}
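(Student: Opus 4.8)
The plan is to take $\tilde{G}$ to be the piecewise‑linear interpolant of $G$ through the $n+1$ points where $G$ attains the values $0,\tfrac1n,\tfrac2n,\dots,1$, extended linearly past the last point, and then use the elementary fact that a convex piecewise‑linear function through the origin is a sum of hinges to put it in the required form. Concretely: since $G$ is continuous, convex, increasing with $G(0)=0$ and $G(t)\to\infty$, for each $i\in\{1,\dots,n\}$ I would pick the (unique, since $G$ is strictly increasing once positive) point $t_i$ with $G(t_i)=i/n$, set $t_0=0$, so $0=t_0<t_1<\dots<t_n$, and let $m_i=\frac{G(t_i)-G(t_{i-1})}{t_i-t_{i-1}}=\frac1{n(t_i-t_{i-1})}>0$ be the chord slopes, which are non‑decreasing by convexity of $G$. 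Define $\tilde{G}(t)=m_1 t+\sum_{j=2}^n (m_j-m_{j-1})\max\{0,t-t_{j-1}\}$; this is exactly the linear interpolant of the $(t_i,i/n)$ on $[0,t_n]$, extended past $t_n$ with slope $m_n$, and it has the claimed hinge form with $a_1=m_1,\ b_1=0$ and $a_j=m_j-m_{j-1}\ge 0,\ b_j=(m_j-m_{j-1})t_{j-1}\ge 0$ for $j\ge2$. I would note in passing that $\tilde{G}$ is itself an Orlicz function (strictly increasing, convex, $\tilde{G}(0)=0$, $\to\infty$) and that $\{t\ge 0:\tilde{G}(t)\le1\}=[0,t_n]$.

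The core estimate will be the pointwise sandwich $G(t)\le\tilde{G}(t)\le G(t)+\tfrac1n$ for all $t\in[0,t_n]$. The left inequality is just convexity of $G$ (a chord lies above the graph, and $\tilde{G}$ is that chord on each $[t_{i-1},t_i]$); the right inequality holds because on $[t_{i-1},t_i]$ the increasing function $\tilde{G}$ satisfies $\tilde{G}(t)\le\tilde{G}(t_i)=i/n$ while $G(t)\ge G(t_{i-1})=(i-1)/n$. With this in hand both norm inequalities come out by unwrapping the definition of $\|\cdot\|_{\tilde G}$, in the spirit of \Cref{lemma:approximationofG}. For $\|x\|_G\le\|x\|_{\tilde G}$: if $\alpha$ satisfies $\sum_i\tilde{G}(x_i/\alpha)\le1$ then each $\tilde{G}(x_i/\alpha)\le1$, so $x_i/\alpha\le t_n$, a range on which $G\le\tilde{G}$, whence $\sum_i G(x_i/\alpha)\le\sum_i\tilde{G}(x_i/\alpha)\le1$, i.e.\ every $\alpha$ feasible for $\tilde{G}$ is feasible for $G$. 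For $\|x\|_{\tilde G}\le2\|x\|_G$ (assume $x\neq0$, set $\alpha_0=\|x\|_G$): by continuity $\sum_i G(x_i/\alpha_0)\le1$, so again each $x_i/\alpha_0\le t_n$; using $\tilde{G}(t/2)\le\tfrac12\tilde{G}(t)$ (convexity and $\tilde{G}(0)=0$) together with the sandwich, $\sum_i\tilde{G}\!\left(\tfrac{x_i}{2\alpha_0}\right)\le\tfrac12\sum_i\tilde{G}\!\left(\tfrac{x_i}{\alpha_0}\right)\le\tfrac12\sum_i\!\left(G\!\left(\tfrac{x_i}{\alpha_0}\right)+\tfrac1n\right)\le\tfrac12(1+1)=1$, so $2\alpha_0$ is feasible for $\tilde{G}$.

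The step that needs the most care — and the only point that is not pure bookkeeping — is that $\tilde{G}$ can lie \emph{strictly below} $G$ for large $t$ (beyond $t_n$, $G$ grows at least as fast as the line of slope $m_n$), so the inequality $G\le\tilde{G}$ is available only on $[0,t_n]$. The reason this does not hurt us is exactly the observation recorded after \Cref{lemma:approximationofG}: in the definition of an Orlicz norm only the arguments landing in $[0,t_n]$ (equivalently, where the function is $\le1$) ever matter, and feasibility of $\alpha$ forces $x_i/\alpha$ into that range. Everything else — non‑negativity of the hinge coefficients, that $\tilde{G}$ is a genuine Orlicz function, and the routine continuity argument giving $\sum_i G(x_i/\|x\|_G)\le1$ — is straightforward.
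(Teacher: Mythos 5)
Your proposal is correct and follows essentially the same route as the paper: the same interpolation points $t_i$ with $G(t_i)=i/n$, the same sandwich $G(t)\le\tilde G(t)\le G(t)+\tfrac1n$ on the relevant range, and the same norm comparison (you inline the argument of \Cref{lemma:approximationofG} with $\alpha=\gamma=1$ rather than citing it, and your explicit slope-increment hinge coefficients are just a cleaner writing of the paper's recursive choice of $a_i,b_i$). Your extra care about $\tilde G$ possibly dipping below $G$ beyond $t_n$, and why feasibility confines the arguments to $[0,t_n]$, is a point the paper glosses over but handles the same way via the hypothesis "for all $t$ with $G(t)\le 1$" in \Cref{lemma:approximationofG}.
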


    \begin{proof}
        Since $G$ is an Orlicz function, it is continuous and satisfies $G(0) = 0$ with $\lim_{t \rightarrow \infty} G(t) = \infty$. Hence, there are points $t_0=0,t_1,t_2,\ldots,t_n \in \R_+$ such that $G(t_i) = \frac{i}{n}$. Choose $a_i$ and $b_i$ such that $a_i t_{i-1} - b_i = 0$ and $a_i t_i - b_i = \frac{1}{n} - \sum_{j<i} a_j (t_i - t_{i-1})$. By this definition $\tilde{G}(t_i) = \sum_{i=1}^n \max\{0, a_i t - b_i\} = G(t_i) = \frac{i}{n}$ for all $i = 0,1,\ldots,n$.

        We claim that $G(t) \le \tilde{G}(t) \le G(t) + \frac{1}{n}$ for all $t$ with $G(t) \in [0,1]$. The first inequality follows from the convexity of $G$, and the second inequality follows because for all $t \in [t_i, t_{i+1}]$ we have $\tilde{G}(t) \le \tilde{G}(t_{i+1})= \frac{i+1}{n} \le G(t) + \frac{1}{n}$. Hence, \Cref{lemma:approximationofG} concludes the proof of the claim.
    \end{proof}

     Next, we will approximate the piecewise linear functions $\max\{0, a_i t - b_i\}$ with Orlicz functions. This approximation will depend on whether $b_i \geq 1$ or not.

     \begin{definition}
    Let $H$ be the set of indices $i \in[n]$ such that $b_i \ge 1$ and $L = [n] \setminus H$ be the other indices. For $p \geq  2(\ln n) + 1$, define
    \begin{align*}
    F(t) := \sum_{i = 1}^n f_i(t)\,, ~~\textrm{ where } f_i(t) = \left\{\begin{array}{ll}
        2 \cdot (\frac{2a_i}{b_i + 1})^p \cdot t^p&    \textrm{, if $i \in H$}\\
        (b_i^p + (a_i t)^p)^{1/p} - b_i& \textrm{, if $i \in L$} 
    \end{array} \right. .
    \end{align*}
    \end{definition}

    The idea behind this construction is the following: first write $\tilde{g}_i(t) := \max\{0, a_i t - b_i\} = \max\{b_i, a_i t\} - b_i$ and notice that $\tilde{G}(t) = \sum_{i=1}^n \tilde{g}_i(t)$. When $b_i \ge 1$, then the points $t$ where $\tilde{g}_i(t)$ equals 0 and 1 (respectively, $\frac{b_i}{a_i}$ and $\frac{b_i + 1}{a_i}$) are within a factor of 2, namely $\tilde{g}_i$ fairly sharply jumps from 0 to 1; in this case, we replace it by the sharply increasing function $f_i(t) = (\frac{2 a_i}{b_i+1})^p \cdot t^p$. Otherwise, the function $\tilde{g}_i$ does not increase so sharply and we just replace the maximum in $\tilde{g}_i(t) = \max\{b_i, a_i t\} - b_i$ by the $\ell_p$ norm to obtain $f_i$. Then to obtain $F$, we take the sum of the functions $f_i$.

   We first prove that $f_i(t)$ approximates $\tilde{g_i}(t)$ in  a suitable way. We will also show that $f_i$ grows at most like power $p$. (In the following claim, the intuition behind the truncation $\min\{\cdot,2\}$ is that in definition of the Orlicz norm, the places where the generating function $G$ is bigger than 1 are not important; instead of 2, one can use any value strictly bigger than 1.) 

    \begin{claim} \label{claim:fi}
        Consider $p \ge 2(\ln n) + 1$. For all $i \in [n]$, we have 
        \begin{enumerate}
            \item $f_i(t) ~\ge~ \min\{\tilde{g}_i(t), 2\}$ ~~ for all $t \ge 0$. \snote{We only need $1$ in the min instead of 2? Also, do we write it as all t with $\tilde{g}_i(t) \leq 2$?}
            \item $f_i(\frac{t}{4}) ~\le~ 2\tilde{g}_i(t) + \frac{1}{n^2}$ ~~ for all $t$ with $\tilde{g}_i(t) \leq 1$.
            \item $f''_i(t) \cdot t ~\le~ (p-1) \cdot f'_i(t)$ ~~ for all $t \ge 0$. 
        \end{enumerate}
    \end{claim}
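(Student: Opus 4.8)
The statement splits cleanly along the definition of $f_i$, so I would run a two-way case analysis on whether $i\in H$ (i.e.\ $b_i\ge 1$) or $i\in L$ (i.e.\ $b_i<1$), and within each case verify the three items in turn. Items~1 and~3 are short direct computations; Item~2 is the one place where the two-sided design of $f_i$ (sharp power $t\mapsto t^p$ when the hinge is ``steep'', $\ell_p$-relaxation of $\max\{b_i,a_it\}$ when it is ``shallow'') is actually needed. The hypothesis $p\ge 2(\ln n)+1$ enters only to make $2^{-\Omega(p)}\le \tfrac{1}{n^2}$, which is what lets the additive ``junk'' terms disappear.

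\textbf{Case $i\in H$.} Here $f_i(t)=2\bigl(\tfrac{2a_i}{b_i+1}\bigr)^p t^p$. For Item~1: if $t\le b_i/a_i$ then $\tilde g_i(t)=0\le f_i(t)$; if $t> b_i/a_i$ then, using monotonicity of $t\mapsto t^p$ and $b_i\ge 1$, $f_i(t)> 2\cdot 2^p\bigl(1-\tfrac1{b_i+1}\bigr)^p \ge 2\cdot 2^p\cdot 2^{-p}=2\ge\min\{\tilde g_i(t),2\}$. For Item~2: $\tilde g_i(t)\le 1$ forces $a_it\le b_i+1$, hence $f_i(t/4)=2\bigl(\tfrac{a_it}{2(b_i+1)}\bigr)^p\le 2\cdot 2^{-p}\le\tfrac1{n^2}$ by the choice of $p$. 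For Item~3: $f_i''(t)\,t= p(p-1)\cdot 2\bigl(\tfrac{2a_i}{b_i+1}\bigr)^p t^{p-1}=(p-1)f_i'(t)$, with equality.

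\textbf{Case $i\in L$.} Here $f_i(t)=\bigl(b_i^p+(a_it)^p\bigr)^{1/p}-b_i$. Item~1 is immediate from $(x^p+y^p)^{1/p}\ge\max\{x,y\}$, giving $f_i(t)\ge\max\{b_i,a_it\}-b_i=\tilde g_i(t)\ge\min\{\tilde g_i(t),2\}$. Item~3 follows by differentiating: with $u(t):=b_i^p+(a_it)^p$ one gets $f_i'(t)=u^{1/p-1}a_i^p t^{p-1}$ and $f_i''(t)=(p-1)u^{1/p-1}a_i^p t^{p-2}-(p-1)u^{1/p-2}a_i^{2p}t^{2p-2}$; the second term is non-positive, so $f_i''(t)\,t\le(p-1)f_i'(t)$. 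Item~2 needs a split on $t$: if $t\le 2b_i/a_i$ then $a_it/4\le b_i/2$, and using $(1+x)^{1/p}\le 1+x/p$ with $b_i<1$ gives $f_i(t/4)\le b_i\bigl((1+2^{-p})^{1/p}-1\bigr)\le \tfrac{2^{-p}}{p}\le\tfrac1{n^2}$; if instead $t>2b_i/a_i$ then $b_i<a_it/2$, so simultaneously $\tilde g_i(t)=a_it-b_i> a_it/2$ and $f_i(t/4)<\bigl((a_it/2)^p+(a_it/4)^p\bigr)^{1/p}\le a_it\cdot 2^{1/p-1}\le a_it< 2\tilde g_i(t)$. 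In both sub-cases $f_i(t/4)\le 2\tilde g_i(t)+\tfrac1{n^2}$.

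\textbf{Main obstacle.} There is no genuine conceptual difficulty: once the (already given) definition of $f_i$ is in hand, everything is careful bookkeeping. The subtlest step is Item~2 for $i\in L$, where one must switch from a purely \emph{additive} bound for small $t$ (where $\tilde g_i$ vanishes, so $f_i(t/4)$ must be pushed below $\tfrac1{n^2}$) to a purely \emph{multiplicative} bound for larger $t$ (where the $\ell_p$-relaxation of $\max\{b_i,a_it\}$ is within a constant of $\tilde g_i(t)$); the threshold $t=2b_i/a_i$ is precisely where the kink location $b_i/a_i$ stops dominating the argument. The only quantitative care is to confirm that the stated lower bound on $p$ indeed yields $2^{-\Omega(p)}\le n^{-2}$ with the constants as written.
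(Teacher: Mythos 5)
Your proof is correct and follows essentially the same route as the paper's: the same case split on $b_i \ge 1$ versus $b_i \in [0,1)$, the same sub-split at $t = 2b_i/a_i$ for Item~2 in the $L$ case, and the same derivative computations for Item~3. The only caveat is that the step $2^{-(p-1)} \le n^{-2}$ strictly needs $p \ge 2\log_2 n + 1$ rather than $2(\ln n)+1$, but this constant-factor slip is present in the paper's own proof as well and is immaterial to the $O(\log n)$ conclusion.
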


    \begin{proof}
        We prove these properties separately for the cases $b_i \ge 1$ and $b_i \in [0,1)$. 

        \paragraph{Case 1: $b_i \ge 1$, so  $f_i(t) = 2 \, (\frac{2 a_i}{b_i + 1})^p\cdot  t^p$.}  ~\\
        
        For Item 1, notice that for $t \in [0, \frac{b_i}{a_i}]$ we have $\min\{\tilde{g}(t), 2\} = 0$ and for $t > \frac{b_i}{a_i}$ we have $\min\{\tilde{g}(t), 2\} \le 2$, by definition. Since $f_i(t) \ge 0$ for $t \in [0, \frac{b_i}{a_i}]$, and for $t \ge \frac{b_i}{a_i}$
        \begin{align*}
            f_i(t) \ge 2 \bigg(\frac{2b_i}{b_i + 1} \bigg)^p  \ge 2,  
        \end{align*}
        where the last inequality uses $b_i \ge 1$. Thus, we have $f_i(t) \ge \min\{\tilde{g}_i(t), 2\}$ for all $t \ge 0$. 

        For Item 2, for all $t \in [0, \tilde{g}_i^{-1}(1)]$ (this interval is the same as $[0, \frac{b_i + 1}{a_i}]$) we have 
        \begin{align*}
            f_i(t/4) \,\le\, 2 \cdot \bigg(\frac{2 a_i}{b_i + 1}\bigg)^p \cdot \bigg(\frac{b_i + 1}{4 a_i}\bigg)^p = \frac{1}{2^{p-1}} \,\le\, 2\tilde{g}(t) + \frac{1}{n^2}.
        \end{align*}
        
        
        Item 3 holds with equality. Namely, by taking the second-derivative of $f_i(t)$, we get
        \[
        f_i''(t) \cdot t = p \cdot (p-1) \cdot 2 \cdot \bigg(\frac{2a_i}{b_i + 1}\bigg)^p \cdot t^{p-1} = (p-1) \cdot f_i'(t).
        \]

        \paragraph{Case 2: $b_i \in [0,1)$, so $f_i(t) = (b_i^p + (a_i t)^p)^{1/p} - b_i$.} ~\\
        
        For Item 1, observe that $f_i(t) = (b_i^p + (a_i t)^p)^{1/p} - b_i \ge \max\{b_i, a_i t\} - b_i = \tilde{g}(t)$. 
        
        For Item 2,  for all $t \in [0, \frac{2b_i}{a_i})$, we have
        \begin{align*}
        f_i(t/4) \,\le\, \Big((b_i)^p + (b_i/2)^p \Big)^{1/p} - b_i \,=\, b_i  \bigg(1 + \frac{1}{2^p}\bigg)^{1/p} - b_i \,\le\, b_i  \bigg(1 + \frac{1}{p 2^p}\bigg) - b_i \,\le\,  \tilde{g}_i(t) +  \frac{1}{n^2}, 
        \end{align*}
        where the  last inequality uses the fact that we are in a case $b_i \le 1$. On the other hand, when $t \ge \frac{2b_i}{a_i}$, then $b_i \le \frac{a_i t}{2}$ and so $\tilde{g}_i(t) = \max\{0, a_i t - b_i\} \ge \frac{a_i t}{2}$; at the same time,  
        \begin{align*}
            f_i(t/4) \le \Big( (a_i t / 2)^p + (a_i t / 4)^p \Big)^{1/p} = ((1/2)^p + (1/4)^p)^{1/p} \cdot a_i t \le a_i t.
        \end{align*}
        Putting these observations together, gives $f_i(t/4) \le 2 \tilde{g}_i(t)$, proving Item 2.  

        For Item 3, compute the derivatives to get 
        \begin{align*}
            f'_i(t) = \frac{a_i^p t^{p-1}}{(b_i^p + (a_i t)^p)^{1 - \frac{1}{p}}} ~~~\textrm{ and }~~~ f''_i(t) = \frac{(p-1) a_i^p t^{p-2}}{(b_i^p + (a_i t)^p)^{1 - \frac{1}{p}}} \,-\, (p-1) \frac{a_i^{2p} t^{2(p-1)}}{(b_i^p + (a_i t)^p)^{2 - \frac{1}{p}}}.
        \end{align*}
        The last term in  $f''_i(t)$ is non-positive, and so it follows that $f''_i(t) \cdot t \le (p-1) \cdot f'_i(t)$.
    \end{proof}

    Now we use the last claim to prove that $\|\cdot\|_F$ approximates $\|\cdot\|_{\tilde{G}}$.
    \begin{claim} \label{claim:orliczF}
        If $p \ge \log n + 1$, then for every $x \in \R^n_+$ we have $\|x\|_{\tilde{G}} \le \|x\|_F \le 12 \|x\|_{\tilde{G}}$.
    \end{claim}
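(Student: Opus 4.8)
The plan is to lift the per-coordinate estimates of \Cref{claim:fi} to a pointwise relation between $F = \sum_i f_i$ and $\tilde{G} = \sum_i \tilde{g}_i$, and then invoke \Cref{lemma:approximationofG} with the roles of ``$G$'' and ``$\tilde{G}$'' in that lemma played here by $\tilde{G}$ and $F$, respectively. (Both $\tilde{G}$ and $F$ are bona fide Orlicz functions: each is a sum of convex, nondecreasing functions vanishing at $0$, and each tends to $\infty$ since $G$, hence $\tilde{G}$, hence some $a_i$, is unbounded.) The only thing to check before starting is that the parameter regime $p \ge \log n + 1$ of the claim statement is large enough for \Cref{claim:fi} to apply; modulo reading $\log$ as a sufficiently large multiple of $\ln$, this is automatic, and it is the only mildly delicate bookkeeping point — there is no real mathematical obstacle.

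For the lower bound on $F$, I would combine Item~1 of \Cref{claim:fi} with the elementary inequality $\sum_i \min\{c_i, 2\} \ge \min\{\sum_i c_i,\, 2\}$, valid for nonnegative $c_i$ (if some $c_i \ge 2$ the left side is already $\ge 2$, and otherwise it equals $\sum_i c_i$). This yields, for every $t \ge 0$,
\[
F(t) = \sum_i f_i(t) \ \ge\ \sum_i \min\{\tilde{g}_i(t),\, 2\} \ \ge\ \min\Big\{\sum_i \tilde{g}_i(t),\ 2\Big\} = \min\{\tilde{G}(t),\, 2\}.
\]
In particular, whenever $\tilde{G}(t) \le 1$ we get $F(t) \ge \tilde{G}(t)$, which is exactly hypothesis~1 of \Cref{lemma:approximationofG}.

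For the upper bound, note that if $\tilde{G}(t) \le 1$ then $\tilde{g}_i(t) \le \tilde{G}(t) \le 1$ for every $i$ (all $\tilde{g}_j \ge 0$), so Item~2 of \Cref{claim:fi} applies to each coordinate; summing it over $i \in [n]$ gives
\[
F(t/4) = \sum_i f_i(t/4) \ \le\ \sum_i \Big(2\,\tilde{g}_i(t) + \tfrac{1}{n^2}\Big) \ =\ 2\,\tilde{G}(t) + \tfrac{1}{n},
\]
which is hypothesis~2 of \Cref{lemma:approximationofG} with $\gamma = 4$ and $\alpha = 2$. Applying that lemma then delivers $\|x\|_{\tilde{G}} \le \|x\|_F \le \gamma(\alpha+1)\,\|x\|_{\tilde{G}} = 12\,\|x\|_{\tilde{G}}$ for all $x \in \R^n_+$, which is the claim. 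In short, the proof is just the additive aggregation of the coordinatewise bounds of \Cref{claim:fi} followed by one invocation of \Cref{lemma:approximationofG}; no step is genuinely hard.
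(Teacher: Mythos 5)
Your proof is correct and follows essentially the same route as the paper's: the same aggregation of the coordinatewise bounds of \Cref{claim:fi} via $\sum_i \min\{c_i,2\} \ge \min\{\sum_i c_i, 2\}$ for the lower bound and direct summation for the upper bound, followed by one application of \Cref{lemma:approximationofG} with $\alpha=2$, $\gamma=4$. Your explicit checks (that $F$ is an Orlicz function, and the $\log$ vs.\ $\ln$ bookkeeping in the hypothesis on $p$) are minor points the paper glosses over, but they do not change the argument.
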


    \begin{proof}
        First, from  \Cref{claim:fi} we get 
        \[ 
            F(t) = \sum_{i = 1}^n f_i(t) \stackrel{\Cref{claim:fi}}{\ge} \sum_{i = 1}^n \min\{2, \tilde{g}_i(t)\} \ge \min\Big\{2, \sum_{i = 1}^n \tilde{g}_i(t)\Big\} = \min\{2, \tilde{G}(t)\}. 
        \]
        Moreover, for any $t$ with $1 \geq \tilde{G}(t)  \geq \tilde{g}_i(t)$, we have from  \Cref{claim:fi} that
        %
        \[
            F(t/4) = \sum_{i=1}^n f_i(t/4) \stackrel{\Cref{claim:fi}}{\le} \sum_{i=1}^n \bigg( 2  \tilde{g}_i(t) + \frac{1}{n^2}\bigg) = 2\tilde{G}(t) + \frac{1}{n}. 
        \]

        Now, applying Lemma~\ref{lemma:approximationofG} for $\alpha=2$ and $\gamma=4$ implies $\|x\|_G \leq \|x\|_{\tilde{G}} \leq 4(2+1) \|x\|_G$.
    \end{proof}
    
    Finally, we show that the norm $\|\cdot\|_F$ is $(2p-1)$-\monotone. 

    \begin{claim} \label{claim:Fmono}
        The norm $\|\cdot\|_F$ is $(2p-1)$-\monotone. 
    \end{claim}

    \begin{proof}
        Due to \Cref{lemma:orliczpMono}, it suffices to show that $F''(t) \cdot t \le (p - 1) \cdot F'(t)$ for all $t \ge 0$. We have
        \[
        F''(t) \cdot t ~=~ \sum_{i=1}^n f''_i(t) t ~\leq~ \sum_{i=1}^n (p-1) f'_i(t) ~=~ (p-1) \cdot F'(t). \qedhere
        \]
    \end{proof}

    Claims \ref{claim:piecewise}, \ref{claim:orliczF}, and \ref{claim:Fmono} together give the desired approximation to the Orlicz norm $\|\cdot\|_G$, proving \Cref{thm:orlicz}.
    \end{proof}


    
    \subsection{Approximation of \topk and Symmetric Norms}

    In this section we will give $p$-\monotone norm approximations of  \topk and Symmetric Norms. 
    The strategy is to first construct such an approximation for \topk norms; general symmetric norms are then handled by writing them as a composition of \topk norms and applying the $p$-\monotone approximation to each term.
    
    \paragraph{Approximation of \topk norms.} Even though the \topk norms have a simple structure, it is not clear how to approximate them by a $p$-\monotone norm  directly. Instead, we resort to an intermediate step of expressing a \topk norm (approximately) as an  {Orlicz norm}.


     \begin{thm} \label{thm:topk}
        For every $k \in [n]$, the \topk norm $\|\cdot\|_{\topk}$ in $n$-dimensions can be $2$-approximated by an Orlicz norm.  
    \end{thm}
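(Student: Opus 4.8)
The plan is to realize the $\topk$ norm, up to a factor of $2$, as the Orlicz norm generated by a single ``hinge''. Concretely, set
\[
G(t)\ :=\ 2\max\bigl\{0,\ t-\tfrac{1}{2k}\bigr\},
\]
which is continuous, convex, non-decreasing, has $G(0)=0$ and $G(t)\to\infty$, hence induces a monotone Orlicz norm $\|\cdot\|_G$. I would claim that $\|x\|_{\topk}\le\|x\|_G\le 2\|x\|_{\topk}$ for every $x\in\R^n_+$, which is exactly the assertion of the theorem. One caveat to flag up front: $G$ vanishes on $[0,\tfrac{1}{2k}]$, so if one insists on a strictly increasing Orlicz function one should instead use $G_\e(t)=\e t+2\max\{0,t-\tfrac1{2k}\}$ and let $\e\downarrow 0$, which costs only a $(1+o(1))$ factor; the argument below is for the clean hinge $G$.

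The only external fact needed is the standard variational formula $\|x\|_{\topk}=\min_{\theta\ge 0}\bigl(k\theta+\sum_i (x_i-\theta)_+\bigr)$, which I would recall in one line: the minimum is attained at $\theta=x_{(k)}$ (the $k$-th largest coordinate), and for any $\theta$ one has $k\theta+\sum_i(x_i-\theta)_+\ge\sum_{i=1}^k\bigl(\theta+(x_{(i)}-\theta)_+\bigr)=\sum_{i=1}^k\max\{\theta,x_{(i)}\}\ge\sum_{i=1}^k x_{(i)}$. Write $\tau:=\|x\|_{\topk}$. The bridge to $G$ is the identity $\sum_i G(x_i/\alpha)=\tfrac{2}{\alpha}\sum_i\bigl(x_i-\tfrac{\alpha}{2k}\bigr)_+$; for $x\neq 0$ this is continuous and non-increasing in $\alpha$, blowing up as $\alpha\to 0^+$ and vanishing as $\alpha\to\infty$, so $\|x\|_G$ is pinned down by these values, and it suffices to verify $\sum_i G(x_i/\tau)\ge 1$ and $\sum_i G(x_i/(2\tau))\le 1$.

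For the lower bound: plug $\theta=\tfrac{\tau}{2k}$ into the variational formula to get $\tau\le k\cdot\tfrac{\tau}{2k}+\sum_i(x_i-\tfrac{\tau}{2k})_+=\tfrac{\tau}{2}+\sum_i(x_i-\tfrac{\tau}{2k})_+$, hence $\sum_i(x_i-\tfrac{\tau}{2k})_+\ge\tfrac{\tau}{2}$, i.e.\ $\sum_i G(x_i/\tau)\ge 1$, so $\|x\|_G\ge\tau$. For the upper bound: from $\tau=\sum_{i=1}^k x_{(i)}\ge k\,x_{(k)}$ we get $x_{(k)}\le\tfrac{\tau}{k}$, so the set $\{i:x_i>\tfrac{\tau}{k}\}$ has at most $k$ elements and consists of the largest coordinates; therefore $\sum_i(x_i-\tfrac{\tau}{k})_+\le\sum_{i=1}^k x_{(i)}=\tau$, which is precisely $\sum_i G(x_i/(2\tau))\le 1$, so $\|x\|_G\le 2\tau$. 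Combining the two bounds proves the theorem.

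I do not expect a genuinely hard step once the construction is in place: both inequalities are one-liners off the $\min_\theta$ description of $\topk$. The only points that need a little care are ties among the coordinates (in the counting argument for the upper bound and in identifying the optimal $\theta$) and the mild degeneracy of $G$; neither is a real obstacle. The single conceptual point is the guess itself — recognizing that $\topk$ is $2$-equivalent to a one-kink Orlicz norm with the kink at $\Theta(1/k)$, with the factor $2$ landing exactly because the kink sits at $\tfrac{1}{2k}$ rather than $\tfrac{1}{k}$.
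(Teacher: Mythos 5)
Your proof is correct and takes essentially the same approach as the paper: both realize $\topk$ (up to a factor of $2$) as the Orlicz norm of a single hinge with kink at $\Theta(1/k)$, the only differences being a cosmetic rescaling (the paper uses $G(t)=\max\{0,t-\tfrac1k\}$, so its $\|\cdot\|_G$ sandwiches $\topk$ from below rather than above) and that you verify the two inequalities via the variational formula $\|x\|_{\topk}=\min_{\theta\ge0}(k\theta+\sum_i(x_i-\theta)_+)$ where the paper counts large coordinates directly. Your caveat about $G$ vanishing near $0$ is moot, since the paper's own Orlicz function has the same degeneracy.
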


    Together with \Cref{thm:orlicz} from the previous section, this implies the following.
    
    \begin{cor} \label{cor:topk}
        For every $k \ge 1$, the \topk norm $\|\cdot\|_{\topk}$ in $n$-dimensions can be $2$-approximated by an $O(\log n)$-\monotone norm.  
    \end{cor}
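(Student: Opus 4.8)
\Cref{cor:topk} is immediate by composition: \Cref{thm:topk} gives an Orlicz norm $N$ with $\|x\|_{\topk}\le N(x)\le 2\|x\|_{\topk}$, and \Cref{thm:orlicz} applied to the Orlicz norm $N$ gives an $O(\log n)$-\monotone norm $M$ with $N(x)\le M(x)\le O(1)\,N(x)$; chaining the two bounds yields $\|x\|_{\topk}\le M(x)\le O(1)\,\|x\|_{\topk}$. So all the substance is in \Cref{thm:topk}, and that is what I would actually prove.

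For \Cref{thm:topk} I would exhibit an explicit Orlicz function: the single ``hinge'' $G(t):=\max\{0,\,t-\tfrac1k\}$. This is a legitimate Orlicz function (convex, non-decreasing, $G(0)=0$, $\lim_{t\to\infty}G(t)=\infty$), exactly of the type the paper already manipulates in the proof of \Cref{thm:orlicz}. The one external fact I need is the classical variational identity for the Top-$k$ norm on $\R^n_+$,
\[
\|x\|_{\topk}=\min_{\theta\ge 0}\Big(k\theta+\sum_i\max\{0,x_i-\theta\}\Big),
\]
which is just LP duality of the fractional ``select $k$ coordinates'' program. Writing $\phi(\theta):=\sum_i\max\{0,x_i-\theta\}$, one has $\sum_i G(x_i/\alpha)=\tfrac1\alpha\,\phi(\alpha/k)$, and since this is continuous and non-increasing in $\alpha$, running from $+\infty$ down to $0$ (for $x\neq 0$), the Luxemburg norm $\|x\|_G=\inf\{\alpha>0:\sum_i G(x_i/\alpha)\le 1\}$ is the unique value satisfying $\phi(\|x\|_G/k)=\|x\|_G$.

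Now the two inequalities. Setting $\theta=\|x\|_G/k$ in the variational identity gives $\|x\|_{\topk}\le k\cdot\tfrac{\|x\|_G}{k}+\phi(\|x\|_G/k)=2\|x\|_G$. Conversely, $\|x\|_G\le T:=\|x\|_{\topk}$ follows once we verify the Orlicz constraint at $\alpha=T$, i.e.\ that $\phi(T/k)\le T$. The only observation needed here is that at most $k$ coordinates of $x$ can exceed $T/k$: if more than $k$ did, the $k$-th largest coordinate would itself exceed $T/k$, forcing the sum of the top $k$ coordinates --- which equals $T$ --- to exceed $T$, a contradiction. Hence $\phi(T/k)=\sum_{i:\,x_i>T/k}(x_i-T/k)\le\sum_{i:\,x_i>T/k}x_i\le T$. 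Combining, $\|x\|_G\le\|x\|_{\topk}\le 2\|x\|_G$; equivalently the Orlicz norm $\vertm{x}:=2\|x\|_G$ (generated by $t\mapsto\max\{0,\tfrac t2-\tfrac1k\}$) satisfies $\|x\|_{\topk}\le\vertm{x}\le 2\|x\|_{\topk}$, which is the claim of \Cref{thm:topk}.

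The main --- essentially only --- obstacle is guessing the right Orlicz function; the factor $2$ is not wiggle room but precisely the Luxemburg-versus-Amemiya discrepancy for Orlicz norms (indeed the convex conjugate of $G$ generates the norm $\max\{\|\cdot\|_\infty,\tfrac1k\|\cdot\|_1\}$, which is exactly the dual of $\|\cdot\|_{\topk}$), so one simply has to accept $2$ as the target constant within this one-parameter family. Everything else is routine: the variational identity is textbook, the upper bound is a one-line substitution, and the lower bound needs only the size-$k$ counting above, together with the standard facts that a Luxemburg norm built from a convex $G$ with $G(0)=0$ is genuinely a norm --- the vanishing of $G$ near the origin being harmless, as the paper uses implicitly elsewhere.
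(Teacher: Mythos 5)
Your proposal is correct, and its substance coincides with the paper's: the corollary is obtained exactly as you say, by composing \Cref{thm:topk} with \Cref{thm:orlicz}, and the paper's proof of \Cref{thm:topk} uses the very same hinge Orlicz function $G(t)=\max\{0,t-\tfrac1k\}$ and the same counting fact that at most $k$ coordinates of $x$ exceed $\|x\|_{\topk}/k$. The only real difference is how the factor-$2$ inequality is derived: the paper lower-bounds $\sum_i G\bigl(\tfrac{x_i}{\alpha\|x\|_{\topk}}\bigr)$ directly by restricting the sum to the top-$k$ coordinates (getting $\tfrac1\alpha-1>1$ for $\alpha<\tfrac12$), whereas you route through the LP-duality identity $\|x\|_{\topk}=\min_{\theta\ge0}\bigl(k\theta+\sum_i(x_i-\theta)_+\bigr)$ and substitute $\theta=\|x\|_G/k$. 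The two arguments are dual formulations of the same computation; yours makes the provenance of the constant $2$ (and the identification of $G$ as the right function) more conceptual, at the cost of importing one textbook identity. One small wrinkle, present in the paper itself rather than in your write-up: chaining the $2$-approximation of \Cref{thm:topk} with the $O(1)$-approximation of \Cref{thm:orlicz} yields an $O(1)$-approximation, not literally the constant $2$ stated in \Cref{cor:topk}; your honest $O(1)$ conclusion is the right one.
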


 The construction in the proof of  \Cref{thm:topk} is inspired by the embedding of \topk norms into $\ell_\infty$ by Andoni et al.~\cite{ANNNorms}. They considered the ``Orlicz function'' $G(t)$ that is 0 until $t = \frac{1}{k}$ and behaves as the identity afterwards, i.e., $G(t) := t \cdot \ones(t\geq \frac1k)$. The rough intuition of why the associated ``Orlicz norm'' approximately captures the \topk norm of a vector $u$ is because $\frac{u}{\|u\|_{\topk}}$ has $\approx k$ coordinates with value above $\frac{1}{k}$ (the top $\approx k$ coordinates), which are picked up by $G$ and give $\sum_i G(\frac{u_i}{\|u\|_{\topk}}) \approx \sum_{i \textrm{ in top $k$}} \frac{u_i}{\|u\|_{\topk}} \approx 1$; thus, by definition of Orlicz norm, $\|u\|_G \approx \|u\|_{\topk}$. However, this function $G$ is not convex due to a jump at $t=1/k$, so it does not actually give a norm. Convexitfying  this function also does not work: the convexified version of $G$ is the identity, which yields the $\ell_1$ norm,  does not approximate $\topk$. Interestingly, a  modification of this convexification actually works. 
 
 \begin{proof}[Proof of \Cref{thm:topk}]
 We define the Orlicz function $G(t) := \max\{0, t - \frac{1}{k}\}$. 
We show that the norm $\|\cdot\|_G$ generated by this function is a $2$-approximation to the \topk norm.

\emph{Upper bound $\|x\|_G \le \|x\|_{\topk}$.} By the definition of Orlicz norm, it suffices to show that $\sum_i G(\frac{x_i}{\|x\|_{\topk}}) \le 1$. For that, since there are at most $k$ coordinates having $x_i \geq \frac{\|x\|_{\topk}}{k}$, we get
%
\[ \sum_i G\bigg(\frac{x_i}{\|x\|_{\topk}}\bigg) ~=~ \sum_{i:~ x_i \,\geq\, {\|x\|_{\topk}}/{k}} \bigg(\frac{x_i}{\|x\|_{\topk}} - \frac{1}{k}\bigg) 
~\le~ \frac{\|x\|_{\topk}}{\|x\|_{\topk}} - 1 ~<~ 1.
\]

\emph{Lower bound $\|x\|_G \ge \frac{\|x\|_{\topk}}{2}$.} By the definition of  Orlicz norm, it suffices to show that for any $\alpha < \frac{1}{2}$, we have $\sum_i G(\frac{x_i}{\alpha \|x\|_{\topk}}) > 1$. Let $T_k$ denote the set of the $k$ largest coordinates of $x$. Then,
    \begin{align*}
        \sum_i G\bigg(\frac{x_i}{\alpha \|x\|_{\topk}}  \bigg) \ge \sum_{i \in T_k} G\bigg(\frac{x_i}{\alpha \|x\|_{\topk}}  \bigg) \ge \sum_{i \in T_k} \bigg(\frac{x_i}{\alpha \|x\|_{\topk}} - \frac{1}{k}\bigg) = \frac{1}{\alpha} - 1,
    \end{align*}
    which is $> 1$ whenever $\alpha < \frac{1}{2}$. This concludes the proof of \Cref{thm:topk}.  
\end{proof}




Given \Cref{thm:topk}, one might wonder whether all symmetric norms can be approximated within a constant factor by Orlicz norms. The following lemma shows that this is impossible. 

\begin{lemma} \label{lem:SymmetricToOrlicz}
    There exist  symmetric norms  that cannot be $O(\log n)^{1-\epsilon}$-approximated by an Orlicz norm for any constant $\epsilon>0$.
\end{lemma}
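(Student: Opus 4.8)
\textbf{Proof proposal for \Cref{lem:SymmetricToOrlicz}.}

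The plan is to exhibit a concrete symmetric norm for which any Orlicz approximation must pay a factor polynomial in $\log n$ (up to the $\epsilon$ loss in the exponent). The natural candidate is a max of a few well-chosen (weighted) \topk norms at geometrically spaced scales — concretely, something like $\|x\| := \max_{j} \frac{1}{j}\,\|x\|_{\mathrm{Top-}2^j}$ over $j = 1, \ldots, \log n$, or a similar ``staircase'' built from \topk norms. Such a norm is genuinely ``curved'' at many different scales simultaneously, whereas an Orlicz norm is governed by a single one-dimensional convex function $G$, which is far too rigid to match curvature at $\log n$ independent scales. The key structural fact to extract is a \emph{scale-invariance-type rigidity} of Orlicz norms: if $\|\cdot\|_G$ approximates $\|\cdot\|$ well on vectors of the form $e_S$ (indicators of sets $S$), then the value $\|e_S\|_G$ depends on $|S|$ through the implicit equation $|S|\cdot G(1/\|e_S\|_G) = 1$, so the whole function $|S| \mapsto \|e_S\|_G$ is determined by $G$ in a constrained, essentially ``log-convex/concave'' manner that cannot track an arbitrary symmetric profile.

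The key steps, in order, are: (i) define the target symmetric norm $\|\cdot\|$ as a max of $\Theta(\log n)$ suitably weighted \topk norms chosen so that on the indicator vectors $e_S$ (with $|S| = k$) its value, as a function of $\log k$, is a prescribed concave-but-oscillating profile — e.g. roughly $\sqrt{\log k}$ or a sawtooth — ensuring $\|\cdot\|$ is a bona fide monotone symmetric norm; (ii) show that for any Orlicz function $G$, the quantity $\|e_S\|_G$ as a function of $m = |S|$ is forced, via $m\,G(1/\|e_S\|_G)=1$ and the convexity/monotonicity of $G$, to be a ``smooth'' function of $\log m$ — in particular, one can bound how fast consecutive ratios $\|e_{S}\|_G / \|e_{S'}\|_G$ (for $|S| = 2|S'|$) can vary, or that $\log \|e_S\|_G$ is a convex function of $\log |S|$ or some analogous monotonicity; (iii) combine (i) and (ii): if $\|\cdot\|_G$ were an $\alpha$-approximation of $\|\cdot\|$, then $\|e_S\|_G$ must $\alpha$-track the oscillating target profile at all $\log n$ scales, but the rigidity from step (ii) means it can only do so if $\alpha = \Omega((\log n)^{c})$ for some constant $c$; optimizing the construction pushes $c$ toward $1$, giving the $(\log n)^{1-\epsilon}$ lower bound. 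The final step is the quantitative bookkeeping that turns ``$G$ is too rigid'' into the explicit exponent, presumably by choosing $\Theta(\log n / \log\log n)$ scales or tuning the weights so that the accumulated slack forces $\alpha^{\text{(number of scales)}} \gtrsim$ (total variation of the profile), hence $\alpha \gtrsim (\log n)^{1-\epsilon}$.

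The main obstacle I anticipate is step (ii): pinning down exactly \emph{which} rigidity property of Orlicz norms to use and proving it cleanly. Orlicz norms are not scale-invariant, so there is no literal contradiction from scaling; the real content is that the single convex function $G$ cannot have ``$p$-like growth'' $t^{p(t)}$ with an exponent $p(t)$ that varies wildly — more precisely, the local growth exponent $t G'(t)/G(t)$ is the $p$ in ``$p$-supermodular-like'' behavior at scale $t$, and it is constrained (e.g. it is related to a monotonicity coming from convexity of $G$ plus $G(0)=0$). One must translate the symmetric-norm target into a requirement on $p(t)$ at $\log n$ scales that no single monotone-ish $p(t)$ can meet, and make the loss multiply across scales. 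A secondary obstacle is ensuring the constructed $\|\cdot\|$ really is a norm and that the lower bound is witnessed on a class of vectors (likely scaled indicators $c\cdot e_S$ at various magnitudes $c$, not just $0/1$ vectors) rich enough that the Orlicz norm cannot ``cheat'' by matching on indicators while being wrong elsewhere; handling general magnitudes is exactly where the one-dimensionality of $G$ is the bottleneck, so this should work in our favor but requires care in the quantization.
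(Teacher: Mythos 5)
Your choice of counterexample is essentially the paper's: the norm used there is $\|x\| = \max_{j} c^{-j}\|x\|_{\topx{2^j}}$ for $j=0,\ldots,\log n$ with $c = 1+\epsilon$, i.e.\ exactly a max of $\Theta(\log n)$ weighted \topk norms at geometric scales. But your step (ii) --- the ``rigidity'' of Orlicz norms --- is both the crux and the part you have not actually proved, and the specific form you propose for it would not work. You suggest tracking the profile $m \mapsto \|e_{[m]}\|_G$ on indicator vectors and arguing it is too smooth (log-convex in $\log m$) to follow an oscillating target. For the paper's norm, however, the indicator profile is a clean power law: $\|e_{[2^k]}\| = (2/c)^k = m^{1-\log_2 c}$, which an $\ell_p$ norm with $p = 1/(1-\log_2 c)$ matches \emph{exactly} on all indicators. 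So no contradiction can be extracted from indicator vectors alone, and there is no ``oscillating profile'' to exploit; a sawtooth profile would in any case be hard to realize as a monotone symmetric norm. Your closing remark that one must use vectors beyond $0/1$ indicators is exactly right, but that is where the whole content of the proof lies, and your plan does not supply it.

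The mechanism that actually works is additive pigeonholing of the Orlicz budget, not multiplicative accumulation of slack across scales (your ``$\alpha^{\text{\#scales}} \gtrsim$ total variation'' heuristic). Take the single vector $x$ with geometrically decaying coordinates that makes all $\log n + 1$ \topk constraints simultaneously tight, so $\|x\|=1$. If $\|\cdot\|_G$ is an $\alpha$-approximation then $\sum_i G(x_i) \le 1$, and since this sum splits into $\log n+1$ blocks (block $j$ contributing $2^j\,G\bigl((c/2)^j(c-1)\bigr)$), some block $k$ gets budget at most $\tfrac{1}{2^k(1+\log n)}$. The factor $\log n$ is then cashed in by a \emph{flat} vector $y$ with $2^k(1+\log n)$ coordinates at that single value: its Orlicz norm is still at most $1$, but the \topk constraint at scale $2^k(1+\log n)$ forces its true norm to be roughly $(c-1)(\log n)^{1-\log_2 c}$. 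The loss is a single factor of about $\log n$, attenuated by $c^{\log\log n}$ from the gap between scales $2^k$ and $2^k\log n$ --- which is precisely why $c$ must be taken close to $1$ and why the exponent degrades to $1-\Theta(\epsilon)$. To complete your proof you would need to replace your step (ii) with this comparison between a ``all scales tight'' vector and a ``one scale flat'' vector; the one-dimensionality of $G$ enters only through the fact that $G$ evaluated at a fixed value cannot distinguish how many coordinates carry that value.
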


We defer the proof of this observation to \Cref{sec:missingProofs}. 


\paragraph{Approximation of  symmetric norms.} Although \Cref{lem:SymmetricToOrlicz} rules out the possibility of approximating any symmetric norm by an Orlicz norm within a constant factor,   we show that every symmetric norm can be $O(\log n)$-approximated by an an $O(\log n)$-\monotone norm.

\thmSymmetric*

    As mentioned before, the idea is to write a     
    general symmetric norm  as composition of \topk norms and applying the $p$-\monotone approximation to each term. 
    More precisely, the following lemma, proved in~\cite{KMS-SODA23} (and a similar property in \cite{ANNNorms,CS-STOC19}), shows that the any monotone symmetric norm can be approximated by \topk norms. 

    \begin{lemma}[{\cite[Lemma 2.5]{KMS-SODA23}}] \label{lem:strucSymm}
	For any monotone symmetric norm $\|\cdot\|$ in $\R^d$, there are $\log n$ non-negative scalars $c_1,c_2\ldots,c_{\log n}$ such that the norm 
	\begin{align}
		\vertm{x} := \bigg\|\Big(c_1 \|x\|_{\topx{2^1}},~\ldots~, c_{\log n} \|x\|_{\topx{2^{\log n}}}\Big)  \bigg\|_{\infty} \label{eq:strucSymm}
	\end{align}
	satisfies $\|x\| \,\le\, \vertm{x} \,\le\, 2 \log n \cdot \|x\|.$
    \end{lemma}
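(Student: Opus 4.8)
The plan is to take the scalars to be $c_j := 2\log n \cdot \tfrac{\phi(2^j)}{2^j}$, where $\phi(k)$ denotes the value of $\|\cdot\|$ on the $0/1$ vector with exactly $k$ ones (well-defined since $\|\cdot\|$ is symmetric). By symmetry and positive homogeneity it suffices to verify both inequalities for vectors $x$ sorted non-increasingly, $x_1 \ge x_2 \ge \cdots \ge x_n \ge 0$.

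For the upper bound $\vertm{x} \le 2\log n \,\|x\|$, the key ingredient is the one-block estimate $\tfrac{\phi(k)}{k}\|x\|_{\topx{k}} \le \|x\|$ for every $k$, which I would prove by a short majorization argument: setting $a := \tfrac1k\sum_{i\le k}x_i = \tfrac1k\|x\|_{\topx{k}}$, the vector $(a,\ldots,a)$ (with $k$ coordinates) is the average of the $k!$ permutations of $(x_1,\ldots,x_k)$, so by convexity of $\|\cdot\|$ and its symmetry one gets $\|(a,\ldots,a,0,\ldots,0)\| \le \|(x_1,\ldots,x_k,0,\ldots,0)\| \le \|x\|$, while the left-hand side equals $a\,\phi(k)$. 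Plugging $k = 2^j$ gives $c_j\|x\|_{\topx{2^j}} \le 2\log n\,\|x\|$ for each $j$, hence $\vertm{x} = \max_j c_j\|x\|_{\topx{2^j}} \le 2\log n\,\|x\|$.

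For the lower bound $\|x\| \le \vertm{x}$ I would use a dyadic decomposition. Partition the coordinates into $B_0 = \{1\}$ and $B_j = \{2^{j-1}+1,\ldots,\min(2^j,n)\}$ for $j \ge 1$, so that $|B_j| \le 2^{j-1}$, and set $y^{(0)} := x_1\,e_1$ and, for $j \ge 1$, $y^{(j)} := \bigl(\tfrac{1}{2^{j-1}}\|x\|_{\topx{2^{j-1}}}\bigr)\cdot \ones_{B_j}$. Since $x$ is sorted, every $i \in B_j$ satisfies $x_i \le x_{2^{j-1}} \le \tfrac{1}{2^{j-1}}\sum_{\ell \le 2^{j-1}} x_\ell$, so $x \le \sum_j y^{(j)}$ coordinatewise, and monotonicity plus the triangle inequality give $\|x\| \le \sum_j \|y^{(j)}\|$. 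By symmetry $\|y^{(j)}\| = \tfrac{\phi(2^{j-1})}{2^{j-1}}\|x\|_{\topx{2^{j-1}}} = \tfrac{c_{j-1}}{2\log n}\|x\|_{\topx{2^{j-1}}} \le \tfrac{1}{2\log n}\vertm{x}$, and similarly $\|y^{(0)}\| = \phi(1)\,x_1 \le \tfrac{1}{2\log n}\vertm{x}$. Summing the at most $\log n + 1$ terms yields $\|x\| \le \tfrac{\log n + 1}{2\log n}\vertm{x} \le \vertm{x}$. Only the norms $\|\cdot\|_{\topx{k}}$ for $k \in \{1,2,4,\ldots,n\}$ appear, i.e.\ $\log n$ of them (a harmless relabelling covers the indices $2^1,\ldots,2^{\log n}$ written in the statement, since $\|\cdot\|_{\topx{1}}$ and $\|\cdot\|_{\topx{n}}$ are the available extremes).

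The conceptual heart is the one-block majorization estimate; the main thing to be careful about is the mundane bookkeeping when $n$ is not a power of two and at the block boundaries, which only affects constants and is routine to push through to match the stated $2\log n$.
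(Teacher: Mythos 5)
The paper does not prove this lemma itself; it is imported verbatim from \cite{KMS-SODA23} (where it is Lemma~2.5), so there is no in-paper proof to compare against. Your argument is correct and is essentially the standard one underlying that cited result and its precursors in \cite{ANNNorms,CS-STOC19}: the upper bound via the ``average over permutations'' majorization estimate $\tfrac{\phi(k)}{k}\|x\|_{\topx{k}} \le \|x\|$, and the lower bound via a dyadic block cover. Both halves check out as written, including the use of symmetry for the $k!$-permutation average, convexity/homogeneity for $\|(a,\ldots,a,0,\ldots)\| \le \|(x_1,\ldots,x_k,0,\ldots)\|$, and monotonicity (of the norm and of $\phi$) in the crucial places. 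The one real wrinkle is the one you flag: your decomposition draws on $\topx{2^0},\ldots,\topx{2^{\log n - 1}}$ rather than the $\topx{2^1},\ldots,\topx{2^{\log n}}$ that appear in the statement, so the $\topx{1}$ summands must be re-routed through $\topx{2}$ (using $\|x\|_{\topx{1}} \le \|x\|_{\topx{2}}$ and $\phi(1) \le \phi(2)$), which costs a factor $2$ on two of the at-most-$(\log n + 1)$ terms. With your choice $c_j = 2\log n\cdot \phi(2^j)/2^j$ this still gives $\|x\| \le \tfrac{\log n + 3}{2\log n}\vertm{x} \le \vertm{x}$ only once $n \ge 8$; for $n < 8$ the constants need a trivial patch. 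So the proof is sound, modulo the routine bookkeeping you already advertised as routine.
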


    With the decomposition of monotone symmetric norms into \topk norms in \Cref{lem:strucSymm} and the  $p$-\monotone approximation to the latter in \Cref{cor:topk}, we can now prove that every symmetric norm can be $O(\log n)$-approximated by an $O(\log n)$-\monotone norm. 

\begin{proof}[Proof of \Cref{thm:sym}]
    Consider a monotone symmetric norm and its approximation $\vertm{x}$ given by \Cref{lem:strucSymm}. Let $f_k$ be the $p$-\monotone $2$-approximation of the \topk norm as given by \Cref{cor:topk}, where $p = \Theta(\log n)$. We replace in $\vertm{x}$ the \topk norms by these approximations, and the outer $\ell_\infty$-norm by the $\ell_p$-norm 
    to obtain the norm 
    \[ g(x) := \bigg(\sum_{i=1}^{\log n} c_i^p \cdot  \big( f_{2^i}(x) \big)^p \bigg)^{1/p}.\] 
    By the standard $\ell_p$ to $\ell_\infty$ comparison, we that $g(x)$ is a constant approximation to $\vertm{x}$ since $p = \Theta(\log n)$. Hence, $g(x)$ is  an $O(\log n)$-approximation to the original norm $\|x\|$. 
    
    Moreover, to see that $g$ is $p$-\monotone, consider the gradient of $g^p$, which is given by 
    \begin{align*}
        \nabla (g(x)^p) = \sum_{i=1}^{\log n} c_i^p \cdot \nabla \Big(f_{2^i}(x)^p\Big).
    \end{align*}
    Since each norm $f_j$ is $p$-\monotone and the multipliers $c_i$  are non-negative,   $\nabla (g(x)^p)$ is non-decreasing. This implies $p$-\monotonicity by the Gradient property in \Cref{lemma:equivalentPmono}.
\end{proof}

    {We remark that given a \ballopt oracle, we can evaluate at a given point the value and gradient of the approximating norm constructed in \Cref{thm:sym}, up to error $\e$, in time $\poly(\log \frac{1}{\e}, n)$. This is because the decomposition into $\topk$ norms from \Cref{lem:strucSymm} can be found in polytime given this oracle (e.g., see~\cite{KMS-SODA23,CS-STOC19}), the Orlicz function of the Orlicz norm approximation of each $\topk$ can be constructed explicitly, and the value and gradient of this Orlicz norm can be evaluated by binary search on the scaling $\alpha$ in the definition of the Orlicz norm (and \Cref{lemma:gradientOrlicz}).}


\section{Applications to Coverage Problems} \label{sec:cover}
        
     The \cover problem (with $\ell_1$ objective) has been greatly influential in online algorithms,  leading to online primal-dual techniques (see book \cite{BuchbinderNaor-Book09}). In this section, we consider this problem where the objective function is a \emph{composition} of $p$-\monotone norms, which is not necessarily monotone; see \Cref{lemma:sumNotP}. This generality allows as to capture fractional versions of other classical problems, such as Online Vector Scheduling and Online Facility Location with norm-based costs (see \Cref{app:appComp}). 
    
    Let us recall the problem definition from \Cref{sec:introCovering}.  In its offline version, there is an $m \times n$ constraint matrix $A$ with entries in the interval $[0,1]$. The objective function is given by nested norms, defined by a monotone outer norm $\|\cdot\|$ in $\R^k$, monotone inner norms $f_1,\ldots,f_k$ in $\R^n$, and sets of coordinates $S_1,\ldots,S_\ell \subseteq [n]$ to allow the inner norms to only depend on a subset of the coordinates. The offline version of the problem is given by
	\begin{align*}
		\min ~&\bigg\|\Big(f_1(y|_{S_1})\,,\, \ldots \,,\, f_k(y|_{S_k})\Big)\bigg\| \qquad \text{s.t.} \qquad\quad  Ay \ge \ones  \quad  \text{and }	\quad y \in \R^n_+,
	\end{align*}
    where $x|_{S_\ell} \in \R^{S_\ell}$ is the sub-vector of $x$ with the coordinates indexed by $S_\ell$. We use $\OPT$ to denote the optimum of this problem. (We note constraints $Ax \ge b$ with more general right-hand side $b$ can be handled by rescaling the constraints.)

    In the online version of the problem, the objective function is given upfront, but the constraints $\ip{A_1}{y} \ge 1, \ip{A_2}{y} \ge 1, \ldots, \ip{A_m}{y} \ge 1$ arrive in rounds, one-by-one, where $A_r$ is the $r$th row of $A$. For each round $r$, the algorithm needs to maintain a non-negative solution $y \in \R^n_+$ that satisfies the constraints $\ip{A_1}{y} \ge 1$, \ldots, $\ip{A_r}{y} \ge 1$ seen thus far, and is only allowed to increase the values of the variables $y$ over the rounds. The goal is to minimize the cost of the final solution $y$, namely $\|f_1(y|_{S_1}), \ldots, f_k(y|_{S_k})\|$.
    Note that the objective function is a norm that in general is not $p$-\monotone, even if  $f_\ell$'s are $p$-\monotone. It is also in general not a symmetric norm. Hence, it cannot be handled by \Cref{cor:cover}. 
    
    The main result of this section is a competitive algorithm for \cover with this general objective function.  Its proof requires new ideas in the analysis of the algorithm, in particular generalizing and reconciling arguments introduced in~\cite{AzarBCCCG0KNNP16} and \cite{NS-ICALP17}.
    In the following result, $\supp(u)$ denotes the size of the support of the vector $u$ and the parameter $d$ in  is always at most $n$.

    \thmCover*

    For the remainder of this section, we prove this result, assuming without loss of generality the following condition that parallels the one used in~\cite{NS-ICALP17}.

    \begin{assumption} \label{ass:cover}
    The restricting sets $S_1,\ldots,S_k$ partition the set of variables $[n]$.
    \end{assumption}

Since the guarantee in \Cref{thm:cover} does not depend on  $n$, this can be achieved by introducing new coordinates, as done in \cite{NS-ICALP17}. This assumption is formally discharged in \Cref{app:cover}.

%



    \subsection{Algorithm}

  The algorithm we consider is the ``continuous online mirror-descent'' also used in \cite{AzarBCCCG0KNNP16} and \cite{NS-ICALP17}. To state it, let $F(y) = (f_1(y|_{S_1}),\ldots,f_k(y|_{S_k}))$, so the objective function can be more comfortably stated as $\|F(y)\|$. However, the algorithm (and analysis) is actually based on this function raised to the power $p'$ (recall the norm $\|\cdot\|$ is assumed to be $p'$-\monotone), so define $\Psi(y) := \frac{1}{p'} \|F(y)\|^{p'}$. Then the algorithm can be described as follows: (Set $\delta > 0$ small enough so that $\Psi^{\star}(\delta \ones) \le \Psi(x^*)$. This can be done online by seeing the minimum cost of satisfying the first (non-trivial) constraint $\ip{A_1}{y} \ge 1$, which gives a lower bound on $\OPT$, and use this to set $\delta$ small enough.)

\vspace{4pt}
\begin{mdframed}
    \vspace{-7pt}
    \begin{proc} \label{proc:advCont}
     \normalfont
     \textbf{Online Covering} 

    \vspace{4pt}
    \noindent Initialize $x(0) = 0$, $\tau = 0$.
    
    \vspace{4pt}
    \noindent For each round $r = 1,2\ldots, m$:

    \vspace{-8pt}
    \begin{enumerate}
        \item Receive the new constraint $\ip{A_r}{y} \ge 1$. While $\ip{A_r}{x(\tau)} < 1$, increase the continuous time $\tau$ at rate 1, and increase all coordinates of $x$ continuously using $$\dot{x}_i(\tau) = \frac{(A_r)_i \cdot (x_i(\tau) + \frac{1}{d})}{\nabla_i \Psi(x(\tau)) + \delta},$$  where $\dot{x}(\tau)$ means derivative with respect to the continuous time $\frac{\d x(\tau)}{\d \tau}$.
    \end{enumerate}
\end{proc}
\end{mdframed}


    \subsection{Analysis}

    
    We show that the above algorithm is $O(p'\,p \log^2 d\rho)$-competitive, proving \Cref{thm:cover} (under the current assumptions). We will track the cost of the algorithm with respect to the function $\Psi(y) = \frac{1}{p'} \|F(y)\|^{p'}$ instead of the original objective $\|F(y)\|$. Also, it will be convenient to put the constraints $A_r$ also in continuous time together with the solution $x(\tau)$ constructed by the algorithm. So let $A(\tau)$ be the constraint $A_r$ corresponding to time $\tau$, namely, let $\tau_r$ be the time $\tau$ at the start of round $r$ and let $A(\tau) = A_r$ iff $\tau \in [\tau_r, \tau_{r+1})$. Also let $\tau_{\final}$ be the last time of the process, and $x_{\final} := x(\tau_{\final})$ be the final solution output by the algorithm. We assume without loss of generality that $p' \ge 2$, since the $p'$-\monotonicity of $\|\cdot\|$ implies its $(p' + 1)$-\monotonicity (\Cref{obs:pmonot}) and replacing $p'$ for $p' + 1$ does not change the target $O(p'\,p \log^2 d \rho)$-competitiveness.

    
    We first show that the increase in $\Psi$-cost at a round is proportional to how long we kept raising the variables, essentially due to the fact that the change in $x(\tau)$ is ``moderated'' by the instantaneous cost $\nabla\Psi$ in the denominator. 
	
	\begin{lemma} \label{lemma:coverUB}
	For any round $r$, $$\Psi(x(\tau_{r+1})) - \Psi(x(\tau_r)) \le 2 (\tau_{r+1} - \tau_r).$$ In particular, the final solution satisfies $\Psi(x_{\final}) \le 2\, \tau_{\final}.$
	\end{lemma}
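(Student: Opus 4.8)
The plan is to bound the rate of change $\frac{\d}{\d\tau}\Psi(x(\tau))$ during the interval $[\tau_r,\tau_{r+1})$ while the constraint $\langle A_r, x(\tau)\rangle < 1$ is active, and then integrate. By the chain rule, $\frac{\d}{\d\tau}\Psi(x(\tau)) = \sum_i \nabla_i\Psi(x(\tau))\cdot\dot{x}_i(\tau)$. Substituting the update rule $\dot{x}_i(\tau) = \frac{(A_r)_i\,(x_i(\tau)+\frac1d)}{\nabla_i\Psi(x(\tau))+\delta}$, each summand becomes $\frac{\nabla_i\Psi(x(\tau))}{\nabla_i\Psi(x(\tau))+\delta}\cdot (A_r)_i\,(x_i(\tau)+\tfrac1d) \le (A_r)_i\,(x_i(\tau)+\tfrac1d)$, since the first factor is at most $1$ (gradients are non-negative because $\|\cdot\|$ and the $f_\ell$ are monotone). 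Hence
\[
\frac{\d}{\d\tau}\Psi(x(\tau)) \;\le\; \sum_i (A_r)_i\,\Big(x_i(\tau)+\tfrac1d\Big) \;=\; \langle A_r, x(\tau)\rangle + \tfrac1d\sum_i (A_r)_i.
\]
During the active phase $\langle A_r, x(\tau)\rangle < 1$, and since $A_r$ has entries in $[0,1]$ with support size at most $d$ we have $\sum_i (A_r)_i \le d$, so the right-hand side is at most $1 + 1 = 2$. Integrating over $\tau\in[\tau_r,\tau_{r+1})$ gives $\Psi(x(\tau_{r+1})) - \Psi(x(\tau_r)) \le 2(\tau_{r+1}-\tau_r)$, and summing over all rounds (the solution only changes during active phases, and $x(0)=0$ so $\Psi(x(0))=0$ by $\Psi$ being a power of a norm) yields $\Psi(x_{\final}) \le 2\tau_{\final}$.

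One point that needs care: the support bound $\sum_i (A_r)_i \le d$ uses that $d \ge \max_r \supp(A_r)$, which holds by the definition of $d$ in the theorem statement; one should also confirm that the coordinates in the support of $A_r$ are exactly the ones whose raising actually matters, but since the update raises all coordinates and we only upper bound, including extra non-negative terms is harmless. A second subtlety is differentiability: $\Psi$ is a power of a (composition of) norm(s), which need not be differentiable everywhere, so strictly speaking the chain-rule step should be justified via the $(1+\e)$-smoothing of \Cref{lemma:diff} (replace each norm by a twice-differentiable $p$-\monotone approximation, prove the bound, then take limits), or by working with subgradients; I expect this to be a routine technical point handled once globally in the paper rather than in this lemma.

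The main obstacle is not really in this lemma — the argument is a short one-line differential inequality — but rather in making sure the bookkeeping is consistent with how $\delta$ and the $\frac1d$ additive shift interact with the matching lower bound lemma that will follow (which must show $\Psi(x_{\final})$ is also $\gtrsim \tau_{\final}$ times something, so that the two combine into the competitive ratio). So in carrying this out I would: (i) state and use non-negativity of $\nabla\Psi$; (ii) write the chain rule and substitute the update; (iii) bound each factor $\frac{\nabla_i\Psi}{\nabla_i\Psi+\delta}\le 1$; (iv) use $\langle A_r,x(\tau)\rangle<1$ during the active phase and $\sum_i(A_r)_i\le d$; (v) integrate; (vi) sum over rounds using $\Psi(x(0))=0$. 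The per-round inequality immediately gives the "in particular" telescoped statement.
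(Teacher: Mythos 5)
Your proposal is correct and follows essentially the same route as the paper: chain rule, substitution of the update rule, bounding the factor $\frac{\nabla_i\Psi}{\nabla_i\Psi+\delta}\le 1$ by non-negativity of the gradient, then using $\ip{A_r}{x(\tau)}<1$ during the active phase together with $(A_r)_i\le 1$ and the sparsity bound $\supp(A_r)\le d$ to get the rate bound of $2$, and finally integrating and telescoping. Your side remarks (making the sparsity use explicit, and deferring differentiability to the global smoothing in \Cref{lemma:diff}) are consistent with how the paper handles these points.
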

	
	\begin{proof}
        For any time $\tau \in [\tau_r, \tau_{r+1})$, $\Psi(x(\tau))$ increases at rate at most 2: by chain rule the derivative of $\Psi(x)$ with respect to the continuous time is given by 
         \begin{align*}
            \frac{\d \Psi(x(\tau))}{\d \tau} = \ip{\nabla \Psi(x(\tau))}{\dot{x}(\tau)} = \sum_i \frac{(A_r)_i \cdot (x_i(\tau) + \frac{1}{d})}{1 + \delta/(\nabla_i \Psi(x(\tau)))} \le \ip{A_r}{x(\tau)} + \frac{1}{d} \sum_i (A_r)_i \le 2,
        \end{align*}
        where the first inequality uses  non-negativity of $\nabla_i \Psi(x(\tau))$ and the
        last inequality uses the fact that during this round we always have $\ip{A_r}{x(\tau)} < 1$ and that $(A_r)_i \le 1$ by assumption. Integrating this change over the duration $\tau_{r+1} - \tau_r$ of the round gives the result.
	\end{proof}
	
	

    We now switch to lower bounding $\Psi$-cost of $\OPT$, namely $\Psi(x^*)$. This is done via the  appropriate notion of duality, namely \emph{convex conjugacy}; we recall this definition and its main involutory property (e.g. Corollary E.1.3.6~\cite{HUL}).

    \begin{obs}[Convex conjugate] \label{obs:conjugate}
        Given a convex function $h : \R^w \rightarrow \R$, its \emph{convex conjugate} is defined as $h^{\star}(u) := \sup_v\{ \ip{v}{u} - h(v)\}$. We note that $h$ is also the convex conjugate of $h^{\star}$, namely $h(u) = \sup_v\{ \ip{v}{u} - h^{\star}(v)\}$.
    \end{obs}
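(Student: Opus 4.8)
Write $h^{\star\star} := (h^{\star})^{\star}$ for the biconjugate; the goal is to show $h^{\star\star} = h$ pointwise, which is exactly the claimed formula $h(u) = \sup_v\{\ip{v}{u} - h^{\star}(v)\}$. The plan is to establish the two inequalities $h^{\star\star} \le h$ and $h^{\star\star} \ge h$ separately.

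The inequality $h^{\star\star} \le h$ is the ``easy'' direction and follows purely from the definition (the Fenchel--Young inequality): for every $u, v \in \R^w$ the definition of $h^{\star}$ gives $h^{\star}(v) \ge \ip{v}{u} - h(u)$, hence $\ip{v}{u} - h^{\star}(v) \le h(u)$; taking the supremum over $v$ yields $h^{\star\star}(u) \le h(u)$. Note this step uses nothing about convexity.

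For the reverse inequality $h^{\star\star} \ge h$ I would use that a convex function which is finite-valued on all of $\R^w$ has a nonempty subdifferential at every point. To see this, fix $u_0$ and consider the epigraph $E := \{(x,t) : t \ge h(x)\} \subseteq \R^{w+1}$, which is convex and, since $h$ is finite everywhere, has nonempty interior; the point $(u_0, h(u_0))$ lies on its boundary, so the supporting hyperplane theorem produces a hyperplane supporting $E$ there. This hyperplane cannot be ``vertical'' (of the form $\ip{c}{x} = \ip{c}{u_0}$ with no $t$-component), because a vertical supporting hyperplane at $(u_0, h(u_0))$ would separate $u_0$ from $\mathrm{dom}\,h = \R^w$, which is impossible as $u_0$ is an interior point of the domain. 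Normalizing the non-vertical hyperplane gives $a \in \R^w$ with $h(x) \ge h(u_0) + \ip{a}{x - u_0}$ for all $x$, i.e.\ $a \in \partial h(u_0)$. Rearranging, $\ip{a}{x} - h(x) \le \ip{a}{u_0} - h(u_0)$ for all $x$, so $h^{\star}(a) \le \ip{a}{u_0} - h(u_0)$; combined with Fenchel--Young ($h^{\star}(a) \ge \ip{a}{u_0} - h(u_0)$, taking $u = u_0$ in the definition of $h^{\star}$) this gives the equality case $h^{\star}(a) = \ip{a}{u_0} - h(u_0)$. Therefore $h^{\star\star}(u_0) \ge \ip{a}{u_0} - h^{\star}(a) = h(u_0)$. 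Since $u_0$ was arbitrary, $h^{\star\star} \ge h$, and together with the first part $h^{\star\star} = h$, as desired.

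The main obstacle is the second direction, and within it the only genuinely nontrivial point is the supporting-hyperplane / nonempty-subdifferential step---in particular ruling out vertical supporting hyperplanes of the epigraph. This is precisely where the hypothesis that $h$ is finite-valued on all of $\R^w$ (so that every point is interior to $\mathrm{dom}\,h$) is essential; for a general extended-real-valued convex function one would additionally need lower semicontinuity and would have to argue more carefully at relative-boundary points of the domain.
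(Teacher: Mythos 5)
Your proof is correct. The paper does not actually prove this observation; it treats the biconjugation identity as a textbook fact and cites Corollary~E.1.3.6 of Hiriart-Urruty and Lemar\'echal, so there is no paper-internal argument to compare against. What you give is the standard Fenchel--Moreau proof for real-valued convex functions on all of $\R^w$: the direction $h^{\star\star}\le h$ from Fenchel--Young, and the reverse from the existence of a subgradient at every point, obtained by producing a non-vertical supporting hyperplane to the epigraph.

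Two small remarks, neither a genuine gap. First, the claim that the epigraph has nonempty interior implicitly uses that a finite convex function on $\R^w$ is continuous (equivalently locally bounded above); this is standard but worth naming, since it is exactly what fails for general extended-real-valued $h$. Second, the phrase that a vertical supporting hyperplane ``would separate $u_0$ from $\mathrm{dom}\,h$'' is loose: what one actually concludes is that $\mathrm{dom}\,h$ would lie entirely in a closed halfspace $\{x:\ip{c}{x}\le \ip{c}{u_0}\}$ with $c\neq 0$, contradicting $\mathrm{dom}\,h=\R^w$. The surrounding argument makes this clear, so the proof as a whole is sound and matches the intended (cited) result.
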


	Applying this to $\Psi$, we see that for every ``dual'' vector $v$ we obtain the lower bound $\Psi(x^*) \ge \ip{x^*}{v} - \Psi^{\star}(v)$. The crucial step is then finding the right dual. As it is often the case, such dual is obtained by taking a positive combination of the constraint vectors $A(\tau)$ of the problem. In fact, we will show that using $\bar{v} = \beta \cdot \int_0^{\tau_{\final}} A(\tau)\, \d \tau$, for a scalar $\beta > 0$ to be chosen later, is an adequate choice. To start, since $x^*$ satisfies all the constraints, we have
	\begin{align*}
		\Psi(x^*) \ge \ip{x^*}{\bar{v}} - \Psi^{\star}(\bar{v}) = \beta \cdot \int_0^{\tau_{\final}} \underbrace{\ip{x^*}{A(\tau)}}_{\ge 1} \, \d \tau - \Psi^{\star}(\bar{v}) &\ge \beta \cdot \tau_{\final} - \Psi^{\star}(\bar{v}) \notag\\
		&\ge \frac{\beta}{2} \Psi(x_{\final}) - \Psi^{\star}(\bar{v}), 
	\end{align*}
	where the last inequality uses the upper bound on the algorithm from \Cref{lemma:coverUB}. 
 
    Below, we will show that
    \begin{align}
    \label{eq:psiStarFinal}
        \Psi^{\star}(\bar{v}) \le\, \big(\beta \cdot c \big)^{q'} \cdot p' \cdot \Psi(x_{\final})
    \end{align}
    for some $c = O(p \log^2 {d\rho\gamma})$, where $q'$ is the H\"older dual of $p'$, namely the scalar satisfying $\frac{1}{p'} + \frac{1}{q'} = 1$. But first we use this to complete the proof of \Cref{thm:cover}.
    
    \Cref{eq:psiStarFinal} implies
	\[
		\Psi(x^*) \ge \bigg(\frac{\beta}{2} - (\beta \cdot c)^{q'} \cdot p' \bigg)\, \Psi(x_{\final})\,.
	\]
	Setting $\beta = \big(\frac{1}{2 q' p' c^{q'}}\big)^{1/(q'-1)}$, which maximizes the right-hand side, we get
	\begin{align*}
			\Psi(x^*) \,\ge\, \frac{1}{2} \cdot \bigg( \frac{1}{2q'} \bigg)^{\frac{1}{q'-1}} \cdot \bigg( \frac{1}{p' c} \bigg)^{\frac{q'}{q'-1}} \cdot \Psi(x_{\final}) \,\ge\, \bigg(\frac{1}{O(p' p \log^2 {d\rho\gamma})} \bigg)^{p'} \cdot \Omega(\Psi(x_{\final})),
	\end{align*}
    the last inequality using the fact $\frac{q'}{q' - 1} = p'$, which follows since by definition $q'$ satisfies $\frac{1}{p'} + \frac{1}{q'} = 1$. Finally, recalling $\Psi(y) = \frac{1}{p'} \|F(y)\|^{p'}$, we can multiply both sided by $p'$, take $p'$-roots and reorganize the expression to obtain that $$ \ALG \,=\, \|F(x_{\final})\| \,\le\, O(p'\,p \log^2 {d\rho\gamma})\cdot \|F(x^*)\| \,=\, O(p'\,p \log^2 {d\rho\gamma})\cdot \OPT.$$ This proves that our algorithm is $O(p'\,p \log^2 {d\rho\gamma})$-competitive and concludes the proof of \Cref{thm:cover}. Thus, we need to show that the dual value $\Psi^{\star}(\bar{v})$ is comparable to (a scaling of) the primal quantity $\Psi(x_{\final})$.

    \smallskip
    \paragraph{Proof of \Cref{eq:psiStarFinal}.}  The key for relating these primal and dual space is the gradient $\nabla \Psi$. 
	The intuition being this is the following: it is a classical fact that $\Psi^{\star}(\nabla \Psi(y)) = \ip{y}{\nabla \Psi(y)} - \Psi(y)$ for every $y$ (this holds for every convex function, not just $\Psi$, see Theorem E.1.4.1 of~\cite{HUL}). Moreover, since $\Psi(y) = \frac{1}{p'} \|(f_1(y|_{S_1}), \ldots, f_k(y|_{S_k}))\|^{p'}$, where each $f_\ell$ is a norm, $\Psi(y)$ should ``grow at most like power $p'$'', and so ``$\nabla \Psi(y)$ times $y$'' should not be larger than $p' \cdot \Psi(y)$; thus, heuristically we should have $$\underbrace{\Psi^{\star}(\nabla \Psi(y))}_{dual} = \ip{y}{\nabla \Psi(y)} - \Psi(y) \lesssim p'\cdot \Psi(y) - \Psi(y) = (p' - 1) \cdot \underbrace{\Psi(y)}_{primal}.$$
    This heuristic argument is indeed correct as we show in the first part of \Cref{lemma:psiStar}. In the second part of \Cref{lemma:psiStar}, we show that $\Psi(\alpha z) = \alpha^{q'} \cdot \Psi^\star(z)$. This way, showing \eqref{eq:psiStarFinal} is reduced to showing that $\Psi^\star(\bar{v}) \leq \Psi^\star(\beta \cdot c \cdot \nabla \Psi(x_{\final}))$, which will be \Cref{lemma:dual}.

    To make the argument formal, we first find an expression of $\Psi^{\star}$ in terms of the convex conjugate $g^{\star}$ of $g$ and in terms of the dual norms $f_{\ell,\star}$ of $f_\ell$.
    \begin{lemma}
    \label{lemma:psiStarForm}
    Let $g(y) = \frac{1}{p'} \|y\|^{p'}$, so $\Psi(y) = g(F(y))$. For every $z = (z^1,\ldots,z^k)$, where $z^\ell$ is $|S_\ell|$-dimensional, we have
	\begin{align*}
		\Psi^{\star}(z) = g^{\star}\Big(f_{1,\star}(z^1),\ldots,f_{k,\star}(z^k) \Big),
	\end{align*} 
	where $f_{\ell,\star}$ is the dual norm of $f_\ell$ defined by $f_{\ell,\star}(z) = \max_{w: f_{\ell}(w) = 1} |\ip{z}{w}|$.
    \end{lemma}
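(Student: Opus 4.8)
\emph{Plan.} The statement is a ``commutation'' identity: the convex conjugate of a norm-composition $\Psi(y) = g(F(y))$, where $F(y) = (f_1(y|_{S_1}),\ldots,f_k(y|_{S_k}))$, equals $g^\star$ applied coordinatewise to the dual norms $f_{\ell,\star}$ of the inner norms. The plan is to unfold the definition of the conjugate $\Psi^\star(z) = \sup_y \{\ip{z}{y} - g(F(y))\}$ and split the supremum over $y \in \R^n$ into an inner optimization over the ``profile'' vector $r = (r_1,\ldots,r_k) = (f_1(y|_{S_1}),\ldots,f_k(y|_{S_k})) \in \R^k_+$ and an outer optimization over this profile. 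Concretely, I would group the coordinates of $y$ according to the partition $S_1,\ldots,S_k$ (this is where \Cref{ass:cover} is used — the blocks $y|_{S_\ell}$ are disjoint and together they are all of $y$), so that $\ip{z}{y} = \sum_\ell \ip{z^\ell}{y|_{S_\ell}}$ and the only coupling between blocks is through $g$.

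\emph{Key steps.} First, for a fixed target profile $r \in \R^k_+$, maximize $\sum_\ell \ip{z^\ell}{y|_{S_\ell}}$ over all $y$ with $f_\ell(y|_{S_\ell}) \le r_\ell$ for each $\ell$; since the blocks are independent, this separates into $k$ problems, and by definition of the dual norm each gives $\max_{f_\ell(w) \le r_\ell} \ip{z^\ell}{w} = r_\ell \cdot f_{\ell,\star}(z^\ell)$ (using positive homogeneity of $f_{\ell,\star}$, and here one should be slightly careful about absolute values / non-negativity — since we work on $\R^n_+$ and $z$ will be a nonnegative dual, $f_{\ell,\star}(z^\ell)$ and $\max_{f_\ell(w)\le r_\ell}\ip{z^\ell}{w}$ agree). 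Also, since $g$ is monotone, it suffices to consider $f_\ell(y|_{S_\ell}) = r_\ell$ rather than $\le r_\ell$. Second, substitute back:
\[
\Psi^\star(z) = \sup_{r \in \R^k_+} \Big\{ \sum_\ell r_\ell \cdot f_{\ell,\star}(z^\ell) - g(r) \Big\} = \sup_{r \in \R^k_+}\Big\{ \ip{(f_{1,\star}(z^1),\ldots,f_{k,\star}(z^k))}{r} - g(r)\Big\}.
\]
Third, recognize the right-hand side as $g^\star$ evaluated at the vector $(f_{1,\star}(z^1),\ldots,f_{k,\star}(z^k))$ — modulo the fact that the supremum is over $r \in \R^k_+$ rather than all of $\R^k$; this is fine because $g(y) = \frac1{p'}\|y\|^{p'}$ with $\|\cdot\|$ monotone, so for a nonnegative argument the unconstrained sup over $\R^k$ is attained in the nonnegative orthant (reflecting any negative coordinate of $r$ to zero only decreases $g$ and does not decrease the linear term when $f_{\ell,\star}(z^\ell) \ge 0$). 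That yields exactly $\Psi^\star(z) = g^\star(f_{1,\star}(z^1),\ldots,f_{k,\star}(z^k))$.

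\emph{Main obstacle.} The routine parts are the block-separation and the positive-homogeneity bookkeeping; the one genuinely delicate point is handling the sign/absolute-value and orthant issues so that the dual-norm identity $\max_{f_\ell(w)\le r_\ell}\ip{z^\ell}{w} = r_\ell f_{\ell,\star}(z^\ell)$ and the reduction of the outer sup to $\R^k_+$ are both legitimate. This is where one leans on the standing convention (stated at the start of \Cref{sec:intoSuper}) that all vectors are nonnegative and all norms monotone, together with the fact that the relevant dual vector $\bar v$ built in the analysis is a nonnegative combination of the nonnegative constraint rows $A(\tau)$, so $z$ is nonnegative and everything is consistent. I would state this as a short remark rather than belabor it.
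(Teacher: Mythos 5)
Your proposal is correct and follows essentially the same route as the paper: the paper also writes each block of $y$ as a length $\lambda_\ell \ge 0$ times a direction $w^\ell$ with $f_\ell(w^\ell)=1$, optimizes the directions first to produce $\lambda_\ell f_{\ell,\star}(z^\ell)$, and then recognizes the remaining supremum over $(\lambda_1,\ldots,\lambda_k)$ as $g^\star$ evaluated at $(f_{1,\star}(z^1),\ldots,f_{k,\star}(z^k))$ — your ``profile vector'' $r$ is exactly this $(\lambda_1,\ldots,\lambda_k)$. The sign/orthant bookkeeping you flag is handled the same way (and somewhat more tersely) in the paper.
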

    
    \begin{proof}
    Writing the definition of $\Psi^{\star}$ we have (writing vectors as unit-sized directions $w^\ell$ and lengths $\lambda_\ell$)
	\begin{align*}
		\Psi^{\star}(z) &= \max\bigg\{\sum_\ell \ip{z^\ell}{\lambda_\ell\, w^\ell} - \Psi(\lambda_1\, w^1,\ldots, \lambda_k\, w^k) : f_1(w^1) = \ldots = f_k(w^k) = 1, \lambda_\ell \ge 0 ~\forall \ell  \bigg\}   \notag\\
		 &= \max\bigg\{\sum_\ell \lambda_\ell\, \ip{z^\ell}{w^\ell} - g(\lambda_1, \ldots, \lambda_k) : f_1(w^1) = \ldots = f_k(w^k) = 1, \lambda_\ell \ge 0 ~\forall \ell  \bigg\}  \notag\\
		 &= \max\bigg\{\sum_\ell \lambda_\ell \,f_{\ell,\star}(z^\ell)  - g(\lambda_1, \ldots, \lambda_k) : \lambda_\ell \ge 0 ~\forall \ell  \bigg\} \notag\\
		 &= g^{\star}(f_{1,\star}(z^1), \ldots, f_{k,\star}(z^k)),  
	\end{align*}
	where the second equation is because $f_\ell(\lambda\, w^\ell) = \lambda_\ell$ by the normalization of $w^\ell$, the third equation is by the definition of the dual norm, and the last equation is by the definition of convex conjugate, proving the lemma.
    \end{proof}

    Next, we show the lemma which relates $\Psi^\star$ to $\Psi$ for multiples of the gradient of $\Psi$.
	\begin{lemma} \label{lemma:psiStar}
	We have the following:
	\begin{enumerate}
		\item If $p' \ge 2$, then for any $y \in R^k_+$ it holds $\Psi^{\star}(\nabla \Psi(y)) \le (p'-1) \cdot \Psi(y)$.

		\item  Let $q'$ be the H\"older dual of $p'$, namely the scalar satisfying $\frac{1}{p'} + \frac{1}{q'} = 1$. Then $\Psi^{\star}(\alpha y) = \alpha^{q'} \cdot \Psi^{\star}(y)$ for every $\alpha > 0$.
	\end{enumerate}		
	\end{lemma}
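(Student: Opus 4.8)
\textbf{Proof proposal for Lemma~\ref{lemma:psiStar}.}

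The plan is to verify the two parts separately, exploiting in both cases the formula $\Psi^{\star}(z) = g^{\star}(f_{1,\star}(z^1),\ldots,f_{k,\star}(z^k))$ from Lemma~\ref{lemma:psiStarForm}, together with the explicit identification of $g^\star$. Here $g(y) = \frac1{p'}\|y\|^{p'}$ for the $p'$-\monotone outer norm $\|\cdot\|$, and a standard computation (optimizing over the radial direction, as in the proof of Lemma~\ref{lemma:psiStarForm}) gives $g^{\star}(w) = \frac1{q'}\|w\|_{\star}^{q'}$, where $\|\cdot\|_\star$ is the dual of $\|\cdot\|$ and $q'$ is the H\"older dual of $p'$. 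So I would first record this closed form for $g^\star$, and note that consequently $\Psi^\star(z) = \frac1{q'}\big\|(f_{1,\star}(z^1),\ldots,f_{k,\star}(z^k))\big\|_\star^{q'}$.

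For Part~2, homogeneity is now immediate: each dual norm $f_{\ell,\star}$ is positively homogeneous of degree $1$, so $f_{\ell,\star}((\alpha y)^\ell) = \alpha f_{\ell,\star}(y^\ell)$; then $\|\cdot\|_\star$ is degree~$1$ homogeneous and raising to the $q'$ power pulls out $\alpha^{q'}$. Thus $\Psi^\star(\alpha y) = \alpha^{q'}\Psi^\star(y)$ for all $\alpha>0$. (Alternatively this follows directly from $\Psi^\star$ being the conjugate of a $p'$-homogeneous function, but the formula makes it one line.)

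For Part~1, the cleanest route is the general convex-analysis identity $\Psi^\star(\nabla\Psi(y)) = \langle y, \nabla\Psi(y)\rangle - \Psi(y)$ (Theorem~E.1.4.1 of~\cite{HUL}), which reduces the claim to the bound $\langle y,\nabla\Psi(y)\rangle \le p' \cdot \Psi(y)$. Since $\Psi(y) = g(F(y)) = \frac1{p'}\|(f_1(y|_{S_1}),\ldots,f_k(y|_{S_k}))\|^{p'}$ and we are evaluating at $y\in\R^k_+$ (abusing notation, $y$ ranges over the outer coordinates here, or one works with the already-composed $\Psi$ on $\R^n_+$), I would use the chain rule together with Euler's identity for homogeneous functions: each $f_\ell$ is $1$-homogeneous so $\langle y|_{S_\ell}, \nabla f_\ell(y|_{S_\ell})\rangle = f_\ell(y|_{S_\ell})$, and $\|\cdot\|$ is $1$-homogeneous so the same identity propagates through the outer norm; finally the outermost $t\mapsto \frac1{p'}t^{p'}$ contributes the factor $p'$. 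Concretely, $\langle y,\nabla\Psi(y)\rangle = \|F(y)\|^{p'-1}\cdot \langle F(y), \nabla(\|\cdot\|)(F(y))\rangle = \|F(y)\|^{p'} = p'\,\Psi(y)$, where the middle step again is Euler's identity for the $1$-homogeneous norm $\|\cdot\|$ and I have used $\langle y|_{S_\ell},\nabla f_\ell(y|_{S_\ell})\rangle=f_\ell(y|_{S_\ell})$ for each block. Substituting back gives $\Psi^\star(\nabla\Psi(y)) = p'\Psi(y)-\Psi(y)=(p'-1)\Psi(y)$, which is in fact an equality, so the hypothesis $p'\ge 2$ is only needed to keep $\Psi$ convex (hence the conjugate identity valid) and $p'-1\ge 1$.

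The main subtlety — really the only place to be careful — is differentiability: the norms $f_\ell$ and $\|\cdot\|$ need not be smooth, so one should either invoke the reduction to differentiable norms (Lemma~\ref{lemma:diff}, losing a $(1+\e)$ factor that is harmless here) before applying the chain/Euler identities, or phrase Euler's identity in terms of subgradients, using that $g^\star(\nabla g(y)) = \langle y, s\rangle - g(y)$ for any subgradient $s\in\partial g(y)$ and that $\langle y, s\rangle = p'\,g(y)$ for every subgradient of a positively $p'$-homogeneous convex function. I expect the homogeneity/Euler computation itself to be routine; the only genuine care is bookkeeping the composition $g\circ F$ and confirming the degree-$1$ homogeneity passes cleanly through each layer.
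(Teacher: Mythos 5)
Your proof is correct, and Part~2 coincides with the paper's argument (closed form $g^{\star}=\frac{1}{q'}\|\cdot\|_{\star}^{q'}$ plus degree-one homogeneity of the dual norms). For Part~1 you take a mildly different, and arguably cleaner, route: you apply the Fenchel--Young equality $\Psi^{\star}(\nabla\Psi(y))=\ip{y}{\nabla\Psi(y)}-\Psi(y)$ directly to the composed function $\Psi$ and then use Euler's identity block-by-block to get $\ip{y}{\nabla\Psi(y)}=p'\,\Psi(y)$, obtaining the conclusion as an \emph{equality}. The paper instead first reduces to the outer function: it uses the fact that $f_{\ell,\star}(\nabla f_\ell)=1$ to show $\Psi^{\star}(\nabla\Psi(y))=g^{\star}(\nabla g(F(y)))$ via \Cref{lemma:psiStarForm}, and then invokes Lemma~4.b of~\cite{AzarBCCCG0KNNP16} for $g$ (which encapsulates the same Fenchel-plus-homogeneity computation you carry out by hand). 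The underlying mathematics is identical; your version is more self-contained and makes transparent that $p'\ge 2$ plays no real role in Part~1, while the paper's version isolates the reduction $\Psi^{\star}(\nabla\Psi)=g^{\star}(\nabla g\circ F)$, which it reuses elsewhere (e.g.\ in \eqref{eq:fstarGradg} and \Cref{lemma:dual}). Your caveat about differentiability is apt but harmless here, since the paper works throughout with differentiable norms via \Cref{lemma:diff}.
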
	 
	
	\begin{proof}
	
		We have
		\begin{align*}
			(\nabla \Psi(y))|_{S_\ell} = (\nabla_\ell g)(F(y)) \cdot \nabla f_\ell (y|_{S_\ell}).
		\end{align*}
		Since for any norm $\vertm{\nabla \vertm{y}}_{\star} = 1$, we have
        \begin{align}
            f_{\ell,\star}((\nabla \Psi(y))|_{S_\ell}) = (\nabla_\ell g)(F(y)) \label{eq:fstarGradg}
        \end{align}
        (which is a scalar). Then from \Cref{lemma:psiStarForm} we get $\Psi^{\star}(\nabla \Psi(y)) = g^{\star}( (\nabla g)(F(y)))$. Moreover, $g$ ``grows at most like power $p'$\,'', namely $\ip{\nabla g(y)}{y} = \|y\|^{p'-1} \ip{\nabla \|y\|}{x} = \|x\|^{p'} = p'\cdot g(y)$ for every $y$. Lemma 4.b of \cite{AzarBCCCG0KNNP16} then guarantees that $g^{\star}(\nabla g(y)) \le (p'-1) \cdot g(y)$ for every $y \ge 0$. Combining these observations gives
		\begin{align*}
			\Psi^{\star}(\nabla \Psi(y)) = g^{\star}( (\nabla g)(F(y))) \le (p'-1) \cdot g(F(y)) = (p'-1) \cdot \Psi(y),
		\end{align*}
        proving the first item in the lemma. 
	 
	 For the second item, it can be shown that $g^{\star}(y) = \frac{1}{q'} \|y\|^{q'}_{\star}$~(see for example \cite{borweinUnif}), and so $g^{\star}(\alpha y) = \alpha^{q'} \cdot g^{\star}(y)$. Then using \Cref{lemma:psiStarForm} we have for every $y = (y^1,\ldots,y^k)$ and scaling $\alpha \ge 0$ 
	 \begin{align*}
	 	\Psi^{\star}(\alpha y) = g^{\star}\Big(f_{1,\star}(\alpha \,y^1), \ldots, f_{k,\star}(\alpha \, y^k) \Big) =  g^{\star}\Big(\alpha \cdot \big(f_{1,\star}(y^1), \ldots, f_{k,\star}(y^k)\big) \Big) = \alpha^{q'} \cdot \Psi^{\star}(y),
	 \end{align*}
	 as desired.
	\end{proof}

	Thus, the core of the argument is showing that our dual's size $\Psi^{\star}(\bar{v})$ can be upper bounded using the gradient's size $\Psi^{\star}(\nabla \Psi(x_{\final}))$. This is precisely what is done in the next lemma.
	
	\begin{lemma} \label{lemma:dual}
		It holds that $$\Psi^{\star}(\bar{v}) \,\le\, 4\,\Psi^{\star}\big(\beta \cdot O(p \log^2 {d\rho \gamma}) \cdot \nabla \Psi(x_{\final})\big) + 4\,\Psi^{\star}\big(\beta \cdot O(p \log^2 {d\rho \gamma})) \cdot \delta \ones\big).$$
	\end{lemma}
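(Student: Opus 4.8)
The plan is to reduce the claimed inequality to a \emph{coordinatewise} comparison of $\bar v$ against $\nabla\Psi(x_{\final})$ and $\delta\ones$, and then to establish that comparison by following the continuous trajectory $x(\tau)$ produced by \Cref{proc:advCont} phase by phase, using $p$-\monotonicity of the inner norms $f_\ell$ and $p'$-\monotonicity of the outer norm to control how $\nabla\Psi$ evolves. By \Cref{lemma:psiStarForm}, $\Psi^{\star}(z)=g^{\star}\bigl(f_{1,\star}(z^1),\dots,f_{k,\star}(z^k)\bigr)$ with $g^{\star}(\cdot)=\tfrac1{q'}\|\cdot\|_{\star}^{q'}$; on $\R^n_+$ the dual norm $\|\cdot\|_{\star}$ and each $f_{\ell,\star}$ are monotone, and $\Psi^{\star}$ is convex and positively homogeneous of degree $q'\in(1,2]$ (since $p'\ge 2$), so $\Psi^{\star}(a+b)\le 2^{q'-1}\bigl(\Psi^{\star}(a)+\Psi^{\star}(b)\bigr)\le 2\bigl(\Psi^{\star}(a)+\Psi^{\star}(b)\bigr)$. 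Hence it suffices to prove the coordinatewise bound $\bar v_i \le \beta\cdot O\!\bigl(p\log^2(d\rho\gamma)\bigr)\cdot\bigl(\nabla_i\Psi(x_{\final})+\delta\bigr)$ for every $i$: splitting $\bar v$ into its $\nabla\Psi(x_{\final})$-part and its $\delta\ones$-part, applying the triangle inequality and monotonicity of the $f_{\ell,\star}$'s and of $g^{\star}$, and then using the sub-additivity above, yields exactly the factor-$4$ statement of the lemma.

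To prove the coordinatewise bound, first rewrite $\bar v_i$ along the trajectory. From the update rule, whenever $(A(\tau))_i>0$ we have $(A(\tau))_i=\dot x_i(\tau)\bigl(\nabla_i\Psi(x(\tau))+\delta\bigr)/(x_i(\tau)+\tfrac1d)$, and $(A(\tau))_i=0$ contributes nothing, so
\[
\bar v_i \;=\; \beta\int_0^{\tau_{\final}}\frac{\dot x_i(\tau)\,\bigl(\nabla_i\Psi(x(\tau))+\delta\bigr)}{x_i(\tau)+\tfrac1d}\,\d\tau .
\]
The $\delta$-part equals $\beta\delta\ln\!\bigl(d\,x_{\final,i}+1\bigr)$, and since the stopping condition together with $(A_r)_i\ge 1/\rho$ forces $x_{\final,i}=\mathrm{poly}(d,\rho)$, this is $\beta\delta\cdot O(\log(d\rho))$, which already accounts for the $\delta$ term. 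For the $\nabla_i\Psi$-part, partition $[0,\tau_{\final}]$ into $N=O\!\bigl(p\log(d\rho\gamma)\bigr)$ consecutive intervals $[\sigma_j,\sigma_{j+1}]$ on which $\|F(x(\tau))\|$ and each $f_\ell(x(\tau)|_{S_\ell})$ grow by at most a factor $1+\tfrac1p$; this many intervals suffice because the ratio between the final and first-nonzero values of these quantities is $\mathrm{poly}(d,\rho,\gamma)$ by the definitions of $\rho$, $\gamma$ and $d$.

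On the $j$-th interval we can pull out a reference gradient. Using \Cref{ass:cover} (the $S_\ell$ partition $[n]$), for $i\in S_{\ell(i)}$ the chain rule gives $\nabla_i\Psi(y)=(\nabla_{\ell(i)}g)(F(y))\cdot\nabla_i f_{\ell(i)}(y|_{S_{\ell(i)}})$. The first factor is non-decreasing in $y$ because $\nabla(\|\cdot\|^{p'})$ is monotone ($p'$-\monotonicity) and $F$ is coordinatewise monotone; the second factor satisfies $\nabla_i f_{\ell(i)}(x(\tau)|_{S_{\ell(i)}})\le (1+\tfrac1p)^{p-1}\,\nabla_i f_{\ell(i)}(x(\sigma_{j+1})|_{S_{\ell(i)}})\le e\cdot \nabla_i f_{\ell(i)}(x(\sigma_{j+1})|_{S_{\ell(i)}})$ throughout the interval, because $p$-\monotonicity of $f_{\ell(i)}$ makes $f_{\ell(i)}^{\,p-1}\,\nabla_i f_{\ell(i)}$ monotone. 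Hence $\nabla_i\Psi(x(\tau))\le e\,\nabla_i\Psi(x(\sigma_{j+1}))$ on the $j$-th interval, and therefore
\[
\beta\int_0^{\tau_{\final}}\frac{\dot x_i(\tau)\,\nabla_i\Psi(x(\tau))}{x_i(\tau)+\tfrac1d}\,\d\tau \;\le\; e\beta\sum_{j=1}^{N}\nabla_i\Psi(x(\sigma_{j+1}))\int_{\sigma_j}^{\sigma_{j+1}}\frac{\dot x_i(\tau)}{x_i(\tau)+\tfrac1d}\,\d\tau ,
\]
where each inner integral is at most $\ln(d\,x_{\final,i}+1)=O(\log(d\rho))$. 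Finally, the monotone first factor of $\nabla_i\Psi(x(\sigma_{j+1}))$ is bounded by its value at $x_{\final}$, and a charging argument over the phases (using that $x_i$ and $f_{\ell(i)}(x(\cdot)|_{S_{\ell(i)}})$ are non-decreasing along the run, and again $p$-\monotonicity to handle the non-monotone factor $\nabla_i f_{\ell(i)}$) bounds the full sum by $O(p\log(d\rho\gamma))\cdot O(\log(d\rho))\cdot\nabla_i\Psi(x_{\final})$, which is the coordinatewise estimate we needed.

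\textbf{Main obstacle.} The crux is the phase analysis. For the pure $\ell_p$ objective, $\nabla_i(\|x\|_p^p)=p\,x_i^{p-1}$ is coordinatewise monotone, so the gradient along the trajectory is automatically dominated by its final value; here $\nabla_i f_\ell$ can \emph{decrease} as other coordinates of $S_\ell$ grow, and $p$-\monotonicity only controls $\nabla_i(f_\ell^{\,p})$. Converting that into $O(1)$-control of $\nabla_i f_\ell$ forces the phases to be fine enough that $f_\ell$ grows by at most $1+\tfrac1p$ per phase --- which is precisely what injects the extra factor $p$ --- and then one must still reconcile the per-phase reference gradients with $\nabla_i\Psi(x_{\final})$, which are not monotone in the phase index; combined with the $O(\log(d\rho))$-size within-phase logarithmic integral and the $O(\log(d\rho\gamma))$ scale levels, this produces the $O(p\log^2(d\rho\gamma))$ factor. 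Adapting the Fenchel-duality bookkeeping of~\cite{AzarBCCCG0KNNP16} together with the phase idea of~\cite{NS-ICALP17} (used there in the $\ell_p$ case) is what makes this tractable.
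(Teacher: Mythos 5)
Your overall architecture (phases on which $f_\ell$ grows by a $1+\Theta(1/p)$ factor, within-phase control of $\nabla f_\ell$ via $p$-\monotonicity, the $O(\log d\rho)$ logarithmic integral per phase, and the final sub-additivity $\Psi^{\star}(a+b)\le 4(\Psi^{\star}(a)+\Psi^{\star}(b))$ from $q'\le 2$) matches the paper's proof. However, there is a genuine gap in the step you reduce everything to: the \emph{coordinatewise} bound $\bar v_i \le \beta\cdot O(p\log^2(d\rho\gamma))\cdot(\nabla_i\Psi(x_{\final})+\delta)$ is false in general, and the ``charging argument over the phases'' you invoke to reconcile the per-phase reference gradients with $\nabla_i\Psi(x_{\final})$ cannot exist coordinatewise. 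The problem is exactly the non-monotonicity you flag: take $f_\ell=\ell_2$ on two coordinates, with the trajectory first raising $x_1$ to $\approx 1$ (so $\nabla_1 f_\ell\approx 1$ during that period, and $\bar v_1$ accrues a fixed positive amount there) and later raising $x_2$ to $\approx\rho$; then $\nabla_1 f_\ell(x_{\final})\approx 1/\rho$, so the right-hand side of your coordinatewise inequality decays like $(\log^2\rho)/\rho$ while the left-hand side stays bounded below. Quantitatively, $p$-\monotonicity only gives $\nabla_i f_\ell(x(\sigma_j)) \le \bigl(f_\ell(x_{\final})/f_\ell(x(\sigma_j))\bigr)^{p-1}\nabla_i f_\ell(x_{\final})$, and summing this over phases where $f_\ell^{p-1}$ doubles yields a geometric series totalling $\poly(d\rho\gamma)^{p-1}\cdot\nabla_i f_\ell(x_{\final})$, not $O(p\log(d\rho\gamma))\cdot\nabla_i f_\ell(x_{\final})$.

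The paper sidesteps this by never comparing gradients coordinatewise across phases. Instead it applies the block dual norm $f_{\ell\star}$ to each phase's integral \emph{first}, and uses the identity $f_{\ell\star}(\nabla f_\ell(y))=1$ for every $y$, so that $f_{\ell\star}\bigl(\nabla_{S_\ell}\Psi(x(t_{j+1}))\bigr)=(\nabla_\ell g)(F(x(t_{j+1})))$ is a \emph{scalar} which is non-decreasing in the phase index (by $p'$-\monotonicity of the outer norm and coordinatewise monotonicity of $F$ along the trajectory); each of the $O(p\log(d\rho\gamma))$ phases then contributes at most $O(\log d\rho)\cdot(\nabla_\ell g)(F(x_{\final}))$ after dualizing, and the triangle inequality for $f_{\ell\star}$ adds these up. Your within-phase estimate $\nabla_i\Psi(x(\tau))\le e\,\nabla_i\Psi(x(\sigma_{j+1}))$ is exactly the right ingredient (it is the paper's Lemma~\ref{lemma:monoPsi} with a different constant), but it must be combined with dualization per phase rather than a coordinatewise charge to $x_{\final}$. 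You would also need the paper's separate phase-zero treatment (comparing $f_{\ell\star}$ to $f_\ell$ via the $d/\min_{f_\ell}^2$ factor) to start the induction, which your sketch does not address.
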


    Note that combining \Cref{lemma:psiStar} and \Cref{lemma:dual}, \Cref{eq:psiStarFinal} is immediate because we can use \Cref{lemma:psiStar}.(2) to pull out the constant terms
	\begin{align*}
		\Psi^{\star}(\bar{v}) \,&\le\, \big(\beta \cdot O(p \log^2 {d\rho\gamma})\big)^{q'} \cdot \Big( \Psi^{\star}(\nabla \Psi(x_{\final})) + \Psi^{\star}(\delta \ones)\Big)\\
  &\le\, \big(\beta \cdot O(p \log^2 {d\rho\gamma})\big)^{q'} \cdot \Big((p'-1) \cdot \Psi(x_{\final}) + \Psi(x_{\final})\Big),
	\end{align*}
    where the last inequality also uses \Cref{lemma:psiStar}.(1) and the choice of $\delta$ that guarantees $\Psi^{\star}(\delta \ones) \le \Psi(x^*) \le \Psi(x_{\final})$.

    So, it only remains to show \Cref{lemma:dual}. We note that this is the only place in the argument where we use the fact that the norms $\|\cdot\|$ and $f_1,\ldots,f_k$ in the objective function are $p'$- and $p$-\monotone, respectively. In fact, suppose the gradient $\nabla \Psi(y)$ were monotone, which is the case considered in~\cite{AzarBCCCG0KNNP16}, and happens when the inner norms $f_\ell$'s are trivial, e.g. they are over just 1 coordinate each. In this case, since the update of algorithm satisfies $A(\tau) \approx \frac{\dot{x}(\tau)}{x(\tau)} \nabla \Psi(x(\tau))$, integrating gives (we will cheat and start the integration at $\tau = \e$, the initial times can be handled separately)
	\begin{align*}
		\bar{v} \approx \beta \cdot \int_\e^{\tau_{\final}} A(\tau) \, \d \tau \stackrel{mono}{\lesssim} \beta \cdot \nabla \Psi(x_{\final}) \cdot \int_0^{\tau_{\final}} \frac{\dot{x}(\tau)}{x(\tau)} \, \d \tau = \beta \cdot \nabla \Psi(x_{\final}) \cdot \log\bigg(\frac{x_{\final}}{x(\e)}  \bigg);
	\end{align*}
	using the monotonicity of $\Psi^{\star}$, one quickly obtains \Cref{lemma:dual} in this case. Unfortunately, the presence of the (non-trivial) norms $f_\ell$'s makes the gradient $\nabla \Psi$ non-monotone, which complicates matters.

		
	\subsection{Finding the right dual: Proof of \Cref{lemma:dual}} \label{sec:dual}
	
To simplify the notation, we use the following to denote the needed restrictions to a set of coordinates $S_\ell$: $\nabla_{S_\ell} \Psi(y) := (\nabla \Psi(y))|_{S_\ell}$,  $\bar{v}^\ell := \bar{v}|_{S_\ell}$, $A^\ell(\tau) := A(\tau)|_{S_\ell}$, and $x^\ell(\tau) := x(\tau)|_{S_\ell}$. As in the proof of \Cref{lemma:psiStar}, let $g(y) := \frac{1}{p'} \|y\|^{p'}$, so that $\Psi(y) = g(F(y))$. 

    Fix a part $\ell$ throughout, and we prove the above inequality for it. Recall from the discussion in the previous section that the main difficulty is that $\nabla_{S_\ell} \Psi(y) = \nabla_\ell g(F(y)) \cdot \nabla f_\ell(y)$ may not be non-decreasing. Since the outer norm $\|\cdot\|$ is assumed to be $p'$-\monotone and $g(y) = \frac{1}{p'} \|y\|^{p'}$, the first term in this gradient is actually monotone, so the issue is that the gradient of the norm $\nabla f_\ell(y)$ may not be non-decreasing. To handle this, we use the same idea as in \cite{NS-ICALP17}, namely to break the evolution of our algorithm into phases where $f_\ell$ behaves as if it had (almost) monotone gradient. It is not clear that for a general norm we can obtain an effective bound on the number of these phases, since the coordinates of $\nabla f_\ell(x^\ell(\tau))$ may increase and decrease multiple times as $\tau$ evolves. Here is where we crucially rely on the assumption that the norm $f_\ell$ is $p$-\monotone, which, as we will see, guarantees that it suffices to control the \emph{value} of the norm $f_\ell(x^\ell(\tau))$ to obtain the desired control on its gradient.  
 
    Recall that the norm $f_\ell(x^\ell(\tau))$ of our solution only increases over time $\tau$. {Let $\max_{f_\ell} := \max_{i \in S_\ell} f_\ell(e_i)$, and $\min_{f_\ell} := \max_{i \in S_\ell} f_\ell(e_i)$ denote the maximum and minimum values of the norm $f_\ell$ for a coordinate vector in $S_\ell$.} Then define the times $t_1, t_2,\ldots, t_w = \tau_{\final}$ as follows: 
	
	\begin{enumerate}
		\item (Phase zero) $t_1$ is the largest time $\tau$ such that $f_\ell(x^\ell(\tau))^{p-1} \le {\big(\frac{\min_{f_\ell}^2}{d^2 \cdot \max_{f_\ell}}\big)^{p-1}}$.
		
		\item (Other phases) A new phase starts when $f_\ell(x^\ell(\tau))^{p-1}$ doubles. More precisely, $t_j$ is the largest time $\tau$ such that $$f_\ell(x^\ell(\tau))^{p-1} \le 2 \cdot f_\ell(x^\ell(t_{j-1}))^{p-1}.$$
	\end{enumerate}
			

    The following lemma formalizes the almost monotonicity of the gradient $\nabla \Psi$ within a phase. 
 
	\begin{lemma} \label{lemma:monoPsi}
		For any $\tau \in [t_j, t_{j+1}]$, we have $$\nabla f_\ell(x^\ell(\tau)) \le 2\, \nabla f_\ell(x^\ell(t_{j+1})).$$ In particular, we have $$\nabla_{S_\ell} \Psi(x(\tau)) \le 2\, \nabla_{S_\ell} \Psi(x(t_{j+1})).$$
	\end{lemma}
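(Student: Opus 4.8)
The statement claims that within a phase $[t_j, t_{j+1}]$, the gradient $\nabla f_\ell(x^\ell(\tau))$ is controlled by $2\,\nabla f_\ell(x^\ell(t_{j+1}))$ coordinate-wise, and hence the same holds for $\nabla_{S_\ell}\Psi$. I would first reduce the second assertion to the first: since $\nabla_{S_\ell}\Psi(x(\tau)) = (\nabla_\ell g)(F(x(\tau)))\cdot \nabla f_\ell(x^\ell(\tau))$, and the scalar prefactor $(\nabla_\ell g)(F(\cdot))$ is monotone nondecreasing along the algorithm's trajectory (because $g(y)=\frac1{p'}\|y\|^{p'}$ with $\|\cdot\|$ being $p'$-supermodular, so $\nabla g$ has monotone gradients, and $F(x(\tau))$ only increases since the $f_\ell$ are monotone norms and $x(\tau)$ only increases), we get $(\nabla_\ell g)(F(x(\tau))) \le (\nabla_\ell g)(F(x(t_{j+1})))$ for $\tau \le t_{j+1}$; combining with the first assertion gives the second with the same factor $2$.

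**Key step: the gradient bound for $f_\ell$.** The heart of the matter is showing $\nabla_i f_\ell(x^\ell(\tau)) \le 2\,\nabla_i f_\ell(x^\ell(t_{j+1}))$ for each $i \in S_\ell$. Here I would use the $p$-supermodularity of $f_\ell$ via its gradient characterization from \Cref{lemma:equivalentPmono}: for $u \le u'$ (coordinate-wise), writing $u' = u+v$ with $v \ge 0$,
\[
\frac{\nabla_i f_\ell(u')}{\nabla_i f_\ell(u)} \ge \Big(\frac{f_\ell(u)}{f_\ell(u')}\Big)^{p-1}.
\]
We want the reverse-type bound, i.e. an \emph{upper} bound on $\nabla_i f_\ell(u)/\nabla_i f_\ell(u')$. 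The same inequality, read with the roles arranged so that the smaller argument is in the numerator, gives $\nabla_i f_\ell(u) \le \nabla_i f_\ell(u') \cdot \big(f_\ell(u')/f_\ell(u)\big)^{p-1}$. Wait — that is not directly what supermodularity yields; $p$-supermodularity lower-bounds $\nabla_i f_\ell(u+v)$ relative to $\nabla_i f_\ell(u)$, which is exactly an upper bound on $\nabla_i f_\ell(u)$ in terms of $\nabla_i f_\ell(u+v)$:
\[
\nabla_i f_\ell(u) \le \nabla_i f_\ell(u+v)\cdot\Big(\frac{f_\ell(u+v)}{f_\ell(u)}\Big)^{p-1}.
\]
Applying this with $u = x^\ell(\tau)$ and $u+v = x^\ell(t_{j+1})$ (valid since $x^\ell$ is nondecreasing in $\tau$ and $\tau \le t_{j+1}$), we get $\nabla_i f_\ell(x^\ell(\tau)) \le \nabla_i f_\ell(x^\ell(t_{j+1})) \cdot \big(f_\ell(x^\ell(t_{j+1}))/f_\ell(x^\ell(\tau))\big)^{p-1}$. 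Now the phase definition kicks in: by construction $f_\ell(x^\ell(t_{j+1}))^{p-1} \le 2\, f_\ell(x^\ell(t_j))^{p-1} \le 2\, f_\ell(x^\ell(\tau))^{p-1}$ for $\tau \ge t_j$, so the correction factor is at most $2$. That yields the claimed bound. (For the phase-zero boundary $j=0$, the same argument applies with the explicit bound $f_\ell(x^\ell(t_1))^{p-1} \le (\min_{f_\ell}^2/(d^2\max_{f_\ell}))^{p-1}$ playing the analogous role, or one checks the degenerate small-$\tau$ regime separately — this edge case is where I'd be most careful.)

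**Anticipated obstacle.** The main subtlety is not the inequality chain itself but justifying that the gradient characterization of $p$-supermodularity applies to $f_\ell$ at the relevant points — $f_\ell$ is only assumed to be a $p$-supermodular norm, and differentiability may fail on a measure-zero set; I would invoke \Cref{lemma:diff} to replace $f_\ell$ by a $(1+\e)$-approximating twice-differentiable $p$-supermodular norm (absorbing the $1+\e$ into constants), and likewise ensure $\nabla\Psi$ is well-defined along the trajectory. A secondary point to verify cleanly is that $F(x(\tau))$ is genuinely coordinate-wise nondecreasing in $\tau$ and that $\nabla g$ being "monotone over the nonnegative orthant" (the Gradient property) is exactly what makes the scalar prefactor monotone along the path — this is immediate from $p'$-supermodularity of $\|\cdot\|$ but worth stating. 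Everything else is the short computation above.
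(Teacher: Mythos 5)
Your proposal is correct and follows essentially the same route as the paper: the paper writes $\nabla_i f_\ell(x^\ell(\tau)) = \nabla_i\big(f_\ell(x^\ell(\tau))^p\big)/\big(p\,f_\ell(x^\ell(\tau))^{p-1}\big)$, bounds the numerator by $\nabla_i\big(f_\ell(x^\ell(t_{j+1}))^p\big)$ using $p$-supermodularity, and then uses the phase definition to bound the ratio $f_\ell(x^\ell(t_{j+1}))^{p-1}/f_\ell(x^\ell(\tau))^{p-1}$ by $2$ — exactly your chain, just written via the chain rule rather than by citing the gradient characterization of \Cref{lemma:equivalentPmono}. The second assertion is likewise handled identically via monotonicity of $\nabla_\ell g$ from the $p'$-supermodularity of the outer norm, and your worry about the phase-zero boundary is moot since the lemma is only invoked for phases $j\ge 1$.
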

	
	\begin{proof}
            Since the norm $f_\ell$ is $p$-\monotone, $\nabla_i f_\ell(x^\ell(\tau))^p$ is non-decreasing as we increase $\tau$. From chain rule, we can relate this quantity to the gradient of the norm as $\nabla_i f_\ell(x^\ell(\tau))^p = p \cdot f_\ell(x^\ell(\tau))^{p-1} \cdot \nabla_i f_\ell(x^\ell(\tau))$, which rearranging gives
		\begin{align*}
			\nabla_i f_\ell(x^\ell(\tau)) = \frac{\nabla_i f_\ell(x^\ell(\tau))^p}{p\cdot f_\ell(x^\ell(\tau))^{p-1}} \le \frac{\nabla_i f_\ell(x^\ell(t_{j+1}))^p}{p \cdot f_\ell(x^\ell(\tau))^{p-1}} 
			&= \frac{f_\ell(x^\ell(t_{j+1}))^{p-1}}{f_\ell(x^\ell(\tau))^{p-1}} \cdot \nabla_i f_\ell(x^\ell(t_{j+1}))\\
			&\le 2\, \nabla_i f_\ell(x^\ell(t_{j+1})),
		\end{align*}
		where the first inequality uses p-\monotonicity because $x^\ell(\tau) \leq x^\ell(t_{j+1})$ and the second inequality follows from the definition of a phase. This proves the first statement of the lemma. 

        The second statement follows from $\nabla_{S_\ell} \Psi(y) = \nabla_\ell g(F(y)) \cdot \nabla f_\ell(y)$ and the fact that $\nabla_\ell g(F(y))$ is non-decreasing, due to the $p'$-\monotonicity of $\|\cdot\|$, as discussed before.
	\end{proof}
	
	We also show that there are not too many phases. 
		
	\begin{lemma} \label{lemma:numberPhases}
		There are at most $O(p \log d \rho {\gamma})$ phases. 
	\end{lemma}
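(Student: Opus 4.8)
The plan is to reduce the whole statement to a single inequality about the growth of the scalar quantity $f_\ell(x^\ell(\tau))^{p-1}$ between the end of phase zero and the end of the process. By the phase definition this quantity equals $\big(\min_{f_\ell}^2/(d^2\max_{f_\ell})\big)^{p-1}$ at time $t_1$ (using continuity of $\tau\mapsto f_\ell(x^\ell(\tau))$, which follows from continuity of $x(\tau)$ and of the norm $f_\ell$; if it never reaches this value there is only phase zero and we are done), and in every subsequent phase it at least doubles by construction, so that $f_\ell(x^\ell(\tau_{\final}))^{p-1}\ge 2^{\,w-2}\,f_\ell(x^\ell(t_1))^{p-1}$ when there are $w$ phases. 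Hence
\[
2^{\,w-2}\;\le\;\Big(\tfrac{f_\ell(x^\ell(\tau_{\final}))}{f_\ell(x^\ell(t_1))}\Big)^{p-1}\;=\;\Big(f_\ell(x^\ell(\tau_{\final}))\cdot\tfrac{d^2\max_{f_\ell}}{\min_{f_\ell}^2}\Big)^{p-1},
\]
so it only remains to upper bound $f_\ell(x^\ell(\tau_{\final}))$.

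The key structural fact I would establish is that \emph{the algorithm never raises any coordinate above $\rho$}, i.e. $\|x(\tau)\|_\infty\le\rho$ for every time $\tau$. I would prove this by induction over the rounds. The base case is $x(0)=0$. For the step, assume $\|x(\tau_r)\|_\infty\le\rho$ and consider round $r$. If $\ip{A_r}{x(\tau_r)}\ge 1$ the loop body of \Cref{proc:advCont} is never entered, $x$ is unchanged on $[\tau_r,\tau_{r+1}]$, and the claim is immediate. Otherwise the continuous update runs; since every factor in $\dot x_i(\tau)=(A_r)_i(x_i(\tau)+\tfrac1d)/(\nabla_i\Psi(x(\tau))+\delta)$ is non-negative (using $\nabla_i\Psi\ge 0$ as in \Cref{lemma:coverUB}), all coordinates increase monotonically until $\ip{A_r}{x(\tau)}=1$. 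At that terminal time, for each $i$ with $(A_r)_i>0$ we have $(A_r)_i\,x_i(\tau_{r+1})\le\ip{A_r}{x(\tau_{r+1})}=1$, hence $x_i(\tau_{r+1})\le 1/(A_r)_i\le\rho$; coordinates with $(A_r)_i=0$ are unchanged and stay $\le\rho$ by the inductive hypothesis. Monotonicity within the round then gives $\|x(\tau)\|_\infty\le\rho$ for all $\tau\in[\tau_r,\tau_{r+1}]$, completing the induction.

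Given this cap, the rest is routine. Since $|S_\ell|\le d$ we get $\|x(\tau_{\final})|_{S_\ell}\|_1\le d\rho$, and by subadditivity of the norm $f_\ell$,
\[
f_\ell(x^\ell(\tau_{\final}))\;\le\;\sum_{i\in S_\ell}x_i(\tau_{\final})\,f_\ell(e_i)\;\le\;d\rho\cdot\max_{f_\ell}.
\]
Substituting into the displayed inequality of the first paragraph yields $2^{\,w-2}\le\big(d^3\rho\,(\max_{f_\ell}/\min_{f_\ell})^2\big)^{p-1}\le(d^3\rho\gamma^2)^{p-1}$, so $w\le 2+(p-1)\log_2(d^3\rho\gamma^2)=O(p\log d\rho{\gamma})$, which is the claim.

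I expect the only genuinely new ingredient to be the uniform bound $\|x(\tau)\|_\infty\le\rho$: this is where the structure of the update rule and the width parameter $\rho$ enter, and it is the one part that is not pure bookkeeping. The rest combines the observation that the number of phases equals (up to $\pm$ constants) the base-$2$ logarithm of the total multiplicative growth of $f_\ell(x^\ell)^{p-1}$, with the crude norm-to-coordinates estimate $f_\ell(x)\le\|x\|_1\max_i f_\ell(e_i)$. The one subtlety to handle carefully is the boundary behavior of the phase definition — the process may terminate inside phase zero, or $f_\ell(x^\ell)$ may never reach the phase-zero threshold — but in every such case $w$ is only smaller, so these corner cases are harmless.
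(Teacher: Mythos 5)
Your proof is correct and follows essentially the same route as the paper: bound the total multiplicative growth of $f_\ell(x^\ell(\tau))^{p-1}$ between the phase-zero threshold and the cap $(d\rho\max_{f_\ell})^{p-1}$ coming from $\|x(\tau)\|_\infty\le\rho$, and count doublings. The only difference is that you supply an explicit induction for the coordinate bound $\|x(\tau)\|_\infty\le\rho$, which the paper simply asserts from the definition of $\rho$; your argument for it is sound.
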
 
	
	\begin{proof} 
            We need to upper bound how large $f_\ell(x^\ell(\tau))^{p-1}$ can be. 
		Since the non-zero entries of the constraint vectors $A(\tau)$ are at least $1/\rho$, our solution $x(\tau)$ never raises a coordinate above $\rho$. Thus, the monotonicity of the norm $f_\ell$ gives $f_\ell(x^\ell(\tau)) \le f_\ell(\rho\, \ones_{S_\ell})$, where $\ones_{S_\ell}$ denotes the incidence vector of the coordinates $S_\ell$. Moreover, using triangle inequality 
        we have $f_\ell(\rho\, \ones_{S_\ell}) \le \rho \cdot \sum_i f_\ell(e_i) \le {d \rho \max_{f_\ell}}$; so $f_\ell(x^\ell(\tau))^{p-1} \le ({d \rho \max_{f_\ell}})^{p-1}$. 
		
		Since the first phase starts with $f_\ell(x^\ell(\tau))^{p-1} = {{\big(\frac{\min_{f_\ell}^2}{d^2 \cdot \max_{f_\ell}}\big)^{p-1}}}$ and the value doubles with each phase, the total number of phases is at most {$(p-1) \log_2 \big(d^3 \rho \,\frac{\max_{f_\ell}^2}{\min_{f_\ell}^2}\big) = O(p \log d \gamma)$, recalling that by definition $\gamma = \max_\ell \frac{\max_{f_\ell}}{\min_{f_\ell}}$}. This proves the lemma.  
	\end{proof}

    Recall from \Cref{lemma:psiStarForm} that $\Psi^{\star}(\bar{v}) = g^{\star}(f_{1,\star}(\bar{v}^1),\ldots,f_{k,\star}(\bar{v}^k))$ and for any $\alpha > 0$
    \begin{align*}
    \Psi^{\star}(\alpha \cdot \nabla \Psi(x_{\final})) = g^{\star}( f_{1,\star}(\alpha \nabla \Psi_{S_1}(x_{\final})),\ldots, f_{k,\star}(\alpha \nabla \Psi_{S_k}(x_{\final} )));
    \end{align*}
    the following bound on the inner norms is then the core for proving \Cref{lemma:dual}.

    \begin{lemma} \label{lemma:dualMain}
    We have
	\begin{align*}
	f_{\ell \star}\big(\bar{v}^\ell\big)  \,\le\, \beta \cdot O(p \log^2 {d\rho \gamma}) \cdot \Big(f_{\ell \star} \big(\nabla_{S_\ell} \Psi(x_{\final})\big) + \delta \cdot f_{\ell \star}(\ones)\Big).
	\end{align*}
    \end{lemma}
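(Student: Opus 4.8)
The idea is to cut the dual vector $\bar{v}^\ell = \beta \int_0^{\tau_{\final}} A^\ell(\tau)\,\d\tau$ along the phases $[0,t_1], [t_1,t_2], \ldots, [t_{w-1},t_w]$, bound the contribution of each non-zero phase coordinate-wise in terms of $\nabla_{S_\ell}\Psi$ evaluated at the right endpoint of the phase, handle the phase-zero contribution separately, and then collapse the sum over phases using that the dual norm of the scaled gradient is monotone. The workhorse is the identity read off from the algorithm's update rule, $ (A(\tau))_i = \frac{\dot{x}_i(\tau)\,(\nabla_i \Psi(x(\tau)) + \delta)}{x_i(\tau) + \frac1d}$, together with the fact that each coordinate of $x(\tau)$ stays in $[0,\rho]$ (as in the proof of \Cref{lemma:numberPhases}).

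\textbf{Non-zero phases.} Fix $j \in \{1,\ldots,w-1\}$ and $\tau \in [t_j,t_{j+1}]$. By \Cref{lemma:monoPsi}, $\nabla_i\Psi(x(\tau)) + \delta \le 2\,(\nabla_i\Psi(x(t_{j+1})) + \delta)$, so using the substitution $\d\log(x_i(\tau)+\tfrac1d) = \tfrac{\dot{x}_i(\tau)}{x_i(\tau)+\frac1d}\d\tau$,
\begin{align*}
\int_{t_j}^{t_{j+1}} (A(\tau))_i\,\d\tau \;&\le\; 2\,\big(\nabla_i\Psi(x(t_{j+1}))+\delta\big)\int_{t_j}^{t_{j+1}}\frac{\dot{x}_i(\tau)}{x_i(\tau)+\frac1d}\,\d\tau\\
&=\; 2\,\big(\nabla_i\Psi(x(t_{j+1}))+\delta\big)\,\log\frac{x_i(t_{j+1})+\frac1d}{x_i(t_j)+\frac1d}\;\le\; O(\log d\rho)\cdot\big(\nabla_i\Psi(x(t_{j+1}))+\delta\big),
\end{align*}
since $0 \le x_i \le \rho$. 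Hence the phase-$j$ contribution to $\bar{v}^\ell$ is at most $\beta\cdot O(\log d\rho)\,\big(\nabla_{S_\ell}\Psi(x(t_{j+1})) + \delta\ones\big)$ coordinate-wise.

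\textbf{Phase zero.} Here \Cref{lemma:monoPsi} does not apply (the value $f_\ell(x^\ell(\tau))^{p-1}$ can be arbitrarily small near $\tau = 0$), so we argue directly. The threshold defining $t_1$ forces $x_i(\tau) \le \frac{f_\ell(x^\ell(\tau))}{f_\ell(e_i)} \le \frac{1}{d^2}$ for all $i \in S_\ell$ and $\tau \le t_1$, so $\int_0^{t_1}\frac{\dot{x}_i(\tau)}{x_i(\tau)+\frac1d}\d\tau = \log(1 + d\,x_i(t_1)) \le \frac1d$. Combining with the crude bound $\nabla_i\Psi(x(\tau)) = \nabla_\ell g(F(x(\tau)))\cdot\nabla_i f_\ell(x^\ell(\tau)) \le \nabla_\ell g(F(x_{\final}))\cdot f_\ell(e_i)$, which uses that $\nabla_\ell g$ is monotone (by $p'$-\monotonicity of $\|\cdot\|$, as in the proof of \Cref{lemma:monoPsi}) and that $\nabla_i f_\ell(\cdot) \le f_{\ell,\star}(\nabla f_\ell(\cdot))\cdot f_\ell(e_i) = f_\ell(e_i)$, the phase-zero contribution is at most $\frac{\beta}{d}\big(\nabla_\ell g(F(x_{\final}))\,f_\ell(e_i) + \delta\big)$ in coordinate $i$. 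Applying $f_{\ell,\star}$ (which is subadditive and monotone) and using $f_{\ell,\star}\big((f_\ell(e_i))_{i\in S_\ell}\big) \le |S_\ell| \le d$ together with $f_{\ell,\star}(\nabla_{S_\ell}\Psi(x_{\final})) = \nabla_\ell g(F(x_{\final}))$ (since $f_{\ell,\star}(\nabla f_\ell(\cdot))=1$, as in \eqref{eq:fstarGradg}), the phase-zero term is at most $\beta\big(f_{\ell,\star}(\nabla_{S_\ell}\Psi(x_{\final})) + \delta\, f_{\ell,\star}(\ones)\big)$, which is absorbed into the target.

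\textbf{Collapsing the sum.} Applying $f_{\ell,\star}$ to $\bar{v}^\ell$ and adding up, $f_{\ell,\star}(\bar{v}^\ell)$ is at most the phase-zero term plus $\beta\cdot O(\log d\rho)\sum_{j=1}^{w-1}\big(f_{\ell,\star}(\nabla_{S_\ell}\Psi(x(t_{j+1}))) + \delta f_{\ell,\star}(\ones)\big)$. The point that makes this telescope (rather than blow up geometrically in $w$) is that $f_{\ell,\star}(\nabla_{S_\ell}\Psi(y)) = \nabla_\ell g(F(y))\cdot f_{\ell,\star}(\nabla f_\ell(y|_{S_\ell})) = \nabla_\ell g(F(y))$ is a \emph{monotone} function of $y$, so $f_{\ell,\star}(\nabla_{S_\ell}\Psi(x(t_{j+1}))) \le f_{\ell,\star}(\nabla_{S_\ell}\Psi(x_{\final}))$ for every phase. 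Since \Cref{lemma:numberPhases} bounds $w = O(p\log d\rho\gamma)$, the sum has $O(p\log d\rho\gamma)$ identical-looking terms, and we conclude $f_{\ell,\star}(\bar{v}^\ell) \le \beta\cdot O(p\log^2 d\rho\gamma)\,\big(f_{\ell,\star}(\nabla_{S_\ell}\Psi(x_{\final})) + \delta\, f_{\ell,\star}(\ones)\big)$, proving the lemma. The main obstacle is the phase-zero estimate---this is where the a priori mysterious threshold $\frac{\min_{f_\ell}^2}{d^2\max_{f_\ell}}$ is used---together with the observation that $f_{\ell,\star}(\nabla_{S_\ell}\Psi(\cdot))$ is monotone, which is exactly what lets the per-phase bounds be summed rather than compounded.
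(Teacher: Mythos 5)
Your proof is correct and follows essentially the same route as the paper's: split $\bar{v}^\ell$ along the phases, use \Cref{lemma:monoPsi} plus the logarithmic integral $\int \frac{\dot{x}_i}{x_i+1/d}\,\d\tau = O(\log d\rho)$ on each non-zero phase, exploit the monotonicity of $f_{\ell\star}(\nabla_{S_\ell}\Psi(x(\cdot))) = (\nabla_\ell g)(F(x(\cdot)))$ to replace each phase endpoint by $x_{\final}$, and multiply by the \Cref{lemma:numberPhases} phase count. The only (cosmetic) difference is in phase zero: you bound the integral by $1/d$ via $x_i(t_1)\le 1/d^2$ and then use $f_{\ell\star}\bigl((f_\ell(e_i))_i\bigr)\le d$, whereas the paper keeps the factor $f_{\ell\star}(x^\ell(t_1))$ and invokes the comparison $f_{\ell\star}\le \frac{d}{\min_{f_\ell}^2}f_\ell$ — both arguments consume the same threshold $\frac{\min_{f_\ell}^2}{d^2\max_{f_\ell}}$ in equivalent ways.
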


    \begin{proof}
        Recall $\bar{v}^\ell = \beta \cdot \int_0^{\tau_{\final}} A^\ell(\tau)\, \d \tau$. We upper bound the quantity $f_{\ell \star}\big(\int_{t_j}^{t_{j+1}} A^\ell(\tau)\,\d \tau\big)$ for each phase $j$ and then put them together to obtain the result. For that, recall that by definition of our algorithm, the continuous updates the solution $x(\tau)$ satisfies
	\begin{align}
	A_i(\tau) = \frac{\dot{x}_i(\tau)}{x_i(\tau) + \frac{1}{d}} \cdot (\nabla_i \Psi(x(\tau)) + \delta), ~~~~~~\forall i.  \label{eq:coverUpdate}
	\end{align}
 

	\paragraph{Phase zero.}
 We have for all $i \in S_\ell$ and all $\tau \in [0, t_1]$
 \begin{align*}
     \nabla_i \Psi(x(\tau)) \, = \, (\nabla_\ell g)(F(x(\tau))) \cdot \nabla_i f_\ell(x^\ell(\tau)) \, \leq \, (\nabla_\ell g)(F(x(t_1))) {\cdot {\textstyle \max_{f_\ell}}}, 
 \end{align*}
 	where we use that {$\|\nabla f_\ell(y)\|_{\infty} = \max_{i \in S_\ell} \ip{e_i}{\nabla f_\ell(y)} \leq \max_{z \geq 0, f_\ell(z) \le \max_{f_\ell}} \ip{z}{\nabla f_\ell(y)} = f_{\ell, \star}(\nabla f_\ell(y)) \cdot \max_{f_\ell} = \max_{f_\ell}$ because $f_\ell(e_i) \le \max_{f_\ell}$ for all $i \in S_\ell$} and, for any norm, the dual norm of any of its gradients is always $1$. 
 
Therefore, integrating \Cref{eq:coverUpdate} from time $0$ to time $t_1$, we get for every $i \in S_\ell$
	\begin{align*}
		\int_0^{t_1} A_i(\tau) \,\d \tau &\le \int_0^{t_1} \bigg(d\cdot  \dot{x}_i(\tau) \cdot (\nabla_i \Psi(x(\tau)) + \delta) \bigg) \d\tau\\
  		&\le d \cdot \Big((\nabla_\ell g)(F(x(t_1)))  {\cdot {\textstyle \max_{f_\ell}}} + \delta\Big) \cdot \int_0^{t_1} \dot{x}_i(\tau) \d\tau \\
		&= d \cdot \Big((\nabla_\ell g)(F(x(t_1)))  {\cdot {\textstyle \max_{f_\ell}}} + \delta\Big) \cdot x_i(t_1).
	\end{align*}

 Collecting all coordinates $i \in S_{\ell}$ and applying the dual norm $f_{\ell \star}$ on both sides gives
	\begin{align}
	f_{\ell \star}\bigg(\int_0^{t_1} A^\ell(\tau) \bigg) &\le d \cdot \Big((\nabla_\ell g)(F(x(t_1))) {\cdot {\textstyle \max_{f_\ell}}}+  \delta\Big) \cdot f_{\ell\star}(x^\ell(t_1)). \label{eq:coverZero}
    \end{align}
    We now need the following estimate relating $f_{\ell \star}$ to $f_{\ell}$, which uses the fact that $f_\ell$ is monotone.  

    \begin{claim}
        For every $y \ge 0$, we have $f_{\ell \star}(y|_{S_\ell}) \le \frac{d}{\min_{f_\ell}^2}\, f_\ell(y|_{S_\ell})$. 
    \end{claim}

    \begin{proof}
        First, for any vector $x \ge 0$, by triangle inequality we have the following upper bound on $f_\ell(x)$: $f_\ell(x) \le \sum_i x_i \,f_\ell(e_i) \le \|x\|_1 \cdot \max_{f_\ell}$. We also have the lower bound $f_\ell(x) \ge \|x\|_{\infty} \cdot \min_{f_\ell}$: To see this, let $x_{i'}$ be the largest coordinate of $x$; then by monotonicity of $f_\ell$, we have $f_\ell(x) \ge f_\ell(x_{i'}) = \|x\|_{\infty} \cdot f_\ell(e_{i'}) \ge \|x\|_{\infty} \cdot\min_{f_\ell}$. Finally, by duality, the latter lower bound implies $f_{\ell\star}(x) \le \frac{1}{\min_{f_\ell}} \|x\|_1$: 
        \begin{align*}
        f_{\ell \star}(x) = \max_{z : f_\ell(z) \le 1} \ip{z}{x} \le \max_{z : \|z\|_\infty \cdot \min_{f_\ell} \le 1} \ip{z}{x} = \max_{z : \|z\|_\infty \le 1} \ip{\tfrac{z}{\min_{f_\ell}}}{x} = \frac{1}{ \min_{f_\ell}} \|x\|_1,
        \end{align*}
        as desired. 

        Since the vector $y|_{S_\ell}$ has at most $d$ non-zero coordinates, it satisfies $\|y|_{S_\ell}\|_1 \le d \cdot \|y|_{S_\ell}\|_\infty$. Combining this with the above upper bound on $f_{\ell \star}$ and lower bound on $f_\ell$, we get $f_{\ell \star}(y|_{S_\ell}) \le \frac{d}{\min_{f_\ell}^2} f_{\ell}(y|_{S_\ell})$ as desired. 
    \end{proof}

    Taking \eqref{eq:coverZero} then employing the above claim and then the definition of the time $t_1$ of the first phase, we obtain
    \begin{align*}
	f_{\ell \star}\bigg(\int_0^{t_1} A^\ell(\tau) \bigg) 	&\le \frac{d^2}{\min_{f_\ell}^2} \cdot \Big((\nabla_\ell g)(F(x(t_1))) {\cdot {\textstyle \max_{f_\ell}}} + \delta \Big)\cdot f_\ell(x^\ell(t_1)) \\
	&\le (\nabla_\ell g)(F(x(t_1))) + \frac{\delta}{\max_{f_\ell}}\\
    &\le (\nabla_\ell g)(F(x_{\final})) + \frac{\delta}{\max_{f_\ell}}, 
    \end{align*}
    
    where the last inequality follows from the monotonicity of the gradient $\nabla g(\cdot) = \nabla \frac{1}{p'} \|\cdot\|^{p'}$. 
	

	\paragraph{Phase $j$.}  We now move on to upper bounding the integral $\int A_i(\tau) \, \d \tau$ for each phase $j > 0$. Integrating \eqref{eq:coverUpdate} between $t_j$ and $t_{j+1}$ and using the approximate monotonicity of $\nabla \Psi$ within a phase (\Cref{lemma:monoPsi}), we get
	\begin{align*}
		\int_{t_j}^{t_{j+1}} A_i(\tau) ~\le~ \Big(2 \, \nabla_i \Psi(x(t_{j+1})) + \delta\Big) \cdot \int_{t_j}^{t_{j+1}} \frac{\dot{x}_i(\tau)}{x_i(\tau) + \frac{1}{d}} \,\d\tau\,.
	\end{align*}
	Computing the last integral with the change of variables $y = x_i(\tau)$ (so $\frac{\d y}{\d \tau} = \dot{x}_i(\tau)$):
	\begin{align*}
		\int_{t_j}^{t_{j+1}} \frac{\dot{x}_i(\tau)}{x_i(\tau) + \frac{1}{d}} \,\d\tau ~=~ \int_{x_i(t_j)}^{x_i(t_{j+1})} \frac{1}{y + \frac{1}{d}} \,\d y &~\le~ \int_0^{1/d} \frac{1}{\frac{1}{d}} \,\d y + \int_{1/d}^{\max\{1/d, x_i(t_{j+1})\}} \frac{1}{y} \,\d y \\
		&~=~ 1 + \max\{0, \ln( d\cdot x_i(t_{j+1}))\};
	\end{align*}
    again since all coordinates of $x(\tau)$ are at most $\rho$, this integral is at most $O(\log d \rho)$.  Thus, collecting all coordinates $i \in S_\ell$ and applying the dual norm $f_{\ell \star}$, we obtain
	\begin{align*}
		f_{\ell \star}\bigg(\int_{t_j}^{t_{j+1}} A^\ell_i(\tau) \,\d \tau\bigg) ~&\le~ O(\log d\rho) \cdot f_{\ell \star} \Big(2 \, \nabla_{S_\ell} \Psi(x(t_{j+1})) + \delta \ones\Big) \\
  &\le~ O(\log d\rho) \cdot f_{\ell \star}\big(\nabla_{S_\ell} \Psi(x(t_{j+1}))\big) + O(\log d\rho) \cdot \delta \cdot f_{\ell \star}(\ones)\, .
	\end{align*}
    Using the fact $f_{\ell \star}\big(\nabla_{S_\ell} \Psi(x(t_{j+1}))\big) =  (\nabla_\ell g)(F(x(t_{j+1})))$ (from \eqref{eq:fstarGradg}), and then the fact the $\nabla g$ is non-decreasing gives the final bound
    \begin{align*}
  f_{\ell \star}\bigg(\int_{t_j}^{t_{j+1}} A^\ell_i(\tau) \,\d \tau\bigg) ~\le~ O(\log d\rho) \cdot (\nabla_\ell g)(F(x_{\final})) + O(\log d\rho) \cdot \delta \cdot f_{\ell \star}(\ones).
    \end{align*}

	
	\paragraph{Adding over all phases.} Using triangle inequality on $f_{\ell \star}$ and adding the previous bounds over all the  $O(p \log {d \rho \gamma})$ phases (from \Cref{lemma:numberPhases}), we get
	\begin{align*}
		f_{\ell \star}\bigg(\int_0^{\tau_{\final}} A^\ell(\tau) \, \d \tau \bigg) ~&\le~ f_{\ell \star}\bigg(\int_0^{t_1} A^\ell(\tau) \, \d \tau \bigg) + \sum_{j = 1}^w f_{\ell \star}\bigg(\int_{t_j}^{t_{j+1}} A^\ell(\tau) \, \d \tau \bigg)  \\
  &\le~ O(p \log^2 {d\rho \gamma}))  \cdot (\nabla_\ell g)(F(x_{\final})) + \delta \cdot \bigg({\frac{1}{\max_{f_\ell}}} +  O(p \log^2 {d\rho \gamma}) \cdot f_{\ell \star}(\ones) \bigg).
	\end{align*}
    To clean up the last term, let $i' \in S_\ell$ be the coordinate achieving $f_\ell(e_{i'}) = \max_{f_\ell}$, so $f_\ell(\frac{e_{i'}}{\max_{f_\ell}}) = 1$. Then using the monotonicity of $f_{\ell \star}$, we have $$f_{\ell \star}(\ones) \ge f_{\ell \star}(e_{i'}) = \max_{z : f_\ell(z) \le 1} \ip{z}{e_{i'}} \ge \ip{\tfrac{e_{i'}}{\max_{f_\ell}}}{e_{i'}} = \frac{1}{\max_{f_\ell}}.$$ Thus, we obtain the cleaner expression
	\begin{align*}
		f_{\ell \star}\bigg(\int_0^{\tau_{\final}} A^\ell(\tau) \, \d \tau \bigg) ~&\le~ O(p \log^2 {d\rho \gamma}))  \cdot (\nabla_\ell g)(F(x_{\final})) + \delta \cdot O(p \log^2 {d\rho \gamma}) \cdot f_{\ell \star}(\ones).
	\end{align*}

    Using again $(\nabla_\ell g)(F(x_{\final})) = f_{\ell,\star}(\nabla_{S_\ell} \Psi(x_{\final}))$ (from \eqref{eq:fstarGradg}) and multiplying both sides by $\beta$ concludes the proof of \Cref{lemma:dualMain}. 	
	\end{proof}

    \Cref{lemma:dual} now follows by just tidying things up. 

    \begin{proof}[Proof of \Cref{lemma:dual}]
        Let $C_\ell := f_{\ell \star}\big(O(\beta p \log^2 {d\rho \gamma}) \cdot \nabla_{S_\ell} \Psi(x_{\final})\big)$ and $D_\ell := f_{\ell \star}(O(\beta p \log^2 {d\rho \gamma})) \cdot \delta\ones)$ be the terms in the right-hand side of the previous lemma, and $C, D$ be their respective vectors. Then recalling the formula for $\Psi^{\star}$ from \Cref{lemma:psiStarForm} and noticing that $g^{\star}$ is non-decreasing (seen from $g^{\star}(z) = \frac{1}{q'} \|z\|^{q'}_{\star}$), we have $\Psi^{\star}(\bar{v}) = g^{\star}\Big(f_{1,\star}(\bar{v}^1),\ldots,f_{k,\star}(\bar{v}^k) \Big) \le g^{\star}(C + D)$. This last term is at most $4 (g^{\star}(C) + g^{\star}(D))$, as we can see using the formula for $g^{\star}$ as
    	\begin{align*}
         g^{\star}(C + D) = \frac{1}{q'} \|C+D\|_{\star}^{q'} \le \frac{1}{q'} \Big(\|C\|_{\star} + \|D\|_{\star}  \Big)^{q'} &\le 2^{q'} \frac{1}{q'} \Big(\max\Big\{\|C\|_{\star}\,,\, \|D\|_{\star}\Big\} \Big)^{q'}\\
         &\le 4 (g^{\star}(C) + g^{\star}(D)),
    	\end{align*}         
     where the last inequality uses the fact $q' \le 2$, which is implied by the assumption $p' \ge 2$. Again by \Cref{lemma:psiStarForm} we have $g^{\star}(C) = \Psi^{\star}(O(\beta p \log^2 {d\rho \gamma}) \cdot \nabla \Psi(x_{\final}))$ and $g^{\star}(D) = \Psi^{\star}(O(\beta p \log^2 {d\rho \gamma}) \cdot \delta \ones)$, which finally proves \Cref{lemma:dual}.
    \end{proof}

\section{Applications to Packing Problems}

\mnote{Say something about computational complexity, and how $P$ is presented?}

We now consider a general online packing problem (\pack). In the offline version of this problem, there are $\numberofsteps$ items, each with a positive value $c_t > 0$ and a multidimensional size $(a_{1,t}, a_{2,t}, \ldots, a_{\numberofdimensions,t}) \in \R^{\numberofdimensions}_{\ge 0}$. There is a downward closed feasible set $P \subseteq \R^{\numberofdimensions}_{\ge 0}$ (i.e., for any two vectors $0 \le y \le x$, if $x$ belongs to $P$, then so does $y$). The goal is to fractionally select items that give maximum value and packing into $P$, namely
    \[
        \max\, \ip{c}{x} \qquad \textrm{s.t.}\quad  Ax \in P \text{ and } x \ge 0.
    \]
In the online version of the problem, the packing set $P$ is given upfront but the $\numberofsteps$ items arrive online one-by-one. When the $t$-th item arrives, its value $c_t$ and size vector $(a_{1,t}, a_{2,t}, \ldots, a_{\numberofdimensions,t})$ is revealed, and the algorithm needs to immediately and irrevocably set $x_t \geq 0$. The final vector $x$ has to fulfill $Ax \in P$. As always, we use $\OPT$ to denote the optimum value of the problem. 
    
Note that by taking $P = \{x \in \R^{\numberofdimensions}_{\ge 0} : x \le b\}$ for some vector $b$, the packing constraints become $Ax \le b$, and the problem becomes the classical one of online packing LPs~\cite{BN-MOR09}. 

Each downward-closed set $P \subseteq \R^{\numberofdimensions}_{\ge 0}$ has an associated  (semi-) norm $\|\cdot\|_P$ via the Minkowski functional, namely for every $x \ge 0$, $\|x\|_P := \inf_{\alpha > 0} \{ \alpha : \frac{x}{\alpha} \in P\}$ (page 53 of~\cite{schneider}). Since $P$ is the unit-ball of this norm, the packing constraint is equivalent to $\|Ax\|_P \le 1$, and \pack can be restated as
    \[
        \max\, \ip{c}{x} \qquad \textrm{s.t.}\quad  \|Ax\|_P \le 1 \text{ and } x \ge 0.
    \]
We give an online algorithm when $\|\cdot\|_P$ can be approximated by a $p$-\monotone norm.

\thmPacking*

For the remainder of this section we prove this result, starting from the case where the norm $\|\cdot\|_P$ itself is $p$-\monotone, i.e., $\alpha = 1$. We assume throughout that the instance is feasible, and it has bounded optimum, or equivalently, that for every non-negative direction $v \in \R^{\numberofdimensions}_{\ge 0} \setminus \{0\}$ we have $\|A v\|_P > 0$ (else $\gamma v$ would satisfy the packing constraint $\|Ax\|_P \le 1$ for all $\gamma \ge 0$, and give unbounded value as $\gamma \rightarrow \infty$). We also assume without loss of generality that $p \ge 2$, recalling that $p$-\monotonicity implies $p'$-\monotonicity for all $p'\ge p$.



\subsection{Starting point: $\|\cdot\|_P$ is already $p$-\monotone}

\paragraph{Algorithm under a $\beta$-approximation of $\OPT$.} Without loss of generality, we can assume that all item values $c_t$ are equal to 1, by replacing the variables by $y_t := c_t x_t$ otherwise. 
The starting point is the algorithm of Azar et al.~\cite{AzarBCCCG0KNNP16} for the related problem of online welfare maximization with convex costs: A convex cost function $\Psi$ is given upfront. As before, items come online, and when the $t$-th item arrives its size vector $(a_{1,t}, a_{2,t}, \ldots, a_{\numberofdimensions,t})$ is revealed, and the algorithm needs to set the variable $x_t$ irrevocably. Now the goal is to maximize the profit $\sum_t x_t - \Psi(Ax)$. 

Azar et al. \cite[Lemma 13]{AzarBCCCG0KNNP16} gives a $O(\frac{p \lambda}{\lambda - 1})$-competitive algorithm for this problem under the following assumptions:

    \begin{enumerate}[itemsep=0pt]
        \item $\Psi$ is non-decreasing with $\Psi(0) = 0$.
        
        \item $\Psi$ is differentiable everywhere except at $0$ and has non-decreasing gradients. Moreover, it satisfies the growth condition $\ip{\nabla \Psi(x)}{x} \le p \cdot \Psi(x)$ for all $x \in \R^{\numberofdimensions}_{\ge 0}$. 

        \item For every $\gamma \ge 1$ and $x \in \R^{\numberofdimensions}_{\ge 0}$ we have $\nabla \Psi(\gamma x) \ge \gamma^{\lambda-1} \cdot \nabla \Psi(x).$
        
        \item The optimal value of the instance is bounded, i.e., not $\infty$. 
    \end{enumerate}

    The idea for solving \pack to use the estimate $\widetilde{\OPT}$ to define a Lagragian relaxation $\sum_t x_t - \Psi(Ax)$ for a function $\Psi$ satisfying the requirements above, then apply the algorithm from~\cite{AzarBCCCG0KNNP16}. However, instead of using the estimate $\widetilde{\OPT}$ directly, it will pay off to actually randomly guess a better estimate within a factor of $\delta \in [1,\beta]$. Set $\delta := e^{p-1}$ if $p-1 \le \log \beta$, and $\delta := \beta$ otherwise.
    
\vspace{4pt}
\begin{mdframed}
    \vspace{-9pt}
    \begin{proc} \label{proc:pack}
     \normalfont
     \textbf{Online Packing\,($\widetilde{\OPT}$)} 

    \vspace{4pt}
    \noindent 1. Select $I$ uniformly randomly among the powers of $\delta$ $\{\delta, \delta^2,\ldots, \delta^{\lceil\log_{\delta} \beta\rceil}\}$. Define $\Psi(\cdot ) := \frac{I \cdot \widetilde{\OPT}}{\beta} \|\cdot\|_P^p$.

    \vspace{4pt}
    \noindent 2. In an online fashion, run the algorithm from Theorem 2 of~\cite{AzarBCCCG0KNNP16} on the problem $\sum_t x_t - \Psi(Ax)$, which computes a solution $\tilde{x}$. Play this solution until the packing constraints $Ax \in P$ are going to be violated, in which case play $x_t = 0$ from then on. Let $\bar{x}$ be the solution played
\end{proc}
\end{mdframed}

    First notice that by construction the solution $\bar{x}$ played by the algorithm is feasible. It remains to show that it is in expectation $O(\max\{p, \log \beta\})$-competitive for \pack. We first show that the result of Azar et al.~\cite{AzarBCCCG0KNNP16} can indeed be applied to our problem, and so $\tilde{x}$ has the desired guarantees. 
    
    \begin{lemma} \label{lemma:packAss}  
        For every scenario of $I$, $\tilde{x}$ is $O(p)$-competitive for the problem of maximizing $\sum_t x_t - \Psi(Ax)$. 
    \end{lemma}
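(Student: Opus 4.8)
The plan is to verify that the function $\Psi(\cdot) = \frac{I\cdot\widetilde{\OPT}}{\beta}\,\|\cdot\|_P^p$ defined in \Cref{proc:pack} satisfies the four hypotheses of the welfare-maximization result of Azar et al.~\cite[Lemma 13]{AzarBCCCG0KNNP16}, to identify the exponent $\lambda$ appearing in their guarantee, and then to read off that the resulting competitive ratio $O\big(\tfrac{p\lambda}{\lambda-1}\big)$ is $O(p)$. Since the bound will not depend on the realization of $I$, it then holds for every scenario of $I$, which is exactly the statement of the lemma.

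First I would dispatch the two easy conditions. Writing $c := \frac{I\cdot\widetilde{\OPT}}{\beta} > 0$, the fact that $\|\cdot\|_P$ is a monotone norm makes $x\mapsto c\,\|x\|_P^p$ non-decreasing on $\R^{\numberofdimensions}_{\ge 0}$ with value $0$ at the origin, giving hypothesis~(1). For hypothesis~(4), boundedness of $\max_{x\ge 0}\big(\sum_t x_t - \Psi(Ax)\big)$, I would use the standing assumption that $\|Av\|_P>0$ for every nonzero $v\ge 0$: by compactness of $\{v\ge 0 : \|v\|_1=1\}$ the quantity $\mu := \min_{\|v\|_1=1,\,v\ge 0}\|Av\|_P$ is strictly positive, so $\sum_t x_t - \Psi(Ax) \le \|x\|_1 - c\,\mu^p\,\|x\|_1^p$, which is bounded above over $x\ge 0$ because $p\ge 2>1$.

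The remaining conditions are where $p$-\monotonicity is used. Hypothesis~(2) requires $\Psi$ to be differentiable away from $0$ with non-decreasing gradients and to satisfy $\ip{\nabla\Psi(x)}{x}\le p\,\Psi(x)$. Monotonicity of $\nabla\Psi = c\,\nabla(\|\cdot\|_P^p)$ over the non-negative orthant is precisely the Gradient property of $p$-\monotone norms from \Cref{lemma:equivalentPmono}, and differentiability away from $0$ is part of our assumption on $\|\cdot\|_P$ (or can be arranged by the smoothing in \Cref{lemma:diff}). For the growth bound, since $\|\cdot\|_P$ is $1$-homogeneous, Euler's identity gives $\ip{\nabla\|x\|_P}{x}=\|x\|_P$, hence $\ip{\nabla\Psi(x)}{x} = c\,p\,\|x\|_P^{p-1}\ip{\nabla\|x\|_P}{x} = c\,p\,\|x\|_P^p = p\,\Psi(x)$, so (2) holds with equality. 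For hypothesis~(3), I would observe that $\|\cdot\|_P^p$ is $p$-homogeneous, so its gradient is $(p-1)$-homogeneous: $\nabla\Psi(\gamma x)=\gamma^{p-1}\nabla\Psi(x)$ for all $\gamma>0$. Thus (3) holds with $\lambda = p$, in fact with equality, so in particular the inequality direction needed for $\gamma\ge 1$.

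Having matched all four hypotheses with $\lambda=p$, the cited result produces a solution $\tilde{x}$ that is $O\big(\tfrac{p\lambda}{\lambda-1}\big) = O\big(\tfrac{p^2}{p-1}\big) = O(p)$-competitive for $\max\big(\sum_t x_t - \Psi(Ax)\big)$, using $p\ge 2$; as nothing in this chain depended on the particular power of $\delta$ chosen for $I$, the bound holds in every scenario, proving \Cref{lemma:packAss}. I do not expect a real obstacle here: the only points that need a little care are the compactness/boundedness argument for hypothesis~(4) and invoking the homogeneity facts (Euler's identity and $(p-1)$-homogeneity of the gradient) correctly so that $\lambda$ comes out exactly $p$ — a looser value of $\lambda$ would be enough for $O(p)$ anyway as long as $\lambda$ is bounded away from $1$, but matching $\lambda=p$ keeps the constants clean.
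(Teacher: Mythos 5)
Your proposal is correct and follows essentially the same route as the paper's proof: verify the four hypotheses of the Azar et al.\ lemma for $\Psi = \frac{I\cdot\widetilde{\OPT}}{\beta}\|\cdot\|_P^p$, using the Gradient property of $p$-\monotonicity for monotone gradients, Euler's identity for the growth condition, scaling invariance of the norm's gradient for the homogeneity condition with $\lambda=p$, and bounded optimum from $\|Av\|_P>0$. Your compactness argument for hypothesis~(4) is in fact a slightly tighter version of the paper's ray-by-ray argument (it supplies the uniformity over directions that the paper leaves implicit), but the substance is identical.
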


    \begin{proof}
     We show that the problem $\sum_t x_t - \Psi(Ax)$ satisfies the assumptions 1-4 above for the guarantees Azar et al.~\cite{AzarBCCCG0KNNP16} to hold. 
    
    Item 1 follows from the fact since $P$ is a packing set, the norm $\|\cdot\|_P$ is monotone, and so it $\Psi$. For Item 2, since $\|\cdot\|_P$ was assumed to be differentiable and $p$-\monotone, $\Psi(x) = \frac{I \cdot \widetilde{\OPT}}{\beta}\, \|\cdot\|_P^p$ has non-decreasing gradients. For the growth condition in this item, we observe that $\nabla \Psi(x) = \frac{I \cdot \widetilde{\OPT}}{\beta}\, p \|x\|^{p-1}_P \cdot \nabla \|x\|_P$, so we get $\ip{\nabla \Psi(x)}{x} = \frac{I \cdot \widetilde{\OPT}}{\beta}\, p \|x\|^{p-1}_P \cdot \ip{\nabla \|x\|_P}{x} = \frac{I \cdot \widetilde{\OPT}}{\beta}\, p \|x\|^{p-1}_P \cdot \|x\|_P = p \cdot \Psi(x)$, where the next-to-last equation uses the fact that for every norm $\ip{\nabla 
\|x\|}{x} = \|x\|$ (\Cref{lemma:gradNorms}). For Item 3, recall that the gradient of any norm is invariant to positively scaling the argument (also \Cref{lemma:gradNorms}); thus, $\nabla \Psi(\gamma x) = \frac{I \cdot \widetilde{\OPT}}{\beta}\, p \|\gamma x\|^{p-1}_P \cdot \nabla \|\gamma x\|_P = \frac{I \cdot \widetilde{\OPT}}{\beta}\, \gamma^{p-1} p \|x\|^{p-1}_P \cdot \nabla \|x\|_P = \gamma^{p-1} \cdot \nabla \Psi(x)$ for all $\gamma \geq 1$. Finally, for Item 4, for every non-negative direction $v \in \R^{\numberofdimensions}_{\ge 0} \setminus \{0\}$ and $\gamma \ge 0$, we have $\sum_t (\gamma v_t) - \Psi(A (\gamma v)) = \gamma \sum_t v_t - \gamma^p \Psi(A v)$. Our assumption that the \pack instance has bounded optimum implies that the last term grows as $\Omega(\gamma^p)$, and since we assumed $p > 1$, the whole expression goes to $-\infty$ as $\gamma \rightarrow \infty$, and so the problem of maximizing $\sum_t x_t - \Psi(Ax)$ has bounded optimum. 
    
    Consequently,  the guarantee of $O(\frac{p^2}{p-1}) = O(p)$-competitiveness (the equation using the assumption that $p \ge 2$) from~\cite{AzarBCCCG0KNNP16} holds for the computed solution $\tilde{x}$, proving the lemma. 
    \end{proof}

    We say that the random guess $I$ is \emph{good} if the adjusted guess $\frac{I \cdot \widetilde{\OPT}}{\beta}$ of $\OPT$ is in the interval $[\OPT, \delta \cdot \OPT]$, or equivalently $I \in [\frac{\beta \OPT}{\widetilde{\OPT}}, \delta \cdot \frac{\beta \OPT}{\widetilde{\OPT}}]$. By the guarantees of $\widetilde{\OPT}$, this is an interval of multiplicative width $\delta$ within the interval $[1, \delta \beta]$, so one of the possibilities of $I$ lie in this interval; thus, $I$ is good with probability $\frac{1}{\lceil \log_\delta \beta \rceil}$. We show that whenever $I$ if good, then the algorithm did not have to stop playing $\tilde{x}$, so $\bar{x} = \tilde{x}$.

    \begin{lemma} \label{lemma:packFeas}
        Whenever $I$ is good, $\bar{x} = \tilde{x}$.
    \end{lemma}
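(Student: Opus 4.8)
The plan is to show that when the random guess $I$ is good, the solution $\tilde{x}$ returned by the subroutine of \cite{AzarBCCCG0KNNP16} is \emph{itself} feasible for \pack, i.e. $\|A\tilde{x}\|_P \le 1$. Once this is established, Step 2 of \Cref{proc:pack} never has to truncate (the packing constraints are never about to be violated), so $\bar{x} = \tilde{x}$, which is exactly the claim.

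First I would extract a non-negative-profit property of $\tilde{x}$ for the relaxed objective $\max_x\big(\sum_t x_t - \Psi(Ax)\big)$. The all-zero solution attains profit $0$ here, so the optimal profit of this problem is at least $0$; by \Cref{lemma:packAss}, $\tilde{x}$ is $O(p)$-competitive for it, hence $\sum_t \tilde{x}_t - \Psi(A\tilde{x}) \ge 0$, that is, $\Psi(A\tilde{x}) \le \sum_t \tilde{x}_t$. Next I would upper bound $\sum_t \tilde{x}_t$ by a homogeneity/scaling argument against $\OPT$. If $\|A\tilde{x}\|_P = 0$ there is nothing to prove, so assume $\lambda := \|A\tilde{x}\|_P > 0$; then $\tilde{x}/\lambda \ge 0$ satisfies $\|A(\tilde{x}/\lambda)\|_P = 1$, so it is feasible for \pack and therefore $\sum_t \tilde{x}_t/\lambda \le \OPT$, i.e. $\sum_t \tilde{x}_t \le \lambda\,\OPT$.

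Combining the two bounds and using $\Psi(A\tilde{x}) = \tfrac{I\widetilde{\OPT}}{\beta}\,\|A\tilde{x}\|_P^{\,p} = \tfrac{I\widetilde{\OPT}}{\beta}\,\lambda^{p}$ gives $\tfrac{I\widetilde{\OPT}}{\beta}\,\lambda^{p} \le \lambda\,\OPT$, hence $\lambda^{p-1} \le \OPT \big/ \big(\tfrac{I\widetilde{\OPT}}{\beta}\big)$. Finally, $I$ being good means precisely $\tfrac{I\widetilde{\OPT}}{\beta} \ge \OPT$, so the right-hand side is at most $1$; since $p \ge 2 > 1$, we conclude $\lambda \le 1$. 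Thus $\tilde{x}$ is feasible for \pack and $\bar{x} = \tilde{x}$.

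The argument is short and I do not anticipate a genuine obstacle. The one point deserving care is the justification of $\sum_t \tilde{x}_t - \Psi(A\tilde{x}) \ge 0$: one should make sure the invoked guarantee of \cite{AzarBCCCG0KNNP16} is a multiplicative competitive guarantee against the optimal profit of $\max_x(\sum_t x_t - \Psi(Ax))$ (which is nonnegative because $x=0$ is feasible), rather than a purely additive statement. Equivalently — and this is the fallback I would use if the citation needs strengthening — the algorithm of \cite{AzarBCCCG0KNNP16} only ever raises a coordinate $x_t$ while its marginal contribution to $\sum_t x_t - \Psi(Ax)$ is nonnegative, which already forces the final profit $\sum_t \tilde{x}_t - \Psi(A\tilde{x})$ to be nonnegative and suffices for the rest of the proof verbatim.
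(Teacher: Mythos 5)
Your proposal is correct and follows essentially the same route as the paper's proof: compare $\tilde{x}$ against the all-zeros solution to get $\Psi(A\tilde{x}) \le \sum_t \tilde{x}_t$, bound $\sum_t \tilde{x}_t \le \|A\tilde{x}\|_P \cdot \OPT$ by scaling, and use the goodness of $I$ (i.e., $\tfrac{I\widetilde{\OPT}}{\beta} \ge \OPT$) to conclude $\|A\tilde{x}\|_P^{p-1} \le 1$. Your extra care about the degenerate case $\|A\tilde{x}\|_P = 0$ and the exact form of the competitive guarantee from \cite{AzarBCCCG0KNNP16} is reasonable but does not change the argument.
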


    \begin{proof}
        If actually suffices to show that $\tilde{x}$ is feasible, which implies $\bar{x} = \tilde{x}$. Since $\tilde{x}$ is a $O(p)$-competitive solution for maximizing $\sum_t x_t - \Psi(Ax)$, comparing it against the all-zeros solution gives $\sum_t \tilde{x}_t - \Psi(A\tilde{x}) \ge 0$, i.e. $\Psi(A\tilde{x}) \le \sum_t \tilde{x}_t$. We can upper bound the right-hand side by observing that $\sum_t \frac{\tilde{x}_t}{\|A\tilde{x}\|_P} \leq \OPT$, since $\frac{\tilde{x}}{\|A\tilde{x}\|_P}$ is a feasible solution to the \pack problem (recall we assumed $c= \ones$). Combining these facts we get
        \begin{align*}
            \frac{I \cdot \widetilde{\OPT}}{\beta}\, \|A\tilde{x}\|^p_P \,=\, \Psi(A\tilde{x}) \,\le\, \OPT \cdot \|A\tilde{x}\|_P,
        \end{align*}
        and the goodness of $I$ then implies that $\|A\tilde{x}\|_P^{p-1} \le 1$, an hence $\|A\tilde{x}\|_P \le 1$. This proves the feasibility of $\tilde{x}$. 
    \end{proof}

    We can now prove that that expected value of $\bar{x}$ is at least $\frac{1}{O(\max\{p, \log \beta\})} \cdot \OPT$. Let $x^*$ be the optimal solution of \pack, hence $\sum_t x_t^* = \OPT$. Again using the fact that $\tilde{x}$ is $O(p)$-competitive for maximizing $\sum_t x_t - \Psi(Ax)$, comparing it against the solution $\gamma x^*$ for $\gamma = \frac{1}{(2\delta)^{1/(p-1)}}$, we get
    \begin{align*}
        \sum_t \tilde{x}_t - \Psi(A\tilde{x}) \ge \frac{1}{O(p)} \bigg(\sum_t \gamma x^*_t - \Psi(A \gamma x^*) \bigg) &= \frac{1}{O(p)} \bigg(\gamma \OPT - \gamma^p\, \frac{I \cdot \widetilde{\OPT}}{\beta}\, \underbrace{\|A x^*\|_P^p}_{\le 1} \bigg)\
    \end{align*}
    the underbrace following from the feasibility of $x^*$. Now, whenever $I$ is good, from \Cref{lemma:packFeas} we have $\sum_t \bar{x}_t = \sum_t \tilde{x}_t$ and $\frac{I \cdot \widetilde{\OPT}}{\beta} \le \delta \cdot \OPT$, which gives $\sum_t \bar{x}_t \ge \frac{\OPT}{O(p)} (\gamma - \gamma^p\cdot\delta ) \ge \frac{\gamma \OPT}{O(p)},$ the last inequality following from the definition of $\gamma$.
    
    Since $I$ is good with probability $\frac{1}{\lceil \log_\delta \beta \rceil}$, the expected value of the solution returned by our algorithm is at least 
    \begin{align*}
        \E\,\textrm{algo} \,\ge\, \frac{1}{\lceil \log_\delta \beta \rceil} \cdot \frac{1}{(2\delta)^{1/(p-1)}} \cdot \frac{1}{O(p)} \cdot \OPT \,=\, \frac{\log \delta}{(2\delta)^{1/(p-1)}} \cdot \frac{1}{O(p \log \beta)} \cdot \OPT \,.
    \end{align*}    

    When $p-1 \le \log \beta$, we defined $\delta = e^{p-1}$, and the above lower bound gives $\E\,\textrm{algo} \ge \frac{1}{O(\log \beta)} \cdot \OPT$. Otherwise, $p-1 > \log \beta$ and we defined $\delta = \beta$, and the bound becomes $\E\,\textrm{algo} \ge \frac{1}{(2 \beta)^{1/(p-1)}} \cdot \frac{1}{O(p)} \cdot \OPT \ge \frac{1}{\beta^{1/\log \beta}} \cdot \frac{1}{O(p)} \cdot \OPT = \frac{1}{O(p)} \cdot \OPT$. This proves that our algorithm is $O(\max\{p, \log \beta\})$-competitive, giving the first part of \Cref{thm:packing} when $\alpha = 1$. 

    \paragraph{Analysis without an approximation of $\OPT$.} 
\mnote{Should we make an algorithm box for this? Used when we have to approximate the norm by a $p$-mono}
The idea is to use the first item of the problem to compute an estimate $\widetilde{\OPT}$ of $\OPT$ and run the previous algorithm. More precisely, after seeing the information $c_1$ and $(a_{1,1}, \ldots, a_{\numberofdimensions,1})$ of the first item, let $a_{1,k}$ be any one of its non-zero sizes $a_{1,i}$ (which exists, since we assumed the instance has bounded optimum). We show below that $\frac{1}{\numberofdimensions \rho} \frac{c_1}{a_{1,k}\cdot \|e_k\|_P} \leq \OPT \leq \numberofdimensions \rho\cdot \frac{c_1}{a_{1,k}\, \|e_k\|_P}$. Therefore, we set the $\OPT$ estimate $\widetilde{\OPT} := \frac{1}{\numberofdimensions \rho} \frac{c_1}{a_{1,k}\cdot \|e_k\|_P}$, which is then a $(\numberofdimensions \rho)^2$-approximation of $\OPT$, and run  \Cref{proc:pack}. The previous analysis shows that this returns a solution that is in expectation $O(\max\{p, \log (\numberofdimensions \rho)^2\}) = O(\max\{p, \log \numberofdimensions \rho\})$ -competitive. This concludes the proof of \Cref{thm:packing} for the case $\alpha = 1$.  

\smallskip
It only has to be shown that $\OPT$ indeed falls into the described interval.

\begin{lemma}
 It holds that $\frac{1}{\numberofdimensions \rho} \frac{c_1}{a_{1,k}\cdot \|e_k\|_P} \leq \OPT \leq \numberofdimensions \rho\, \frac{c_1}{a_{1,k}\cdot \|e_k\|_P}$.
\end{lemma}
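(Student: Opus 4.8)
The plan is to prove the two inequalities by elementary means: for the lower bound, exhibit an explicit feasible solution using only the first item; for the upper bound, bound the value of an arbitrary optimal solution by re-charging its value along the $\numberofdimensions$ resource coordinates rather than along the $\numberofsteps$ items. I will use two basic facts. First, $\|\cdot\|_P$ is monotone on $\R^{\numberofdimensions}_{\ge 0}$, since $P$ is downward closed. Second, the width bound gives that for every coordinate--item pair $(i,t)$ with $a_{i,t} > 0$, the quantity $a_{i,t}\|e_i\|_P / c_t$ lies within a multiplicative factor $\rho$ of $a_{1,k}\|e_k\|_P / c_1$ on \emph{both} sides, because the latter is one of the terms appearing in the max and the min defining the width. (In particular, finiteness of $\rho$ forces $0 < \|e_k\|_P < \infty$, so all the displayed quantities are well-defined; boundedness of the optimum also forces every size vector to be nonzero.)

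For the lower bound, I would take the single-item solution $x = \frac{1}{\|a_{\cdot,1}\|_P}\, e_1$, i.e.\ the first item scaled to exactly saturate the packing constraint; this is well-defined because boundedness of the optimum gives $\|A e_1\|_P = \|a_{\cdot,1}\|_P > 0$. It is feasible and has value $c_1/\|a_{\cdot,1}\|_P$, so $\OPT \ge c_1/\|a_{\cdot,1}\|_P$. Then bound $\|a_{\cdot,1}\|_P \le \sum_i a_{i,1}\|e_i\|_P$ by the triangle inequality, where the sum is over the at most $\numberofdimensions$ coordinates in which the first item has positive size; applying the width bound termwise (with $t=1$) replaces each summand by $\rho\, a_{1,k}\|e_k\|_P$, so $\|a_{\cdot,1}\|_P \le \numberofdimensions\rho\, a_{1,k}\|e_k\|_P$, which yields the claimed lower bound on $\OPT$.

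For the upper bound, let $x^*$ be optimal and $y^* := A x^*$, so $\|y^*\|_P \le 1$ and $\OPT = \sum_t c_t x^*_t$. For each item $t$ pick a coordinate $\iota(t)$ with $a_{\iota(t),t} > 0$. The width bound gives $c_t \le \rho\, \frac{c_1}{a_{1,k}\|e_k\|_P}\, a_{\iota(t),t}\|e_{\iota(t)}\|_P$, so grouping the items according to the value of $\iota(\cdot)$ and using $\sum_{t:\,\iota(t)=i} a_{i,t} x^*_t \le \sum_t a_{i,t} x^*_t = y^*_i$ bounds $\OPT$ by $\rho\, \frac{c_1}{a_{1,k}\|e_k\|_P} \sum_{i=1}^{\numberofdimensions} \|e_i\|_P\, y^*_i$. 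Finally, monotonicity of $\|\cdot\|_P$ together with $y^*_i e_i \le y^*$ gives $\|e_i\|_P\, y^*_i = \|y^*_i e_i\|_P \le \|y^*\|_P \le 1$ for every $i$, so the sum over the $\numberofdimensions$ coordinates is at most $\numberofdimensions$, and $\OPT \le \numberofdimensions\rho\, \frac{c_1}{a_{1,k}\|e_k\|_P}$.

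The only step that is not completely immediate — the main obstacle, such as it is — is the upper bound: summing the per-item values directly would produce a factor $\numberofsteps$ (the number of items), which is useless here since it may far exceed $\numberofdimensions$. The resolution is precisely to charge each item's value to one of its nonzero coordinates and then invoke the packing constraint, which is a statement about the $\numberofdimensions$-dimensional aggregate vector $y^* = A x^*$ and in particular caps each $\|e_i\|_P\, y^*_i$ at $1$; this is what converts a sum over the $\numberofsteps$ items into a sum over the $\numberofdimensions$ coordinates.
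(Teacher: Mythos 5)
Your proof is correct and follows essentially the same route as the paper's: a single-item feasible solution (bounded via the triangle inequality and the width) for the lower bound, and for the upper bound a re-charging of each item's value to the coordinates of $Ax^*$, capped by $\|(Ax^*)_i e_i\|_P \le \|Ax^*\|_P \le 1$ via monotonicity. The only cosmetic differences are that you saturate the packing constraint exactly in the lower bound and charge each item to a single nonzero coordinate in the upper bound, whereas the paper sums over all coordinates; both yield the same $\numberofdimensions\rho$ factors.
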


\begin{proof}
To obtain a lower bound on $\OPT$, consider the solution $x'$ given by $x'_1 = \frac{1}{\numberofdimensions \rho} \frac{1}{a_{1,k}\cdot \|e_k\|_P}$ and $\bar{x}_t = 0$ for $t \ge 2$. This solution is feasible: By triangle inequality $\|(a_{1,1}, \ldots, a_{\numberofdimensions,1})\|_P \le \sum_i a_{i,1} \|e_i\|_P \le \numberofdimensions \rho \cdot a_{i,k} \|e_k\|_P$, and so $\|A x'\|_P = x'_1 \cdot \|(a_{1,1}, \ldots, a_{\numberofdimensions,1})\|_P \le 1$, giving feasibility. Since $x'$ has value $c_1 x'_1 = \frac{1}{\numberofdimensions \rho} \frac{c_1}{a_{1,k}\cdot \|e_k\|_P}$, the lower bound on $\OPT$ follows. 

We now prove the desired upper bound on $\OPT$. For any item $t$, since  at least one of the $a_{t,i}$'s is strictly positive, by definition of $\rho$ we get $\sum_i \frac{a_{t,i} \cdot \|e_i\|_P}{c_t} \ge \frac{1}{\rho} \cdot \frac{a_{1,k} \cdot \|e_k\|_P}{c_1}$, or equivalently $\frac{\rho\,c_1}{a_{1,k} \cdot \|e_k\|_P}\cdot \sum_i (a_{t,i} \cdot \|e_i\|_P) \ge c_t$. Letting $x^*$ be an optimal solution and applying this upper bound on $c_t$, we get
\begin{align*}
    \OPT \,=\, \sum_t c_t x^*_t  \,\le\, \frac{\rho\,c_1}{a_{1,k} \cdot \|e_k\|_P} \sum_i \sum_t x^*_t \cdot (a_{t,i} \cdot \|e_i\|_P) \,&=\, \frac{\rho\,c_1}{a_{1,k} \cdot \|e_k\|_P}\, \sum_i \|(A x^*)_i \cdot e_i\|_P\\
    &\le  \frac{\numberofdimensions \rho\,c_1}{a_{1,k} \cdot \|e_k\|_P}\, \|Ax^*\|_P,
\end{align*}
where the last inequality follows from the monotonicity of $\|\cdot\|_P$. Since $x^*$ is feasible, $\|Ax^*\|_P \le 1$, and we obtain the desired upper bound on $\OPT$.
\end{proof}


\subsection{Extending to case $\alpha > 1$}

\mnote{M: Move this to the appendix? Just say `` apply the previous algorithm to the approximant $\vertm{\cdot}$}

Now suppose $\|\cdot\|_P$ is not necessarily $p$-\monotone, but it has a $p$-\monotone $\alpha$-approximation $\vertm{\cdot}$, i.e. $\|x\|_P \le \vertm{x} \le \alpha \cdot \|x\|_P$ for all $x \in \R^{\numberofdimensions}_+$. Then we can simply apply the results from the previous section to the approximant $\vertm{\cdot}$. 

More precisely, 
and let $\OPT^{\vertm{\cdot}}$ be the optimal value for the \pack instance $\mathcal{I}^{\vertm{\cdot}}$ given by $\max \{\ip{c}{x} : \vertm{Ax} \le 1, ~x \ge 0\}$ relative to the new norm. Since $\|Ax\|_P \le \vertm{Ax}$, we get $\OPT \ge \OPT^{\vertm{\cdot}}$, and since $\vertm{\frac{Ax}{\alpha}} \le \|Ax\|_P$, we have $\OPT^{\vertm{\cdot}} \ge \frac{1}{\alpha}\,\OPT$. 

This means that if a $\beta$-approximation $\widetilde{\OPT}$ of $\OPT$ is available, then it gives an $\alpha \beta$-approximation to $\OPT^{\vertm{\cdot}}$. Thus, we can run \Cref{proc:pack}  over the new instance $\mathcal{I}^{\vertm{\cdot}}$ with estimate $\widetilde{\OPT}$ to obtain a solution $\bar{x}$. This solution is feasible for the original instance and has value at least $\frac{1}{O(\max\{p, \log \alpha \beta\})} \cdot \OPT^{\vertm{\cdot}} \ge \frac{1}{O(\max\{p, \log \alpha \beta\})} \frac{1}{\alpha} \OPT$, thus we obtain a $O(\alpha) \cdot \max\{p, \log \alpha \beta\}$-competitive solution for the original instance as desired. 

If an estimate of $\OPT$ is not available, we run the algorithm from the previous section that does not require such estimate and obtain a solution that is feasible for the original instance and has value $\frac{1}{O(\max\{p, \log \numberofdimensions \rho\})} \frac{1}{\alpha} \OPT$ (notice the definition of $\rho$ already has the factor $\alpha$ relative to the norm approximation). This concludes the proof of \Cref{thm:packing}.

\section{Applications to Stochastic Probing} \label{sec:stochProbing}

    Recall the stochastic probing problem (\stoch) introduced in \Cref{sec:introProbe}: There is a set $[n]$ of items, each with a non-negative value $X_i$ that is distributed according to some distribution $\mathcal{D}_i$. The values of the items are independent, but do not necessarily follow the same distribution. While the distributions $\mathcal{D}_i$'s are known to the algorithm, the actual values $X_i$'s are not; an item needs to be \emph{probed} for its value to be revealed. There is a downward-closed family of subsets of items $\mathcal{F} \subseteq[n]$ indicating the feasible sets of probes (e.g., $\mathcal{F}$ can consist of all subsets of size at most $k$ values from $[n]$, indicating that there is a budget of at most $k$ probes). Finally, there is a monotone norm $f : \R^n_+ \rightarrow \R_+$ indicating that if the set $S$ of items is probed, then the actual value obtained from them is $f(X_S)$, where $X_S$ is the vector what has coordinate $i$ equal to $X_i$ if $i \in S$, and equal to 0 otherwise. 
    
    For example, if we think of each item as a candidate, the function $f(x) = \max_i x_i$ models that while you can probe/interview a set $S \subseteq [n]$ of candidates, you may only hire the single best one, obtaining value $f(X_S) = \max_{i \in S} X_i$. The algorithm must then decide which feasible set of items $S \in \mathcal{F}$ to probe in order to maximize the expected value $\E f(X_S)$.

Let $\adapt = \adapt(\mathcal{D},\mathcal{F}, f)$ denote expected value of the optimal \emph{adaptive} strategy, namely the best strategy that probes items one-by-one, using the realized values $X_i$'s of the items already probed to decide which item to probe next. Let $\NA = \NA(\mathcal{D},\mathcal{F}, f)$ denote the expected value of the best \emph{non-adaptive} strategy that selects the whole set $S$ of probes upfront; that is, $\NA = \max_{S \in \mathcal{F}} \E f(X_S)$. We are interested in bounding the \emph{adaptivity gap} $\frac{\adapt(\mathcal{D},\mathcal{F}, f)}{\NA(\mathcal{D},\mathcal{F}, f)}$, namely the largest advantage that adaptivity can offer, for a family of instances. 


    We show that $p$-\monotonicity suffices to bound the advantage offered by  adaptivity.

\thmStochProbing*


    To prove this result, we consider the non-adaptive strategy that ``hallucinates'' the values of the items, i.e., draws sample $\bar{X}_i \sim  \mathcal{D}$ for each value, and runs that optimal adaptive strategy using these samples, but obtaining true value given by the $X_i$'s. Notice that this strategy is indeed non-adaptive, since it never uses the $X_i$'s for decision-making. The idea of the analysis is to replace one-by-one the probes performed by $\adapt$ and the hallucinating strategy, similar to what was done for Load Balancing in \Cref{thm:loadBalancing}. 


    \medskip In the remainder of the section, we prove this result under the following assumptions, which are discharged in Appendix \ref{app:stochProbing} (the first two are obtained by truncation, and the third by adding dummy items of 0 value):
   
    \begin{enumerate}[itemsep=1pt]
        \item For every $i$, $f(X_{\{i\}}) \le \frac{\adapt}{4c p}$ in every scenario.
        \item $f(X_{S^*}) \le 12\, \adapt$ in every scenario. 
        \item The optimal adaptive set of probes $S^*$ has the same size $m \le n$ in every scenario.   
    \end{enumerate}

    Since $f$ is a norm, from now on we use the notation $\|\cdot\| = f(\cdot)$, which is more natural. Let $I_1,\ldots,I_m \in [n]$ be the (random) sequence of items \adapt probes (so $S^* = \{I_i\}_i$). Recall that $\bar{X}_1,\ldots,\bar{X}_n$ is an independent copy of the sequence $X_1,\ldots,X_n$, and let $\bar{I}_1,\ldots,\bar{I}_m$ be the sequence of probes obtained by running \adapt over this copy (so $\bar{I}_1,\ldots,\bar{I}_m$ is an independent copy of $I_1,\ldots,I_m$). Define the vector $V_j := e_{I_j} X_{I_j}$ as the value of the item probed at the $j$th round, placed in the appropriate coordinate; notice that $X_{S^*} = V_1 + \ldots + V_m$, and so $\adapt = \E \|V_1 + \ldots + V_m\|$. Similarly, the value vector of the hallucinating strategy is given by the sum $\sum_j e_{\bar{I}_j} V_{\bar{I}_j}$ (i.e., probe $\bar{I}_j$ according to hallucination and see the real value $X_{\bar{I}_j}$). Notice this sum has the same distribution as using the true real optimal probing $I_j$ but receiving hallucinated value $\bar{X}_j$ (i.e., sequences $(\bar{I}_1, V_1), (\bar{I}_2, V_2), \ldots, (\bar{I}_m, V_m)$ and $(I_1, \bar{V}_1), (I_2, \bar{V}_2), \ldots, (I_m, \bar{V}_m)$ have the same distribution); the latter will be more convenient to work with. In summary, we define the vectors $\bar{V}_j := e_{I_j} \bar{X}_{I_j}$, and note that hallucinating policy has value distributed according to $\|\bar{V}_1 + \ldots + \bar{V}_m\|$. Thus, our goal for the remainder of the section is to prove that
    \begin{align}
        \underbrace{\E \|\bar{V}_1 + \ldots + \bar{V}_m\|}_{\textrm{hallucination}} \ge \frac{1}{O(p)} \underbrace{\E \|V_1 + \ldots + V_m\|}_{\adapt}. \label{eq:stochDecoupling}
    \end{align}
    
    To simplify the notation, we use $U_t := V_1 + \ldots + V_t$ and $\bar{U}_t := \bar{V}_1 + \ldots + \bar{V}_t$. As mentioned, to prove \eqref{eq:stochDecoupling} we replace one-by-one the terms of the sum $V_1 + \ldots + V_m$ by the terms of the sum $\bar{V}_1 + \ldots + \bar{V}_m$ and track the change in $\E \|\cdot\|^p$. However, we will also need an additional truncation to be able to move from $\E \|\cdot\|^p$ to $\E \|\cdot\|$. For that, let $\tau$ be the stopping time defined as the first $t$ such that $\|\bar{V}_1 + \ldots + \bar{V}_t\| > \frac{\adapt}{4 cp}$ (or $\tau = m$ if no such $t$ exists), where we set in hindsight the constant $c = \frac{3}{(2-e^{1/2})^{1/p}}$.

    We now perform the replacement of the terms. By tangency, conditioned on $\F_{t-1}$, the random variable $\|V_1 + \ldots + V_{t-1} + V_t\|^p$ has the same distribution as $\|V_1 + \ldots + V_{t-1} + \bar{V}_t\|^p$. Since the event $\tau \ge t$ (i.e., up to time $t-1$, the sum $\|\bar{V}_1 + \ldots + \bar{V}_{t-1}\|$ has not reached above $\lambda$) only depends on the history up to time $t-1$, we have
    \begin{align*}
        \E_{t-1} \bigg[\ones(\tau \ge t) \cdot \bigg( \|U_t\|^p - \|U_{t-1}\|^p \bigg)\bigg] &= \ones(\tau \ge t)  \cdot \E_{t-1} \bigg( \|U_t\|^p - \|U_{t-1}\|^p\bigg) \\
        &= \E_{t-1} \bigg[\ones(\tau \ge t) \cdot \bigg( \|U_{t-1} + \bar{V}_t\|^p - \|U_{t-1}\|^p \bigg)\bigg].
    \end{align*}
    Then taking expectations and adding over all times $t$, we get
    \begin{align}
        \E \|U_\tau\|^p = \E \sum_{t \le \tau} \bigg( \|U_t\|^p - \|U_{t-1}\|^p \bigg) \le \E \sum_{t \le \tau} \bigg( \|U_{t-1} + \bar{V}_t\|^p - \|U_{t-1}\|^p \bigg). \label{eq:partial}
    \end{align}
    We can now upper bound the right-hand side using the $p$-\monotonicity of $\|\cdot\|$, using the same steps employed in the Load-Balancing problem in \Cref{thm:loadBalancing}: for every scenario,
    \begin{align*}
        \sum_{t \le \tau} \bigg( \|U_{t-1} + \bar{V}_t\|^p - \|U_{t-1}\|^p \bigg) \le \sum_{t \le \tau} \bigg( \|U_\tau + \bar{U}_{t-1} + \bar{V}_t\|^p - \|U_\tau + \bar{U}_{t-1}\|^p \bigg) = \|U_\tau + \bar{U}_\tau\|^p - \|U_\tau\|^p. 
    \end{align*}
    Plugging this into \eqref{eq:partial} and using the fact $(a+b)^p \le e^{1/2}\, a^p + (3p)^p\, b^p,$ for all $a,b \ge 0$, which can be checked by considering the cases $a \ge 2 pb$ and $a < 2 pb$, we get
    \begin{align*}
        \E \|U_\tau\|^p \,\le\, \E \|U_\tau + \bar{U}_\tau\|^p - \E \|U_\tau\|^p \,\le\, e^{1/2} \E \|U_\tau\|^p + (3p)^p \cdot \E \|\bar{U}_\tau\|^p - \E \|U_\tau\|^p.
    \end{align*}
    Rearranging and calling the constant $c := \frac{3}{(2 - e^{1/2})^{1/p}}$, gives the upper bound 
    \begin{align}
        \E \|U_\tau\|^p \le  (c p)^p \cdot \E \|\bar{U}_\tau\|^p. \label{eq:stochMid}
    \end{align}    

    By the monotonicity of the norm, this implies that $\E \|\bar{U}_\tau\|^p \ge \frac{1}{O(p)^p} \cdot \E \|U_\tau\|^p$, which ``morally'' says that the non-adaptive policy $\bar{U}_\tau$ gets at least a $\frac{1}{O(p)}$-fraction of the value of the optimal adaptive policy $\adapt$ (regarding the presence stopping time $\tau$, notice that in the scenarios where it kicks in, i.e. $\tau < m$, then by definition $\bar{U}_\tau$ has value at least $\frac{\adapt}{4cp}$). To make this precise, we show the following interpolation result that converts the $\ell_p$-type inequality \eqref{eq:stochMid} (plus the boundedness of $\bar{U}_\tau$ guaranteed by the stopping time $\tau$) into a weak-(1,1)-type inequality, which is inspired by a similar inequality for martingales from Burkholder~\cite{burkholder}. 

    \begin{lemma} \label{lemma:weakIneq}
        %
        %
        \begin{align*}
            \Pr\bigg(\,\bigg\|\frac{U_m}{cp}\bigg\| \ge \frac{\adapt}{2cp}\,\bigg) ~\le~ O(1) \cdot \frac{\E \|\bar{U}_m\|}{\adapt/2cp}.  
        \end{align*}
    \end{lemma}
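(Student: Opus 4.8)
The plan is first to clear the harmless factor $\frac{1}{cp}$ from both sides: by homogeneity the event is just $\{\|U_m\|\ge \frac{\adapt}{2}\}$ and the claimed bound reads $O(1)\cdot\frac{2cp}{\adapt}\,\E\|\bar U_m\|$. I would then split according to whether the stopping time $\tau$ fires strictly before the end or runs all the way to $m$:
\[
\Pr\Big(\|U_m\|\ge \tfrac{\adapt}{2}\Big) \;=\; \Pr\Big(\|U_m\|\ge\tfrac{\adapt}{2},\;\tau<m\Big)\;+\;\Pr\Big(\|U_m\|\ge\tfrac{\adapt}{2},\;\tau=m\Big),
\]
and bound each piece by $O(1)\cdot\frac{cp}{\adapt}\,\E\|\bar U_m\|$ using only Markov's inequality, the $\ell_p$-comparison \eqref{eq:stochMid}, monotonicity of the norm, and the boundedness assumptions from the start of the section; no telescoping or tangency is needed here, that work has already been done to produce \eqref{eq:stochMid}.

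For the first term, the defining property of $\tau$ is that on $\{\tau<m\}$ one has $\|\bar U_\tau\|>\frac{\adapt}{4cp}$; since $\bar U_\tau\le \bar U_m$ coordinatewise (the $\bar V_j$ are non-negative) and the norm is monotone, this forces $\|\bar U_m\|>\frac{\adapt}{4cp}$. Hence, by Markov,
\[
\Pr\Big(\|U_m\|\ge\tfrac{\adapt}{2},\;\tau<m\Big)\;\le\;\Pr(\tau<m)\;\le\;\Pr\Big(\|\bar U_m\|>\tfrac{\adapt}{4cp}\Big)\;\le\;\frac{4cp}{\adapt}\,\E\|\bar U_m\|.
\]
For the second term, on $\{\tau=m\}$ we have $U_\tau=U_m$, so the event is contained in $\{\|U_\tau\|\ge\frac{\adapt}{2}\}$; applying Markov to $\|U_\tau\|^p$ and then \eqref{eq:stochMid} gives
\[
\Pr\Big(\|U_\tau\|\ge\tfrac{\adapt}{2}\Big)\;\le\;\frac{\E\|U_\tau\|^p}{(\adapt/2)^p}\;\le\;\frac{(cp)^p\,\E\|\bar U_\tau\|^p}{(\adapt/2)^p}.
\]
This is where the stopping time earns its keep: by its definition together with Assumption~1 (each $\|\bar V_j\|=f(\bar X_{\{I_j\}})\le\frac{\adapt}{4cp}$) and the triangle inequality, $\|\bar U_\tau\|\le\|\bar U_{\tau-1}\|+\|\bar V_\tau\|\le\frac{\adapt}{2cp}$ in every scenario. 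Therefore $\|\bar U_\tau\|^p\le\big(\frac{\adapt}{2cp}\big)^{p-1}\|\bar U_\tau\|$, so $\E\|\bar U_\tau\|^p\le\big(\frac{\adapt}{2cp}\big)^{p-1}\E\|\bar U_m\|$, and after cancelling the powers of $cp$ and of $\adapt/2$ one is left with $\Pr(\|U_\tau\|\ge\frac{\adapt}{2})\le\frac{2cp}{\adapt}\,\E\|\bar U_m\|$. Adding the two contributions yields $\Pr(\|U_m\|\ge\frac{\adapt}{2})\le\frac{6cp}{\adapt}\,\E\|\bar U_m\|$, which is the statement with $O(1)=3$.

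The argument is genuinely short, so the only thing demanding care is the uniform, scenario-wise bound $\|\bar U_\tau\|\le\frac{\adapt}{2cp}$: it is precisely this $L^\infty$-control — not available for $U_m$ itself — that lets us replace one factor of $\|\bar U_\tau\|^{p-1}$ by the matching power of $\frac{\adapt}{2cp}$ and thereby convert the $L^p$-type inequality \eqref{eq:stochMid} into a weak-$(1,1)$ bound. This is the spot that mirrors Burkholder's interpolation trick; the rest is Markov plus monotonicity of the norm. A minor bookkeeping point to handle is to record that $U_\tau\le U_m$ and $\bar U_\tau\le\bar U_m$ (used above for monotonicity), which is immediate since all increments are non-negative and $\tau\le m$.
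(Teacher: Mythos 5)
Your proof is correct and follows essentially the same route as the paper's: both hinge on the containment of the relevant event in $\{\|U_\tau\|\ge \adapt/2\}$ (your split on $\tau<m$ vs.\ $\tau=m$ is just a reparametrization of the paper's split on $\|\bar U_m\|\le\lambda$ vs.\ $>\lambda$), followed by Markov on $\|U_\tau\|^p$, the moment comparison \eqref{eq:stochMid}, and the scenario-wise bound $\|\bar U_\tau\|\le \adapt/(2cp)$ to drop from the $p$-th moment to the first. The constant tracking is also fine.
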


    \begin{proof}
        Let $\lambda := \frac{\adapt}{4cp}$ (the threshold for the stopping time $\tau$) to simplify the notation. Observe that the event ``$\|\frac{U_m}{cp}\| \ge 2\lambda$ and $\|\bar{U}_m\| \le \lambda$'' is contained in the event ``$\|\frac{U_\tau}{cp}\| \ge 2\lambda$'': any scenario that belongs to the first event needs to have $\tau = m$ (since $\tau < m$ implies that $ \|\bar{U}_m\| > \lambda$, by the monotonicity of the norm), and so it is clear that such scenario also belongs to the second event. Thus, using Markov's inequality and then the moment comparison \eqref{eq:stochMid}, we get
        \begin{align*}
            \Pr\bigg(\,\bigg\|\frac{U_m}{cp}\bigg\| \ge 2\lambda\,,\, \|\bar{U}_m\| \le \lambda\bigg) \le \Pr\bigg(\,\bigg\|\frac{U_\tau}{cp}\bigg\| \le 2\lambda \bigg) = \Pr\bigg(\,\bigg\|\frac{U_{\tau}}{cp}\bigg\|^p \le (2\lambda)^p \bigg) \le \frac{\E\, \|\frac{U_{\tau}}{cp}\|^p}{(2\lambda)^p} \le \frac{\E\, \|\bar{U}_{\tau}\|^p}{(2\lambda)^p}.
        \end{align*}
        To upper bound the right-hand side, by the definition of the stopping time $\tau$ and the fact that  by hypothesis the increments satisfy $\|\bar{V}_t\| \le \frac{\adapt}{4cp} = \lambda$, we have $\|\bar{U}_\tau\| \le 2 \lambda$. Plugging this in the displayed inequality, we get 
        \begin{align*}
            \Pr\bigg(\bigg\|\frac{U_m}{cp}\bigg\| \ge 2\lambda\,,\, \|\bar{U}_m\| \le \lambda \bigg)  \le \frac{\E \|\bar{U}_{\tau}\|}{2\lambda} \cdot \frac{(2\lambda)^{p-1}}{(2\lambda)^{p-1}}  \le O(1) \cdot \frac{\E \|\bar{U}_m\|}{\lambda} ,
        \end{align*}
        where the last inequality also uses the monotonicity of the norm. Moreover, also by Markov's inequality we have $\Pr(\|\bar{U}_m\| > \lambda) \le \frac{\E \|\bar{U}_m\|}{\lambda}$. Thus,
        \begin{align*}
            \Pr\bigg(\,\bigg\|\frac{U_m}{cp}\bigg\| \ge 2\lambda \bigg) \le  O(1) \cdot \frac{\E \|\bar{U}_m\|}{\lambda} + \frac{\E \|\bar{U}_m\|}{\lambda},
        \end{align*}        
        which proves the lemma. 
    \end{proof}

    Since $\E \|U_m\| = \adapt$ and $\|U_m\| \le 12\, \adapt$, in addition to the above upper bound we have the lower bound $\Pr(\|U_m\| \ge \frac{\adapt}{2}) \ge \frac{1}{23}$ (e.g., by applying Markov's inequality to $12\, \adapt - \|U_m\|$). Combining this with the previous lemma, it gives
    \begin{align*}
        \E \|\bar{U}_m\| \ge \frac{\adapt}{O(p)} \cdot \Pr\bigg(\bigg\| \frac{U_m}{cp} \bigg\| \ge \frac{\adapt}{2cp} \bigg) \ge \frac{\adapt}{O(p)} \cdot \frac{1}{23} = \frac{\adapt}{O(p)}. 
    \end{align*}
    This proves \eqref{eq:stochDecoupling}, which then gives \Cref{thm:stochProbing}.

    \begin{obs}
    We note that the proof only relied on the \emph{tangency} of the sequences $V_1,\ldots,V_m$ and $\bar{V}_1, \ldots, \bar{V}_m$. Recall that two sequences of random variables $V_1,\ldots,V_m$ and $\bar{V}_1,\ldots,\bar{V}_m$ adapted to a filtration $\F_1,\ldots,\F_m$ are \emph{tangent} if, for all $t$, conditioned on $\F_{t-1}$ the random variables $V_t$ and $\bar{V}_t$ have the same distribution (see \Cref{sec:introProbe} for their applications). The above argument gives the following comparison of averages between Banach-valued tangent sequences.
    
    \begin{thm} \label{thm:decoupling}
        Let $V_1,\ldots,V_m$ and $\bar{V}_1,\ldots,\bar{V}_m$ be tangent sequences taking values in $\R^d_+$. If $\|\cdot\|$ is a $p$-\monotone norm, then $\E\|V_1 + \ldots + V_m\| \le O(p) \cdot \E \|\bar{V}_1 + \ldots + \bar{V}_m\|$.
    \end{thm}

    \noindent This complements the (stronger) results known for the so-called UMD Banach spaces.
    \end{obs}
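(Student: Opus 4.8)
The plan is to carry out, in the abstract setting of tangent sequences in $\R^d_+$, exactly the argument that proves \Cref{thm:stochProbing}; as noted right after that theorem, the only structural inputs used there were the tangency of the two sequences, the $p$-\monotonicity of the norm, and its monotonicity. Write $U_t := V_1 + \dots + V_t$, $\bar U_t := \bar V_1 + \dots + \bar V_t$, and set $A := \E\|\bar U_m\|$; the goal is $\E\|U_m\| \le O(p)\cdot A$. (As a sanity check, for $p=1$, i.e.\ $\|\cdot\|_1$, tangency already gives $\E\|U_m\|_1 = \sum_t \E\|V_t\|_1 = \sum_t \E\|\bar V_t\|_1 = \E\|\bar U_m\|_1$, so the content is entirely in the power-$p$ machinery.)

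The first step is to reduce to the case where the increments and the total sums are suitably bounded, exactly as assumptions 1--3 were discharged in \Cref{app:stochProbing}: after rescaling we may normalize $A$, and then truncate so that every increment satisfies $\|V_t\|, \|\bar V_t\| \le \lambda := \Theta(A/p)$ and so that $\|U_m\|, \|\bar U_m\| \le M := \Theta(\E\|U_m\|)$, the latter at a constant multiplicative loss. Tangency is preserved under such truncations because each acts on the increment $V_t$ (resp.\ $\bar V_t$) through a fixed measurable map, leaving the $\F_{t-1}$-conditional laws equal; monotonicity of $\|\cdot\|$ controls the error introduced. I expect this reduction to be the main obstacle: unlike in stochastic probing, where each coordinate carried a single probed item so one could simply truncate the underlying values $X_i$, here several increments may accumulate in the same coordinate, and a crude triangle-inequality bound on the discarded mass can lose a factor of $m$. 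The fix is to perform the truncation through a stopping time on the partial sums rather than increment by increment, so that only a single ``overshoot'' increment per sample path is ever discarded.

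With a bounded instance in hand, introduce the stopping time $\tau$ equal to the first $t$ with $\|\bar U_t\| > \lambda$ (and $\tau = m$ if no such $t$ exists). The one-by-one replacement then goes through verbatim: by tangency, conditioned on $\F_{t-1}$ the variable $\|U_{t-1} + V_t\|^p$ has the same law as $\|U_{t-1} + \bar V_t\|^p$, and since $\{\tau \ge t\} \in \F_{t-1}$, summing over $t$ gives $\E\|U_\tau\|^p \le \E \sum_{t \le \tau}\big(\|U_{t-1} + \bar V_t\|^p - \|U_{t-1}\|^p\big)$. Applying $p$-\monotonicity to each summand (adding the common vector $V_t + \dots + V_m + \bar U_{t-1}$ to both arguments) makes this telescope to $\E\|U_\tau + \bar U_\tau\|^p - \E\|U_\tau\|^p$, and then the pointwise split $(a+b)^p \le e^{1/2} a^p + (3p)^p b^p$ together with rearrangement yields the moment comparison $\E\|U_\tau\|^p \le (cp)^p\,\E\|\bar U_\tau\|^p$ with $c = 3/(2 - e^{1/2})^{1/p}$.

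The last step converts this $\ell_p$-type inequality into the desired $\ell_1$-type inequality via the Burkholder-style interpolation already recorded in \Cref{lemma:weakIneq}. Using that $\|\bar V_\tau\| \le \lambda$, hence $\|\bar U_\tau\| \le 2\lambda$, together with Markov's inequality applied to the moment comparison, one obtains $\Pr\big(\|U_m/(cp)\| \ge 2\lambda\big) \le O(1)\cdot A/\lambda$. Combining this with the matching lower bound $\Pr\big(\|U_m\| \ge \tfrac12 \E\|U_m\|\big) = \Omega(1)$, which follows from $\|U_m\| \le M = O(\E\|U_m\|)$ by Markov applied to $M - \|U_m\|$, forces $\E\|U_m\| = O(\lambda p) = O(p)\cdot A$. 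Undoing the normalization and the truncation then gives $\E\|V_1 + \dots + V_m\| \le O(p)\cdot \E\|\bar V_1 + \dots + \bar V_m\|$, completing the proof.
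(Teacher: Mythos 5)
Your middle and final steps are exactly the paper's: the stopping time on the partial sums of the $\bar V$-sequence, the one-by-one replacement via tangency, the telescoping via $p$-\monotonicity, the split $(a+b)^p \le e^{1/2}a^p+(3p)^pb^p$, and the weak-type interpolation of \Cref{lemma:weakIneq} are all reproduced faithfully, and indeed the paper's entire proof of \Cref{thm:decoupling} is the remark that the argument of \Cref{sec:stochProbing} uses nothing about the sequences beyond tangency. You are also right to flag the reduction to the boundedness assumptions 1--3 as the delicate point.

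That reduction is where your proposal has a genuine gap. In the paper the truncations are justified in \Cref{app:stochProbing} by \Cref{obs:optSubstructure}: for any stopping time $\sigma$, the discarded tail obeys $\E\|V_{\sigma+1}+\cdots+V_m\|\le \adapt\cdot\Pr(\sigma<m)$. This is a consequence of the downward-closedness of the feasible family and the optimality of the adaptive strategy --- not of tangency --- and it is used twice: to bound the high part $\E\|\sum_t V_t\ones(\|V_t\|>\lambda)\|$ after the first large increment (together with \Cref{lemma:tangentMax}), and to show the partial-sum truncation retains half of $\E\|U_m\|$. For arbitrary tangent sequences no such self-bounding property is available, and your stopping-time fix does not supply a substitute: if you stop the $V$-process just before its partial sum exceeds $M=\Theta(\E\|U_m\|)$, the stopped expectation $\E\|U_\sigma\|$ can be an arbitrarily small fraction of $\E\|U_m\|$ (the expectation may be carried entirely by rare excursions far above $M$), while if you keep the overshooting increment you lose the almost-sure bound $\|U_\sigma\|\le M+\lambda$ that the final reverse-Markov step needs. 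Likewise, the mass removed by the increment truncation is not controlled by tangency alone. So ``only a single overshoot increment is discarded'' controls the size of what you keep, not the expectation of what you throw away. To be fair, the paper is equally terse here --- \Cref{thm:decoupling} is stated for general tangent sequences while the assumption-discharging is carried out only in the probing-specific setting --- but as written your reduction step does not go through, and completing it is the actual content missing from both accounts.
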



\section{Applications via Gradient Stability} \label{sec:gradStability}

    The notion of gradient-stable approximation of norms was introduced in \cite{KMS-SODA23} to handle  problems like online load balancing (as in \Cref{sec:intoSuper}) and Bandits with Knapsacks with general norms. 
    We show that if the norm is $p$-\monotone, then it admits a good gradient-stable approximation; essentially this means that $p$-\monotonicity is a stronger property than gradient stability. 
    
    \subsection{Relation to gradient stability}

First, recall the definition of a gradient-stable approximation of a norm. 
    
\begin{definition}[Gradient-Stable Approximation \cite{KMS-SODA23}] \label{def:gradStable}
	We say that a norm $\|\cdot\|$ admits a $\delta$-gradient-stable approximation with error $(\alpha, \gamma$) 
 if for every $\e > 0$ there is a monotone, subadditive, convex function $\Psi_\e : \R^d_+ \rightarrow \R$ such that:
     \begin{enumerate} [topsep=6pt,itemsep=1pt]
     		\item \emph{Gradient Stability:} $\nabla \Psi_\e(x + y) \geq \exp(-\epsilon \cdot\|y\| -\delta)\cdot \nabla \Psi_\e(x)$ coordinate-wise for all  $x, y \in  \R^d_+$.
     		\item \emph{Norm Approximation:} $\|x\|\leq \Psi_\e(x) \leq \alpha \|x\| + \frac{\gamma}{\e}$ for all  $x \in  \R^d_+$.
	\end{enumerate}
\end{definition}

The key insight from \cite{KMS-SODA23} is that if a norm admits a $\delta$-gradient-stable approximation with error $(\alpha,\gamma)$ for $\delta \leq \frac{1}{4}$ then it can be used to construct $O(\alpha+\gamma)$-competitive algorithms for multiple problems. We can show that such approximations exist for all $p$-\monotone norms with $\alpha + \gamma = p$.

    \begin{lemma}
    \label{lemma:gradientstable}
        Every differentiable $p$-\monotone norm $\|\cdot\|$ admits a $0$-gradient stable approximation with error $(1, p-1)$.  
    \end{lemma}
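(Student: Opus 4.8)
The natural candidate is $\Psi_\e(x) := \|x\|$ itself, independent of $\e$; I would show this works with $\delta = 0$, $\alpha = 1$, $\gamma = p-1$. The norm approximation property is then trivial: $\|x\| \le \Psi_\e(x) = \|x\| \le 1\cdot\|x\| + \frac{p-1}{\e}$ for all $x \in \R^d_+$ and all $\e > 0$, since the extra term is non-negative. So the entire content is the gradient stability condition, which with $\delta = 0$ reads
\[
\nabla \|x+y\| \;\ge\; \exp(-\e\|y\|)\cdot \nabla\|x\| \qquad \text{coordinatewise, for all } x,y\in\R^d_+ .
\]
Since $\exp(-\e\|y\|) \le 1$ and is a fixed scalar, it would in fact suffice to prove the $\e$-free statement $\nabla_i\|x+y\| \ge \exp(-\|y\|_{\text{something}})\cdot\nabla_i\|x\|$; but actually the cleanest route is to prove something even stronger that does not reference $\|y\|$ at all on the right, and then note $\exp(-\e\|y\|)\le 1$ handles whatever slack is needed. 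Let me reconsider: what we really want is a bound of the form $\nabla_i\|x+y\| \ge c\cdot \nabla_i\|x\|$ for a constant $c$ close to $1$. The Gradient property in \Cref{lemma:equivalentPmono} gives exactly the right type of inequality. It states that $p$-\monotonicity is equivalent to
\[
\frac{\nabla_i\|u+v\|}{\nabla_i\|u\|} \;\ge\; \Big(\frac{\|u\|}{\|u+v\|}\Big)^{p-1}
\]
for all $u,v\in\R^d_+$ and all $i$. So with $u = x$, $v = y$, we get $\nabla_i\|x+y\| \ge \big(\tfrac{\|x\|}{\|x+y\|}\big)^{p-1}\nabla_i\|x\|$.

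The key step is therefore to convert the ratio $\big(\tfrac{\|x\|}{\|x+y\|}\big)^{p-1}$ into an exponential factor of the shape $\exp(-\e\|y\|)$. Here I would use the elementary inequality: for any norm, $\|x+y\| \le \|x\| + \|y\|$, so
\[
\frac{\|x\|}{\|x+y\|} \;\ge\; \frac{\|x\|}{\|x\|+\|y\|} \;=\; \frac{1}{1 + \|y\|/\|x\|} \;\ge\; \exp\!\Big(-\tfrac{\|y\|}{\|x\|}\Big),
\]
using $\tfrac{1}{1+t}\ge e^{-t}$ for $t\ge 0$. Hence $\big(\tfrac{\|x\|}{\|x+y\|}\big)^{p-1} \ge \exp\!\big(-(p-1)\tfrac{\|y\|}{\|x\|}\big)$. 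But this still has $\|x\|$ in the denominator of the exponent, whereas \Cref{def:gradStable} wants $\exp(-\e\|y\| - \delta)$ with no $\|x\|$. This is where I need to be more careful — the mismatch is real, and this is the main obstacle: the bound degrades when $\|x\|$ is small relative to $\|y\|$. The resolution must be that the intended approximating family is \emph{not} simply $\|\cdot\|$ but rather $\Psi_\e(x) = \|x\|$ composed with, or the statement uses $\delta$-gradient stability only for $\delta = 0$ on the ``large $x$'' regime and the $\exp(-\e\|y\|)$ bound is genuinely what $p$-supermodularity buys directly — let me recheck whether the $\|x\|$ in the denominator can be removed.

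Actually, I believe the right move is: the statement of \Cref{lemma:gradientstable} as written should follow because $p$-supermodularity gives a \emph{scale-invariant} ratio bound. Reconsider: perhaps the intended $\Psi_\e$ is $\|\cdot\|$ and the claim is that the gradient-stability inequality holds with the exponent $-\e\|y\|$ being a \emph{weakening} — i.e., we want to show $\nabla_i\|x+y\| \ge \exp(-\e\|y\|)\nabla_i\|x\|$ and since $\exp(-\e\|y\|)$ can be made arbitrarily close to $1$ this should be easy, but it is NOT, because $\nabla_i\|x+y\|$ can genuinely be much smaller than $\nabla_i\|x\|$ for $\ell_\infty$-like norms. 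So $\|\cdot\|$ alone cannot be the answer unless $p$ is large. The correct candidate must be $\Psi_\e(x) = \|x\|^{\text{(something involving $p$)}}$ normalized, or — most likely — $\Psi_\e$ is built so that $\nabla\Psi_\e = $ gradient of $\frac1p\|\cdot\|^p$ rescaled; indeed $\nabla(\tfrac1p\|x\|^p) = \|x\|^{p-1}\nabla\|x\|$ is monotone by definition of $p$-supermodularity, hence $\nabla\Psi(x+y)\ge\nabla\Psi(x)$ with $\delta=0$ and no exponential loss, but then $\Psi(x) = \frac1p\|x\|^p$ is not subadditive and does not satisfy the norm approximation. So the plan is: take $\Psi_\e(x) := \frac{1}{p\,M^{p-1}}\|x\|^p$ truncated, or $\Psi_\e(x) := $ the norm $\|x\|$ for $\|x\|$ up to some threshold $1/\e$ and quadratic-in-$\|x\|$ (matching value and gradient) above; on the linear piece the gradient is constant so gradient stability is trivial, and on the power piece $p$-supermodularity of $\|x\|^p$ (\Cref{obs:pmonot}, \Cref{lemma:equivalentPmono}) gives monotone gradients directly, giving $\delta = 0$; verify subadditivity and that $\|x\|\le\Psi_\e(x)\le\|x\| + \frac{p-1}{\e}$. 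I expect the bookkeeping to show $\alpha = 1$ and $\gamma = p-1$; the main obstacle is choosing the gluing so that $\Psi_\e$ stays convex, subadditive, and has monotone gradients across the seam while keeping the additive error exactly $\frac{p-1}{\e}$.
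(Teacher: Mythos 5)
You correctly isolate the real obstacle: the Gradient property of \Cref{lemma:equivalentPmono} gives $\nabla_i\|x+y\| \ge \big(\tfrac{\|x\|}{\|x+y\|}\big)^{p-1}\nabla_i\|x\|$, and the resulting exponent $-(p-1)\tfrac{\|y\|}{\|x\|}$ blows up precisely when $\|x\|$ is small. Your elementary bound $\tfrac{1}{1+t}\ge e^{-t}$ is also exactly the right computation. But your proposed fix goes in the wrong direction. The problematic regime is \emph{small} $\|x\|$, so the construction must neutralize the gradient there, not preserve the norm there and modify the large-$\|x\|$ regime. The paper's choice is $\Psi_\e(x) := \max\{\tfrac{p-1}{\e},\, \|x\|\}$: this is monotone, convex, and subadditive, it satisfies $\|x\|\le\Psi_\e(x)\le\|x\|+\tfrac{p-1}{\e}$, and its gradient vanishes whenever $\|x\|\le\tfrac{p-1}{\e}$, making the stability condition vacuous exactly where your ratio bound fails. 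In the remaining regime $\|x\|>\tfrac{p-1}{\e}$ one has
\[
\frac{\|x+y\|^{p-1}}{\|x\|^{p-1}} \;\le\; \Big(1+\tfrac{\|y\|}{\|x\|}\Big)^{p-1} \;\le\; \Big(1+\tfrac{\e\|y\|}{p-1}\Big)^{p-1} \;\le\; e^{\e\|y\|},
\]
which combined with your ratio inequality gives $\nabla\Psi_\e(x+y)\ge e^{-\e\|y\|}\nabla\Psi_\e(x)$ with $\delta=0$, i.e.\ exactly the statement. So the missing idea is the lower truncation at level $\tfrac{p-1}{\e}$, which is what converts the $\|x\|$-dependent exponent into the required $\|x\|$-free one.

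Your last-paragraph alternatives do not recover this. Gluing the norm below a threshold to a power-$p$ or quadratic piece above it leaves the gradient instability intact on the linear piece near the origin (where it actually occurs), and additionally destroys both subadditivity and the norm-approximation bound $\Psi_\e(x)\le\|x\|+\tfrac{\gamma}{\e}$, since any superlinear growth eventually exceeds $\|x\|$ by more than an additive constant. The function $\tfrac1p\|x\|^p$ you briefly consider has monotone gradients but, as you note, fails both remaining requirements; the resolution is not to use $\|\cdot\|^p$ as the approximant but only to use its gradient monotonicity inside the ratio argument, on the region where $\|x\|$ is guaranteed large by the truncation.
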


    \begin{proof}
        We claim that the function $\Psi_\e(x) = \max\{\frac{p-1}{\e}, \|x\|\}$ is the desired gradient-stable approximation of $\|\cdot\|$. The desired $(1,p-1)$-approximation property $\|x\| \le \psi_\e(x) \le \|x\| + \frac{p-1}{\e}$ follows directly from the definition. 

        For the gradient stability, consider $x,y \in \R^d_+$ and assume $\|x\| > \frac{p-1}{\e}$, else $\nabla \Psi_\epsilon(x) = 0$, so the claim follows. If $\|x\| > \frac{p-1}{\e}$, then $\Psi_\e(x + y) = \|x+y\|$ and $\Psi_\e(x) = \|x\|$, and 
        \begin{align*}
            \nabla \Psi_\e(x+y) = \nabla \|x+y\| &= \frac{\nabla (\|x+y\|^p)}{p \|x+y\|^{p-1}} \\
            &\ge \frac{\nabla (\|x\|^p)}{p \|x+y\|^{p-1}} = \frac{\nabla (\|x\|^p)}{p \|x\|^{p-1}} \cdot \frac{\|x\|^{p-1}}{\|x+y\|^{p-1}} = \Psi_\e(x) \cdot \frac{\|x\|^{p-1}}{\|x+y\|^{p-1}}, 
        \end{align*}
        where the inequality uses the $p$-\monotonicity of $\|\cdot\|$. To bound the last term, by triangle inequality
        \begin{align*}
            \frac{\|x+y\|^{p-1}}{\|x\|^{p-1}} \le  \frac{(\|x\| + \|y\|)^{p-1}}{\|x\|^{p-1}} = \bigg(1 + \frac{\|y\|}{\|x\|}\bigg)^{p-1} \le \bigg(1 + \frac{\|y\|}{(p-1)/\e}\bigg)^{p-1} \le e^{-\e \|y\|},   
        \end{align*}
        where the second inequality uses the fact that $\|x\| > \frac{p-1}{\e}$. Plugging this on the previous displayed inequality we have the gradient-stability $\nabla \Psi_\e(x+y) \ge e^{-\e \|y\|}\cdot \nabla \Psi_\e(x)$, as desired. This concludes the proof of the lemma.
    \end{proof}

    So, in particular, from \Cref{thm:orlicz} every Orlicz norm can be $O(1)$-approximated by an $O(\log n)$-\monotone norm. Therefore, every Orlicz norm admits a $0$-gradient-stable approximation with error $(1, O(\log n))$. This improves over the bound in \cite{KMS-SODA23}, which only gave a guarantee of $(O(\log n), O(\log^2 n))$.

\subsection{Applications}

    We can use \Cref{lemma:gradientstable} to get algorithms for all applications considered in \cite{KMS-SODA23}, where $\alpha = 1$ and $\gamma = p-1$. In particular, this yields a $O(p)$-competitive algorithm for online load balancing. Note that this mirrors the bound we obtained in \Cref{sec:intoSuper} in a more direct way.
    
    There are two more applications in \cite{KMS-SODA23} for bandit problems, which can also be combined with \Cref{lemma:gradientstable}. 
    For both these applications, 
    the following results  improve the approximation factors for Orlicz norms in $n$ dimensions from $O(\log^2 n)$ in \cite{KMS-SODA23}   to $O(\log n)$ via $p$-\monotonicity.
    
    The first one is \emph{Bandits with Knapsacks} \cite{ISSS-JACM22} (for the problem definition, see \cite{KMS-SODA23}).
    \begin{cor}
    Consider the \emph{Bandits with Knapsacks problem} for adversarial arrivals with $k$ actions and a $p$-\monotone norm $\| \cdot \|$. Let $B \geq 4 \cdot p \cdot \| \ones \|$. Then there exists an algorithm that takes $\mathrm{OPT}_{\textsc{BwK}}$ as its input and obtains reward at least 
    \[
    \Omega\left(\frac{1}{p} \mathrm{OPT}_{\textsc{BwK}}\right) - O\left(\frac{\mathrm{OPT}_{\textsc{BwK}} \cdot \| \ones \|}{p \cdot B}\right) \cdot \textsc{Regret}
    \]
    with probability $1 - q$, where $\textsc{Regret} = O( T k \log(k/q))$ and $q \in [0, 1]$ is a parameter. Moreover, this algorithm is efficient given gradient oracle access to the norm.
    \end{cor}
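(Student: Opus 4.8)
The plan is to invoke the Bandits-with-Knapsacks master theorem of \cite{KMS-SODA23} as a black box, feeding it the gradient-stable approximation guaranteed by \Cref{lemma:gradientstable}. Recall that \cite{KMS-SODA23} shows that whenever the norm defining the resource constraints admits a $\delta$-gradient-stable approximation with error $(\alpha,\gamma)$ for some $\delta \le \tfrac14$, and the budget satisfies $B \ge 4(\alpha+\gamma)\,\|\ones\|$, their algorithm (run on the approximating family $\{\Psi_\e\}$, with $\e$ tuned to the instance) is efficient given a gradient oracle for $\Psi_\e$ and, with probability $1-q$, achieves reward at least $\Omega\big(\tfrac{1}{\alpha+\gamma}\big)\mathrm{OPT}_{\textsc{BwK}} - O\big(\tfrac{\mathrm{OPT}_{\textsc{BwK}}\,\|\ones\|}{(\alpha+\gamma) B}\big)\cdot\textsc{Regret}$, where $\textsc{Regret} = O(Tk\log(k/q))$ is the regret of the underlying full-information online learner over the $k$ actions.

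Given this, the corollary is an immediate substitution. By \Cref{lemma:gradientstable}, a differentiable $p$-\monotone norm $\|\cdot\|$ admits a $0$-gradient-stable approximation with error $(1,p-1)$, so in the above notation $\delta = 0 \le \tfrac14$ and $\alpha + \gamma = p$; the budget hypothesis becomes $B \ge 4p\,\|\ones\|$, exactly as assumed. Plugging $\alpha+\gamma = p$ into the master bound yields reward at least $\Omega(\tfrac1p)\mathrm{OPT}_{\textsc{BwK}} - O\big(\tfrac{\mathrm{OPT}_{\textsc{BwK}}\,\|\ones\|}{pB}\big)\cdot\textsc{Regret}$ with $\textsc{Regret} = O(Tk\log(k/q))$, which is the claim. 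For efficiency, note the approximants from \Cref{lemma:gradientstable} are $\Psi_\e(x) = \max\{\tfrac{p-1}{\e},\|x\|\}$; a gradient oracle for $\|\cdot\|$ immediately gives one for $\Psi_\e$ (its gradient is $0$ when $\|x\| < \tfrac{p-1}{\e}$ and $\nabla\|x\|$ otherwise), so the resulting algorithm is efficient given gradient-oracle access to $\|\cdot\|$.

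Two minor points remain. First, \Cref{lemma:gradientstable} assumes $\|\cdot\|$ is differentiable; if it is not, we first apply the smoothing of \Cref{lemma:diff} to obtain a differentiable $p$-\monotone norm $(1+\e')$-approximating $\|\cdot\|$, which perturbs all constants by at most a $(1+\e')$ factor and hence is absorbed into the $\Omega(\cdot)$ and $O(\cdot)$. Second, one should check the instantiation is internally consistent: the master theorem of \cite{KMS-SODA23} chooses $\e$ as a function of $\mathrm{OPT}_{\textsc{BwK}}$ (which the algorithm is given), $T$, $k$, $B$, and $\|\ones\|$, so as to balance the additive error $\tfrac{\gamma}{\e}$ in the norm approximation against the multiplicative loss and the regret term; since our $(\alpha,\gamma) = (1,p-1)$ has the same shape as in the $\ell_p$/Orlicz special cases already handled there, this tuning goes through verbatim. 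I do not expect a genuine obstacle: the real content is entirely inside \Cref{lemma:gradientstable}, and what needs the most care is merely checking that the triple $(\alpha,\gamma,\delta)=(1,p-1,0)$ is fed into the theorem of \cite{KMS-SODA23} with the correct roles — in particular that $\delta=0$ comfortably satisfies their $\delta \le \tfrac14$ requirement and that $B \ge 4(\alpha+\gamma)\|\ones\|$ becomes precisely $B \ge 4p\|\ones\|$ — so that the constants hidden in $\Omega(\cdot)$ and $O(\cdot)$ match the statement.
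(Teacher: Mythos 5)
Your proposal is correct and matches the paper's own (very terse) justification: the paper likewise derives this corollary by feeding the $(\alpha,\gamma,\delta)=(1,p-1,0)$ gradient-stable approximation from \Cref{lemma:gradientstable} into the black-box Bandits-with-Knapsacks theorem of \cite{KMS-SODA23}, so that $\alpha+\gamma=p$ gives exactly the stated bounds and budget condition. Your additional remarks on smoothing non-differentiable norms and on obtaining a gradient oracle for $\Psi_\e$ from one for $\|\cdot\|$ are sound and only make explicit what the paper leaves implicit.
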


    The second one is \emph{Bandits with Vector Costs} \cite{KS-COLT20} (again, for the problem definition, see \cite{KMS-SODA23}).
    
    \begin{cor}
    Consider the problem \emph{Bandits with Vector Costs} with $k$ actions and a $p$-\monotone norm $\| \cdot \|$. There exists an algorithm that guarantees
    \[
    \left\| \sum_{t=1}^T C^{(t)} \cdot x^{(t)} \right\| = O(p) \cdot \left\| \sum_{t=1}^T C^{(t)} \cdot x^\ast \right\| + \| \ones \| \cdot \textsc{Regret}
    \]
    with probability $1-q$, where $\textsc{Regret} = O(\sqrt{T k \log(k/q)})$ and $q \in [0,1]$ is a parameter.
    \end{cor}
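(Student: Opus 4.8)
The plan is to derive this corollary with essentially no new work, by combining two ingredients already in hand: the gradient-stable approximation of $p$-\monotone norms from \Cref{lemma:gradientstable}, and the black-box reduction of \cite{KS-COLT20,KMS-SODA23} that turns any gradient-stable approximation of the cost norm into an online algorithm for Bandits with Vector Costs. Recall that \cite{KMS-SODA23} shows: if a monotone norm $\|\cdot\|$ admits a $\delta$-gradient-stable approximation with error $(\alpha,\gamma)$ for some $\delta \le \tfrac14$, then there is an algorithm whose played actions $x^{(t)}$ satisfy $\big\|\sum_{t=1}^T C^{(t)} x^{(t)}\big\| \le O(\alpha+\gamma)\cdot \big\|\sum_{t=1}^T C^{(t)} x^\ast\big\| + \|\ones\|\cdot\textsc{Regret}$ with probability $1-q$, where $\textsc{Regret} = O(\sqrt{Tk\log(k/q)})$; internally, this reduction is free to tune the scale parameter $\e$ of \Cref{def:gradStable}, and it does so precisely so that the additive $\gamma/\e$ slack is absorbed into the $\|\ones\|\cdot\textsc{Regret}$ term.

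Given this, the proof is a substitution of parameters. If $\|\cdot\|$ is differentiable, \Cref{lemma:gradientstable} supplies a $0$-gradient-stable approximation with error $(1,p-1)$; since $0 \le \tfrac14$, the reduction above applies with $\alpha = 1$ and $\gamma = p-1$, and $O(\alpha+\gamma) = O(p)$, which is exactly the claimed multiplicative factor. If $\|\cdot\|$ is not differentiable, first replace it by the infinitely differentiable $(1+\e')$-approximation that remains $p$-\monotone, guaranteed by the standard smoothing of \Cref{lemma:diff} (take $\e' = 1$, say); running the above on this smoothed norm only affects the hidden constants. This proves the corollary. The Orlicz-norm improvement mentioned in the surrounding text, from $O(\log^2 n)$ to $O(\log n)$, then follows by instantiating $\|\cdot\|$ with the $O(\log n)$-\monotone $O(1)$-approximation of \Cref{thm:orlicz}, so that $p = O(\log n)$.

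There is no genuine obstacle here; the conceptual content lives entirely in \Cref{lemma:gradientstable} (which is the only place $p$-\monotonicity is used) and in the reduction of \cite{KMS-SODA23}. The one point worth a moment's care is verifying that the approximant produced by \Cref{lemma:gradientstable}, namely $\Psi_\e(x) = \max\{\tfrac{p-1}{\e},\,\|x\|\}$, satisfies the structural hypotheses demanded by that reduction — monotonicity, convexity, and subadditivity. Monotonicity and convexity are immediate since $\Psi_\e$ is a maximum of a constant and a monotone convex function; for subadditivity, one checks the two cases $\|x+y\| \le \tfrac{p-1}{\e}$ (in which the left side is $\tfrac{p-1}{\e} \le \Psi_\e(x)+\Psi_\e(y)$) and $\|x+y\| > \tfrac{p-1}{\e}$ (in which $\Psi_\e(x+y) = \|x+y\| \le \|x\|+\|y\| \le \Psi_\e(x)+\Psi_\e(y)$). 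The remaining bookkeeping — that $\delta = 0$ clears the $\tfrac14$ threshold, and that $\e$ is a free parameter inside the reduction — is routine.
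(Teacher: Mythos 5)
Your proposal is correct and follows essentially the same route as the paper, which treats this corollary as an immediate consequence of \Cref{lemma:gradientstable} (the $0$-gradient-stable approximation with error $(1,p-1)$) plugged into the black-box reduction of \cite{KMS-SODA23,KS-COLT20}. Your extra checks --- subadditivity of $\Psi_\e$ and the smoothing step via \Cref{lemma:diff} for non-differentiable norms --- are the right details to verify, and the paper leaves them implicit.
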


    Note that the dependencies on $p$ are essentially tight because $\ell_p$-norms are $p$-\monotone and there are impossibility results for $\ell_p$-norms given in \cite{KS-COLT20}. 



\pagebreak

\appendix

\noindent {\LARGE \bf Appendix}

\section{Applications of Covering with Composition of Norms} \label{app:appComp}

To illustrate the scope of applications for the problem of Covering with Composition of Norms, we illustrate how it can model Online fractional Facility Location problem and fractional version of the Generalized Load-Balancing problem of \cite{DLR-SODA23}.

\paragraph{Online fractional Facility Location.} In this problem, there are multiple facilities $j = 1,\ldots,m$, each with an opening cost $c_j$, and multiple demand points $i = 1,\ldots,n$ with an associated connected cost $d_{ij}$ to connect to facility $j$ (note we do not require that the connection costs come from a metric space). The goal is to open a set of facilities and connect each demand to one facility in a way that minimizes the total opening and connection costs. This can be modeled by the convex program 
\begin{align*}
    \min ~& \sum_j c_j \cdot \max_i y_{ij} + \sum_{i,j} d_{ij} y_{ij}\\
    \textrm{s.t.} ~& \sum_j y_{ij} \ge 1, ~\forall i\\
    & y_{ij} \in \{0,1\}, ~\forall i,j,
\end{align*}
where $y_{ij}$ indicates whether demand $i$ connected to facility $j$. (In the fractional version of the problem, the variables $y_{ij}$ are allowed to take value in $[0,1]$.) This is a special case of Covering with Composition of Norms: the constraints are precisely of covering type, and the objective function can be expressed as the composed norm $\|(f_1(y|_{S_1}),\ldots, f_m(y|_{S_m}), f_{11}(y|_{S_{11}}), \ldots, f_{nm}(y|_{S_{nm}}))\|_1$, where for each $j=1,\ldots,m$, $f_j(x) = c_j \cdot \|x\|_{\infty}$ and $S_j = \{(1,j), (2,j), \ldots, (n,j)\}$, and for each $i = 1,\ldots,n$ and $j = 1,\ldots,m$ we have $f_{ij}(x) = d_{ij} \cdot \|x\|_{\infty}$ and $S_{ij} = \{(i,j)\}$. 

In \emph{online} (fractional) Facility Location, the demands $i = 1,\ldots,n$ come one by one, and when a demand arrives it is revealed its connection costs $d_{i1}, \ldots, d_{im}$ (the opening costs are known upfront); thus, part of the objective function is revealed online. As defined, in \cover the whole objective function is available to the algorithm, and thus, it does not formally capture online (fractional) Facility Location. However, we remark that our algorithm for \cover from \Cref{thm:cover} only require the current gradient of the objective function, which can be computed based on the online arrival of the demands, since it always maintains at value 0 the variables $y_{ij}$ of unseen demands $i$. Thus, \Cref{thm:cover} can indeed be used to solve online fractional Facility Location in non-metric spaces. This leads to a guarantee of $O(\log n \cdot \log^2 \max\{n,m\})$ for this problem (since the $\ell_1$ norm is $1$-\monotone and \Cref{thm:orlicz} and approximating each inner $\ell_\infty$ by an $O(\log n)$-\monotone norm). This can be compared to the $O(\log m \cdot (\log n + \log \log m))$ approximation for the (harder) integral fraction version of the problem, but using a specialized algorithm~\cite{nonMetricFL}. 


\paragraph{Generalized Load-Balancing problem of \cite{DLR-SODA23}.} In this problem, there are machines $i=1,\ldots,m$ and jobs $j=1,\ldots,n$, with $p_{ij} > 0$ denoting the processing time of job $i$ on machine $j$. Each job needs to be assigned to one machine; borrowing the notation from the previous part, we use $y_{ij} \in \{0,1\}$ to indicate whether job $j$ is assigned to machine $i$. Note $\sum_j y_{ij} \ge 1$ for all jobs $j$. Each machine $i$ has an inner norm $\phi_i$ on $\R^n$, and the load on this machine depends on the jobs assigned to it and is given by $load_i(y) := \phi_i(p_{i1} y_{i1}, \ldots, p_{in} y_{in})$. There is also an outer norm $\|\cdot\|$ on $\R^m$ used to aggregate the loads of the machines, so the total cost of the assignment $y$ is given by $\|(load_1(y), \ldots, load_m(y)\|$. The goal is to find the assignment with smallest total cost. 

This problem is also a special case of Covering with Composition of Norms: the constraints $\sum_j y_{ij} \ge 1$ are precisely of covering type, and the objective function can be expressed as the composed norm $\|(f_1(y|_{S_1}),\ldots, f_m(y|_{S_m})\|$, where for each machine $i\in [m]$, $f_i(x) = \phi_i(p_{i1} x_1, \ldots, p_{in} x_n)$ and $S_i = \{(i,1), (i,2), \ldots, (i,n)\}$. 

In the \emph{online} version of the problem, the jobs arrive one by one, and their processing times are revealed upon arrival. As in the case of Facility Location above, while parts of the objective function (namely the processing times $p_{ij}$) are revealed over time, and thus do not conform exactly to \cover. But again, it can be verified that our algorithm from \Cref{thm:cover} can still be used to obtain a competitive fractional solution. When the norms $\phi_1,\ldots,\phi_m$ and $\|\cdot\|$ are monotone symmetric, \Cref{thm:cover} and
\Cref{obs:normScaling} imply that
our algorithm obtains a fractional solution that is $O(\log^2 n \cdot \log^2 m \cdot \log^2 (\max\{m,n\} \cdot \gamma))$ competitive, where $\gamma = \max_i \frac{\max_j \phi_i(p_{ij})}{\min_{j : p_{ij} \neq 0} \phi_i(p_{ij})}$. This can be compared against the $O(\log n)$ approximation for the integral (harder) but offline (easier) version of the problem given recently in~\cite{DLR-SODA23}.

\section{Differentiability of Norms} \label{app:propNorms}

\subsection{Smoothing of $p$-\monotone norms} \label{app:smoothing}

    \begin{lemma}  \label{lemma:diff}
        For every $\e > 0$, every $p$-\monotone norm $\|\cdot\|$ can be $(1+\e)$-approximated by a $p$-\monotone norm $\vertm{\cdot}$ (i.e. $\|x\| \le \vertm{x} \le (1+\e) \|x\|$ for all $x \in \R^d_+$) that is infinitely differentiable everywhere except at the origin. 
    \end{lemma}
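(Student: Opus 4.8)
The plan is to smooth $\|\cdot\|$ by averaging over small random diagonal rescalings of the argument; this keeps us inside the class of $p$-\monotone norms thanks to \Cref{obs:normScaling}. Fix $\delta \in (0,\tfrac12\ln(1+\e))$ and let $\psi\colon\R^n\to\R_+$ be a $C^\infty$ probability density supported in the cube $(-\delta,\delta)^n$. For $t\in\R^n$ write $D_t:=\diag(e^{t_1},\dots,e^{t_n})$, and define, for $x\in\R^n_+$,
\[
  \vertm{x}\;:=\;e^{\delta}\left(\int_{\R^n}\|D_t x\|^p\,\psi(t)\,\d t\right)^{1/p}.
\]
I will check that $\vertm{\cdot}$ is a monotone $p$-\monotone norm, that $\|x\|\le\vertm{x}\le(1+\e)\|x\|$, and that $\vertm{\cdot}$ is $C^\infty$ on $\R^n_+\setminus\{0\}$; taking $p$-th roots at the end is harmless since the quantity inside is bounded below by $e^{-\delta p}\|x\|^p>0$ away from the origin.

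\emph{It is a $p$-\monotone monotone norm with the right ratio.} For each $t$, $D_t$ is a nonnegative matrix, so $x\mapsto\|D_tx\|$ is a monotone norm, and it is $p$-\monotone directly from the definition: $\|D_t(u+v+w)\|^p-\|D_t(u+v)\|^p\ge\|D_t(u+w)\|^p-\|D_tu\|^p$ by applying $p$-\monotonicity of $\|\cdot\|$ to the nonnegative vectors $D_tu,D_tv,D_tw$ (this is \Cref{obs:normScaling}, which needs no differentiability). All three properties pass to the ``$\ell_p$-average'' $x\mapsto(\int\|D_tx\|^p\psi(t)\,\d t)^{1/p}$: Minkowski's integral inequality gives subadditivity, homogeneity and monotonicity are immediate, and $p$-\monotonicity follows by integrating the displayed inequality against $\psi$; multiplying by the constant $e^\delta$ preserves everything. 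For the approximation factor, if $t$ lies in the support of $\psi$ then $e^{-\delta}x\le D_tx\le e^{\delta}x$ coordinatewise, so monotonicity of $\|\cdot\|$ gives $e^{-\delta}\|x\|\le\|D_tx\|\le e^{\delta}\|x\|$, hence $\|x\|\le\vertm{x}\le e^{2\delta}\|x\|\le(1+\e)\|x\|$.

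\emph{Smoothness, and the main obstacle.} Put $g(x):=\int_{\R^n}\|D_tx\|^p\,\psi(t)\,\d t$, so it suffices to show $g\in C^\infty(\R^n_+\setminus\{0\})$. On the interior $\R^n_{++}$ this is immediate: for $x$ with positive coordinates, the substitution $y_i=e^{t_i}x_i$ (so $\d t_i=\d y_i/y_i$) followed by $y=e^{w}$ makes all Jacobian factors cancel and yields the convolution identity
\[
  g(x)\;=\;\int_{\R^n}\|e^{w}\|^p\,\psi\big(w-\ln x\big)\,\d w\;=\;\big(\Phi*\check\psi\big)(\ln x),
\]
where $e^{w}:=(e^{w_1},\dots,e^{w_n})$, $\Phi(w):=\|e^{w}\|^p$, $\check\psi(w):=\psi(-w)$, and $\ln x:=(\ln x_1,\dots,\ln x_n)$; since $\Phi$ is continuous (hence locally integrable) and $\check\psi\in C^\infty_c$, the convolution $\Phi*\check\psi$ is $C^\infty$ on $\R^n$, and $x\mapsto\ln x$ is $C^\infty$ on $\R^n_{++}$. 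The part I expect to be the real work is smoothness across the proper faces of the orthant, where some coordinates of $x$ vanish and the substitution above degenerates: there one performs the substitution only in the non-vanishing coordinates, and the terms carried by the vanishing ones take the shape $\big(\prod_{j}x_j\big)\cdot\big(h*\varphi\big)(\ln x)$ with $h$ locally integrable (coming from Lipschitz-continuity estimates on $\|\cdot\|$ as a coordinate tends to $0$) and $\varphi\in C^\infty_c$. One then has to argue that the potentially singular argument $\ln x_j\to-\infty$ is damped by the $x_j$ prefactor well enough that the sum of these contributions is a genuine $C^\infty$ function of $x$ near the face. This boundary analysis is the only delicate step; everything else is routine.
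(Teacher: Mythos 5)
Your construction is the same mollification idea as the paper's: the paper sets $\vertm{x} := \E\,\|(R_1x_1,\dots,R_dx_d)\|$ for independent $R_i\in[1,1+\e]$ with $C^\infty$ densities, i.e.\ an arithmetic average over random diagonal rescalings, and defers smoothness to "standard arguments" with a citation to Schneider, Section 3.4. The one genuine difference is that you aggregate the scenarios with the $L^p$ mean $\big(\int\|D_tx\|^p\psi(t)\,\d t\big)^{1/p}$ rather than the arithmetic mean, and this is in fact the more robust choice: supermodularity of the $p$-th power is manifestly preserved when you integrate the $p$-th powers termwise, whereas the paper's one-line claim that $p$-\monotonicity "is preserved by taking averages" of the norms themselves is delicate — its own \Cref{lemma:sumNotP} shows that a nonnegative combination of two $p$-\monotone norms can fail to be $o(\sqrt n)$-\monotone, so your route sidesteps a real subtlety. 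Your verification of the norm axioms via Minkowski's integral inequality, of the $(1+\e)$ approximation ratio, and of $C^\infty$ regularity on the open orthant via the convolution identity in logarithmic coordinates are all correct. The one piece you leave open — smoothness at points of $\R^d_+\setminus\{0\}$ where some coordinates vanish — is exactly the step the paper also does not carry out; since all uses of differentiability in the paper (gradient and Hessian characterizations) only require derivatives on the nonnegative orthant, that boundary analysis is where the remaining work, if any, lies for both proofs.
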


    \begin{proof}
     Let $R_1,\ldots,R_d$'s are independent random variables in $[1,1+\e]$ that have pdf $\phi : \R \rightarrow \R$ of class $C^{\infty}$ (infinitely differentiable). The norm $\vertm{x} := \E \|(R_1 x_1, R_2 x_2, \ldots, R_d x_d)\|$ has the desired properties. Clearly for every non-negative vector $x$, $\|x\| \le \vertm{x} \le (1+\e) \|x\|$. Standard arguments show that $\vertm{\cdot}$ is $C^{\infty}$ with the exception of the origin~\cite[Section 3.4]{schneider}. Finally, $\vertm{\cdot}$ is $p$-\monotone, since for each scenario $x \mapsto \|(R_1 x_1, R_2 x_2, \ldots, R_d x_d)\|$ is $p$-\monotone and this is property preserved by taking averages.
    \end{proof}


\subsection{Properties of the gradient}

We collect standard properties of general norms, in particular in relation to their gradients and duals. They are all consequences of involution of duality, i.e., $(\|\cdot\|_{\star})_{\star} = \|\cdot\|$ and compactness of norm balls.

\begin{lemma} \label{lemma:gradNorms}
    Every differentiable norm $\|\cdot\|$ in $\R^d$ satisfies the following for every $x \in \R^d \setminus \{0\}$:

    \begin{enumerate}[itemsep=0pt,topsep=0pt]
        \item $\nabla \|x\| = \argmax_{y : \|y\|_{\star} \le 1} \ip{x}{y}$
        
        \item $\|\,\nabla \|x\|\,\|_{\star} = 1$

        \item $\|x\| = \ip{\nabla \|x\|}{x}$

        \item $\nabla \|x\|$ invariant to positive scaling, i.e., $\nabla \|\alpha x\| = \nabla \|x\|$ for all $\alpha > 0$.  
    \end{enumerate}
\end{lemma}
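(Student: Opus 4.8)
\textbf{Plan of proof for \Cref{lemma:gradNorms}.}

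The four properties are all standard consequences of convex duality, so the plan is to recall the relevant facts and chain them together. First I would recall that for any norm $\|\cdot\|$ one has the variational formula $\|x\| = \max_{y : \|y\|_\star \le 1} \ip{x}{y}$; this is just the bipolar/biduality statement $(\|\cdot\|_\star)_\star = \|\cdot\|$ applied at $x$, together with the fact that the dual of the dual norm is the original norm and its unit ball is compact (so the sup is attained). The maximizer set is the subdifferential $\partial \|x\|$ at $x$; since $\|\cdot\|$ is assumed differentiable at $x \neq 0$, this subdifferential is the single point $\nabla \|x\|$. This gives Item 1 directly.

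Item 2 then follows: since $\nabla\|x\|$ lies in the unit ball of $\|\cdot\|_\star$, we have $\|\nabla\|x\|\|_\star \le 1$; and if it had dual norm strictly less than $1$, scaling it up would still be feasible in the maximization of Item 1 and strictly increase the inner product $\ip{x}{\cdot}$ (as $x \neq 0$), contradicting optimality. Hence $\|\nabla\|x\|\|_\star = 1$. Item 3 is the complementary-slackness / Euler-type identity: evaluating the optimal $y = \nabla\|x\|$ in the formula of Item 1 gives $\|x\| = \ip{x}{\nabla\|x\|}$. (Alternatively, it follows from Euler's identity for the $1$-homogeneous function $\|\cdot\|$, differentiating $\|\lambda x\| = \lambda \|x\|$ in $\lambda$ at $\lambda=1$.) For Item 4, observe that for $\alpha > 0$ the scaled unit-ball maximization $\argmax_{\|y\|_\star \le 1} \ip{\alpha x}{y} = \argmax_{\|y\|_\star \le 1} \alpha \ip{x}{y} = \argmax_{\|y\|_\star \le 1} \ip{x}{y}$, so by Item 1 applied at $\alpha x$ and at $x$ we get $\nabla \|\alpha x\| = \nabla \|x\|$; equivalently, differentiate the $0$-homogeneity of the gradient of a $1$-homogeneous function.

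There is no real obstacle here — the only mild care needed is to justify attainment of the maxima (compactness of the dual unit ball in finite dimensions) and the identification $\partial\|x\| = \{\nabla\|x\|\}$ under the differentiability hypothesis, both of which are textbook facts (e.g., the cited Corollary E.1.3.6 and Theorem E.1.4.1 of \cite{HUL}). I would state the proof in three or four lines invoking these facts rather than re-deriving biduality from scratch.
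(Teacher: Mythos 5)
Your proposal is correct and matches the paper's intent exactly: the paper gives no written proof of \Cref{lemma:gradNorms}, stating only that all four items are "consequences of involution of duality, i.e., $(\|\cdot\|_{\star})_{\star} = \|\cdot\|$ and compactness of norm balls," which is precisely the biduality-plus-subdifferential argument you outline. Nothing further is needed.
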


\section{Missing Proofs}

\subsection{Proof of \Cref{lem:counterEgGeneralNorm}} \label{sec:missingcounterEgGeneralNorm}
We consider a norm obtained by summing $\ell_\infty$ norms on disjoint coordinates. Formally, we partition the $n$ coordinates into $\sqrt{n}$ blocks  of $\sqrt{n}$ coordinates each. 
We use the notation $m:=\sqrt{n}$ and $B_k = \{(i, k) \mid i \in \{1, \ldots, m\}\}$ for $k \in [m]$. We   think of these blocks as columns of a matrix and the sets $(i, 1), \ldots, (i, m)$ for $i \in [m]$ as rows of the matrix.
Now our norm 
\[ \textstyle \|x\| := \sum_{j \in [m]} \|x_{B_k}\|_\infty,\] 
where  $x_{B_k}$ is the $\sqrt{n}$-dimensional vector obtained by taking the coordinates of $x$ in $B_k$. 

Consider any $\alpha$-approximating function $f: \R^n_+ \rightarrow \R_+$ such that $f$ is subadditive and $f^p$ is supermodular for some $p \geq 1$. Reorder the columns of the matrix such that 
\[
f(D_i) \leq f(D_{i-1} \cup \{(i, k)\}) \text{ for all $k \geq i$,}
\]
where $D_i = \{(1,1), \ldots, (i, i)\}$. Then, for $S_i = \{(1,m),\ldots,(i,m)\}$, we have
\begin{align*}
f^p(D_{m}) ~=~ \sum_{i = 1}^{m} \big( f^p(D_i) - f^p(D_{i-1}) \big) ~&\leq~ \sum_{i = 1}^{m} \big(  f^p(D_{i-1} \cup \{(i, m)\}) - f^p(D_{i-1}) \big) \\
&\leq~ \sum_{i = 1}^{m} \big( f^p(D_{m} \cup S_i) - f^p(D_{m} \cup S_{i-1}) \big)\\
&=~ f^p(D_{m} \cup S_{m}) - f^p(D_{m}).
\end{align*}
So, $
2^{1/p} f(D_m) \leq f(D_m \cup S_m) \leq f(D_m) + f(S_m),$ 
which implies
\[
f(D_m) \leq \frac{1}{2^{1/p} - 1} f(S_m) ~\leq~ \frac{p}{\ln 2} f(S_m).
\]
For $f$ to be an $\alpha$-approximation of the norm, we need $f(D_m) \geq m$ and $f(S_m) \leq \alpha$. Therefore $\alpha p \geq (\ln 2) m$.

\subsection{Proof of \Cref{lem:SymmetricToOrlicz}} \label{sec:missingProofs}

    Let $c>1$ be some fixed constant and $n$ be a power of $2$.
    Consider the norm $\|\cdot\|$ whose unit ball is given by $\log n + 1$ constraints: 
    \begin{align} \label{eq:OrliczToSymm}
        \sum_{i=1}^{2^j} (x^{\downarrow})_i \leq c^j   \qquad \text{for $j \in \{0,1,\ldots,\log n\}$}.
    \end{align} 
     Now consider the vector $x$ where each of these $\log n+1$ constraints are tight, i.e, starting with $x_1 = 1$, we have for $i \in [2^{j},2^{j+1})$ that      
     $x_i = (c^{j+1} - c^{j})/2^{j} = (\tfrac{c}{2})^{j} (c-1)$. 
    Thus, $\|x\| = 1$. 

    Suppose the norm $\|\cdot\|$ can be approximated by an Orlicz norm $\|\cdot\|_G$ within some factor $\alpha \geq 1$, i.e., $\|u\| \leq \|u\|_G \leq \alpha \|u\|$ for all $u \in \R_+^n$. This implies $\sum_i G(x_i) \leq 1$, or in other words
    \[
        \textstyle G(1)+\sum_{j=1}^{\log n} 2^{j}\cdot G\Big( (\tfrac{c}{2})^{j} (c-1) \Big) \leq 1.
    \]
    This implies that there exists a $k \in \{0,1,\ldots,\log n\}$ such that $G\Big( (\tfrac{c}{2})^{k} (c-1) \Big) \leq \frac{1}{2^k (1+\log n)}$. 

    Now define vector a $y$ such that $y_i = (\tfrac{c}{2})^{k} (c-1)$ for $i \leq 2^k (1+\log n)$ and $y_i =0 $ otherwise. We first observe that $\|y\|_G \leq 1$ since $\sum_i G(y_i) = 1$. Next, we will show that $\|y\|$ is much larger than $1$, and hence $\alpha$ needs to be large. 
    
    To calculate $\|y\|$, consider the constraint given by \eqref{eq:OrliczToSymm} for $2^j =  2^k (1+\log n)$.  For this constraint to be feasible, up to the approximation factor $\alpha$, we need
    \[
        (\tfrac{c}{2})^{k} (c-1) \cdot 2^j \leq \alpha \cdot c^j \qquad \Longleftrightarrow \qquad \tfrac{c-1}{c^{j-k}} \cdot (1+\log n) \leq \alpha .
    \]
    Since $j \geq  k + \log\!\log n$, we get 
    \[
        \alpha  \geq \tfrac{c-1}{c^{\log\!\log n}} \cdot (1+\log n) \geq (c-1)\cdot (\log n)^{1-\log c}.
    \]
    Taking $c= 1+\epsilon$ for some small constant $\epsilon$ implies $\alpha \geq \epsilon \cdot (\log n)^{1-\Theta(\epsilon)}$, which completes the proof.


    \subsection{Proof of \Cref{thm:cover}: Discharging the Assumptions} \label{app:cover}

    We use essentially the construction from \cite{NS-ICALP17} to convert, in an online fashion, any instance of \cover into an equivalent one satisfying \Cref{ass:cover} stated in \Cref{sec:cover}.
    
    More precisely, consider an instance $\mathcal{I}$ of \cover with objective function given by the outer norm $\|\cdot\|$ and inner norms $f_1,\ldots,f_k$, restriction sets $S_1, \ldots, S_k \subseteq [n]$, and online constraints $\ip{A_1}{y} \ge 1, \ip{A_2}{y} \ge 1, \ldots$. We then construct the instance $\bar{\mathcal{I}}$ with modified ground set $\bar{U}$, inner norms $\bar{f}_1,\ldots,\bar{f}_k$, partition sets $\bar{S}_1, \ldots, \bar{S}_k \subseteq \bar{U}$, and online constraints $\ip{\bar{A}_1}{y} \ge 1, \ip{\bar{A}_2}{y} \ge 1, \ldots$ (the outer norm remaining the same) that has the desired property:
    \begin{align*}
        \textrm{The restricting sets $\bar{S}_1,\ldots,\bar{S}_k$ partitions the variable set $\bar{U}$.}
    \end{align*}

    \paragraph{Construction of the instance $\bar{\mathcal{I}}$.} First, we duplicate each variable $y_i$ into a copy $\bar{y}_{i,\ell}$ for each set $S_\ell$ containing $i$. More precisely, define the set of variables $\bar{U}$ to consist of all pairs $(i,\ell)$ with $i \in S_\ell$, and for each $(i, \ell) \in \bar{U}$ introduce the variable $\bar{y}_{i,\ell}$. Define $\bar{S}_{\ell}$ be the set of pairs $(i,\ell)$ (ranging over $i$) in $\bar{U}$, i.e., this is the ``lifting'' $\bar{S}_\ell = \{(i, \ell) : i \in S_\ell\}$ of the set $S_\ell$. 

    Then define the modified inner norms $\bar{f}_\ell : \R^{\bar{S}_\ell} \rightarrow \R$ in the natural way: $$\bar{f}_\ell(\bar{y}|_{\bar{S}_\ell}) := f_\ell((\bar{y}_{i,\ell})_{i \in S_\ell}),~~\forall \bar{y}.$$ Finally, we add new constraints to handle the multiple copies $(i,\ell)$ of the same original coordinate $i$. More precisely, for each constraint $\ip{A_r}{y} \ge 1$, we define the modified constraints $\ip{\bar{A}_r^{\pi}}{\bar{y}} \ge 1$ indexed by all possible ``copy selector'' functions $\pi$ that map $i \mapsto \pi(i)$ so that $i \in S_{\pi(i)}$ as follows: the vector $\bar{A}^\pi_r \in \R^{\bar{U}}_+$ has coordinate $(i,\ell)$ given by $$(\bar{A}^\pi_r)_{(i,\ell)} := \left\{\begin{array}{ll} (A_r)_i, & \textrm{if $\ell = \pi(i)$} \\
        0, & \textrm{else}\end{array} \right.$$

     During the online presentation of the instance, when the  constraint $\ip{A_r}{y} \ge 1$ comes, we present the constraints $\ip{\bar{A}_r^{\pi}}{\bar{y}} \ge 1$, ranging over all copy selectors $\pi$, in any order. This concludes the definition of the instance $\bar{\mathcal{I}}$.

    \paragraph{Properties of $\bar{\mathcal{I}}$.} By definition of the sets $\bar{S}_1,\ldots,\bar{S}_k$, they partition the variable set $\bar{U}$, as desired. 
    
    Moreover, instances $\mathcal{I}$ and $\bar{\mathcal{I}}$ have equivalent solutions. 
    Given a feasible solution $y$ for the original instance $\mathcal{I}$, then consider the solution given by $\tilde{y}_{i,\ell} := y_i$  for all $\ell$ and $i \in S_\ell$. It is clear that both solutions have the same value on their respective instances. Moreover, $\tilde{y}$ is feasible for $\bar{\mathcal{I}}$, since for every constraint $\bar{A}^{\pi}_r$
    \begin{align*}
        \ip{\bar{A}^{\pi}_r}{\tilde{y}} = \sum_i \sum_{\ell} (\bar{A}^{\pi}_r)_{(i,\ell)} \cdot \tilde{y}_{i,\ell} = \sum_i (A_r)_i \cdot y_i = \ip{A_r}{y} \ge 1.
    \end{align*}

    Conversely, given any feasible solution $\tilde{y}$ for  $\bar{\mathcal{I}}$, the solution $y_i := \min_{\ell : i \in S_\ell} \tilde{y}_{i,\ell}$ is: 1) Feasible for the original instance $\mathcal{I}$: using the copy selector $\pi(i) := \argmin_{\ell : i \in S_\ell} \tilde{y}_{i,\ell}$, we get
    \begin{align*}
    \ip{A_r}{y} = \sum_i (A_r)_i \cdot y_i = \sum_i (A_r)_i \cdot \tilde{y}_{i,\pi(i)} = \sum_{i,\ell} (\bar{A}_r^{\pi})_{(i,\ell)} \cdot \tilde{y}_{i,\ell} = \ip{\bar{A}_r^{\pi}}{\tilde{y}} \ge 1,
    \end{align*}
    and; 2) The cost of $y$ on the instance $\mathcal{I}$ is at most that of $\tilde{y}$ on the instance $\bar{\mathcal{I}}$, since $y_i \le \tilde{y}_{i,\ell}$ for all $\ell$ such that $i \in S_\ell$, which together with the monotonicity of the norm $f_\ell$ implies
    \begin{align*}
       f_\ell(y|_{S_\ell}) \le f_\ell((\tilde{y}_{i,\ell})_{i \in S_\ell}) = \bar{f}_\ell(\tilde{y}|_{\bar{S}_\ell}), 
    \end{align*}
    and the claim follows from the monotonicity of the outer norm $\|\cdot\|$. 

    These observations show that the optimum of both instances $\mathcal{I}$ and $\bar{\mathcal{I}}$ is the same, and given an $\alpha$-approximate solution $\tilde{y}$ for the latter we can construct (in an online fashion) an $\alpha$-approximate solution $y$ for the original instance $\mathcal{I}$. 

    Finally, we note that in the new instance $\bar{\mathcal{I}}$, the outer and inner norms have the same \monotonicity parameters $p',p$ as in the original instance, and the ``width'' parameters $\rho$ and $\gamma$, as well as the ``sparsity'' parameter $d(\bar{\mathcal{I}}) = \max\{\max_{r,\pi} \supp(\bar{A}_r^{\pi})\,,\, \max_\ell |\bar{S}_\ell|\}$ are the same as in the original instance. 
    
    \medskip
    
    In particular, a $O(p'\, p \log^2 d \rho)$-approximation for the modified instance $\bar{\mathcal{I}}$ can be used to give a solution with the same guarantee for the original instance $\mathcal{I}$. This discharges \Cref{ass:cover}, and concludes the proof of \Cref{thm:cover}.


\subsection{Complete Proof of \Cref{thm:stochProbing}} \label{app:stochProbing} 

In each scenario, let $T = |S^*|$ be the (random) number of probes performed by the optimal solution, and let $I_1,\ldots,I_T$ be the sequence of probes it performs (so $\{I_1,\ldots,I_T\} = S^*$). Since the theorem does not depend on the number of items $n$, we can introduce $n$ additional items (i.e. coordinates) with no value $X_{n+1},\ldots,X_{2n} = 0$ and assume that the number of probes $T$ equals exactly $n$, by padding with additional probes $I_{T+1} = n+1, I_{T+2} = n+2,\ldots, I_n = n + (n - T)$. Let $V_t := e_{I_t} X_{I_t}$. Again let $\bar{X}_1,\ldots,\bar{X}_n$ be an independent copy of $X_1,\ldots,X_n$, and let $\bar{I}_1,\ldots,\bar{I}_{\bar{T}}$ be the probes performed by the optimal adaptive strategy based on the values $\bar{X}_t$'s. Let $\bar{V}_t := e_{I_t} \bar{X}_{I_t}$, and recall that 
\begin{align*}
    \adapt = \E \|V_1 + \ldots V_n\| ~~~~\textrm{ and }~~~~ \textrm{hallucinating} = \E \bigg\|\sum_{t \le n} e_{\bar{I}_t} X_{\bar{I}_t} \bigg\| = \E \|\bar{V}_1 + \ldots + \bar{V}_n\|. 
\end{align*}
Thus, proving  \Cref{thm:stochProbing} is equivalent to proving that $\E\|V_1 + \ldots + V_n\| \le O(p) \, \E \|\bar{V}_1 + \ldots + \bar{V}_n\|$. 


Throughout, we use $(\F_t)_t$ to denote the filtration generated by the sequence $(I_t, X_{I_t},\bar{X}_{I_t})_t$ (or equivalently, by the sequence $(V_t,\bar{V}_t)_t$); so $\F_t$ can be thought as the history up to time $t$. We will use the fact that the problem has the optimal substructure property: Consider a prefix $I_1,I_2,\ldots,I_t$ of the probes; since the feasible set $\mathcal{S}$ is downward closed, the remaining probes $I_{t+1}, I_{t+2},\ldots, I_T$ forms a feasible solution, and so can obtain expected value at most $\adapt$, namely $\E[\|V_{t+1} + V_{t+1} + \ldots + V_n\| \mid \F_t ] \le \adapt$. By allowing the prefix size $t$ to depend on the history up to that point, and using the fact that when $t = n$ there are no remaining probes (so no value), we have the following.


\begin{obs} \label{obs:optSubstructure}
    For any stopping time $\tau$ for adapted to the filtration $(\F_t)_t$, we have 
    \begin{align*}
        \E \|V_{\tau+1} + V_{\tau+2} + \ldots + V_n\| \,\le\, \adapt \cdot \Pr(\tau < n)\,.
    \end{align*}
\end{obs}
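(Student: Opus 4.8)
The plan is to decompose the expectation according to the value of the stopping time $\tau$ and then invoke, on each piece, the optimal-substructure bound $\E[\|V_{t+1}+\dots+V_n\| \mid \F_t] \le \adapt$ stated in the paragraph preceding the observation. Since $\tau$ takes values in $\{0,1,\dots,n\}$ and the summand for $t=n$ involves an empty sum (hence is $0$), I would first write
\[
\E\|V_{\tau+1}+\dots+V_n\| \;=\; \sum_{t=0}^{n-1} \E\big[\ones(\tau=t)\cdot\|V_{t+1}+\dots+V_n\|\big].
\]

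Next, I would use that $\tau$ is a stopping time adapted to $(\F_t)_t$, so $\{\tau = t\}\in\F_t$; this lets me move the indicator inside a conditional expectation and apply the substructure bound pointwise:
\[
\E\big[\ones(\tau=t)\cdot\|V_{t+1}+\dots+V_n\|\big]
= \E\Big[\ones(\tau=t)\cdot\E\big[\|V_{t+1}+\dots+V_n\|\,\big|\,\F_t\big]\Big]
\le \adapt\cdot\Pr(\tau=t),
\]
for every $t < n$. Summing over $t=0,\dots,n-1$ then yields
\[
\E\|V_{\tau+1}+\dots+V_n\| \;\le\; \adapt\sum_{t=0}^{n-1}\Pr(\tau=t) \;=\; \adapt\cdot\Pr(\tau < n),
\]
which is the claim.

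The substantive ingredient is the conditional inequality $\E[\|V_{t+1}+\dots+V_n\|\mid\F_t]\le\adapt$: conditioned on the first $t$ probes, the remaining probes $\{I_{t+1},\dots,I_T\}$ are a subset of $S^*\in\mathcal{F}$ and hence feasible (downward-closedness), and since the values of the as-yet-unprobed items are independent of $\F_t$ and still distributed according to $\mathcal{D}$, the continuation of the optimal policy is a bona fide adaptive policy on the original instance, so its expected value is at most $\adapt$. This is exactly the optimal-substructure fact already asserted in the text just before the observation, so in the write-up it enters as a given rather than as an obstacle; the remaining manipulation — partitioning on $\tau$, using $\{\tau=t\}\in\F_t$, and summing — is short and routine. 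The only point requiring a line of care is handling the boundary case $\tau=n$ (empty sum, contributes nothing), which is why the bound comes out with $\Pr(\tau<n)$ rather than $1$ on the right-hand side.
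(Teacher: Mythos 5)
Your proof is correct and follows the same route the paper intends: the observation is stated as a direct consequence of the conditional bound $\E[\|V_{t+1}+\dots+V_n\|\mid\F_t]\le\adapt$, and your decomposition over the events $\{\tau=t\}\in\F_t$ together with the vanishing $t=n$ term is exactly the formalization of the paper's one-line justification ("by allowing the prefix size $t$ to depend on the history"). Nothing is missing.
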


To continue the analysis, we define the ``low'' and ``high'' parts of the $j$-th item as $X^L_j := X_j \cdot \ones(\|e_j X_j\| \le \frac{\adapt}{16cp})$ and $X^H_j := X_j \cdot \ones(\|e_j X_j\| > \frac{\adapt}{16cp})$. Also decompose the $V_t$'s in a similar way, namely define $V^L_t := e_{I_t} X^L_{I_t}$ and  $V^H_t := e_{I_t} X^H_{I_t}$, so $V_t = V^L_t + V^H_t$. We show that the ``hallucinating'' non-adaptive strategy compares favorably against the value that $\adapt$ obtains from the low and high parts of the items, respectively $\|V^L_1 + \ldots + V^L_n\|$ and $\|V^H_1 + \ldots + V^H_n\|$.


\subsubsection{High part}

    Here we show the following.

    \begin{lemma} \label{lemma:highPart}
    $$\E \|V^H_1 + \ldots + V^H_n\| \le O(p) \, \E \|\bar{V}_1 + \ldots + \bar{V}_n\|.$$ In particular, if the high part obtains value $\E \|V^H_1 + \ldots + V^H_n\| \ge \frac{\adapt}{2}$, then the hallucinating strategy obtains value at least $\frac{\adapt}{O(p)}$. 
    \end{lemma}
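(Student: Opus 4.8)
The plan is to bound the value that the optimal adaptive strategy extracts from ``high'' items (those $j$ with $\|e_j X_j\| > \tfrac{\adapt}{16cp}$) by the value of the \emph{entire} hallucinating strategy $\E\|\bar V_1 + \ldots + \bar V_n\|$, and I would do this in three steps: an upper bound on $\E\|V^H_1 + \ldots + V^H_n\|$, a lower bound on $\E\|\bar V_1 + \ldots + \bar V_n\|$, and a comparison of the two probabilities appearing in these bounds.

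For the upper bound, I would introduce the stopping time $\tau$ equal to the first round $t$ at which \adapt probes a high item (and $\tau = n$ if no such round exists). Since $V^H_s = 0$ for $s < \tau$, the triangle inequality together with monotonicity of the norm gives $\|V^H_1 + \ldots + V^H_n\| \le \|V^H_\tau\| + \|V_{\tau+1} + \ldots + V_n\|$. Taking expectations, the first term is at most $12\,\adapt \cdot \Pr(E)$, where $E$ is the event that \adapt encounters a high item, using that $V^H_\tau$ is nonzero only on $E$ and that $\|V^H_\tau\| \le \|X_{S^*}\| \le 12\,\adapt$ by the assumption $f(X_{S^*}) \le 12\,\adapt$ and monotonicity; the second term is at most $\adapt \cdot \Pr(\tau < n) \le \adapt \cdot \Pr(E)$ by the optimal-substructure bound (\Cref{obs:optSubstructure}). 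Hence $\E\|V^H_1 + \ldots + V^H_n\| \le 13\,\adapt \cdot \Pr(E)$. For the lower bound, recall $\bar V_1 + \ldots + \bar V_n = \sum_{j \in S^*} e_j \bar X_j$ where $S^*$ is the (real) set of items probed by \adapt; if some $j \in S^*$ has $\|e_j \bar X_j\| > \tfrac{\adapt}{16cp}$ --- call this event $\bar E$ --- then by monotonicity $\|\bar V_1 + \ldots + \bar V_n\| > \tfrac{\adapt}{16cp}$, so $\E\|\bar V_1 + \ldots + \bar V_n\| \ge \tfrac{\adapt}{16cp} \cdot \Pr(\bar E)$.

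It then remains to show $\Pr(E) \le O(1) \cdot \Pr(\bar E)$, which is the technical heart. The key observation is that the decision to probe $I_t$ is made before $X_{I_t}$ is revealed, so conditioned on $\F_{t-1}$ both $X_{I_t}$ and $\bar X_{I_t}$ are fresh draws from $\mathcal{D}_{I_t}$; writing $p_j := \Pr_{Y \sim \mathcal{D}_j}(\|e_j Y\| > \tfrac{\adapt}{16cp})$ and $\Sigma := \sum_{t=1}^n p_{I_t}$, conditioning on the full real history makes the hallucinated highs conditionally independent, which gives $\Pr(\bar E) = \E[\,1 - \prod_t (1 - p_{I_t})\,] \ge (1 - e^{-1}) \cdot \E[\min\{1, \Sigma\}]$. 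For the matching upper bound on $\Pr(E)$, the naive union bound $\Pr(E) \le \E[\Sigma]$ is too lossy when $\Sigma$ is large with small probability; the fix is to introduce a second stopping time $\tau'$, the first round at which the running mass $\sum_{s \le t} p_{I_s}$ reaches $1$, split $\Pr(E)$ according to whether the first high probe occurs before or after $\tau'$, and bound each piece by $O(1) \cdot \E[\min\{1, \Sigma\}]$ (the ``after'' piece is contained in $\{\Sigma \ge 1\}$, and the ``before'' piece is $\E[\sum_{t \le \min\{\tau, \tau', n\}} p_{I_t}] \le \E[\min\{2, \Sigma\}]$). Chaining the three steps yields $\E\|V^H_1 + \ldots + V^H_n\| \le 13\,\adapt \cdot \Pr(E) \le O(\adapt) \cdot \Pr(\bar E) \le O(cp) \cdot \E\|\bar V_1 + \ldots + \bar V_n\| = O(p) \cdot \E\|\bar V_1 + \ldots + \bar V_n\|$, using that $c = O(1)$; the ``in particular'' statement follows immediately.

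I expect the probability comparison $\Pr(E) \le O(1)\Pr(\bar E)$ to be the main obstacle: the naive first attempts (comparing per-coordinate probabilities, or conditioning on the probe set $S^*$, or a plain union bound) all break because the adaptive probe set can correlate with the realized values and because $\Sigma$ is unbounded; the stopping-time truncation at total mass $1$ is what makes the estimate tight, and it has to be set up carefully so that the relevant events remain adapted to $(\F_t)_t$. The remaining steps are routine applications of monotonicity, the boundedness assumption, and optimal substructure.
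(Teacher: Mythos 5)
Your reduction of the lemma to the probability comparison $\Pr(E) \le O(1)\cdot\Pr(\bar E)$ has a genuine gap at the step $\E\|V^H_\tau\| \le 12\,\adapt\cdot\Pr(E)$: you justify it by ``the assumption $f(X_{S^*})\le 12\,\adapt$ in every scenario'', but that assumption is not available for this lemma. It is one of the assumptions of \Cref{sec:stochProbing} that the appendix proof containing \Cref{lemma:highPart} is precisely meant to discharge (only the ``same number of probes'' assumption is re-established there by padding; the two boundedness assumptions are re-established only for the truncated \emph{low-part} instance inside the proof of \Cref{lemma:lowPart}). Indeed, the whole reason for splitting off a high part is that item values are unbounded here. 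Without per-scenario boundedness the step fails: take a single item with value $M$ with probability $1/M$ and $0$ otherwise, so $\adapt = 1$ and $\Pr(E) = 1/M$, yet $\E\|V^H_\tau\| = 1$, which is not $O(\adapt)\cdot\Pr(E)$. More structurally, any argument that first collapses the high part to the \emph{event} of seeing a high item discards the magnitudes of the high values, and those magnitudes cannot be recovered from $\Pr(\bar E)$ alone.

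The fix, which is what the paper does, is to keep the magnitudes: your stopping-time/optimal-substructure computation correctly yields $\E\|V^H_1+\ldots+V^H_n\| \le (16cp+1)\,\E\|V^H_\tau\| \le (16cp+1)\,\E\max_t\|V^H_t\|$, and one then compares $\E\max_t\|V^H_t\|$ to $\E\max_t\|\bar V^H_t\| \le \E\|\bar V_1+\ldots+\bar V_n\|$ via a decoupling inequality for maxima of tangent sequences, $\E\max_t Y_t \le 2\,\E\max_t Y'_t$ (\Cref{lemma:tangentMax}, essentially Lemma 2.3.3 of~\cite{deLaPena}). Your probability comparison itself --- conditioning on the real history to make the hallucinated highs conditionally independent, and truncating the adaptive union bound at total mass $1$ via a second stopping time --- is sound, but it is exactly the single-threshold instance of that inequality applied to indicators; to handle unbounded values you would need it simultaneously at every threshold (a layer-cake integration over thresholds $\lambda' \ge \frac{\adapt}{16cp}$), which is what the tangent-maximum lemma packages.
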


    The idea is that since as soon as $V^H_t > 0$ (i.e., the item probed at time $t$ is gets high value) the optimal strategy gets value at least $\approx \frac{\adapt}{p}$, we can stop it at this point and not lose more than a factor of $\approx p$; such truncated strategy that only keeps one item has a much nicer structure that we can use to compare against the decoupled strategy $\{\bar{V}_t\}$.

    To make this formal, we start with the following lemma, which is the special property afforded by ``only probing one item''; this follows from the argument used in~\cite{BSZ-RANDOM19}, or the decoupling inequality \cite[Lemma 2.3.3]{deLaPena}; we provide a self-contained proof nonetheless. 

    \begin{lemma} \label{lemma:tangentMax}
        Let $Y_1,\ldots,Y_n$ and $Y'_1, \ldots,Y'_n$ be sequences of random variables (not necessarily independent) adapted to a filtration $(\G_t)_t$ that are tangent, namely for all $t$, conditioned on $\G_{t-1}$ the distributions of $Y_t$ and $Y'_t$ are the same. Then $$\E \max_t Y_t \le 2 \cdot \E \max_t Y'_t.$$
    \end{lemma}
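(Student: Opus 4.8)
The plan is to reduce the statement to a tail comparison, pointwise in a threshold $\lambda$, and then to exploit tangency through first-exceedance stopping times. Since the random variables arising in our application are non-negative (they are norms of non-negative vectors), I would assume $Y_t,Y'_t\ge 0$ and use the layer-cake identity $\E\max_t Y_t=\int_0^\infty \Pr(\max_t Y_t>\lambda)\,d\lambda$ (and likewise for $Y'$). Thus it suffices to prove the pointwise bound $\Pr(\max_t Y_t>\lambda)\le 2\,\Pr(\max_t Y'_t>\lambda)$ for every $\lambda>0$; integrating over $\lambda$ then gives the lemma.

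Fix $\lambda$ and introduce the stopping times $\tau:=\inf\{t\le n: Y_t>\lambda\}$ and $\sigma:=\inf\{t\le n: Y'_t>\lambda\}$, with the convention $\inf\emptyset=n+1$. Because the sequences are adapted, the events $\{\tau\ge t\}=\{Y_1\le\lambda,\ldots,Y_{t-1}\le\lambda\}$ and $\{\sigma\le t-1\}$ are $\G_{t-1}$-measurable. Writing $\{\max_t Y_t>\lambda\}=\{\tau\le n\}$ and summing over the value of $\tau$, I would establish
\[
\Pr(\tau\le n)=\sum_{t=1}^n\E\big[\ones(\tau\ge t)\ones(Y_t>\lambda)\big]=\sum_{t=1}^n\E\big[\ones(\tau\ge t)\,\Pr(Y_t>\lambda\mid\G_{t-1})\big]=\E[S],
\]
where the second equality pulls the $\G_{t-1}$-measurable indicator outside, and the third uses tangency to replace $\Pr(Y_t>\lambda\mid\G_{t-1})$ by $\Pr(Y'_t>\lambda\mid\G_{t-1})$ and then re-absorbs the indicator; here $S:=\sum_{t=1}^n\ones(\tau\ge t)\ones(Y'_t>\lambda)=\#\{t\le \tau\wedge n: Y'_t>\lambda\}$.

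The crux is then $\E[S]\le 2\,\Pr(\sigma\le n)$. One cannot bound $S$ by $\ones(\max_t Y'_t>\lambda)$ directly, since $Y'$ may exceed $\lambda$ many times before $\tau$; instead I would use the pointwise inequality $S\le \ones(\sigma\le n)+\#\{t\le n:\ \sigma\le t-1,\ \tau\ge t,\ Y'_t>\lambda\}$ (when $S\ge1$ the first term is $1$ and the second is exactly $S-1$; when $S=0$ the right-hand side is non-negative). Taking expectations, the first term contributes $\Pr(\sigma\le n)$. For the second, each summand has $\{\sigma\le t-1\}$ and $\{\tau\ge t\}$ in $\G_{t-1}$, so tangency again gives $\E[\ones(\sigma\le t-1)\ones(\tau\ge t)\ones(Y'_t>\lambda)]=\E[\ones(\sigma\le t-1)\ones(\tau\ge t)\ones(Y_t>\lambda)]$; and since $\tau$ is the \emph{first} exceedance of $Y$, one has $\ones(\tau\ge t)\ones(Y_t>\lambda)=\ones(\tau=t)$, so summing over $t$ yields $\Pr(\sigma<\tau\le n)\le \Pr(\sigma\le n)$. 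Hence $\E[S]\le 2\,\Pr(\sigma\le n)=2\,\Pr(\max_t Y'_t>\lambda)$, the desired pointwise tail bound.

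The main obstacle is precisely controlling the ``extra'' exceedances counted by $S$: the trick that makes the constant $2$ work is noticing that those later exceedances, after a second application of tangency, are governed by the single event $\{\sigma<\tau\}$ through the collapse $\ones(\tau\ge t)\ones(Y_t>\lambda)=\ones(\tau=t)$. The remainder is careful bookkeeping with the $\G_{t-1}$-measurability of $\{\tau\ge t\}$ and $\{\sigma\le t-1\}$ and with the convention $\inf\emptyset=n+1$ (so that the $t=1$ terms involving $\sigma\le 0$ vanish).
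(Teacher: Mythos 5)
Your proof is correct, but it takes a genuinely different route from the one in the paper. The paper argues directly in expectation: it sets $Z_t:=\max\{Y_t,Y_t'\}$, telescopes $\max_t Y_t\le Z_1+\sum_{t\ge 2}\bigl(\max\{Z_1,\dots,Z_t\}-\max\{Z_1,\dots,Z_{t-1}\}\bigr)$, and uses the pointwise bound $(\max\{a,b\}-s)^+\le (a-s)^+ + (b-s)^+$ together with tangency (conditioned on $\G_{t-1}$, with $s=\max\{Z_1,\dots,Z_{t-1}\}$) to dominate each increment by twice the corresponding increment of $\max\{Y_1',\dots,Y_t'\}$. You instead prove the stronger \emph{distributional} statement $\Pr(\max_t Y_t>\lambda)\le 2\Pr(\max_t Y_t'>\lambda)$ for every $\lambda$, via first-exceedance stopping times and two applications of tangency (once to convert $\Pr(\tau\le n)$ into the expected count $\E[S]$ of exceedances of $Y'$ before $\tau$, and once to collapse the ``extra'' exceedances onto the event $\{\sigma<\tau\le n\}$), and then integrate by the layer-cake formula. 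This is essentially the classical weak-type decoupling argument (the de la Pe\~na--Gin\'e route the paper cites as an alternative), and it buys more: the tail comparison immediately yields $\E\,\phi(\max_t Y_t)\le 2\,\E\,\phi(\max_t Y_t')$ for any non-decreasing $\phi\ge 0$, not just $\phi(x)=x$. The price is that you must assume $Y_t,Y_t'\ge 0$ to invoke the layer-cake identity; note, though, that this is not a real restriction: the lemma is false for signed variables (take $Y_t=Y_t'=-1$ deterministically), the paper's own proof also silently uses non-negativity in its final step $\E Z_1\le \E(Y_1+Y_1')$, and in the application the $Y_t$'s are norms. All the measurability bookkeeping in your argument ($\{\tau\ge t\},\{\sigma\le t-1\}\in\G_{t-1}$, the identity $\ones(\tau\ge t)\ones(Y_t>\lambda)=\ones(\tau=t)$, and the exact accounting $S-1$ for the post-$\sigma$ exceedances) checks out.
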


    \begin{proof}
    Let $E_t := E[\cdot \mid \G_t]$ denote conditional expectation w.r.t. $\G_t$, and let $Z_t := \max\{Y_t, Y'_t\}$. We have
    \begin{align}
    \max\{Y_1,\ldots,Y_n\} \le Z_1 + \sum_{t = 2}^n \Big(\max\{Z_1,\ldots,Z_t\} - \max\{Z_1,\ldots,Z_{t-1}\} \Big). \label{eq:decoupleMax}
    \end{align}
    In addition, by tangency, for every $t \ge 2$ and deterministic value $s$ we have $$\E_{t-1} \max\{s, Z_t\} - s \,\le\, \E \Big(\max\{s, Y_t\} - s\Big) + \E \Big(\max\{s, Y'_t\} - s\Big) = 2 \, \E \Big(\max\{s, Y'_t\} - s\Big).$$ Applying this with $s = \max\{Z_1, \ldots, Z_{t-1}\}$ gives 
    \begin{align*}
    &\E_{t-1}  \bigg(\max\{Z_1,\ldots,Z_t\} - \max\{Z_1,\ldots,Z_{t-1}\} \bigg)\\
    &\le 2 \, \E_{t-1} \Big(\max\{Z_1, \ldots, Z_{t-1}, Y'_t\} - \max\{Z_1, \ldots, Z_{t-1}\}\Big) \\
    &\le 2 \, \E_{t-1} \Big(\max\{Y'_1, \ldots, Y'_{t-1}, Y'_t\} - \max\{Y'_1, \ldots, Y'_{t-1}\}\Big),
    \end{align*}
    where the last inequality follows from the fact $Y'_j \le Z_j$. Taking expectations and applying this to~\eqref{eq:decoupleMax} gives
    \begin{align*}
    \E \max\{Y_1,\ldots,Y_n\} &\textstyle \le \E Z_1 + 2 \sum_{t = 2}^n \Big(\max\{Y'_1, \ldots, Y'_{t-1}, Y'_t\} - \max\{Y'_1, \ldots, Y'_{t-1}\}\Big) \\
    &= \E Z_1 + 2\, \E \max\{Y'_1,\ldots,Y'_n\} - 2\, \E Y'_1 \le 2\, \E \max\{Y'_1,\ldots,Y'_n\},
    \end{align*}    
    where the last inequality uses $\E Z_1 \le \E (Y_1 + Y'_1) = 2\, \E Y'_1$. This concludes the proof.
    \end{proof}

    Now let $\tau$ be the first time when $\|V_1^H + \ldots V_{\tau}^H\| > \frac{\adapt}{16cp}$ ($\tau = n$ if no such time exists), which by the definition of the ``high'' variables $V_t^H$ is equivalent to the first time that one of these variables is different from zero. Consider the stopped sequence $\|V_1^H + \ldots + V_{\tau}^H\| = \|V_{\tau}^H\|$. We argue that it has $\frac{1}{O(p)}$-fraction of the value of the non-stopped sequence; in particular, so does the best high value $\max_t \|V^H_t\|$. More precisely, from triangle inequality we have 
    \begin{align}
        \E \|V^H_1 + \ldots + V^H_n\| \le \E \|V^H_1 + \ldots + V^H_{\tau}\| + \E \|V^H_{\tau+ 1} + \ldots + V^H_n\|. \label{eq:probeAlmost}
    \end{align}
    By definition of the stopping time $\tau$, the first term on the right-hand side is at least $\frac{\adapt}{16cp} \cdot \Pr(\tau < n)$, or equivalently, $\adapt \cdot \Pr(\tau < n) \le 16cp \cdot \E \|V^H_1 + \ldots + V^H_{\tau}\|$. Thus, using \Cref{obs:optSubstructure} we see that the second term of \eqref{eq:probeAlmost} is at most $16cp \cdot \E \|V^H_1 + \ldots + V^H_{\tau}\|$. Employing this on \eqref{eq:probeAlmost}  shows 
    \begin{align}
        \E \|V^H_1 + \ldots + V^H_n\| \le (16cp + 1) \,\E \|V^H_1 + \ldots + V^H_{\tau}\|& = (16cp + 1) \,\E \|V^H_{\tau}\| \notag \\
        &\le (16cp + 1) \,\E \max_{t \le n} \|V^H_t\| \,,
    \end{align}
    as desired.

    Now we want to use \Cref{lemma:tangentMax} to upper bound the max on the right-hand side by $\E \max_{t \le n} \|\bar{V}^H_t\|$ (and consequently by the sum $\E \|\bar{V}_1 + \ldots + \bar{V}_n\|$). For that, we claim that the sequences $\|V^H_1\|, \ldots, \|V^H_n\|$ and $\|\bar{V}^H_1\|, \ldots, \|\bar{V}^H_n\|$ are tangent with respect to the filtration $(\F_t)_t$: conditioning on the history $\F_{t-1}$ up to time $t-1$ fixes the next probe $I_t$ but still leaves $X^H_{I_t}$ and $\bar{X}^H_{I_t}$ independent and with the same distribution; thus, conditioned on $\F_{t-1}$, $\|V^H_t\| = \|e_{I_t} X^H_{I_t}\|$ and $\|\bar{V}^H_t\| = \|e_{I_t} \bar{X}^H_{I_t}\|$ have the same distribution. Thus, employing \Cref{lemma:tangentMax} we obtain 
    \begin{align*}
        \E \|V^H_1 + \ldots + V^H_n\| \,\le\, (16cp + 1) \,\E \max_{t \le n} \|\bar{V}^H_t\| \,\le\, (16cp + 1) \,\E \|\bar{V}_1 + \ldots + \bar{V}_n\|\,.
    \end{align*}
    This proves \Cref{lemma:highPart}.
    

    \subsubsection{Low part}

    We now consider the low part of the items and prove the following; let $\adapt^L := \E \|V^L_1 + \ldots + V^L_n\|$ denote the value obtained by the low part of the items. 

     \begin{lemma} \label{lemma:lowPart}
         If $\adapt^L \ge \frac{\adapt}{2}$, then $$\textrm{hallucinating}^H  \,=\, \E \|\bar{V}^H_1 + \ldots + \bar{V}^H_n\| \,\ge\, \frac{\adapt}{O(p)}.$$
     \end{lemma}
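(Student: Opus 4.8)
\emph{Plan.} The hypothesis enters only through $\adapt \le 2\,\adapt^L$. The plan is to bound the low‑value mass that $\adapt$ collects against its hallucinated copy, i.e.\ to show
\[
\E\|V^L_1 + \cdots + V^L_n\| \;\le\; O(p)\cdot \E\|\bar V^L_1 + \cdots + \bar V^L_n\|,
\]
where $\bar V^L_t := e_{I_t}\bar X^L_{I_t}$ is the low part of $\bar V_t$ (matching the hypothesis on $\adapt^L$ and the label of the lemma). Since $\bar V^L_t \le \bar V_t$ coordinatewise and the norm is monotone, $\E\|\bar V_1 + \cdots + \bar V_n\| \ge \E\|\bar V^L_1 + \cdots + \bar V^L_n\| \ge \adapt^L/O(p) \ge \adapt/O(p)$, which gives the desired lower bound on the hallucination value. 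The engine is that $V^L_1,\ldots,V^L_n$ and $\bar V^L_1,\ldots,\bar V^L_n$ are \emph{tangent} with respect to $(\F_t)_t$: conditioned on $\F_{t-1}$ the probe $I_t$ is frozen and $X_{I_t},\bar X_{I_t}$ are i.i.d.\ from $\mathcal D_{I_t}$, so the truncations $V^L_t = e_{I_t}X_{I_t}\ones(\|e_{I_t}X_{I_t}\|\le \tfrac{\adapt}{16cp})$ and $\bar V^L_t$ have the same conditional law; moreover their increments are bounded, $\|V^L_t\|,\|\bar V^L_t\| \le \tfrac{\adapt}{16cp}\le \tfrac{\adapt^L}{8cp}$. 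Thus one can re-run the machinery behind \Cref{thm:stochProbing} verbatim, with $\adapt^L$ playing the role of $\adapt$.

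\emph{Step 1 (replacement, $p$-supermodularity, moment comparison).} Set $U^L_t := V^L_1 + \cdots + V^L_t$, $\bar U^L_t := \bar V^L_1 + \cdots + \bar V^L_t$, and let $\tau$ be the first $t$ with $\|\bar U^L_t\| > \tfrac{\adapt^L}{4cp}$ (and $\tau=n$ otherwise); $\tau$ is a stopping time since $\{\tau\le t\}$ depends only on $\bar U^L_1,\ldots,\bar U^L_t$. Exactly as in the main proof: by tangency and $\F_{t-1}$-measurability of $\ones(\tau\ge t)$, $\E[\ones(\tau\ge t)(\|U^L_t\|^p-\|U^L_{t-1}\|^p)] = \E[\ones(\tau\ge t)(\|U^L_{t-1}+\bar V^L_t\|^p-\|U^L_{t-1}\|^p)]$, and summing over $t$ telescopes the left side to $\E\|U^L_\tau\|^p$. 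Applying $p$-supermodularity to each right-side term (adding the nonnegative vector $(U^L_\tau-U^L_{t-1})+\bar U^L_{t-1}$ to the arguments) and telescoping gives $\sum_{t\le\tau}(\|U^L_{t-1}+\bar V^L_t\|^p-\|U^L_{t-1}\|^p) \le \|U^L_\tau+\bar U^L_\tau\|^p-\|U^L_\tau\|^p$. Finally $(a+b)^p\le e^{1/2}a^p+(3p)^pb^p$ (with $a=\|U^L_\tau\|$, $b=\|\bar U^L_\tau\|$, after triangle inequality), after rearranging, yields $\E\|U^L_\tau\|^p \le (cp)^p\,\E\|\bar U^L_\tau\|^p$ with $c=\tfrac{3}{(2-e^{1/2})^{1/p}}=\Theta(1)$.

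\emph{Step 2 (squeeze and conclude).} On the one hand, $\|\bar U^L_\tau\| \le \tfrac{\adapt^L}{4cp}+\|\bar V^L_\tau\| \le \tfrac{\adapt^L}{4cp}+\tfrac{\adapt}{16cp}\le \tfrac{3\adapt^L}{8cp}$, so $\E\|\bar U^L_\tau\|^p \le (\tfrac{3\adapt^L}{8cp})^p$, and Step 1 together with Jensen ($t\mapsto t^p$ convex) gives $\E\|U^L_\tau\| \le (\E\|U^L_\tau\|^p)^{1/p} \le cp\cdot\tfrac{3\adapt^L}{8cp} = \tfrac{3\adapt^L}{8}$. On the other hand, since $U^L_\tau\le U^L_n$ coordinatewise, the triangle inequality and monotonicity give $\|U^L_\tau\| \ge \|U^L_n\| - \|V^L_{\tau+1}+\cdots+V^L_n\| \ge \|U^L_n\| - \|V_{\tau+1}+\cdots+V_n\|$, and by \Cref{obs:optSubstructure} $\E\|V_{\tau+1}+\cdots+V_n\| \le \adapt\cdot\Pr(\tau<n)$; since $\{\tau<n\}\subseteq\{\|\bar U^L_n\|>\tfrac{\adapt^L}{4cp}\}$ by monotonicity, Markov bounds $\Pr(\tau<n) \le \tfrac{4cp}{\adapt^L}\E\|\bar U^L_n\|$. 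Hence $\E\|U^L_\tau\| \ge \adapt^L - \tfrac{4cp\,\adapt}{\adapt^L}\E\|\bar U^L_n\|$. Combining the two estimates of $\E\|U^L_\tau\|$ gives $\E\|\bar U^L_n\| \ge \tfrac{5(\adapt^L)^2}{32cp\,\adapt} \ge \tfrac{5\adapt}{128cp}$, using $\adapt^L\ge\adapt/2$; since $c=\Theta(1)$ this is $\E\|\bar V^L_1+\cdots+\bar V^L_n\| \ge \adapt/O(p)$, finishing the proof as in the Plan.

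\emph{Main obstacle.} The one genuinely delicate point is calibrating the stopping threshold $\tfrac{\adapt^L}{4cp}$ so that three requirements hold at once: (i) $\|\bar U^L_\tau\|$ stays below $\tfrac{\adapt^L}{2cp}$, so the moment comparison of Step 1 forces $\E\|U^L_\tau\| \le \tfrac{3\adapt^L}{8}$; (ii) the probability mass on which $\adapt$'s low part spills past $\tau$ is controlled, via optimal substructure and Markov, by $O(p)\,\E\|\bar U^L_n\|/\adapt^L$; and (iii) the numerical constants in (i) and (ii) leave a strict gap ($\tfrac38<\tfrac58$ above), which is what actually delivers the bound. All the $\adapt$‑versus‑$\adapt^L$ bookkeeping is absorbed into the single inequality $\adapt\le 2\,\adapt^L$; beyond this calibration, every step reuses the tangency/$p$-supermodularity/truncation argument already set up for \Cref{thm:stochProbing}, so no new analytic input is required.
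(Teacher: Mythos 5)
Your proof is correct, and it follows a genuinely different route than the paper's. Both arguments use the same core machinery (tangency of $V^L_t$ and $\bar V^L_t$, the one-by-one replacement with $p$-supermodularity, a stopping time to contain moments, and \Cref{obs:optSubstructure}), but the way the pieces are assembled differs. The paper defines its stopping time on the \emph{optimal} process — $\tau$ is the first time $\|V^L_1+\cdots+V^L_\tau\|\ge 2\adapt$ — uses it to build an auxiliary \stoch instance $\widehat{\mathcal I}$, checks that $\widehat{\mathcal I}$ satisfies the three normalizations of \Cref{sec:stochProbing}, and then re-invokes the main theorem on $\widehat{\mathcal I}$ (which inside its proof then introduces a \emph{second} stopping time on the hallucinated process and goes through the weak-$(1,1)$ inequality of \Cref{lemma:weakIneq} and a reverse-Markov step). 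You instead define a single stopping time $\tau$ directly on the hallucinated low process at threshold $\adapt^L/(4cp)$, push the moment comparison $\E\|U^L_\tau\|^p \le (cp)^p\E\|\bar U^L_\tau\|^p$ through with the low parts in place, and close by squeezing $\E\|U^L_\tau\|$ between two explicit bounds: the upper bound $\tfrac{3\adapt^L}{8}$ from boundedness of $\|\bar U^L_\tau\|$ plus Jensen, and the lower bound $\adapt^L - \adapt\cdot\Pr(\tau<n)$ from \Cref{obs:optSubstructure}, with $\Pr(\tau<n)$ then controlled by Markov. This is flatter and more self-contained: it avoids the auxiliary-instance construction (and the accompanying need to verify normalization conditions for a derived instance), and it also sidesteps the weak-$(1,1)$/Paley--Zygmund passage entirely, replacing it with a direct numerical gap ($\tfrac{3}{8}$ vs.\ $1$). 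The paper's approach, by contrast, buys modularity — it literally reuses the main theorem as a black box — at the cost of a slightly more delicate reduction. One small remark: like the paper's own proof, your argument lower-bounds $\E\|\bar V^L_1+\cdots+\bar V^L_n\|$ (and hence the full hallucinating value by monotonicity), which is the intended reading of the lemma; the superscript $H$ in the lemma statement is a typo.
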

    
    Let $\tau$ be the first time when $\|V^L_1 + \ldots + V^L_{\tau}\| \ge 2 \adapt$ (let $\tau = n$ if no such time exists). We show that this truncated reward is $\Omega(\adapt^L)$.

    \begin{claim}
        In every scenario we have the upper bound $\|V^L_1 + \ldots + V^L_{\tau}\| \le 3\, \adapt$, and we have the lower bound in expectation $\E \|V^L_1 + \ldots + V^L_{\tau}\| \ge \frac{1}{2}\, \adapt^L$. 
    \end{claim}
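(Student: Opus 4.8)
The plan is to establish both bounds by exploiting the definition of the stopping time $\tau$ together with the two structural assumptions on the ``low'' variables and on $\|V_{S^*}\|$. Recall that $V^L_t := e_{I_t} X^L_{I_t}$ where $X^L_j = X_j \cdot \ones(\|e_j X_j\| \le \frac{\adapt}{16cp})$, so each increment $\|V^L_t\|$ is bounded by $\frac{\adapt}{16cp} \le \adapt$, and recall $\tau$ is the first time $\|V^L_1 + \ldots + V^L_{\tau}\| \ge 2\adapt$.

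For the \textbf{upper bound} $\|V^L_1 + \ldots + V^L_{\tau}\| \le 3\adapt$ in every scenario: I would split into two cases. If $\tau = n$ (the threshold was never reached), then by definition $\|V^L_1 + \ldots + V^L_n\| < 2\adapt \le 3\adapt$. If $\tau < n$, then by minimality of $\tau$ we have $\|V^L_1 + \ldots + V^L_{\tau - 1}\| < 2\adapt$, and adding the single increment $\|V^L_\tau\| \le \frac{\adapt}{16cp} \le \adapt$ and using the triangle inequality gives $\|V^L_1 + \ldots + V^L_{\tau}\| \le \|V^L_1 + \ldots + V^L_{\tau-1}\| + \|V^L_\tau\| < 2\adapt + \adapt = 3\adapt$. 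This handles the deterministic (scenario-wise) bound.

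For the \textbf{lower bound in expectation} $\E\|V^L_1 + \ldots + V^L_{\tau}\| \ge \frac{1}{2}\adapt^L$: I would write $\|V^L_1 + \ldots + V^L_n\| \le \|V^L_1 + \ldots + V^L_{\tau}\| + \|V^L_{\tau+1} + \ldots + V^L_n\|$ by triangle inequality and bound the tail term. Since $V^L_t$ is coordinatewise dominated by $V_t$ and the norm is monotone, $\|V^L_{\tau+1} + \ldots + V^L_n\| \le \|V_{\tau+1} + \ldots + V_n\|$, so by the optimal-substructure observation (\Cref{obs:optSubstructure}) applied to the stopping time $\tau$, $\E\|V^L_{\tau+1} + \ldots + V^L_n\| \le \adapt \cdot \Pr(\tau < n)$. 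Now on the event $\tau < n$ we have $\|V^L_1 + \ldots + V^L_\tau\| \ge 2\adapt$, hence $\adapt \cdot \Pr(\tau < n) \le \frac{1}{2}\E\|V^L_1 + \ldots + V^L_\tau\|$. Combining, $\adapt^L = \E\|V^L_1 + \ldots + V^L_n\| \le \E\|V^L_1 + \ldots + V^L_\tau\| + \frac{1}{2}\E\|V^L_1 + \ldots + V^L_\tau\| = \frac{3}{2}\E\|V^L_1 + \ldots + V^L_\tau\|$, which rearranges to $\E\|V^L_1 + \ldots + V^L_\tau\| \ge \frac{2}{3}\adapt^L \ge \frac{1}{2}\adapt^L$, even stronger than claimed.

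The main obstacle I anticipate is not in this particular claim (which is a clean stopping-time argument paralleling the high-part analysis), but rather in what comes after it: using this truncated, bounded-variance reward inside a decoupling/interpolation argument of Burkholder type to compare against the hallucinating strategy and derive $\E\|\bar V^H_1 + \ldots + \bar V^H_n\| \ge \adapt/O(p)$ in \Cref{lemma:lowPart}. That step will need to invoke the $p$-supermodularity of $\|\cdot\|$ via the term-by-term replacement of $V^L_t$ by $\bar V^L_t$ (tangency), establish a moment comparison of the form $\E\|U^L_\tau\|^p \lesssim (cp)^p \E\|\bar U^L_\tau\|^p$, and then convert it to a weak-$(1,1)$ bound exactly as in \Cref{lemma:weakIneq}; the care required is in checking that the increments $\|V^L_t\|$ being bounded by $\frac{\adapt}{16cp}$ is the right scaling to make the truncation step go through and to relate $\bar V^L$ back to $\bar V^H$ (a truncated value is at most the original, and the surviving mass is precisely the ``high'' piece of $\bar V$).
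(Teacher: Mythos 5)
Your proof is correct and follows essentially the same route as the paper: the upper bound via minimality of $\tau$ plus the per-step bound $\|V^L_t\| \le \frac{\adapt}{16cp} \le \adapt$, and the lower bound by splitting off the tail $\|V^L_{\tau+1}+\ldots+V^L_n\|$, controlling it by $\adapt\cdot\Pr(\tau<n)$ via the optimal-substructure observation, and absorbing that into the stopped sum using $\|V^L_1+\ldots+V^L_\tau\|\ge 2\adapt$ on $\{\tau<n\}$. Your bookkeeping even yields the slightly stronger constant $\tfrac{2}{3}$ in place of $\tfrac{1}{2}$, and your explicit reduction of the substructure bound from $V_t$ to $V^L_t$ via monotonicity of the norm is a detail the paper glosses over.
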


    \begin{proof}
    Since the low part of each item is at most $\frac{\adapt}{16cp} \le \adapt$, we have $\|V^L_1 + \ldots + V^L_{\tau}\| \le \|V^L_1 + \ldots + V^L_{\tau - 1}\| + \|V^L_{\tau}\| \le 2 \adapt + \adapt \le 3 \adapt$, giving the upper bound. 

    For the lower bound, 
    \begin{align*}
    \adapt^L &= \E \|V^L_1 + \ldots + V^L_n\| \le \E \|V^L_1 + \ldots + V^L_{\tau}\| + \E \|V^L_{\tau + 1} + \ldots + V^L_n\|\\
    &= \E \Big( \|V^L_1 + \ldots + V^L_{\tau}\| \cdot \ones(\tau = n)\Big) + \E \Big( \|V^L_1 + \ldots + V^L_{\tau}\| \cdot \ones(\tau < n)\Big) \\
    &\qquad \qquad \qquad \qquad \qquad \qquad \qquad \quad\, + \E \Big(\|V^L_{\tau + 1} + \ldots + V^L_n\| \cdot \ones(\tau < n) \Big).
    \end{align*}
    The second term is at least $2 \adapt \cdot \Pr(\tau < n)$ and, by \Cref{obs:optSubstructure}, the last term is at most $\adapt \cdot \Pr(\tau < n)$, i.e., half the second term. Thus, we have the upper bound 
    \begin{align*}
    \adapt^L &\le \E \Big( \|V^L_1 + \ldots + V^L_{\tau}\| \cdot \ones(\tau = n)\Big) + 2\, \E \Big( \|V^L_1 + \ldots + V^L_{\tau}\| \cdot \ones(\tau < n)\Big) \\
    &\le 2\, \E  \|V^L_1 + \ldots + V^L_{\tau}\|,
    \end{align*}
    proving the second part of the claim. 
    \end{proof}


    Let $\widehat{V}_1, \widehat{V}_2, \ldots, \widehat{V}_n$ denote the process $V_1^L, V_2^L, \ldots, V_\tau^L$ again padded with additional 0-value items so that we always have $n$ probes. \mnote{Should improve: formally we have to phrase in terms of an instance of the problem to use the result from the body, not in terms of vectors} Let $\widehat{\adapt} = \E \|\widehat{V}_1 + \ldots + \widehat{V}_n\| = \adapt^L$. Thus, under the assumption $\frac{\adapt}{2} \le \adapt^L$, and from the previous claim we have $\frac{\adapt^L}{2} \le \widehat{\adapt}$, so we have:

    \begin{enumerate}[itemsep=1pt]
        \item For every $t$, $\|\widehat{V}_t\| \le \frac{\adapt}{16cp} \le \frac{\widehat{\adapt}}{4cp}$.

        \item $\|\widehat{V}_1 + \ldots + \widehat{V}_n\| \le 3\,\adapt \le 12 \,\widehat{\adapt}$. 

        \item There is the same number of probes in every scenario. 
    \end{enumerate}

    Thus, this instance satisfies the assumptions from \Cref{sec:stochProbing}. The argument in that section then proves that $\E \|\bar{V}^L_1 + \ldots + \bar{V}^L_n\| \ge \frac{\widehat{\adapt}}{O(p)} \ge \frac{\adapt}{O(p)}$.\mnote{Should improve: because of the extra padding, $n$ has been redefined, but it's ok} This proves \Cref{lemma:lowPart}.

	\section{Low-Regret Algorithm for Online Linear Optimization} \label{app:regret}

    Recall the (non-negative) Online Linear Optimization (\OLO) problem~\cite{Hazan-Book16}: A convex set $P \subseteq \R^d_+$ is given upfront, and objective functions $g_1,g_2,\ldots,g_T$ are revealed one-by-one in an online fashion. In each time step $t$, the algorithm needs to produce a point $x_t \in P$ using the information revealed up to this moment; only \emph{after} that, the  adversary reveals a gain vector $g_t \in [0,1]^d$, and the algorithm receives gain $\ip{g_t}{x_t}$. The goal of the algorithm is to maximize its total gain $\sum_{t = 1}^T \ip{g_t}{x_t}$. We are interested in comparing favorably to the gains of the best fixed action in hindsight, namely $\OPT := \max_{x \in P} \sum_{t = 1}^T \ip{g_t}{x}$.

    Notice that because the gain vectors are non-negative, we can assume without loss of generality that $P$ is downward closed, i.e. we can include in it all points $x \in \R^d_+$ such that $x \le y$ for some $y \in P$: these points $x$ do not change $\OPT$, and if an algorithm uses any of these points one can just replace it by a bigger point $y \in P$ and improve its total gain. By rescaling the coordinates, we can also assume that the norm $\|\cdot\|_{P,\star}$ (dual to the norm $\|\cdot\|_{P}$) satisfies $\|e_i\|_{P,\star} = 1$ for every canonical vector $e_i$.


    We show that the $p$-\monotonicity of the dual norm $\|\cdot\|_{P,\star}$ guarantees an algorithm with good gains. We note that the term $\|\ones\|_{P,\star}$ equals $\max_{x \in P} \ip{\ones}{x}$, and so it can be thought as the ``width'' of the set $P$ in the direction of the largest possible gain vector $\ones$; thus, this term is a version of the standard (and necessary) ``diameter'' parameter present in, e.g., Online Gradient Descent.

    \begin{thm} \label{thm:pMono}
        Consider the \OLO problem where the set $P$ is downward closed. Assume that the dual norm $\|\cdot\|_{P,\star}$ is $p$-\monotone, differentiable over $\R^d_+ \setminus \{0\}$, and satisfies $\|e_i\|_{P,\star} = 1$ for all $i \in [d]$. Then for every $\e > 0$ there is a strategy that obtains total gains 
		\begin{align*}
			\sum_t \ip{g_t}{x_t} \,\ge\, e^{-\e} \bigg(\OPT - \frac{p \cdot (\|\ones\|_{P,\star}-1)}{\e}\bigg).
		\end{align*}
	\end{thm}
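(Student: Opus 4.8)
The strategy is the standard ``be greedy with respect to a potential'' approach, but using the $p$-th power of the dual norm $\|\cdot\|_{P,\star}$ as the potential function, so that the $p$-\monotonicity kicks in exactly as in the Load-Balancing argument of \Cref{thm:loadBalancing}. Concretely, maintain a running sum of gain vectors $G_t := g_1 + \ldots + g_t$ (with $G_0 = 0$) and define the Follow-the-Leader-type potential $\Phi(G) := \frac{1}{p}\|G\|_{P,\star}^p$; note this is a convex function of $G$. At time $t$, before seeing $g_t$, play the point $x_t := \nabla \Phi(G_{t-1}) = \|G_{t-1}\|_{P,\star}^{p-1}\cdot \nabla\|G_{t-1}\|_{P,\star}$ (say $x_1 = 0$, or a tiny perturbation, to handle $G_0 = 0$). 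This is feasible: by \Cref{lemma:gradNorms}, $\|\nabla\|G_{t-1}\|_{P,\star}\|_{P} = 1$, i.e.\ $\nabla\|G_{t-1}\|_{P,\star}\in P$; however, we have an extra factor $\|G_{t-1}\|_{P,\star}^{p-1}$, so strictly we should play the scaled/clipped point — I would instead define the play so that it lies in $P$, e.g.\ run the argument with the potential restricted so that $x_t$ is feasible, mirroring how \Cref{proc:pack} or the $\OLO$ excerpt truncates; the cleanest is to observe that because $P$ is downward-closed and the gains are nonnegative, we may as well rescale so the relevant iterates stay in $P$. (In the writeup I will state $x_t$ precisely as a feasible point extracted from $\nabla\Phi$.)

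\textbf{Key steps.} (1) \emph{Lower bound the algorithm's gain by a telescoping/convexity argument.} Using convexity of $\Phi$ and the choice $x_t = \nabla\Phi(G_{t-1})$ we get $\ip{g_t}{x_t} = \ip{\nabla\Phi(G_{t-1})}{G_t - G_{t-1}} \ge \Phi(G_t) - \Phi(G_t \text{ replaced appropriately})$ — more precisely, by $p$-\monotonicity of $\|\cdot\|_{P,\star}$ (the Gradient property of \Cref{lemma:equivalentPmono}), $\nabla\Phi$ has monotone gradients over $\R^d_+$, so $\ip{\nabla\Phi(G_{t-1})}{g_t} \ge \ip{\nabla\Phi(G_{t-1}+h)}{g_t}$ is the wrong direction; instead I use exactly the Load-Balancing chain: $\Phi(G_T) = \sum_t(\Phi(G_t) - \Phi(G_{t-1})) \le \sum_t \ip{\nabla\Phi(G_t)}{g_t}$ on one side and compare against $\Phi(G_T + x^*\cdot(\text{stuff}))$ on the other, where $x^* = \argmax_{x\in P}\ip{G_T}{x}$. (2) \emph{Relate $\Phi(G_T)$ to $\OPT$.} Since $\OPT = \max_{x\in P}\ip{G_T}{x} = \|G_T\|_{P,\star}$ (this is exactly the definition of the dual norm on the downward-closed $P$), we have $\Phi(G_T) = \frac{1}{p}\OPT^p$. (3) \emph{Extract the $e^{-\e}$ bound.} Combining the telescoping inequality with the growth identity $\ip{\nabla\|G\|_{P,\star}}{G} = \|G\|_{P,\star}$ and $\ip{\nabla\Phi(G)}{G} = \|G\|_{P,\star}^p = p\,\Phi(G)$, one gets a recursion of the form $\Phi(G_t) \le \Phi(G_{t-1}) + \ip{\nabla\Phi(G_{t-1})}{g_t}$, and summing plus a convexity ``one-step'' estimate $\|G_t\|_{P,\star}^p \le \|G_{t-1}\|_{P,\star}^p + p\,\ip{g_t}{\nabla\Phi(G_{t-1})} + (\text{error})$ where the error is controlled by $\|g_t\|_\infty \le 1$ and $\|\ones\|_{P,\star}$. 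The multiplicative $e^{-\e}$ form comes from optimizing the standard ``learning rate'' — here implicitly the rate at which $\Phi$ grows — exactly as the statement's $\frac{p(\|\ones\|_{P,\star}-1)}{\e}$ additive loss suggests, so I expect to introduce a scaling parameter $\eta$ on the gains (run the algorithm on $\eta g_t$), run the clean inequality, and then choose $\eta \approx \e/(p(\|\ones\|_{P,\star}-1))$ and use $(1-\eta)^{1/\eta}\ge e^{-1}$-type bounds.

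\textbf{Main obstacle.} The delicate point is the per-step ``smoothness'' estimate: bounding $\|G_{t-1}+g_t\|_{P,\star}^p - \|G_{t-1}\|_{P,\star}^p - p\ip{g_t}{\nabla\Phi(G_{t-1})}$. For $\ell_p$ this is a classical calculation, but for a general $p$-\monotone dual norm I need a second-order (or finite-difference) bound that exploits $p$-\monotonicity plus the normalization $\|e_i\|_{P,\star}=1$ (which bounds $\|g_t\|_{P,\star} \le \|g_t\|_1 \le d$, and more usefully controls $\|\ones\|_{P,\star}$). I expect the right tool is the Hessian property of \Cref{lemma:equivalentPmono}: $\nabla^2(\|G\|_{P,\star}^p)$ has a controlled positive part, giving $\|G+g\|_{P,\star}^p \le \|G\|_{P,\star}^p + p\ip{g}{\nabla\Phi(G)} + \binom{p}{2}\cdot(\text{term bounded via }\|g\|_\infty\cdot\|\ones\|_{P,\star})$ along the segment $[G, G+g]$ — all iterates staying in $\R^d_+$ since gains are nonnegative, which is exactly why we needed the ``only nonnegative vectors'' assumption. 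Assembling this error term, summing over $t$, and choosing the scaling parameter to balance the $\OPT$ term against the $p(\|\ones\|_{P,\star}-1)/\e$ term will yield the claimed guarantee; the feasibility bookkeeping for $x_t$ (ensuring $x_t \in P$ at every step, not merely $\nabla\|G_{t-1}\|_{P,\star} \in P$) is a second, more routine thing to get exactly right, handled by the downward-closedness of $P$.
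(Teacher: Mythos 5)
There is a genuine gap here, and it sits exactly at the point you flag as the ``main obstacle.'' Your plan is Follow-the-Leader with the potential $\Phi(G)=\frac1p\|G\|_{P,\star}^p$, playing $x_t=\nabla\Phi(G_{t-1})$. Two problems. First, feasibility is not a routine bookkeeping issue: $\nabla\Phi(G_{t-1})=\|G_{t-1}\|_{P,\star}^{p-1}\nabla\|G_{t-1}\|_{P,\star}$, and the scalar factor $\|G_{t-1}\|_{P,\star}^{p-1}$ is unbounded, so the only sensible normalization is to play $\nabla\|G_{t-1}\|_{P,\star}=\argmax_{x\in P}\ip{G_{t-1}}{x}$, i.e.\ plain (unperturbed) Follow-the-Leader --- which is known to suffer linear regret already for experts, so no potential-function bookkeeping can rescue it. Second, and more fundamentally, the Be-the-Leader-to-Follow-the-Leader conversion needs $\nabla\Phi(G_t)\lesssim\nabla\Phi(G_{t-1})$, i.e.\ \emph{approximately decreasing} gradients along the trajectory; $p$-\monotonicity gives you \emph{increasing} gradients, which, as you yourself notice, is the wrong direction. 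The second-order smoothness bound you hope to substitute is also unavailable: the Hessian property in \Cref{lemma:equivalentPmono} is a one-sided condition ($\nabla^2_{ij}\|u\|\ge -(p-1)\nabla_i\|u\|\,\nabla_j\|u\|/\|u\|$), so $p$-\monotonicity gives no upper bound on the curvature of $\|\cdot\|_{P,\star}^p$ and hence no control on your error term.

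The missing idea in the paper's proof is a deterministic perturbation: work with $f(x):=\|x+\frac{p\ones}{\e}\|_{P,\star}$ (no $p$-th power) and play $x_t=\nabla f(s_{t-1})=\argmax_{x\in P}\ip{s_{t-1}+\frac{p\ones}{\e}}{x}$, which is automatically in $P$. The shift guarantees $s+g+\frac{p\ones}{\e}\le(1+\frac{\e}{p})(s+\frac{p\ones}{\e})$ for any $g\in[0,1]^d$, and then $p$-\monotonicity applied to $\|\cdot\|^p$, together with the scale-invariance of $\nabla\|\cdot\|$ and $(1+\e/p)^{p-1}\le e^{\e}$, yields the gradient stability $\nabla f(s+g)\le e^{\e}\nabla f(s)$ coordinatewise. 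That is exactly the ``approximately decreasing along one step'' statement your plan needs and cannot derive; combined with convexity ($f(s_T)-f(0)\le\sum_t\ip{\nabla f(s_t)}{g_t}$) it gives $\sum_t\ip{g_t}{x_t}\ge e^{-\e}(f(s_T)-f(0))$, and the additive loss $\frac{p(\|\ones\|_{P,\star}-1)}{\e}$ is just $f(0)$ minus the shift's contribution to $f(s_T)$ --- no learning-rate optimization or per-step smoothness estimate is ever needed.
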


    As an example, consider the prediction with experts setting, where $P = \{x \in \R^d_+ : \sum_i x_i \le 1\}$. Let $p = \frac{\log d}{\log (1+\e)}$, and $q$ be its H\"older dual, i.e. $\frac{1}{p} + \frac{1}{q} = 1$. We can approximate the set $P$ set by $P' = \{x \in \R^d_+ : \|x\|_q \le 1\}$, which satisfies $P \subseteq P' \subseteq d^{1 - 1/q}\, P = (1+\e) P$, and has dual norm $\|\cdot\|_{P',\star}$ equal to the $\ell_p$ norm. Applying the previous theorem to $P'$, we get a sequence $x'_1,\ldots,x'_T \in T$ with total gain at least $e^{-\e} \,\OPT - \frac{p \cdot (d^{1/p} - 1)}{\e} \gtrsim (1-\e) \OPT - \frac{\log d}{\e}$; the rescaled sequence $x_t := \frac{x'_t}{d^{1-1/q}} = \frac{x'_t}{1+\e}$, which now belongs to the feasible set $P$, has similar total gains. This recovers the standard multiplicative/additive approximation for this experts setting.

    \begin{proof}[Proof of \Cref{thm:pMono}]
	To simplify the notation, let $s_t = g_1 + \ldots + g_t$ be the sum of the gain vectors up to time $t$. Notice that $\OPT = \max_{x \in P} \ip{x}{s_T} = \|s_T\|_{P,\star}$. However, we will work instead with a smoother function $f$ instead of $\|\cdot\|_{P,\star}$.
 
    More precisely, define the function $f(x) := \|x + \frac{p \ones}{\e}\|_{P,\star}$. The strategy for our algorithm is to play, at time $t$, the point $x_t = \nabla f(s_{t-1})$. Recall that $\nabla f(s_{t-1}) = \nabla \|s_{t-1} + \frac{p \ones}{\e}\|_{P,\star} = \argmax_{x \in P} \ip{s_{t-1} + \frac{p \ones}{\e}}{x}$, so this can be thought as a Follow the Perturbed Leader strategy, but with a deterministic shift $+ \frac{p \ones}{\e}$. In particular, notice that $x_t \in P$. 
    
    We now show that this strategy has good value; we will track $f$ instead of $\|\cdot\|_{P,\star}$. By convexity of $f$, 
	\begin{align}
		f(s_T) - f(0) = \sum_t \bigg(f(s_t) - f(s_{t-1})\bigg) \le \sum_t \ip{\nabla f(s_t)}{g_t}. \label{eq:regret1}
	\end{align}
    The next lemma shows that the $p$-\monotonicity of $\|\cdot\|_{P,\star}$ implies a stability of the gradients of $f$, which will be used to replace $s_t$ for $s_{t-1}$ in the right-hand side of the previous inequality (i.e., it gives a ``Be the Leader vs Follow the Leader'' comparison). 
	
	\begin{lemma}
		For every $s \ge 0$ and $g \in [0,1]^d$, we have $$\nabla f(s+g) \le e^{\e} \cdot \nabla f(s).$$
	\end{lemma}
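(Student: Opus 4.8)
The plan is to prove the gradient-stability bound $\nabla f(s+g) \le e^\e \cdot \nabla f(s)$ by reducing it to the $p$-\monotonicity of $\|\cdot\|_{P,\star}$, combined with the extra shift $\frac{p\ones}{\e}$ baked into $f$. First I would recall that $f(x) = \|x + \frac{p\ones}{\e}\|_{P,\star}$, so by the chain rule $\nabla f(s+g) = \nabla\|\cdot\|_{P,\star}$ evaluated at $u+v$ and $\nabla f(s) = \nabla\|\cdot\|_{P,\star}$ evaluated at $u$, where $u := s + \frac{p\ones}{\e}$ and $v := g$. The point of the shift is that every coordinate of $u$ is at least $\frac{p}{\e}$, while $v \in [0,1]^d$, so $v$ is ``small'' relative to $u$ in each coordinate; this is exactly the regime where the $p$-\monotonicity inequality can be turned into a multiplicative gradient bound.

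The key step is to invoke the Gradient property from \Cref{lemma:equivalentPmono}: since $\|\cdot\|_{P,\star}$ is $p$-\monotone and differentiable, for all $u,v \in \R^d_+$ and all coordinates $i$,
\[
\frac{\nabla_i \|u+v\|_{P,\star}}{\nabla_i \|u\|_{P,\star}} \;\ge\; \Big(\frac{\|u\|_{P,\star}}{\|u+v\|_{P,\star}}\Big)^{p-1}.
\]
Wait — this direction bounds $\nabla_i\|u+v\|$ from \emph{below}, whereas we want an \emph{upper} bound. So instead I would apply the same property with the roles reversed: the gradient monotonicity of $\|\cdot\|_{P,\star}^p$ gives, using $\nabla\|x\|^p = p\|x\|^{p-1}\nabla\|x\|$, that $\nabla_i\|u+v\|_{P,\star} \le \big(\frac{\|u+v\|_{P,\star}}{\|u\|_{P,\star}}\big)^{p-1}\nabla_i\|u+v\|_{P,\star}$ — that's trivial. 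Let me restate: the correct route is that $p$-\monotonicity controls $\nabla_i(\|u+v\|^p) \ge \nabla_i(\|u\|^p)$, which gives a lower bound on the shifted gradient, not an upper bound. To get the upper bound on $\nabla f(s+g)$ we should instead use that $\nabla f$ is a gradient of a \emph{norm}, hence scale-invariant and bounded in dual norm, together with $p$-\monotonicity applied to get the \emph{reverse} comparison. Concretely: write $\nabla_i\|u+v\|_{P,\star} = \frac{\nabla_i(\|u+v\|_{P,\star}^p)}{p\,\|u+v\|_{P,\star}^{p-1}}$, and separately $\nabla_i\|u\|_{P,\star} = \frac{\nabla_i(\|u\|_{P,\star}^p)}{p\,\|u\|_{P,\star}^{p-1}}$. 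We need an upper bound on the ratio $\nabla_i(\|u+v\|_{P,\star}^p)/\nabla_i(\|u\|_{P,\star}^p)$. Here is where we instead apply $p$-\monotonicity with the pair $(u, v)$ swapped so that we compare $\nabla_i$ at $u = (u+v) - v$ versus at $u+v$... but $v \not\le u+v$ fails to help directly.

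The clean argument, which I would carry out, avoids this tangle entirely by bounding $\nabla_i(\|u+v\|_{P,\star}^p)$ via convexity: since $\|\cdot\|_{P,\star}^p$ is convex, $\nabla_i(\|u+v\|_{P,\star}^p) \le \nabla_i(\|u+v\|_{P,\star}^p)$... the actual trick used in the companion Lemma (and in \Cref{lemma:gradientstable}) is to write
\[
\nabla_i f(s+g) = \nabla_i\|u+v\|_{P,\star} = \frac{\nabla_i(\|u+v\|_{P,\star}^p)}{p\|u+v\|_{P,\star}^{p-1}} \le \frac{\nabla_i(\|2u\|_{P,\star}^p)}{p\|u\|_{P,\star}^{p-1}}
\]
using $p$-\monotonicity on $u+v \le 2u$ (valid coordinatewise since $v \le \frac{p}{\e}\ones \le u$!) — this is the crucial observation: $g = v \in [0,1]^d$ and $u \ge \frac{p}{\e}\ones \ge \ones \ge v$, so $u+v \le 2u$. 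Then $\nabla_i(\|2u\|_{P,\star}^p) = 2^p \nabla_i(\|u\|_{P,\star}^p) = 2^p p\|u\|_{P,\star}^{p-1}\nabla_i\|u\|_{P,\star}$, giving $\nabla_i f(s+g) \le 2^p \nabla_i f(s)$ — too lossy. The sharp version replaces $2u$ by $(1+\frac{\e}{p})u \ge u+v$ (since $v \le \ones \le \frac{\e}{p}u$), yielding $\nabla_i f(s+g) \le (1+\frac{\e}{p})^p \nabla_i f(s) \le e^\e \nabla_i f(s)$, which is exactly the claim. So the main obstacle is getting the constants right: pinning down that $u = s + \frac{p\ones}{\e}$ dominates $g$ coordinatewise scaled by $\frac{\e}{p}$, i.e. $g \le \frac{\e}{p} u$, then applying $p$-\monotonicity to the comparison $u + g \le (1+\frac{\e}{p})u$ through the gradient-monotonicity of $\|\cdot\|_{P,\star}^p$ and the homogeneity $\nabla_i(\|\lambda x\|^p) = \lambda^p \nabla_i(\|x\|^p)$, and finishing with $(1+\frac{\e}{p})^p \le e^\e$. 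I would write this as a three-line display:
\begin{align*}
\nabla_i\|s+g+\tfrac{p\ones}{\e}\|_{P,\star}
&= \frac{\nabla_i\big(\|s+g+\tfrac{p\ones}{\e}\|_{P,\star}^p\big)}{p\,\|s+g+\tfrac{p\ones}{\e}\|_{P,\star}^{p-1}}
\;\le\; \frac{\nabla_i\big(\|(1+\tfrac{\e}{p})(s+\tfrac{p\ones}{\e})\|_{P,\star}^p\big)}{p\,\|s+\tfrac{p\ones}{\e}\|_{P,\star}^{p-1}}\\
&= \big(1+\tfrac{\e}{p}\big)^p \cdot \frac{\nabla_i\big(\|s+\tfrac{p\ones}{\e}\|_{P,\star}^p\big)}{p\,\|s+\tfrac{p\ones}{\e}\|_{P,\star}^{p-1}}
\;=\; \big(1+\tfrac{\e}{p}\big)^p \cdot \nabla_i\|s+\tfrac{p\ones}{\e}\|_{P,\star}
\;\le\; e^\e\,\nabla_i f(s),
\end{align*}
where the first inequality uses $p$-\monotonicity (gradient-monotonicity of $\|\cdot\|_{P,\star}^p$) together with $s+g+\tfrac{p\ones}{\e}\le(1+\tfrac{\e}{p})(s+\tfrac{p\ones}{\e})$, the second-to-last equality uses $\nabla_i(\|\lambda x\|^p)=\lambda^p\nabla_i(\|x\|^p)$, and the last uses $(1+\tfrac{\e}{p})^p\le e^\e$. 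Collecting over coordinates $i$ gives $\nabla f(s+g) \le e^\e\nabla f(s)$ coordinatewise, as desired.
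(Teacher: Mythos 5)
Your final three-line display is essentially the paper's own proof: both use $g\le\ones\le\tfrac{\e}{p}(s+\tfrac{p\ones}{\e})$ to get $s+g+\tfrac{p\ones}{\e}\le(1+\tfrac{\e}{p})(s+\tfrac{p\ones}{\e})$, apply the gradient-monotonicity of $\|\cdot\|_{P,\star}^p$ to the numerator and the monotonicity of the norm to the denominator $p\|\cdot\|_{P,\star}^{p-1}$, pull out the scalar by homogeneity, and finish with $(1+\tfrac{\e}{p})^{p}\le e^{\e}$. One minor slip: the gradient of the degree-$p$ homogeneous function $\|\cdot\|_{P,\star}^p$ is homogeneous of degree $p-1$, so $\nabla_i(\|\lambda x\|_{P,\star}^p)=\lambda^{p-1}\nabla_i(\|x\|_{P,\star}^p)$ and the pulled-out factor should be $(1+\tfrac{\e}{p})^{p-1}$ rather than $(1+\tfrac{\e}{p})^{p}$ — harmless here, since both are at most $e^{\e}$.
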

	
	\begin{proof}
        To simplify the notation, let $h(\cdot) := \|\cdot\|_{P,\star}$. Since $\nabla f(x) = \nabla h(x+ \frac{p \ones}{\e})$, it suffices to show $\nabla h(s + g + \frac{p\ones}{\e}) \le e^{\e} \cdot \nabla h(s + \frac{p \ones}{\e})$. 
        
        By our assumption on $g$ we have that $s+ g + \frac{p \ones}{\e} \le (1+\frac{\e}{p})\cdot (s + \frac{p \ones}{\e})$, and  $p$-monotonicity implies
		\begin{align}
			\nabla (h^p)\big(s+g+\tfrac{p \ones}{\e}\big) \le \nabla (h^p)\big((1 + \tfrac{\e}{p}) \cdot (s+\tfrac{p \ones}{\e})\big). \label{eq:regretDom}
		\end{align}
		Moreover, from chain rule
		\begin{align}
			\nabla (h^p)(x) = p \cdot h(x)^{p-1} \cdot \nabla h(x), \label{eq:chainP}
		\end{align}
		and so
		\begin{align*}
			\nabla (h^p)\big((1 + \tfrac{\e}{p}) \cdot (s+\tfrac{p \ones}{\e})\big) &\le p \bigg(1 + \frac{\e}{p}\bigg)^{p-1} h\big(s+\tfrac{p \ones}{\e}\big)^{p-1} \cdot \nabla h\big((1 + \tfrac{\e}{p}) \cdot (s+\tfrac{p \ones}{\e})\big) \\
			&= p \bigg(1 + \frac{\e}{p}\bigg)^{p-1} h\big(s+\tfrac{p \ones}{\e}\big)^{p-1} \cdot  \nabla h\big(s+\tfrac{p \ones}{\e}\big)\\
			&= \bigg(1 + \frac{\e}{p}\bigg)^{p-1} \nabla (h^p)\big(s+\tfrac{p \ones}{\e}\big),
		\end{align*}
		where the first equation the gradient of any norm is invariant with respect to positive scaling. Plugging on \eqref{eq:regretDom} gives
		\begin{align*}
			\nabla(h^p)(s+g+\tfrac{p \ones}{\e}) \,\le\, e^\e \cdot \nabla(h^p)(s+\tfrac{p \ones}{\e}).
		\end{align*}
		Isolating $\nabla h(x)$ on \eqref{eq:chainP} an using this inequality we have 
		\begin{align*}
			\nabla h(s+g+\tfrac{p \ones}{\e}) \le e^\e \cdot \frac{\nabla (h^p)(s+\tfrac{p \ones}{\e})}{p \cdot h(s+g+\tfrac{p \ones}{\e})^{p-1}} \le e^\e \cdot \frac{\nabla (h^p)(s+\tfrac{p \ones}{\e})}{p \cdot h(s+\tfrac{p \ones}{\e})^{p-1}} = e^\e \cdot \nabla h(s+\tfrac{p \ones}{\e}),
		\end{align*}
		where the second inequality is because the norm $h(\cdot) = \|\cdot\|_{P,\star}$ is monotone and $g \ge 0$. This proves the lemma.
	\end{proof}
	
	Based on this lemma, we have $\nabla f(s_t) \le e^{\e} \cdot \nabla f(s_{t-1}) = e^{\e} \cdot x_t$, which applied in \eqref{eq:regret1} gives
	\begin{align}
		\sum_t \ip{g_t}{x_t} \ge e^{-\e} \cdot \sum_t \ip{g_t}{\nabla f(s_t)} \ge e^{-\e} \cdot \big(f(s_T) - f(0)\big)\,. \label{eq:regretFinal}
	\end{align}
        We now lower bound $f(s_T)$. Using the assumption that $\|e_i\|_{P,\star}$ and the monotonicity of this norm, we see that $\frac{(s_T)_i}{\|s_T\|_{P,\star}} \le \frac{(s_T)_i}{\|(s_T)_i e_i\|_{P,\star}} = 1$, and so $\ones \ge \frac{s_T}{\|s_T\|_{P,\star}}$. This gives
        \begin{align*}
        f(s_T) = \bigg\|s_T + \frac{p \ones}{\e}\bigg\|_{P,\star} \ge \bigg\|\bigg(1 + \frac{p}{\e \|s_T\|_{P,\star}}\bigg) s_T \bigg\|_{P,\star} = \|s_T\|_{P,\star} + \frac{p  \|s_T\|_{P,\star}}{\e \|s_T\|_{P,\star}} = \OPT + \frac{p}{\e}.
        \end{align*}
        Since $f(0) = \frac{p}{\e} \|\ones\|_{P,\star}$, plugging these bound on \eqref{eq:regretFinal} gives $\sum_t \ip{g_t}{x_t} \ge e^{-\e} (\OPT - \frac{p (\|\ones\|_{P,\star} - 1)}{\e})$. This proves \Cref{thm:pMono}.
 \end{proof}
	

\pagebreak

\begin{small}
\bibliographystyle{alpha}
\bibliography{bib,online-lp-short}
\end{small}

\end{document}